\newcommand{\quantnet}{\raisebox{-1pt}{\includegraphics[scale=0.05]{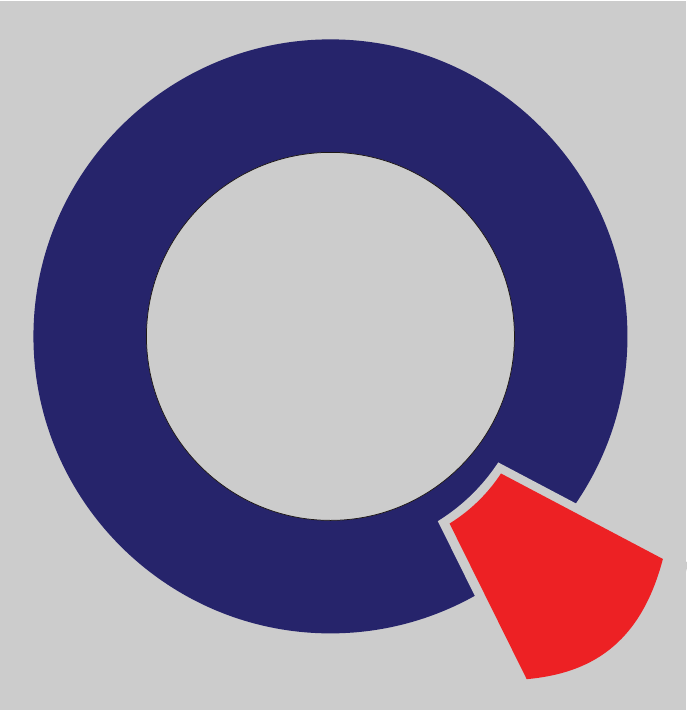}}\,}
\def\oc{\overline{c}}
\renewcommand{\vec}{\mbox{vec}}
\newcommand{\Z}{\mathbb{Z}}
\renewcommand{\P}{\operatorname{P}}
\newcommand{\R}{\rm I\!R}
\newcommand{\N}{\operatorname{N}}
\newcommand{\E}{\mathop{\mbox{\sf E}}}
\newcommand{\IF}{\boldsymbol{1}}
\newcommand{\vech}{\mbox{vech}}
\newcommand{\mA}{\mathcal{A}}
\newcommand{\mP}{\mathcal{P}}
\newcommand{\mF}{\mathcal{F}}
\newcommand{\mH}{\mathcal{H}}
\newcommand{\mQ}{\mathcal{Q}}
\renewcommand{\mP}{\mathcal{P}}
\renewcommand{\mF}{\mathcal{F}}
\newcommand{\supp}{\mathrm{supp}}
\newcommand{\vps}{{\varepsilon}}
\renewenvironment{proof}[1][\proofname]{{\noindent\bfseries #1.}}{\qed}
\newcommand{\bigO}{\mathcal{O}}
\newcommand{\smallO}{\mbox{\tiny $\mathcal{O}$}}
\newtheorem{theorem}{Theorem}[section]
\newtheorem{corollary}{Corollary}[section]
\newtheorem{lemma}{Lemma}[section]
\newtheorem{definition}{Definition}[section]
\theoremstyle{definition}
\newtheorem{remark}{Comment}[section]
\numberwithin{remark}{section}
\newtheorem{example}{Example}
\numberwithin{equation}{section}
\numberwithin{theorem}{section}
\numberwithin{figure}{section}
\numberwithin{table}{section}
\newcommand{\eps}{\varepsilon}
\newcommand{\beq}{\begin{equation}}
\newcommand{\eeq}{\end{equation}}
\newcommand{\F}{\mathcal{F}}
\renewcommand{\hat}{\widehat}
\newcommand{\En}{\frac{1}{n} \sum_{t=1}^n}
\renewcommand{\hat}{\widehat}
\renewcommand{\leq}{\leqslant}
\renewcommand{\geq}{\geqslant}
\DeclareMathOperator{\Var}{Var}
\newcommand{\diag}{{\rm diag}}
\renewcommand{\[}{\left[}
\renewcommand{\]}{\right]}
\def\defeq{\stackrel{\mathrm{def}}{=}}  
\begin{document}

%


\title{
LASSO-Driven Inference in Time and Space
\thanks{We thank Weibiao Wu, Oliver Linton, Bryan Graham, Manfred Deistler, Hashem Pesaran, Michael Wolf, Valentina Corradi, Zudi Lu, Liangjun Su, Peter Phillips, Frank Windmeijer, Wenyang Zhang and Likai Chen for helpful comments and suggestions. We remain responsible for any errors or omissions. Financial support from the Deutsche Forschungsgemeinschaft via IRTG 1792 ``High Dimensional Non Stationary Time Series'', Humboldt-Universit\"at zu Berlin, is gratefully acknowledged.}}
\author{Victor Chernozhukov\footnote{Department of Economics and Center for Statistics and Data Science, Massachusetts Institute of Technology.}, Wolfgang K.~H\"{a}rdle\footnote{IRTG1792, Humboldt-Universit\"{a}t zu Berlin. School of Business, Singapore Management University. Faculty of Mathematics and Physics, Charles University. Department of Information Management and Finance, National Chiao Tung University.}, Chen Huang\footnote{Department of Economics and Business Economics and CREATES, Aarhus University. Corresponding author: chen.huang@econ.au.dk}, Weining Wang\footnote{Department of Economics and Related Studies, University of York. Ladislaus von Bortkiewicz Chair of Statistics, Humboldt-Universit\"{a}t zu Berlin.}}
\maketitle
\begin{center}
\end{center}
\vspace{-1cm}

\begin{abstract}
\footnotesize{\noindent}
 We consider the estimation and inference in a system of high-dimensional regression equations allowing for temporal and cross-sectional dependency in covariates and error processes, covering rather general forms of weak temporal dependence. {A sequence of regressions with many regressors using LASSO (Least Absolute Shrinkage and Selection Operator) is applied for variable selection purpose}, and an overall penalty level is carefully chosen by a block multiplier bootstrap procedure to account for multiplicity of the equations and dependencies in the data. Correspondingly, oracle properties with a jointly selected tuning parameter are derived. We further provide high-quality de-biased simultaneous inference on the many target parameters of the system. We provide bootstrap consistency results of the test procedure, which are based on a general Bahadur representation for the $Z$-estimators with dependent data. Simulations demonstrate good performance of the proposed inference procedure.
Finally, we apply the method to quantify spillover effects of textual sentiment indices in a financial market and to test the connectedness among sectors.
\par
\vspace{0.5cm}
\noindent {\em JEL classification}: C12, C22, C51, C53\\
\noindent {\em Keywords}: LASSO, time series, simultaneous inference, system of equations, $Z$-estimation, Bahadur representation, martingale decomposition
\end{abstract}

\section{Introduction}\label{intro}
Many applications in  statistics, economics, finance, biology and psychology are concerned with a system of ultra high-dimensional objects that communicate within complex dependency channels. Given a complex system involving many factors, one builds a network model by taking a large set of regressions, i.e. regressing every factor in the system on a large subset of other factors. Examples include analysis of financial systemic risk by quantile predictive graphical models with LASSO \citep{hautsch2014financial,hardle2016tenet,BCC2016}, limit order book network modeling via the penalized vector autoregressive approach \citep{chen_2018}, analysis of psychology data with temporal and cross- sectional dependencies \citep{epskamp2018gaussian}. Another example is quantifying the spillover effects or externalities for a social network, especially when the social interactions (or the interconnectedness) is not obvious \citep{manresa2013estimating}. Besides, there are numerous applications concerning association network analysis in other fields of applied statistics; see Chapter 7 in \cite{kolaczyk2014statistical}. 
In general, a step-by-step LASSO procedure is very helpful for the correlation network formation. In pursuing a highly structural approach, one certainly favors a simple set of regressions that allows multiple insights on the statistical structure of the data. Therefore, a sequence of regressions with LASSO is a natural path to take. Especially in cases of reduced forms of simultaneous equation models and structural vector autoregressive models, one can attain valuable pre-information on the core structure by running a set of simple regressions with LASSO shrinkage.

A first important question arising in this framework is how to decide on a unified level of penalty. In this article we advocate an approach to selecting the overall level of the tuning parameter in a system of equations after performing a set of single step regressions with shrinkage. A feasible (block) bootstrap procedure is developed and the consistency of parameter estimation is studied. In addition, we provide a uniform near-oracle bound for the joint estimators. The proposed technique is applicable to ultra-high dimensional systems of regression equations with high-dimensional regressors.

A second crucial issue is to establish simultaneous inference on parameters, which is an important question regarding network topology inference.
For example, in a large-scale linear factor pricing model,
it is of great interest to check the significance of the intercepts of cross sectional regressions (connected with zero pricing errors), e.g. \cite{pesaran2017testing}. Our approach is an alternative testing solution compared to the Wald test statistics proposed therein. To achieve the goal of simultaneous inference, we develop a uniform robust post-selection or post-regularization inference procedure for time series data.
This method is generated from a uniform Bahadur representation of de-biased instrumental variable estimators. In particular, we need to establish maximal inequalities for empirical processes for a general Huber's $Z$-estimation. Note that the commonly used technique for independent data, such as the symmetrization technique, is not directly applicable in the dependent data case; see Chapter 11.6 of \cite{kosorok2008introduction} for a related overview.


Our contribution lies in three aspects. First, we select the penalty level by controlling the aggregated errors in a system of high-dimensional sparse regressions, and we establish the bounds on the estimated coefficients. Furthermore, we show the implication of the restricted eigenvalue (RE) condition at a population level. Secondly, an easily implemented algorithm for effective estimation and inference is proposed. In fact, the offered estimation scheme allows us to make local and global inference on any set of parameters of interest. Thirdly, we run numerical experiments to illustrate good performance of our joint penalty relative to the single equation estimation, and we show the finite sample improvement of our multiplier block bootstrap procedure on the parameter inference. Finally, an application of textual sentiment spillover effects on the stock returns in a financial market is presented.

In the literature, the fundamental results on achieving near oracle rate for penalized $\ell_1$-norm estimators are developed by \citet{bickel2009simultaneous}. There are many related articles on deriving near-oracle bounds using the $\ell_1$-norm penalization function for the i.i.d. case, such as \citet{BCH2011,belloni2009least}. There are also many extensions to the LASSO estimation with dependent data. For example, {\citet{basu2015regularized} study the consistency of the estimator in sparse high-dimensional Gaussian time series models; \citet{kock2015oracle} consider the high-dimensional near-oracle inequalities in large vector autoregressive (VAR) models; \citet{lin2017regularized} look at the regularized estimation and testing for high-dimensional multi-block VAR models.} However, the majority of the literature imposes a {Gaussian} or sub-Gaussian assumption on the error distribution; this is rather restrictive and excludes heavy tail distributions. For dependent data, \cite{wu2016performance} discuss the possibility of relaxing the sub-Gaussian assumption by generalizing Nagaev-type inequalities allowing for only moment assumptions. For the case of LASSO the analysis assumes the fixed design, which rules out the most important applications mentioned earlier in the introduction.

Theoretically, the LASSO tuning parameter selection requires characterizing the asymptotic distribution of the maximum of a high dimensional random vector. \citet{CCK13AoS} develop a Gaussian approximation for the maximum of a sum of high-dimensional random vectors, which is in fact the basic tool for modern high-dimensional estimation. Here it is applied to the LASSO inference. Moreover, \citet{CCK13testing} deliver results for the case of $\beta$-mixing processes.
Although it is quite common to assume a mixing condition {which is at base a concept yielding asymptotic independence}, it is not in general easy to verify the condition for a particular process, and some simple linear processes can be excluded from the strong mixing class, \cite{andrews1984non}. With an easily accessible dependency concept, 
\citet{ZW15gaussian} derive Gaussian approximation results for a wide class of stationary processes. Note that the dependence measure is linked to martingale decompositions and is therefore readily connected with a pool of results on tail probabilities, moment inequalities and central limit theorems of martingale theory. Our results are built on the above-mentioned theoretical works and we extend them substantially to fit into the estimation in a system of regression equations. In particular, our LASSO estimation is with random design for dependent data; therefore, we need to deal with the population implications of the Restricted Eigenvalue (RE) condition. {Moreover, we show the interaction between the tail assumption and the dimensionality of the covariates in our theoretical results.}

In the meantime, the issue of simultaneous inference is challenging and has motivated a series of research articles. For the case of i.i.d. data, \cite{BCH2011,BCH2014}, \citet{zhang2014debiased}, \citet{JM2014}, \citet{van2014desparsified}, \citet{neykov2015unified}, \citet{DML}, \citet{zhu2017linear}, among others, develop confidence intervals of low-dimensional variables in high-dimensional models with various forms of de-biased/orthogonalization methods. Still in the case of i.i.d. data, \cite{BCK15Bio} establish a uniform post-selection inference for the target parameters defined via de-biased Huber's $Z$-estimators when the dimension of the parameters of interest is potentially larger than the sample size, where they employ the multiplier bootstrap to the estimated residuals. Wild and residual bootstrap-assisted approaches are also studied in \citet{dezeure2016high,zhang2017simultaneous} for the case of mean regression. {And more recently, \citet{krampe2018} extend the approaches to test large groups of coefficients in sparse VAR models.}
We pick up the line of the inference analysis of \citet{BCK15Bio} and employ it in a temporal and cross-sectional dependence framework, thus making it applicable to a rich class of high-dimensional time series. {This allows us to embed the high-dimensional VAR model as a special case.} Our core proof strategy is different, as it is well known that the technique for handling the suprema of empirical processes indexed by functional classes with dependent data is not the same as in i.i.d. cases. For instance, the key Bahadur representation in \citet{BCK15Bio} applies maximal inequalities derived in \citet{chernozhukov2014gaussian} for i.i.d. random variables, while we derive the key concentration inequalities based on a martingale approximation method.

{Our proposed estimation framework is complement to the literature on model selection for Gaussian Graphical model (GGM) (see e.g. \cite{yuan2007model}), which has a wide spectrum of applications in statistics. A GGM can be connected with LASSO regression for estimating sparse correlation networks, and therefore is equivalent to our context with a partial correlation network, \cite{meinshausen2006high}. In particular, we may find an equation-by-equation relationship to the GGM, and we acknowledge that a similar framework with spatial temporal dependence can be developed. In addition, there is a big literature on social network analysis, which embeds the network information into a dynamic model in advance; see for example \cite{zhu2017network,zhu2016network,chen2019tail,huang2016statistical}. Relatively, our approach is less structural as we treat the network structure to be unknown and uncover it using LASSO.  }

The following notations are adopted throughout this paper. For a vector $v=(v_1,\ldots,v_p)^\top$, let $|v|_\infty\defeq\max_{1\leq j\leq p}|v_j|$ and $|v|_s\defeq(\sum_{j=1}^p|v_j|^s)^{1/s}$, $s\geq1$. For a random variable $X$, let $\|X\|_q\defeq(\E|X|^q)^{1/q}$, $q>0$. For any function on a measurable space $g:\mathcal{W}\rightarrow\R$, $\E_n(g)\defeq n^{-1}\sum_{t=1}^n\{g(\omega_t)\}$ and $G_n(g)\defeq n^{-1/2}\sum_{t=1}^n[g(\omega_t)-\E\{g(\omega_t)\}]$. Given two sequences of positive numbers $a_n$ and $b_n$, write $a_n\lesssim b_n$ if there exists constant $C>0$ (does not depend on $n$) such that $a_n/b_n\leq C$. For a sequence of random variables $x_n$, we use the notation $x_n\lesssim_{\P} b_n$ to denote $x_n=\bigO_{\P}(b_n)$. {For any {finitely discrete} measure $\mathcal Q$ on a measurable space, 
	let $\mathcal L^q(\mathcal Q)$ denote the space of all measurable functions $f:Z\rightarrow{\R}$ such that $\|f\|_{\mathcal Q,q}\defeq(\mathcal Q|f|^q)^{1/q}<\infty$, where $\mathcal Qf\defeq\int fd\mathcal Q$. For a class of measurable functions $\mathcal F$, the $\epsilon$-covering number with respect to the $\mathcal L^q(\mathcal Q)$-semimetric is denoted as $\mathcal N(\epsilon,\mathcal F, \|\cdot\|_{\mathcal Q,q})$, and let $\operatorname{ent}(\epsilon,\mathcal F)=\log\sup_{\mathcal Q}\mathcal N(\epsilon\|\bar F\|_{\mathcal Q,q},\mathcal F,\|\cdot\|_{\mathcal Q,q})$ with $\bar F=\sup_{f\in\mathcal F}|f|$ (the envelope) denote the uniform entropy number. It should be noted that we suppress the notation of the outer expectation $\E^*$ to $\E$ and outer probability $\P^*$ to $\P$ when measurability issues are encountered. Details may be found in the Chapter 1 of \cite{van1996weak}.}

The rest of the article is organized as follows. Section \ref{model} shows the system model with a few examples. Section \ref{lasso} introduces the sparsity method for effective prediction and provides an algorithm for the joint penalty level of LASSO via bootstrap. In Section \ref{inference} we propose approaches to implementing individual and simultaneous inference on the coefficients. Main theorems are listed in Section \ref{theorem}. In Section \ref{sim} and \ref{app} we deliver the simulation studies and an empirical application on textual sentiment spillover effects. The technical proofs and other details are given in the supplementary materials. The codes to implement the algorithms are publicly accessible via the website \href{https://github.com/QuantLet/LASSO_Time_Space}{\quantnet www.quantlet.de}.

\section{The System Model}\label{model}
In this section, we present a general framework which covers many applications in statistics.
Consider the system of regression equations (SRE):
$$
Y_{j,t} = X_{j,t}^\top\beta_{j}^0 + \vps_{j,t},   \quad \E \vps_{j,t} X_{j,t} = 0, \quad j=1,...,J, \quad t=1,\ldots, n,
$$
where $X_{j,t}=(X_{jk,t})_{k=1}^{K_j}$. 
{Without loss of generality, we assume the dimension of the covariates is identical among all equations thereafter, namely $K_j = \dim(X_{j,t})\equiv K$, for $j=1,\ldots,J$.} We allow the dimension $K$ of $X_{j,t}$ and the number of equations, $J$ to be large, potentially larger than $n$, which creates an interplay with the tail assumptions on the error processes $\vps_{j,t}$.
Both spatial and temporal dependency are allowed and we will obtain results on prediction and inference.

The SRE framework is a system of regression equations, which includes the following important special cases.

\begin{example}[Many Regression Models]
	Suppose that we are interested in estimating the predictive models for the response variables $U_{m,t}$:
	$$
	U_{m,t} = X_t^\top \gamma_{m}^0 + \varepsilon_{m,t}, \quad X_t\in{\R}^{K}, \quad \E \varepsilon_{m,t} X_{t} = 0, \quad m=1, \ldots, M,
	$$
	with auxiliary regressions to model predictive relations between covariates:
	$$
	X_{k,t} = X_{-k,t}^\top \delta_{k}^0 + \nu_{k,t},  \quad \E \nu_{k,t}  X_{-k,t} = 0, \quad k=1,\ldots,K, 
	$$
	where $X_{-k,t}=(X_{\ell,t})_{\ell\neq k}\in{\R}^{K-1}$, and $\delta_k^0$ is defined by the OLS estimator in population, namely $\arg\underset{\delta_k}{\min}\En\E(X_{k,t}-X_{-k,t}^\top \delta_{k})^2$. This is a special SRE model with
	$$
	(Y_{j,t}, X_{j,t}, \vps_{j,t}, \beta_{j}^0) = (U_{j, t}, X_{t}, \varepsilon_{j,t}, \gamma_{j}^0),\quad j = 1, \ldots, M,$$
	$$
	(Y_{j,t}, X_{j,t}, \vps_{j,t}, \beta_{j}^0) = (X_{(j-M),t}, X_{-(j-M),t}, \nu_{(j-M),t}, \delta_{(j-M)}^0), \quad j =M+1, \ldots,  J = M+ K.
	$$
	
	It can be seen that we only put contemporaneous exogeneity conditions for $X_t$. It is worth mentioning that this SRE case is closely related to the semiparametric estimation framework studied in Section 2.4 in \cite{BCK15Bio}. Here, the understanding of the predictive relations between covariates is important for constructing joint confidence intervals for the entire parameter vector $\{(\gamma_{mk}^0)_{k=1}^K\}_{m=1}^M$ in the main regression equations. Indeed, the construction relies on the semi-parametrically efficient point estimators obtained from the empirical analog of the following orthogonalized moment equation:
	\begin{equation}
	\E [(U^0_{mk,t} - X_{k,t} \gamma_{mk}^0) \nu_{k,t}]=0,  \quad k=1,\ldots,K, \quad m=1, \ldots, M,
	\end{equation}
	where $U^0_{mk,t} = U_{m,t} - X_{-k,t}^\top\gamma_{m(-k)}^0 $ is the response variable minus the part explained by the covariates other than $k$. Note that the empirical analog would have all unknown nuisance parameters replaced by the estimators.
	%
	
	
\end{example}

\begin{example}[Simultaneous Equation Systems (SES)]
	Suppose there are many regression equations in the following form:
	$$
	U_{m,t} = U_{-m,t}^\top\delta_m^0 + X_t^\top \gamma_{m}^0 + \varepsilon_{m,t}, \quad m=1, \ldots, M.
	$$
	Move all the endogenous variables to the left-hand side and rewrite the model in the vector form
	$$
	\mathbf{D}U_{t} = \bm{\Gamma} X_t+ \varepsilon_{t},
	$$
	which is also called the structural form of the model. Suppose that $D$ is invertible. Then the corresponding reduced form is given by
	\begin{equation}
	U_t =\mathbf{B}X_{t} + \nu_t, \quad \E \nu_{m,t} X_{t} = 0,\quad m=1, \ldots, M,
	\end{equation}
	with $\mathbf{B}=\mathbf{D}^{-1}\bm{\Gamma}$ and $\nu_t=\mathbf{D}^{-1}\vps_t$. In this case the $Y_{j,t}$'s and $X_{j,t}$'s in SRE have no overlapping variables. A high-dimensional SES can be considered as a special case of SRE with 
	$$
	(Y_{j,t}, X_{j,t}, \vps_{j,t}, \beta_{j}^0) = (U_{j, t}, X_t, \nu_{j,t}, \mathbf{B}_{j\cdot}^\top),\quad j = 1, \ldots, M.
	$$
\end{example}

\begin{example}[Large Vector Autoregression Models]\label{examp4}
	In the case where the covariates involve lagged variables of the response, SRE can be written as a large vector autoregression model. 
	For example, the VAR($p$) model,
	\begin{equation}
	U_t =\sum_{\ell=1}^p\mathbf{B}^\ell U_{t-\ell} + \eps_t, \quad \E \vps_{m,t} U_{t-\ell} = 0,\quad m=1, \ldots, M,
	\end{equation}
	where $U_t = (U_{1,t}, U_{2,t}, \ldots, U_{M,t})^\top$, and $\vps_t$ is an $M$-dimensional white noise or innovation process; see e.g. Chapter 2.1 in \citet{lutkepohl2005new}. It is a special SRE case again with
	$$
	(Y_{j,t}, X_{j,t}, \vps_{j,t}, \beta_{j}^0) = (U_{j, t}, (U_{t-1}^\top,\ldots,U_{t-p}^\top)^\top, \vps_{j,t}, (\mathbf{B}^1_{j\cdot},\ldots,\mathbf{B}^p_{j\cdot})^\top),\quad j = 1, \ldots, M.
	$$	
	
	{
		Such dynamics are of interest in biology to understand dynamic gene expression network association using micro array data; see for example \cite{opgen2007correlation,ramirez2017dynamic,dimitrakopoulou2011dynamic}. It is understood that a crucial feature for many gene networks is their inherent sparsity. The issue of the number of variables involved is potentially larger than the sample size can be addressed by LASSO. Our methodology can help to analyze a gene interaction correlation network in a high dimensional regression scheme. In particular,
		suppose that each vertex represents a gene $j$ collected at time point $t$ with $U_{j,t}$ as its gene expression and an edge connects two genes if they are correlated.}
	
\end{example}

%
%

{We refer to Section \ref{pae} in the supplementary materials for more practical examples.}

\section{Effective Prediction Using Sparsity Method}\label{lasso}
In this section, we present our model setup and the LASSO estimation algorithm, including the joint penalty selection procedure.
\subsection{Sparsity in SRE}

The general SRE structure makes it possible to predict $Y_{j,t}$ using $X_{j,t}$ effectively. Note that the dimension of $X_{j,t}$ is large, potentially larger than $n$.
Without loss of generality we assume exact sparsity of $\beta_{j}^0$ throughout the paper:
\beq\label{sparse}
s_j = |\beta_{j}^0|_0 \leq s = \smallO(n), \quad j =1, \ldots J,
\eeq
where the $\ell_0$-norm, $|\cdot|_0$, is the number of nonzero components of a vector.
\begin{remark}
	It is now well understood that sparsity can be easily extended to approximate sparsity, in which the sorted absolute values of coefficients decrease fast to zero. {
		To be more specific, when $\beta^0_{jk}$ is not sparse, we shall define an intermediary optimal value for our true coefficients, i.e. $\beta^*_{jk}$. Let $LC_{p} \defeq \underset{|\beta_j|_0 \leq p}{\min} [\E_n\{X_{j,t}^{\top}(\beta_{j}- \beta^0_{j})\}^2]^{1/2}$, additionally with proper conditions on the design matrix, the optimal sparsity level is given by $s_j^* = \underset{0 \leq p\leq (K \wedge n)}{\min} LC_{p}^2+ (\underset{1\leq k\leq K}{\max} \Psi^2_{jk}) p/n$, where $ \Psi^2_{jk}$ is the long run variance of $ \frac{1}{\sqrt{n}} \sum_{t=1}^n \vps_{j,t} X_{jk,t}$. Then the oracle $\beta^*_{jk} $ is defined to be $\arg\underset{|\beta_{j}|_0 \leq s_j^*}{\min} \E_n\{X_{j,t}^\top(\beta_{j}- \beta^0_{j})\}^2 $. Thus an additional term involving $LC_{s_j^*}$ will appear in the bound in case of the true signal  $\beta^0_{jk}$ is not sparse. With approximate sparsity we mean that the true signal is not sparse but nevertheless can be approximated by an exact sparsity set-up well, namely $|\beta^0_{jk}| \leq A k^{-\gamma}$ (ranked in descending order), where $\gamma> 0.5$, and by taking $s_j^* \propto n^{1/(2\gamma)}$ the goal would be achieved. }
\end{remark}

For this situation one employs an $\ell_1$-penalized estimator of $\beta_{j}^0$ of the form:
\beq\label{beta}
\hat \beta_j = \arg\min_{\beta \in  {\R}^{K}} \En (Y_{j,t} - X_{j,t}^\top\beta)^2
+ \frac{\lambda}{n} \sum_{k=1}^{K} | \beta_{k}| \Psi_{jk},
\eeq
where $\lambda$ is the joint "optimal" penalty level and $\Psi_{jk}$'s are penalty loadings, which are defined below in (\ref{eq1}).

A first aim is to obtain performance bounds with respect to the prediction norm:
$$
|\hat \beta_j - \beta_{j}^0|_{j, pr} \defeq \bigg[\En \big\{ X_{j,t}^\top(\hat \beta_j - \beta_{j}^0)\big\}^2 \bigg]^{1/2},
$$
{where the outside $j$ indicates to use the covariates in the $j$th equation $X_{j,t}$ in computing the prediction norm,} and the Euclidean norm:
$$
|\hat \beta_j - \beta_{j}^0|_{2} \defeq  \big\{\sum_{k=1}^K(\hat \beta_{jk} - \beta_{jk}^0)^2 \big\}^{1/2}.
$$
To achieve good performance bounds, we first consider "ideal" choices (IC) of the penalty level and the penalty loadings. Let
$$
S_{jk} = \frac{1}{\sqrt{n}} \sum_{t=1}^n \vps_{j,t} X_{jk,t},
$$
where for a moment we assume to be able to observe  $\vps_{j,t} = Y_{j,t} - X_{j,t}^\top\beta_{j}^0$. In practice one obtains an approximation by stepwise LASSO. Set
\begin{eqnarray}
&& {\Psi_{jk} \defeq \sqrt{ \operatorname{avar}(S_{jk})}} \label{eq1}, \\
&& \lambda^0(1-\alpha) \defeq (1-\alpha)-\textrm{quantile of } 2c \sqrt{n}\max_{1 \leq j\leq J, 1\leq k \leq K} |S_{jk}/\Psi_{jk}|,
\end{eqnarray}
where $c >1$, e.g., $c=1.1$, and $1-\alpha$ is a confidence level, e.g. $\alpha =0.1$, {where the long run variance is denoted by $\operatorname{avar}$}. 

Theoretically, we can characterize the rate of $\lambda^0(1-\alpha)$ by the tail probability of $S_{jk}$ (see Theorem \ref{gammabound}), also via Gaussian Approximation as in corollary \ref{betabound4}.
To calculate $\lambda^0(1-\alpha)$ from data, we can also use a Gaussian approximation based on:
$$
Q(1-\alpha) \defeq (1-\alpha)-\textrm{quantile of } {2c \sqrt{n}}\max_{1 \leq j\leq J, 1\leq k \leq K}|Z_{jk}/\Psi_{jk}|,
$$
where $\{Z_{jk}\}$ are multivariate Gaussian centered random variables with the same {long run covariance} structure as {$\{S_{jk}\}$}. Alternatively, we can employ a multiplier bootstrap procedure to estimate IC empirically to achieve a better finite sample performance; see for example \cite{CCK13AoS}.
In case of dependent observations over time, it is understood that data cannot be resampled directly as in the the i.i.d. case, as the dependency structure of the underlying processes will be lost. A usual solution to this problem is to consider a block bootstrap procedure, where the data are grouped into blocks, resampled and concatenated. In particular, we will adopt an estimate of IC by a multiplier block bootstrap procedure.
{The theoretical properties of LASSO and the tuning parameter choices are presented in Section \ref{lasso.ic}-\ref{lasso.joint}.}

%

\subsection{Multiplier Bootstrap for the Joint Penalty Level}\label{mblambda}
In this subsection, we introduce an algorithm to approximate the joint penalty level via a block multiplier bootstrap procedure, which is particularly non-overlapping block bootstrap (NBB). Consider the system of equations with dependent data:
\begin{equation}\label{system}
Y_{j,t} = X_{j,t}^\top\beta_{j}^0 + \varepsilon_{j,t},  \quad \E \varepsilon_{j,t} X_{j,t} = 0, \quad j=1,...,J,
\quad t=1,\ldots, n,
\end{equation}
\begin{enumerate}
	\item[S1] Run the initial $\ell_1$-penalized regression equation by equation, i.e. for the $j$th equation,
	\begin{equation}\label{beta.tilde}
	\widetilde \beta_j = \arg\min_{\beta \in  {\R}^{K}} \En (Y_{j,t} - X_{j,t}^\top\beta)^2
	+ \frac{\lambda_j}{n} \sum_{k=1}^{K_j} | \beta_{jk}| \Psi_{jk},
	\end{equation}
	where $\lambda_j$ are the penalty levels and $\Psi_{jk}$ are the penalty loadings. For instance, we can take the $X$-independence choice using Gaussian approximation (in the heteroscedasticity case): $2c'\sqrt{n}\Phi^{-1}\{1-\alpha'/(2K)\}$ for $\lambda_j$, where $\Phi(\cdot)$ denotes the cdf of $\N(0,1)$, 
	$\alpha'=0.1$, $c'=0.5$, and choose $\sqrt{\operatorname{lvar}(X_{jk,t}\breve{\varepsilon}_{j,t})}$
	for the penalty loadings, where $\breve{\varepsilon}_{j,t}$ are preliminary estimated errors {and $\operatorname{lvar}(X_{jk,t}\breve{\varepsilon}_{j,t})$ is an estimate of the long-run variance $\sum_{\ell=-\infty}^{\infty}\E(X_{jk,t}\breve{\varepsilon}_{j,t}X_{jk,(t-\ell)}\breve{\varepsilon}_{j,(t-\ell)})$, e.g. the Newey-West estimator is given by $$\sum_{\ell=-p_n}^{p_n}k(\ell/p_n)\operatorname{cov}(X_{jk,t}\breve{\varepsilon}_{j,t},X_{jk,(t-\ell)}\breve{\varepsilon}_{j,(t-\ell)}),$$ with $k(z)=(1-|z|)\IF(|z|\leq1)$.} We note that the $X$-independent penalty (using Gaussian approximation) is more conservative, as the correlations among regressors can be adapted in the $X$-dependent case (using a multiplier bootstrap) with a less aggressive penalty level.
	\item[S2] Obtain the residuals for each equation by $\widetilde{\vps}_{j,t}=Y_{j,t} - X_{j,t}^\top\widetilde{\beta}_{j}$, and compute 
	$\Psi_{jk}=\sqrt{\operatorname{lvar}(X_{jk,t}\widetilde{\varepsilon}_{j,t})}$. 
	\item[S3] 
	Divide $\{\tilde{\varepsilon}_{j,t}\}$ into $l_n$ blocks containing the same number of observations $b_n$, $n=b_nl_n$, where $b_n, l_n\in\Z$.
	Then choose $\lambda=2c\sqrt{n}q^{[B]}_{(1-\alpha)}$, where $q^{[B]}_{(1-\alpha)}$ is the $(1-\alpha)$ quantile of $\underset{1\leq j \leq J, 1\leq k \leq K}{\max} |Z^{[B]}_{jk}{/\Psi_{jk}}|$, and $Z^{[B]}_{jk}$ are defined as
	\begin{equation}\label{mbb}
	\quad Z^{[B]}_{jk}=\frac{1}{\sqrt{n}} \sum_{i=1}^{l_n} e_{j,i}\sum_{l=(i-1)b_n+1}^{ib_n} \tilde{\varepsilon}_{j,l} X_{jk,l},
	\end{equation}
	$e_{j,i}$ are i.i.d. $\operatorname{N}(0,1)$ random variables independent of the data. 
\end{enumerate}

The bootstrap consistency regarding $Z^{[B]}_{jk}$ is proved in Theorem \ref{validboot}.
\begin{remark}[Block bootstrap procedures]
	
	\begin{enumerate}
		\item[(i)] 
		Concerning the determination of $b_n$, 
		we shall report the prediction norm with several block sizes $b_n$ {and select the one with the best prediction performance in the simulation study}.
		{In addition, if it is the case that $n$ cannot be divided by $b_n$ with no remainder, one can simply take $l_n=\lfloor n/b_n\rfloor$ and drop the remaining observations. }
		\item[(ii)] Other forms of multiplier bootstrap with any random multipliers centered around 0 can also be considered.
		\item[(iii)] Alternative block bootstrap procedures can be adopted, such as the circular bootstrap and the stationary bootstrap among others; see for example \cite{lahiri1999theoretical} for an overview.
		
	\end{enumerate}
\end{remark}

\section{Valid Inference on the Coefficients}\label{inference}
With a reasonable fitting of LASSO on hand, we can proceed to investigate the issue of simultaneous inference. This section focuses on SRE of Example 2. We allow the covariates in each equation to be different.

The basic idea to facilitate inference is to formulate the estimation in a semi-parametric framework. With partialing out the effect of the nonparametric coefficient(s), we can achieve the desired estimation accuracy of the parametric component of interest. This trick is referred to as "Neyman orthogonalization". Notably, the procedure is equivalent to the well known de-sparsification procedure in the mean square loss case, which is developed for the inference on the estimated zero coefficients by LASSO. It thus serves the same purpose of generating a (robust) de-sparsified estimation for LASSO inference.

We list three algorithms to estimate $\beta^0_{jk}$. 
Algorithm 1 is easy to implement and algorithm 2 is tailored to the cases of heavy-tailed distribution of the error term, as Least Absolute Deviation (LAD) regression is well known to be robust against outliers. Algorithm 3 considers a double selection procedure aimed at remedying the bias due to omitted variables by one step selection, 
while also accounting for the cases of heteroscedastic errors.
%

\textit{Algorithm} 1: LS-based algorithm
\begin{itemize}\label{algo1}
	\item[S1] Consider $Y_{j,t} = X_{jk,t}\beta^0_{jk}+ X_{j(-k),t}^{\top}\beta^0_{j(-k)}+\varepsilon_{j,t}$, run (post) LS LASSO procedure (for each $j$), 
	and keep the quantity $X_{j(-k),t}^{\top}\hat{\beta}^{[1]}_{j(-k)}$ for each $k$.
	\item[S2] Run (post) LS LASSO (for each $j,k$) 
	by regressing $X_{jk,t} = X_{j(-k),t}^\top\gamma_{j(-k)}^0+ v_{jk,t}$, 
	and keep the residuals as $\hat{v}_{jk,t} = X_{jk,t} -  X_{j(-k),t}^{\top}\hat{\gamma}_{j(-k)}$.
	\item[S3] Run LS IV regression of $Y_{j,t} - X_{j(-k),t}^{\top}\hat{\beta}^{[1]}_{j(-k)}$ on $X_{jk,t}$ using $\hat{v}_{jk,t}$ as an instrument variable, attaining the final estimator $\hat{\beta}^{[2]}_{jk}$.
\end{itemize}

\textit{Algorithm} 2: LAD-based algorithm
\begin{itemize}\label{algo2}
	\item[S1] and S2 are the same as Algorithm \hyperref[algo1]{1}.
	\item[S3$'$] Run LAD IV regression of $Y_{j,t} - X_{j(-k),t}^{\top}\hat{\beta}^{[1]}_{j(-k)}$ on $X_{jk,t}$ using $\hat{v}_{jk,t}$ as an instrument variable, attaining the final estimator $\hat{\beta}^{[2]}_{jk}$. We refer to \citet{BCK15Bio,CH08IV} for more details about how to achieve the estimator in this step.
\end{itemize}

{The theoretical properties of the estimators $\hat\beta_{j(-k)}^{[1]}$ and $\hat\gamma_{j(-k)}$ in S1 and S2 are provided in Corollary \ref{betabound2} or \ref{betabound4} (see Corollary \ref{lambdabound.joint} or \ref{betabound4.joint} in the supplementary correspondingly if the joint penalty over equations is employed), and Theorem \ref{post} for post LASSO, respectively. The uniform Bahadur representation and the Central Limit Theorem of the estimator $\hat\beta^{[2]}_{jk}$ in S3 or S3$'$ are established in Theorem \ref{bahadur} and Corollary \ref{uninorm}.}

\begin{remark} Our algorithms follow patterns discussed in  \cite{BCK15Bio,BCK15_sup} in the i.i.d. settings. The IV estimator obtained in S3 of Algorithm \hyperref[algo1]{1} reduced to the de-biased LASSO estimator \citep{zhang2014debiased,van2014desparsified} and is also first-order equivalent to the double LASSO
	method in \citet{BCH2011,BCH2014}. In particular, the estimator under LS IV regression (2-step least square regression) is given by
	\begin{align}\label{de-sparsified}
	\hat{\beta}_{jk}^{[2]}&=(\hat{v}_{jk}^\top X_{jk})^{-1}\hat{v}_{jk}^\top(Y_j-X_{j(-k)}^{\top}\hat{\beta}^{[1]}_{j(-k)})\notag\\
	&=(\hat{v}_{jk}^\top X_{jk})^{-1}\hat{v}_{jk}^\top Y_j - \underset{m\neq k}{\sum}\frac{\hat{v}_{jk}^\top X_{jm}}{\hat{v}_{jk}^\top X_{jk}}\hat{\beta}^{[1]}_{jm}.
	\end{align}
	The second line in \eqref{de-sparsified} is exactly the same as the de-biased or de-sparsified LASSO estimator given in Eq. (5) in \citet{zhang2014debiased} or Eq. (5) in \citet{van2014desparsified}.
	As remarked in \cite{BCK15Bio,BCK15_sup}, one can alternatively implement an algorithm via double selection as in \citet{BCH2011,BCH2014}. In particular, heteroscedastic LASSO is employed in S2$''$ and the IV regression is replaced by a either LASSO or LAD regression on the target variable and all covariates selected in the first two steps. \qed
\end{remark}

\textit{Algorithm} 3: Double selection-based algorithm
\begin{itemize}\label{algo3}
	\item[S1$''$] Run LS LASSO (for each $j$) of $Y_{j,t}$ on $X_{j,t}$:
	\beq
	\hat \beta_j^{[1]} = \arg\min_{\beta} \En (Y_{j,t} - X_{j,t}^\top\beta)^2
	+ \frac{\lambda}{n} |\widehat\Psi_j\beta|_1.\notag
	\eeq
	\item[S2$''$] Run Heteroscedastic LASSO (for each $j,k$) 
	of $X_{jk,t}$ on $X_{j(-k),t}$:
	\beq
	\hat \gamma_{j(-k)} = \arg\min_{\gamma} \En (X_{jk,t} - X_{j(-k),t}^\top\gamma)^2
	+ \frac{\lambda'}{n} |\widehat\Gamma_j\gamma|_1,\notag
	\eeq
	where penalty loadings $\widehat\Gamma_{j}$ can be initialized as $\sqrt{\operatorname{lvar}\{X_{j\ell,t}(X_{jk,t}-\En X_{jk,t})\}}$ and then refined by $\sqrt{\operatorname{lvar}(X_{j\ell,t}\hat v_{jk,t})}$,
	for $\ell\neq k$, and $\hat{v}_{jk,t} = X_{jk,t} -  X_{j(-k),t}^{\top}\hat{\gamma}_{j(-k)}$ can be obtained by using the initial ones.
	\item[S3$''$] Run LS regression of $Y_{j,t}$ on $X_{jk,t}$ and the covariates selected in S1$''$ and S2$''$:
	\beq
	\hat \beta_j^{[2]} = \arg\min_{\beta}\{\En (Y_{j,t} - X_{j,t}^\top\beta) ^2:\,\supp(\beta_{-k})\subseteq\supp(\hat\beta^{[1]}_{j(-k)})\cup\supp(\hat\gamma_{j(-k)})\}.\notag
	\eeq
	\item[S3$'''$] Run LAD regression of $Y_{j,t}$ on $X_{jk,t}$ and the covariates selected in S1$''$ and S2$''$:
	\beq
	\hat \beta_j^{[2]} = \arg\min_{\beta}\{\En |Y_{j,t} - X_{j,t}^\top\beta|:\,\supp(\beta_{-k})\subseteq\supp(\hat\beta^{[1]}_{j(-k)})\cup\supp(\hat\gamma_{j(-k)})\}.\notag
	\eeq
\end{itemize}		
As shown in \citet{BCH2011} and \citet{BCK15_sup}, the double selection approach in S3$''$ or S3$'''$ creates an orthogonality condition with respect to the space spanned by the covariates selected by both steps, and thus generates an orthogonal relation to any space spanned by a linear projection of the covariates, e.g. $\hat v_{jk,t}$. Therefore, the inference on the parameters may still be applied as in the framework of Algorithm \hyperref[algo1]{1} and \hyperref[algo2]{2}. {Therefore, one may still find the theoretical properties of estimators in S1$''$, S2$''$, S3$''$ (S3$'''$) in Section \ref{theorem} according to the links mentioned above.}

\subsection{Confidence Interval for a Single Coefficient}\label{sec.single}
We discuss an inference framework developed for a single coefficient obtained from the aforementioned algorithms.

Let $\psi_{jk}(Z_{j,t},\beta_{jk},h_{jk})$ denote the score function, where $Z_{j,t}=(Y_{j,t},X^\top_{j,t})^\top$, $h_{jk}(X_{j(-k),t})=(X_{j(-k),t}^\top\beta_{j(-k)},X_{j(-k),t}^\top\gamma_{j(-k)})^\top$. Consider the LAD-based case with $\psi_{jk}(Z_{j,t},\beta_{jk},h_{jk})=\{1/2-\IF(Y_{j,t}\leq X_{jk,t}\beta_{jk}+X_{j(-k),t}^\top\beta_{j(-k)})\}v_{jk,t}$, define $\omega_{jk} \defeq\E \{(\frac{1}{\sqrt{n}}\sum_{t=1}^n\psi^0_{jk,t})^2\}=\sum_{\ell=-(n-1)}^{n-1}(1-\frac{|\ell|}{n})\operatorname{cov}(\psi^0_{jk,t},\psi^0_{jk,(t-\ell)})$ with $\psi_{jk,t}^0\defeq\psi_{jk}(Z_{j,t},\beta^0_{jk},h^0_{jk})$, 
and $\phi_{jk}\defeq\frac{\partial \E\{\psi_{jk}(Z_{j,t},\beta,h^0_{jk})\}}{\partial \beta}|_{\beta=\beta_{jk}^0}$.

Suppose we are interested in testing $H_0: \beta^0_{jk}=0$. For this purpose we employ the uniform Bahadur representation (Theorem \ref{bahadur}) to construct the confidence interval via a multiplier bootstrap procedure. In particular, the distribution of the asymptotically pivotal statistics:
\beq \label{single}
T_{jk} =  \frac{\sqrt{n}(\hat{\beta}^{[2]}_{jk}- \beta_{jk}^0)}{\hat{\sigma}_{jk}},
\eeq
is approximated via its block multiplier bootstrap counterpart:
\beq \label{blockboot}
T^\ast_{jk}=\frac{1}{\sqrt{n}} \sum_{i=1}^{l_n} e_{j,i}\sum_{l=(i-1)b_n+1}^{ib_n} \hat\zeta_{jk,l},
\eeq
where {$\hat\zeta_{jk,t}$ are pre-estimators of $\zeta_{jk,t}=-\phi^{-1}_{jk}\sigma_{jk}^{-1}\psi^0_{jk,t}$ such that \\
	$\underset{(j,k),(j',k')}{\max}|\sum_{i=1}^{l_n} \hat\eta_{j'k',i}\hat\eta_{jk,i} - \sum_{i=1}^{l_n} \eta_{j'k',i}\eta_{jk,i}|=\smallO_{\P}(\{\log(JK)\}^{-2})$, with $\eta_{jk,i} \defeq \frac{1}{\sqrt{n}} \sum_{l=(i-1)b_n+1}^{ib_n} \zeta_{jk,l} $ and $\hat{\eta}_{jk,i} \defeq\frac{1}{\sqrt{n}}  \sum_{l=(i-1)b_n+1}^{ib_n} \hat \zeta_{jk,l}$}, $e_{j,i}$ are independently drawn from $\operatorname{N}(0,1)$, $l_n$ and $b_n$ are the numbers of blocks and block size, respectively.
{More discussion on how one can construct the consistent pre-estimators $\hat\zeta_{jk,t}$ is stated in the supplementary material; see Comment \ref{zeta.remark}.}

Let $\hat{\sigma}_{jk}$ be any consistent estimator of $\sigma_{jk}$. 
Then the confidence interval is given by
\begin{equation}\label{CI.boot.ind}
\operatorname{CI}_{jk}^\ast(\alpha) : [\hat{\beta}^{[2]}_{jk}-\hat{\sigma}_{jk} n^{-1/2}q^{\ast}_{jk}(1-\alpha), \hat{\beta}^{[2]}_{jk}+\hat{\sigma}_{jk} n^{-1/2}q^{\ast}_{jk}(1-\alpha)],
\end{equation}
where $q^{\ast}_{jk}(1-\alpha)$ is the $(1-\alpha)$ quantile of the bootstrapped distribution of $|T_{jk}^{\ast}|$.

{
	\begin{remark}[Asymptotic Normality of $\hat\beta^{[2]}_{jk}$]
		As shown in Corollary \ref{asy.norm} we have the limit distribution of $\hat\beta^{[2]}_{jk}$:
		\begin{equation}\label{asynorm}
		\sigma^{-1}_{jk} n^{1/2}(\hat{\beta}^{[2]}_{jk}- \beta^0_{jk}) \stackrel{\mathcal{L}}{\rightarrow}  \N (0,1),
		\end{equation}
		where $\sigma_{jk} = (\phi_{jk}^{-2}\omega_{jk})^{1/2}$. Therefore, the two-sided $100(1-\alpha)$ confidence interval by asymptotic normality for $\beta^0_{jk}$ is given by
		\begin{equation}\label{CI.asy}
		\operatorname{CI}_{jk}(\alpha) : [\hat{\beta}^{[2]}_{jk}-\hat{\sigma}_{jk} n^{-1/2}\Phi^{-1}(1-\alpha/2), \hat{\beta}^{[2]}_{jk}+\hat{\sigma}_{jk}n^{-1/2}\Phi^{-1}(1-\alpha/2)].
		\end{equation}
\end{remark}}

\begin{remark}[Residual Multiplier Bootstrap]
	Alternative bootstrap procedures may be considered as well, e.g. the residual multiplier bootstrap procedure:
	\beq
	\hat \eps_{j,t} = Y_{j,t} - X_{j,t}^\top\hat{\beta}^{[1]}_{j},\notag
	\eeq
	then divide $\{\hat\eps_{j,t}\}$ into $l_n$ blocks of size $b_n$, where $b_nl_n=n$, and for each block $i=1,\ldots,l_n$,
	\beq
	\eps^{\ast}_{j,t} = (\hat \eps_{j,t} - \En\hat\eps_{j,t}) e_{j,i},\,\, \text{for }t\in\{(i-1)b_n+1,\ldots,ib_n\}. \notag
	\eeq
	Define $Y^{\ast}_{j,t} = X_{j,t}^{\top}\hat{\beta}^{[1]}_{j}+\eps^{\ast}_{j,t} $ and compute the bootstrap counterpart as
	\beq
	T_{jk}^{\ast} = \frac{\sqrt{n}(\hat{\beta}^{\ast}_{jk}- \hat{\beta}^{[1]}_{jk})}{\hat{\sigma}^{\ast}_{jk}},\notag
	\eeq
	where $\hat{\beta}^{\ast}_{jk}$ and $\hat{\sigma}^{\ast}_{jk}$ are estimated using the bootstrap sample $\{Y^{\ast}_{j,t}, X_{j,t}\}$.
	%
	
\end{remark}

\subsection{Joint Confidence Region for Simultaneous Inference}\label{scr}
We now continue to extend the single coefficient inference to simultaneous inference on a set of coefficients. As shown in the practical examples in Section \ref{pae}, it is essential to conduct simultaneous inference on a group of parameters $G$. In this case, the null hypothesis is:
$\mathbf{H}_0:  \beta_{jk}^0 = 0$, $\forall (j,k) \in G $, and the alternative $\mathbf{H}_A: \beta_{jk}^0 \neq 0$, for some $(j,k) \in G$, where the group $G$ is a set of coefficients with cardinality $|G|$. 
Suppose for the $j$-th equation there are $p_j$ target coefficients and the cardinality $|G| = \sum^J_{j=1} p_{j}$.
This can be understood as a multiple estimation problem compared to Section \ref{sec.single}.
Without loss of generality, we can rearrange the order of the variables and rewrite the regression equation for each $j$ as (consider the LAD-based model here)
\begin{equation}
Y_{j,t} = \sum^{p_j}_{l= 1} X_{jl,t}\beta^0_{jl}+\sum^{K}_{l= p_j+1}X_{jl,t}\beta^0_{jl}+\varepsilon_{j,t}, \quad F_{\varepsilon_{j}}(0) = 1/2
\end{equation}

One follows the algorithms to obtain $\hat{\beta}_{jl} (1\leq l\leq p_{j})$ for each $j$. Then the idea of simultaneous inference is very straightforward. We aggregate the statistics $T_{jk}$
in \eqref{single} by taking the maximum and minimum over the set $G$. Finally, the component-wise confidence interval is constructed with the quantiles of the bootstrap statistics over all bootstrap samples.



Denote $q^{\ast}_{G}(1-\alpha)$ as the $(1-\alpha)$ quantile of $\underset{(j,k)\in G}{\max} |T^{\ast}_{jk}|$. A joint confidence region is then:
\begin{equation}\label{CI.boot.sim}
\Big\{\beta\in{\rm I\!R}^{|G|}:\underset{(j,k)\in G}{\max}T_{jk}\leq q^{\ast}_{G}(1-\alpha) \text{ and } \underset{(j,k)\in G}{\min}T_{jk}\geq -q^{\ast}_{G}(1-\alpha)\Big\},
\end{equation}
and for each component $(j,k)\in G$, the confidence interval $\widetilde{\operatorname{CI}}^\ast_{jk}(\alpha)$ is given by 
$ [\hat{\beta}^{[2]}_{jk}-\hat{\sigma}_{jk} n^{-1/2}q^{\ast}_{G}(1-\alpha), \hat{\beta}^{[2]}_{jk}+\hat{\sigma}_{jk}n^{-1/2}q^{\ast}_{G}(1-\alpha)]$. We show in Corollary \ref{proofboot} the consistency of this bootstrap confidence band for simultaneous inference.
{Note that when there is only one parameter in $G$ for inference, the joint confidence region \eqref{CI.boot.sim} will reduce to the single parameter confidence interval \eqref{CI.boot.ind} as a special case.}

\section{Main Theorems}\label{theorem}
In this section, we present the theoretical foundations for the procedures given earlier. In particular, we discuss the properties of the theoretical choices of penalty level and the validity of the other two empirical choices, as well as the theoretical support for the simultaneous inference.

Throughout the whole section, we define $S_{jk} \defeq n^{-1/2} \sum_{t=1}^n \varepsilon_{j,t} X_{jk,t}$, $S_{j\cdot}= (S_{jk})_{k=1}^K$, and $\Psi_{jk}\defeq\sqrt{\operatorname{avar}(S_{jk})}$, which is the square root of the long-run variance of $X_{jk,t}\vps_{j,t}$, namely \\$\{\sum^{\infty}_{\ell=-\infty} \E (X_{jk,t}X_{jk,(t-\ell)}\vps_{j,t} \vps_{j,(t-\ell)} )\}^{1/2}$. 
Recall that for a single equation LASSO, we select the penalty in the following ways:
\begin{itemize}
	\item[a)]\label{lam_a} theoretically, for each regression, $\lambda_j$ is $\lambda_j^0(1-\alpha)$ (IC), i.e. the $(1-\alpha)$ quantile of \\
	$ 2 c\sqrt{n} \underset{1\leq k \leq K}{\max}|S_{jk}{/\Psi_{jk}}|$ (note that this penalty takes into account the correlation among regressors and is design adaptive);
	\item[b)]\label{lam_b} an empirical choice given a Gaussian approximation result is {$Q_j(1-\alpha) $, which is defined to be the $(1-\alpha)$ quantile of $ 2 c \underset{1\leq k \leq K}{\max}\sqrt{n}|Z_{jk}{/\Psi_{jk}}|$, where $Z_{jk}$'s are multivariate Gaussian centered random variables with the same long run covariance structure as $S_{jk}$. Alternatively, a canonical choice disregarding the correlation among regressors can be considered as
		$\widetilde{Q}_{j}(1-\alpha) \defeq 2 c \sqrt{n}\Phi^{-1}\{1-\alpha/(2K)\}$. We shall note that $Q_j(1-\alpha)$ is not feasible but can be estimated by simulations of Gaussian random variable $Z_{jk}$ with estimated long run variance covariance matrix. Typically $\widetilde{Q}_{j}(1-\alpha)$ is more conservative than $Q_j(1-\alpha)$.}
	\item[c)]\label{lam_c} another empirical choice of the penalty level is $\Lambda_j(1-\alpha)$ as the $(1-\alpha)$ quantile of \\
	$2 c\sqrt{n} \underset{1\leq k \leq K}{\max}| Z^{[B]}_{jk}{/\hat\Psi_{jk}}|$ ($Z^{[B]}_{jk}$'s are defined in \eqref{mbb}), and obtainable via the multiplier block bootstrap technique.
\end{itemize}

\subsection{Near Oracle Inequalities under IC}\label{lasso.ic}

We first provide the near oracle inequalities for the single equation LASSO estimation $\tilde{\beta}_j$ obtained from \eqref{beta.tilde} under the ideal choices (IC). For this purpose, a few assumptions and definitions are required.
\begin{itemize}
	\item[(A1)]\label{A1}
	For $j=1,\ldots,J,k=1,\ldots,K$, let $X_{jk,t}$ and $\vps_{j,t}$ be stationary processes admitting the following representation forms $X_{jk,t} = g_{jk}(\mathcal{F}_{t})=g_{jk}(\ldots, \xi_{t-1}, \xi_{t})$ and $\vps_{j,t} = h_{j}(\mathcal{F}_{t}) = h_{j}(\ldots, \eta_{t-1}, \eta_{t})$, where $\xi_{t}, \eta_{t} $ are i.i.d. random elements (innovations or shocks, allowing for overlap; see Comment \ref{exo}) across $t$, $\mathcal F_t=(\ldots,\xi_{t-1},\eta_{t-1},\xi_t,\eta_t)$, $g_{jk}(\cdot)$ and $h_{j}(\cdot)$ are measurable functions (filters). $\E (X_{jk,t}\vps_{j,t}) = 0,$ for any $j,k\in 1, \cdots, J, 1, \cdots, K$.
\end{itemize}

\begin{definition}\label{norm}
	Let 
	$\xi_{0}$ be replaced by an i.i.d. copy of $\xi_{0}^\ast$, and 
	$X_{jk,t}^\ast = g_{jk}(\ldots, \xi^\ast_0,\ldots,\xi_{t-1}, \xi_{t})$.	For $q \geq 1$, define the functional dependence measure 
	$\delta_{q,j,k,t} \defeq \|X_{jk,t}- X_{jk,t}^\ast\|_q$, which measures the dependency of $\xi_{0}$ on $X_{jk,t}$. Also define $\Delta_{m,q,j,k} \defeq \sum^{\infty}_{t=m} \delta_{q,j,k,t} $, which measures the cumulative effect of $\xi_{0}$ on $X_{jk,t\geq m}$. Moreover, we introduce the dependence adjusted norm of $X_{jk,t}$
	as $\|X_{jk,\cdot}\|_{q,\varsigma}\defeq \sup_{m\geq 0}(m+1)^{\varsigma} \Delta_{m,q,j,k} (\varsigma>0)$. Similarly, let $\eta_{0}$ be replaced by an i.i.d. copy of $\eta_{0}^\ast$, and $\vps_{j,t}^\ast = h_{j}(\ldots, \eta^\ast_0,\ldots,\eta_{t-1}, \eta_{t})$, we define $\|\vps_{j,\cdot}\|_{q,\varsigma}\defeq\sup_{m\geq 0}(m+1)^{\varsigma}\sum^{\infty}_{t=m}\|\vps_{j,t}- \vps_{j,t}^\ast\|_q$ and $\|X_{jk,\cdot}\vps_{j,\cdot}\|_{q,\varsigma}\defeq\sup_{m\geq 0}(m+1)^{\varsigma}\sum^{\infty}_{t=m}\|X_{jk,t}\vps_{j,t}- X_{jk,t}^\ast\vps_{j,t}^\ast\|_q$.
\end{definition}
It should be noted that \hyperref[A1]{(A1)} admits a wide class of processes. The largest value of $\varsigma$ which ensures a finite dependence adjusted norm characterizes the dependency structure of the process.
The moment-based measure is directly connected with the impulse functions.
{A few examples for univariate time series $Z_t$ are listed in Appendix \ref{example} in the supplementary materials.}

\begin{itemize}
	\item[(A2)]\label{A2}
	Restricted eigenvalue (RE): given $\bar c \geq 1$, for $\delta\in {\R}^{K}$, with probability $1-\smallO(1)$,
	\beq
	\kappa_j(\bar  c) \defeq 
	\min_{|\delta_{T_j^c}|_1\leq \bar c |\delta_{T_j}|_1,\,\delta \neq 0}\frac{\sqrt{s_j}|\delta|_{j,pr}}{|\delta_{T_j}|_1}>0,\notag
	\eeq
	where $T_j \defeq \{k:\beta_{jk}^0\neq 0\}$ and $s_j=|T_j|=\smallO(n)$, $\delta_{T_j k}=\delta_k$ if $k\in T_j$, $\delta_{T_j k}=0$ if $k\notin T_j$.
	\item[(A3)]\label{A3}
	$\|\eps_{j,\cdot}\|_{q,\varsigma}<\infty$ and $\|X_{jk,\cdot}\|_{q,\varsigma}<\infty$ ($q \geq 8$).
\end{itemize}

\begin{remark}\label{exo}
	We allow for overlap in the elements in $\xi_{t}$ and $\eta_{t}$, as long as the contemporaneous exogeneity condition $\E (X_{jk,t}\vps_{j,t}) = 0$ is satisfied. {For example, consider the VAR(1) model: $Y_t = AY_{t-1}+ \vps_t$, with $Y_t,\vps_t\in{\R}^J$, and suppose that $Y_t$ admits the representation $Y_t = \sum^{\infty}_{l=0} A^l \vps_{t-l}$ with $\vps_{t-l}$ as measurable functions of $\xi_{-\infty},\ldots, \xi_{t-l}$.  Thus $X_{jk,t} = g_{jk}(\ldots, \xi_{t-1}) = \sum^\infty_{l=0} [A^l]_k\vps_{t-1-l}$, where $[A^l]_k $ is the $k$th row of the matrix $A^l$, $k=1,\ldots,J$. In this case no serial correlation in the innovations $\vps_{t}$'s would be sufficient for  $\E(X_{jk,t}\vps_{j,t}) = 0$.}
\end{remark}

\begin{remark}
	We show in Theorem \ref{ratere} (see the supplementary materials) that the RE \hyperref[A2]{(A2)} and RSE \hyperref[A5]{(A5)} conditions can be implied by assumptions on the corresponding population variance-covariance matrix. This illustrates the feasibility of the RE/RSE assumption.
\end{remark}


\begin{lemma}[Prediction Performance Bound of Single Equation LASSO]\label{betabound}
	Suppose \hyperref[A1]{(A1)} and \hyperref[A2]{(A2)} (with $\bar c=\frac{c+1}{c-1}, c>1$), under the exact sparsity assumption \eqref{sparse} and given the event $\lambda_j \geq 2c\sqrt{n}\underset{1\leq k \leq K}{\max}|S_{jk}{/\Psi_{jk}}|$ and another event which RE holds, then with probability $1-\smallO(1)$, $\tilde\beta_j$ obtained from \eqref{beta.tilde} satisfy
	\beq
	|\tilde \beta_j - \beta_{j}^0|_{j, pr} \leq (1+1/c)\frac{\lambda_j\sqrt{s_j}}{n\kappa_j(\oc)}{\max_{1\leq k \leq K}\Psi_{jk}}.
	\eeq
	In addition, if \hyperref[A2]{(A2)} (with $2\bar c$) holds, then with probability $1-\smallO(1)$,
	\begin{equation}
	|\tilde \beta_j - \beta_{j}^0|_1 \leq \frac{(1+2\bar{c})\sqrt{s_j}}{\kappa_j(2\bar{c})}| \tilde \beta_j - \beta_{j}^0|_{j, pr}.
	\end{equation}
\end{lemma}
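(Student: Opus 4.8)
The plan is to follow the classical Bickel--Ritov--Tsybakov template for $\ell_1$-penalized least squares, adapted to the weighted penalty loadings $\Psi_{jk}$. Everything will be deterministic once I condition on the two stated events --- the penalty-domination event $\lambda_j \geq 2c\sqrt{n}\max_k|S_{jk}/\Psi_{jk}|$ and the event on which \hyperref[A2]{(A2)} holds --- with the $1-\smallO(1)$ factor coming from the (separately established) probabilities of these events. I would write $\delta \defeq \tilde\beta_j - \beta_j^0$ and start from the optimality (basic) inequality: since $\tilde\beta_j$ minimizes the criterion in \eqref{beta.tilde}, comparing its value at $\tilde\beta_j$ and at $\beta_j^0$ and substituting $Y_{j,t}-X_{j,t}^\top\tilde\beta_j = \vps_{j,t}-X_{j,t}^\top\delta$ gives, after cancelling $\En \vps_{j,t}^2$,
$$|\delta|_{j,pr}^2 \leq \frac{2}{\sqrt n}\sum_{k}\delta_k S_{jk} + \frac{\lambda_j}{n}\sum_k\big(|\beta_{jk}^0|-|\tilde\beta_{jk}|\big)\Psi_{jk},$$
where I use $\En(\vps_{j,t}X_{jk,t}) = S_{jk}/\sqrt n$ and $\En(X_{j,t}^\top\delta)^2 = |\delta|_{j,pr}^2$.

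The second step controls the noise term and extracts the cone (restricted-set) condition. I would bound $|\tfrac{2}{\sqrt n}\sum_k \delta_k S_{jk}| \leq 2\max_k|S_{jk}/\Psi_{jk}|\,n^{-1/2}\sum_k|\delta_k|\Psi_{jk}$ and use the penalty-domination event to replace $2n^{-1/2}\max_k|S_{jk}/\Psi_{jk}|$ by $\lambda_j/(cn)$; then, splitting the penalty difference over $T_j$ and $T_j^c$ by the reverse triangle inequality (noting $|\beta_{jk}^0|=0$ off $T_j$, so those terms contribute $-|\delta_k|\Psi_{jk}$), I obtain
$$|\delta|_{j,pr}^2 \leq \frac{\lambda_j}{n}\Big[(1+1/c)\textstyle\sum_{k\in T_j}|\delta_k|\Psi_{jk} - (1-1/c)\sum_{k\notin T_j}|\delta_k|\Psi_{jk}\Big].$$
Because the left-hand side is nonnegative and $c>1$, this forces the weighted cone condition $\sum_{k\notin T_j}|\delta_k|\Psi_{jk}\leq \oc\,\sum_{k\in T_j}|\delta_k|\Psi_{jk}$ with $\oc=\frac{c+1}{c-1}$, which, after passing to the plain $\ell_1$ norm (absorbing the ratio of loadings), places $\delta$ in the restricted cone on which \hyperref[A2]{(A2)} is stated.

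The third step delivers the prediction bound. Discarding the negative $T_j^c$ term and bounding $\sum_{k\in T_j}|\delta_k|\Psi_{jk}\leq(\max_k\Psi_{jk})|\delta_{T_j}|_1$, I get $|\delta|_{j,pr}^2 \leq (1+1/c)\frac{\lambda_j}{n}(\max_k\Psi_{jk})|\delta_{T_j}|_1$; applying \hyperref[A2]{(A2)} in the form $|\delta_{T_j}|_1\leq \sqrt{s_j}\,|\delta|_{j,pr}/\kappa_j(\oc)$ and cancelling one factor of $|\delta|_{j,pr}$ (the case $\delta=0$ being trivial) yields exactly the asserted bound. For the second assertion the argument is purely deterministic: on the enlarged cone $|\delta|_1=|\delta_{T_j}|_1+|\delta_{T_j^c}|_1\leq(1+2\oc)|\delta_{T_j}|_1$, and \hyperref[A2]{(A2)} with constant $2\oc$ gives $|\delta_{T_j}|_1\leq\sqrt{s_j}\,|\delta|_{j,pr}/\kappa_j(2\oc)$, which combine to the claimed inequality.

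I expect the main obstacle to be the mismatch between the weighted $\ell_1$ geometry produced by the loadings $\Psi_{jk}$ and the plain-$\ell_1$ restricted-eigenvalue condition in \hyperref[A2]{(A2)}: the cone emerges naturally in the $\Psi$-weighted norm, and converting it to the unweighted cone on which $\kappa_j$ is defined requires controlling $\max_k\Psi_{jk}/\min_k\Psi_{jk}$, which is precisely what forces the factor $\max_k\Psi_{jk}$ in the first bound and the enlargement from $\oc$ to $2\oc$ in the second. A secondary point, handled outside this deterministic argument, is to verify that both conditioning events hold with probability $1-\smallO(1)$; this relies on the calibration of $\lambda_j$ through the quantile of $2c\sqrt n\max_k|S_{jk}/\Psi_{jk}|$ and on the population-level implication of the RE condition.
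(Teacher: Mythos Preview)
Your proposal is correct and follows exactly the approach the paper intends: the paper does not give a detailed proof but simply states that the lemma follows from Theorem~1 of \cite{belloni2009least}, noting that the underlying inequalities are unchanged under the dependence assumption \hyperref[A1]{(A1)} --- and that theorem is precisely the basic-inequality-plus-cone-condition argument you reproduce. Your identification of the weighted-versus-unweighted cone mismatch (and the role of $\max_k\Psi_{jk}/\min_k\Psi_{jk}$ in passing from the $\Psi$-weighted cone to the unweighted cone of \hyperref[A2]{(A2)}, which drives the enlargement to $2\bar c$) is exactly the technical refinement that distinguishes the weighted-loadings version from the textbook case.
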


{Lemma \ref{betabound} follows Theorem 1 of \cite{belloni2009least}. As the proof is built on inequalities and for the case of dependent data \hyperref[A1]{(A1)} they remain unchanged, we omit the detailed proof here.} To further characterize the rate of IC, we provide a tail probability for $2 c\sqrt{n} \underset{1\leq k \leq K}{\max}|S_{jk}{/\Psi_{jk}}|$ under the moment assumption \hyperref[A3]{(A3)}. 
In particular, the rate depends on the dependence adjusted norm $\|X_{jk,\cdot}\vps_{j,\cdot}\|_{q,\varsigma}$.

\begin{theorem}\label{gammabound}
	Under \hyperref[A1]{(A1)} and \hyperref[A3]{(A3)}, we have
	\begin{align}\label{prob}
	\P(2c\sqrt{n}\max_{1\leq k \leq K}|S_{jk}{/\Psi_{jk}}|\geq r) \leq  &C_1\varpi_nnr^{-q}\sum_{k=1}^K\frac{\|X_{jk,\cdot}\vps_{j,\cdot}\|^q_{q,\varsigma}}{\Psi_{jk}^q}+C_2\sum_{k=1}^K \exp\Big(\frac{-C_3r^2\Psi_{jk}^2}{n\|X_{jk,\cdot}\vps_{j,\cdot}\|^2_{2,\varsigma}}\Big),
	\end{align}
	where for $\varsigma > 1/2-1/q$ (weak dependence case), $\varpi_n = 1$; for $\varsigma< 1/2-1/q$  (strong dependence case), $\varpi_n = n^{q/2-1- \varsigma q}$. $C_1,C_2,C_3$ are constants depending on $q$ and $\varsigma$. 
\end{theorem}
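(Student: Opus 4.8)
The plan is to reduce the maximum over the $K$ coordinates to a single partial-sum tail bound and then apply a Nagaev-type inequality for stationary processes under the functional dependence measure of Definition~\ref{norm}. Set $W_{jk,t}\defeq X_{jk,t}\vps_{j,t}$, so that $\sqrt n\,S_{jk}=\sum_{t=1}^n W_{jk,t}$. Under \hyperref[A1]{(A1)} each $W_{jk,\cdot}$ is stationary, adapted to $\mathcal F_t$, and mean zero because $\E(X_{jk,t}\vps_{j,t})=0$; its dependence-adjusted norm $\|X_{jk,\cdot}\vps_{j,\cdot}\|_{q,\varsigma}$ is finite under \hyperref[A3]{(A3)} by the product inequality for functional dependence measures (a Cauchy--Schwarz split of $X_{jk,t}\vps_{j,t}-X_{jk,t}^{\ast}\vps_{j,t}^{\ast}$), the moment budget $q\ge 8$ supplying the integrability needed here and in the Nagaev step below. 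A union bound gives
$$
\P\Big(2c\sqrt n\max_{1\le k\le K}|S_{jk}/\Psi_{jk}|\ge r\Big)\le \sum_{k=1}^K \P\Big(\Big|\sum_{t=1}^n W_{jk,t}\Big|\ge \tfrac{r\,\Psi_{jk}}{2c}\Big),
$$
so it suffices to bound the tail of a single partial sum at level $x=r\Psi_{jk}/(2c)$.

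The core is the single-process bound
$$
\P\Big(\Big|\sum_{t=1}^n W_{jk,t}\Big|\ge x\Big)\le C_1'\,\varpi_n\,n\,x^{-q}\,\|X_{jk,\cdot}\vps_{j,\cdot}\|_{q,\varsigma}^{q}+C_2'\exp\Big(-\frac{C_3'\,x^2}{n\,\|X_{jk,\cdot}\vps_{j,\cdot}\|_{2,\varsigma}^{2}}\Big),
$$
which I would establish by the martingale-approximation method. Using the projection $\mathcal P_\ell(\cdot)\defeq\E(\cdot\mid\mathcal F_\ell)-\E(\cdot\mid\mathcal F_{\ell-1})$, the influence of the innovation at time $\ell$ on $W_{jk,t}$ is governed by $\delta_{q,j,k,t-\ell}$, and I would truncate each summand at a level tied to $x$. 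The large part produces, via Markov's inequality on $q$-th moments, the polynomial term $n\,x^{-q}\|\cdot\|_{q,\varsigma}^{q}$ (the factor $n$ counting the summands); the truncated part is approximated by a martingale with differences $D_\ell=\mathcal P_\ell\big(\sum_{t\ge\ell}W_{jk,t}\big)$ whose conditional variances are controlled by $\|X_{jk,\cdot}\vps_{j,\cdot}\|_{2,\varsigma}^{2}$, so a Freedman/Bernstein-type martingale inequality yields the sub-Gaussian term with variance proxy $n\|\cdot\|_{2,\varsigma}^{2}$. Substituting $x=r\Psi_{jk}/(2c)$, summing over $k$, and absorbing the powers of $2c$ into $C_1,C_2,C_3$ reproduces \eqref{prob}.

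The dichotomy in $\varpi_n$ is where the dependence structure enters and is the step I expect to be the main obstacle. The martingale approximation is inexact, and its non-martingale remainder is governed by the cumulative tail measures $\Delta_{m,q,j,k}=\sum_{t\ge m}\delta_{q,j,k,t}$. When $\varsigma>1/2-1/q$ these tails decay fast enough that the remainder is negligible relative to the martingale part and $\varpi_n=1$; when $\varsigma<1/2-1/q$ the cumulative dependence accumulates, and propagating the remainder through the $L^q$ Rosenthal/Burkholder bounds generates the extra factor $\varpi_n=n^{q/2-1-\varsigma q}$. Choosing the truncation threshold so the two terms balance, and tracking the $n$-dependence of this remainder in the strong-dependence regime, is the delicate part; the remaining steps (union bound, rescaling, constant bookkeeping) are routine. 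I would import the single-process inequality together with its weak/strong split from the functional-dependence machinery of \citet{ZW15gaussian} and adapt it to the product process $W_{jk,\cdot}$.
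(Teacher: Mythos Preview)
Your proposal is correct and follows essentially the same route as the paper: a union (Bonferroni) bound over $k$ reduces the problem to a single-coordinate Nagaev-type tail bound for $\sum_{t=1}^n X_{jk,t}\vps_{j,t}$, and then the substitution $x=r\Psi_{jk}/(2c)$ gives \eqref{prob}. The only difference is that the paper does not re-derive the single-process inequality via martingale approximation and truncation as you sketch; it simply invokes Theorem~2 of \citet{wu2016performance} (rather than \citet{ZW15gaussian}) for the ready-made bound $\P(|\sqrt n\,S_{jk}|\ge x)\le C_1'\varpi_n n x^{-q}\|X_{jk,\cdot}\vps_{j,\cdot}\|_{q,\varsigma}^q+C_2'\exp\{-C_3 x^2/(n\|X_{jk,\cdot}\vps_{j,\cdot}\|_{2,\varsigma}^2)\}$, which already contains the weak/strong dependence dichotomy in $\varpi_n$.
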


\begin{remark}\label{var}
	It can be seen in Theorem \ref{gammabound} that the rate of the dependence adjusted norm $\|X_{jk,\cdot}\vps_{j,\cdot}\|_{q,\varsigma}$ plays an important role in the tail probability for $2c\sqrt{n} \underset{1\leq k \leq K}{\max}|S_{jk}{/\Psi_{jk}}|$. Here we discuss the rate under some special cases.
	\begin{itemize}
		{\item[1.] \textbf{VAR(1)} (Example \ref{examp4}, continued): Consider the VAR(1) model given by $Y_t=AY_{t-1}+\vps_t$, where $Y_t,\vps_t \in {\R}^J$, and $\vps_t\sim\mbox{i.i.d.}\N(0, \Sigma)$. In this case $X_{jk,t}=Y_{j,t-1}$ and $K=J$. Suppose there exists a stationary representation of the model as $Y_t = \sum^{\infty}_{l=0} A^{l}\vps_{t-l}$. Then we have $\|X_{jk,t}\vps_{j,t} - X_{jk,t}^\ast \vps_{j,t}^\ast\|_q = \|Y_{j,t-1}\vps_{j,t} - Y_{j,t-1}^\ast \vps_{j,t}\|_q = \|[A^{t-1}]_j(\vps_0-\vps_0^\ast)\vps_{j,t}\|_q\leq2|[A^{t-1}]_j|_1\mu_q^2$, where $\mu_q\defeq\max_j\|\vps_{j,t}\|_q$ and $[A^{t-1}]_j$ is the $j$th row of the matrix $A^{t-1}$. Assume $\max_j|[A^{t}]_j|_1 \leq |c|^{t}$ with $|c|<1$ (a geometric decay rate). It follows that $\|X_{jk,\cdot}\vps_{j,\cdot}\|_{q,\varsigma}=\frac{2\mu_q^2}{1-|c|}\sup_{m\geq 0}(m+1)^\varsigma\sum_{t=m}^{\infty}|c|^{t-1}\leq (C/|c|)\vee \{C(m^*+1)|c|^{m^*-1}\}$, where $m^*=(-\varsigma/\log|c|-1)\vee 0$ and $C>0$ depends on $\mu_q$.
			Moreover, to justify the geometric decay rate, we consider the example of Network Autoregressive (NAR) model as in \citet{zhu2017network} with $A=\rho W$, where $W$ is a row-normalized adjacency matrix which is pre-specified to indicate the social network connectedness and $\rho$ is the network parameter suggesting the strength of the network effects. In that case, assuming a geometric decay rate {$\max_j|[A^{t}]_j|_1 \leq |c|^{t}$ with $|c|<1$} again gives similar results.}
		\item[2.] \textbf{Spatial autoregressive structure in $\vps_t$}: Consider the model $Y_{j,t}=X_{j,t}^\top\beta_j +\vps_{j,t}$, with $\vps_t=\rho W\vps_t +\eta_t$, where $W$ is a spatial weight matrix, $\eta_t$ are i.i.d. and have finite $q$th moments $\mu_q^\eta\defeq\max_j\|\eta_{j,t}\|_q$. For simplicity, here we assume $X_{j,t}$ and $\vps_{j,t}$ are independent. Suppose there exists a stationary representation of the error process given by $\vps_t = \sum^{\infty}_{l=0} \rho^{l}W^{l}\eta_{t-l}$. Then we have $\|X_{jk,t}\vps_{j,t} - X_{jk,t}^\ast \vps_{j,t}^\ast\|_q \leq \|(X_{jk,t} - X_{jk,t}^\ast)\vps_{j,t}\|_q + \|X_{jk,t}(\vps_{j,t} - \vps_{j,t}^\ast)\|_q \leq \|X_{jk,t} - X_{jk,t}^\ast\|_q \|\vps_{j,t}\|_q + \|X_{jk,t}\|_q\|[\rho^tW^t]_j(\eta_0-\eta^\ast_0)\|_q \leq |[(\mathbf{I}-\rho W)^{-1}]_j|_1\mu^\eta_q\|X_{jk,t} - X_{jk,t}^\ast\|_q + 2|[\rho^tW^t]_j|_1\mu^\eta_q\|X_{jk,t}\|_q$. Assume $\max_j|[\rho^tW^t]_j|_1 \leq |c|^{t}$ with $|c|<1$. It follows that $\|X_{jk,\cdot}\vps_{j,\cdot}\|_{q,\varsigma}\leq C_1 \|X_{jk,\cdot}\|_{q,\varsigma} + C_2\sup_{m\geq 0}(m+1)^\varsigma\sum_{t=m}^{\infty}|c|^{t}\leq C_1\|X_{jk,\cdot}\|_{q,\varsigma} + C_3(m^*+1)|c|^{m^*-1}$, where $m^*=(-\varsigma/\log|c|-1)\vee 0$ and $C_1,C_2,C_3>0$ depend on $\mu_q^\eta$ and $\|X_{jk,t}\|_q$.
		\item[3.] \textbf{General linear processes}: To study more general spatial and temporal dependency, consider the model $Y_{j,t}=X_{j,t}^\top\beta_j +\vps_{j,t}$, with $\vps_t = \sum^{\infty}_{l=0}A^l \eta_{t-l}$. Again $\eta_t$ are i.i.d. and have finite $q$th moments $\mu_q^\eta\defeq\max_j\|\eta_{j,t}\|_q$. If all the $A^l$ are diagonal matrices, there is just temporal dependence, and if $A^l = 0$ for $l\geq 1$ there exists only spatial dependence. Let $a_{jk}^t \defeq [A^t]_{jk}$ be the element on the $j$th row and $k$th column of $A^t$.
		Assume $\sum_{t=0}^\infty\sum_{k}|a_{jk}^t|<\infty$, $X_{j,t}$ and $\vps_{j,t}$ to be independent. We have 
		$\|X_{jk,\cdot}\vps_{j,\cdot}\|_{q,\varsigma}\leq C_1\|X_{jk,\cdot}\|_{q,\varsigma} + C_2\sup_{m\geq0}(m+1)^\varsigma\sum_{t=m}^\infty\sum_{k}|a_{jk}^t|$, where $C_1,C_2>0$ depend on $\mu_q^\eta$ and $\|X_{jk,t}\|_q$. Moreover, we have $\|\max_{jk}(X_{jk,\cdot}\vps_{j,\cdot})\|_{q,\varsigma}\leq \|\max_{jk}X_{jk,\cdot}\|_{q,\varsigma} \|\max_j\vps_{j,\cdot}\|_{q,\varsigma}$, and particularly $\||\vps_{t}|_\infty\|_{q}\leq \|\max_j \sum_k a_{jk}^t (\eta_{k,0}-\eta_{k,0}^\ast)\|_{q}\lesssim q\|\max_{k}\max_j a_{jk}^t (\eta_{k,0}-\eta_{k,0}^\ast)\|_q+ \sqrt{q \log J}\{\sum_k \max_j (a_{jk}^t)^2 (\mu_2^\eta)^2 \}^{1/2} \lesssim q\sum_k\max_j|a_{jk}^t|\mu_{q}^\eta\,\vee\,\sqrt{q\log J}\{\sum_k \max_j (a_{jk}^t)^2\}^{1/2}\mu_{2}^\eta $, where the Rosenthal-Burkholder inequality is applied. Suppose that $\sum_{t=m}^\infty(\sum_k\max_j|a_{jk}^t|)\lesssim J(m \vee 1)^{-c}$, for some constant $c>0$. If $\varsigma< c$, we have $\|\max_j\vps_{j,\cdot}\|_{q,\varsigma} \leq C_3\sup_{m\geq 1} (m+1)^{\varsigma} (m \vee 1)^{-c} J \sqrt{\log J } \leq C_3\sup_{m\geq 1}(m+1)^{\varsigma-c} J\sqrt{\log J}$, where $C_3>0$ depends on $\mu_q^\eta$.
	\end{itemize}
	
	To summarize, if the $q$th moments are bounded by constant, the dependence adjusted norm $\|X_{jk,\cdot}\vps_{j,\cdot}\|_{q,\varsigma}$ is also bounded in the first two examples where a geometric decay rate on the coefficients is assumed; while in the case of general linear processes, it would depend on the rate of $\sum_{t=0}^\infty\sum_{k}|a_{jk}^t|$. In particular, suppose $\sum_{t=m}^\infty\sum_{k}|a_{jk}^t|\lesssim (m\vee1)^{-c}$ for $c>0$. If $c>\varsigma$, $\|X_{jk,\cdot}\vps_{j,\cdot}\|_{q,\varsigma}$ is bounded (assume $\|X_{jk,\cdot}\|_{q,\varsigma}$ is bounded).
	
\end{remark}

Under the choice (IC) $\lambda_j^0(1-\alpha)$ is given by the $(1-\alpha)$ quantile of $2 c\sqrt{n} \underset{1\leq k \leq K}{\max}|S_{jk}{/\Psi_{jk}}|$, combining the results of Lemma \ref{betabound} and Theorem \ref{gammabound} we can get the bounds for $\lambda_j^0(1-\alpha)$ and further obtain the oracle inequalities as in Corollary \ref{betabound2}.

\begin{corollary}[Bounds for $\lambda_j^0(1-\alpha)$ and Oracle Inequalities under IC]\label{betabound2}
	Under \hyperref[A1]{(A1)}-\hyperref[A3]{(A3)}, 
	given $\lambda^0_j(1-\alpha)$ satisfying
	\beq
	\lambda^0_j(1-\alpha) \lesssim \max_{1\leq k \leq K}\bigg\{\|X_{jk,\cdot}\vps_{j,\cdot}\|_{2,\varsigma}\sqrt{n\log(K/\alpha) }\vee\|X_{jk,\cdot}\vps_{j,\cdot}\|_{q,\varsigma}(n\varpi_nK/\alpha)^{1/q}\bigg\},
	\eeq
	and the exact sparsity assumption \eqref{sparse}, then $\tilde\beta_j$ obtained from \eqref{beta.tilde} under IC satisfies
	\beq
	|\tilde \beta_j - \beta_{j}^0|_{j, pr} \lesssim \frac{\sqrt{s_j} }{\kappa_j(\bar{c})}\max_{1\leq k \leq K}\Psi_{jk}\bigg\{\|X_{jk,\cdot}\vps_{j,\cdot}\|_{2,\varsigma}\sqrt{\log (K/\alpha)/n }\vee\|X_{jk,\cdot}\vps_{j,\cdot}\|_{q,\varsigma}n^{1/q-1}(\varpi_nK/\alpha)^{1/q} \bigg\},
	\eeq
	with probability $1- \alpha - \smallO(1)$, where for $\varsigma > 1/2-1/q$ (weak dependence case), $\varpi_n = 1$; for $\varsigma< 1/2-1/q$  (strong dependence case), $\varpi_n = n^{q/2-1- \varsigma q}$.
\end{corollary}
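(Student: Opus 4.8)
The plan is to derive the corollary by chaining together the two results already established in the excerpt: the tail bound for $2c\sqrt{n}\max_{k}|S_{jk}/\Psi_{jk}|$ in Theorem \ref{gammabound} and the deterministic prediction inequality in Lemma \ref{betabound}. The argument naturally splits into two steps. First I would convert the tail probability into an upper bound on the quantile $\lambda_j^0(1-\alpha)$; then I would feed that quantile bound into Lemma \ref{betabound} and simplify the resulting rate.

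For the first step, recall that $\lambda_j^0(1-\alpha)$ is by definition the $(1-\alpha)$-quantile of the statistic $R_j\defeq 2c\sqrt{n}\max_{1\leq k\leq K}|S_{jk}/\Psi_{jk}|$, so it suffices to exhibit a threshold $r^{*}$ with $\P(R_j\geq r^{*})\leq\alpha$, whence $\lambda_j^0(1-\alpha)\leq r^{*}$. I would split the budget $\alpha$ across the two terms on the right-hand side of \eqref{prob}, forcing each to be at most $\alpha/2$, and bound $\sum_{k=1}^K(\cdot)\leq K\max_{1\leq k\leq K}(\cdot)$. Inverting the exponential (Gaussian) term requires
$$
r\gtrsim \max_{1\leq k\leq K}\frac{\|X_{jk,\cdot}\vps_{j,\cdot}\|_{2,\varsigma}}{\Psi_{jk}}\sqrt{n\log(K/\alpha)},
$$
while inverting the polynomial (heavy-tail) term requires
$$
r\gtrsim \max_{1\leq k\leq K}\frac{\|X_{jk,\cdot}\vps_{j,\cdot}\|_{q,\varsigma}}{\Psi_{jk}}\big(n\varpi_n K/\alpha\big)^{1/q}.
$$
Taking $r^{*}$ to be the maximum of these two expressions yields the stated bound on $\lambda_j^0(1-\alpha)$; the normalisation $\Psi_{jk}=\sqrt{\operatorname{avar}(S_{jk})}$ is bounded away from zero under \hyperref[A3]{(A3)} and is absorbed into the $\lesssim$ constant.

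For the second step, setting $\lambda_j=\lambda_j^0(1-\alpha)$ makes the event $\{\lambda_j\geq R_j\}$ hold with probability at least $1-\alpha$ by the definition of the quantile, while the RE condition \hyperref[A2]{(A2)} supplies the second event required by Lemma \ref{betabound} with probability $1-\smallO(1)$. Intersecting the two, both hypotheses of Lemma \ref{betabound} hold with probability $1-\alpha-\smallO(1)$, and on that event
$$
|\tilde \beta_j - \beta_{j}^0|_{j, pr} \leq (1+1/c)\frac{\lambda_j^0(1-\alpha)\sqrt{s_j}}{n\kappa_j(\bar c)}\max_{1\leq k \leq K}\Psi_{jk}.
$$
Substituting the quantile bound from the first step and distributing the factor $\lambda_j^0(1-\alpha)/n$ turns $\sqrt{n\log(K/\alpha)}$ into $\sqrt{\log(K/\alpha)/n}$ and $(n\varpi_n K/\alpha)^{1/q}$ into $n^{1/q-1}(\varpi_n K/\alpha)^{1/q}$, producing exactly the claimed prediction-norm rate.

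The step I expect to be the main obstacle is the inversion of the two-term tail bound in Theorem \ref{gammabound}: one must allocate the probability $\alpha$ between the Gaussian and heavy-tail contributions, pass correctly from $\sum_k$ to $K\max_k$, and track the $\Psi_{jk}$ normalisation so that the resulting quantile bound matches the stated form. A further wrinkle is that $\varpi_n$ branches according to whether $\varsigma>1/2-1/q$ (weak dependence, $\varpi_n=1$) or $\varsigma<1/2-1/q$ (strong dependence, $\varpi_n=n^{q/2-1-\varsigma q}$), so the inversion of the polynomial term must be carried out consistently in both regimes; once this bookkeeping is done, the remaining substitution into Lemma \ref{betabound} is routine.
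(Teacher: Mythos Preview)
Your proposal is correct and follows exactly the route the paper indicates: it explicitly says the corollary is obtained by ``combining the results of Lemma \ref{betabound} and Theorem \ref{gammabound},'' and your two steps---inverting the Nagaev-type tail bound of Theorem \ref{gammabound} to control the quantile $\lambda_j^0(1-\alpha)$, then substituting into the deterministic prediction bound of Lemma \ref{betabound}---are precisely that combination spelled out in detail. One minor remark: absorbing the $\Psi_{jk}^{-1}$ factors into the $\lesssim$ constant requires $\min_k\Psi_{jk}$ to be bounded away from zero, which is not a consequence of \hyperref[A3]{(A3)} per se but is the standing long-run variance condition used elsewhere in the paper (e.g.\ Theorem \ref{gausappro}); otherwise your argument is complete.
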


\begin{remark}
	The Nagaev type of inequality in \eqref{prob} has two terms, namely an exponential term and a polynomial term. 
	It should be noted that if the polynomial term dominates, the above bound does not allow for ultra high dimension of $K$. Basically, we only allow for a polynomial rate {$K = \bigO(n^{\tilde c})$}, and the rate of $K$ interplays with the dependence adjusted norm $\|X_{jk,\cdot}\vps_{j,\cdot}\|_{q,\varsigma}$. In particular, to make sure that the estimators are consistent (i.e. the error bounds tend to zero for sufficiently large $n$),
	{for example, we need $\tilde c< q-1-\upsilon q/2 - dq$, if there exists $q$ 
		to guarantee $\|X_{jk,\cdot}\vps_{j,\cdot}\|_{q,\varsigma}=\bigO(n^d)$ and $0<\upsilon<1$ such that $s_j=\bigO(n^\upsilon)$.} 
\end{remark}
We now discuss the case of sub-Gaussian tail or sub-exponential tail, which is mostly assumed in the literature.
\begin{remark}\label{exp.bound}
	Suppose that a stronger exponential moment condition is satisfied,
	\beq\label{exp.moment}
	\|X_{jk,\cdot}\vps_{j,\cdot}\|_{\psi_\nu, \varsigma}=\sup_{q\geq2}q^{-\nu}\|X_{jk,\cdot}\vps_{j,\cdot}\|_{q,\varsigma}<\infty,\eeq
	%
	where $\|X_{jk,\cdot}\vps_{j,\cdot}\|_{\psi_\nu,\varsigma}$ is interpreted as the dependence adjusted sub-exponential ($\nu=2$) or sub-Gaussian ($\nu=1$) norm. {Consider the special case of VAR(1). As shown above, we have $\|X_{jk,t}\vps_{j,t} - X_{jk,t}^\ast \vps_{j,t}^\ast\|_q \leq2|[A^{t-1}]_j|_1\mu_q^2$. In particular, it is known that $\mu_q\lesssim q$ for sub-exponential variables and $\mu_q\lesssim \sqrt{q}$ for sub-Gaussian variables. Let $\nu=2$ and $\nu=1$ for the two cases respectively, $\|X_{jk,\cdot}\vps_{j,\cdot}\|_{\psi_\nu, \varsigma}\lesssim (m^*+1)|c|^{m^*-1}$.} Then applying the exponential tail bounds as in Lemma \ref{lemma.exp} in the supplementary material, we arrive at the following error bounds with probability $1- \alpha - \smallO(1)$,
	\beq\label{expbound}
	|\tilde \beta_j - \beta_{j}^0|_{j, pr} \lesssim \frac{\sqrt{s_j} }{\kappa_j(\bar{c})}\max_{1\leq k \leq K}\Psi_{jk}\|X_{jk,\cdot}\vps_{j,\cdot}\|_{\psi_\nu,0}\frac{\{\log (K/\alpha)\}^{ 1/\gamma}}{\sqrt{n}},\quad \gamma  = 2/(2\nu +1),
	\eeq
	as $\lambda_j^0(1-\alpha) \lesssim \sqrt{n}(\log K)^{1/\gamma}\underset{1\leq k \leq K}{\max}\|X_{jk,\cdot}\vps_{j,\cdot}\|_{\psi_\nu,0}$. The bound \eqref{expbound} works with ultra-high dimensional rate $\exp(n^{r\gamma})$ ($r< 1$) of $K$ as only the exponential term shows in the inequality. In particular, suppose $s_j = \bigO(n^{\upsilon})$, and $\|X_{jk,\cdot}\vps_{j,\cdot}\|_{\psi_\nu,0} = \bigO(n^{d})$, then $r+ d+ \upsilon/2 <1/2$ is required to ensure the consistency.
	
	In the special case with i.i.d. data, the dependence adjusted norm would be $\|X_{jk,\cdot}\vps_{j,\cdot}\|_{q,\varsigma} \leq 2 \|X_{jk,t}\vps_{j,t}\|_{q}$, and $\|X_{jk,\cdot}\vps_{j,\cdot}\|_{\psi_\nu,0}$ will be bounded by a constant which is relevant to the corresponding tail assumptions of the moments. Compared to the standard rate for LASSO estimators such as in Theorem 1 of \citet{belloni2009least} with independent errors, our results will be the same for the case of Gaussian innovation (i.e. $\nu=0$). Moreover, for time series data, disregarding the dependency adjusted norm term, our convergence rate of prediction norm $\sqrt{s_j\log K /n}$ (given $\nu=0$) is also of the same order as the rate for stable Gaussian processes studied in \citet{basu2015regularized}.
	
\end{remark}

\subsection{Gaussian Approximation for Dependent Data}
Now we look at the validity of the choice of $Q_j(1-\alpha)$, which relies on a Gaussian approximation theorem. First we define the Kolmogorov distance between any two $K$-dim random vectors.
\begin{definition}
	Let $\bm{X} = (X_1,\cdots, X_K)^\top\in{\R}^K$, $\bm{Y} = (Y_1,\cdots, Y_K)^\top\in{\R}^K$. The Kolmogorov distance between $\bm X$ and $\bm Y$ is defined as
	\beq
	\rho(\bm{X}, \bm{Y}) = \sup_{r\geq0}\big|\P(|\bm X|_{\infty} \geq r) - \P(|\bm Y|_{\infty} \geq r)\big|.\nonumber
	\eeq
\end{definition}


{For each single equation $j$, aggregate the dependence adjusted norm over $k=1,\ldots,K$:
	\begin{equation}\label{dan}
	\||X_{j,\cdot}|_\infty\|_{q,\varsigma}\defeq\sup_{m\geq 0}(m+1)^{\varsigma} \sum^{\infty}_{t=m} \delta_{q,j,t}, \,\, \delta_{q,j,t} \defeq \| |X_{j,t}- X_{j,t}^\ast|_\infty \|_q,
	\end{equation}
	where $q\geq1$ and $\varsigma>0$. Moreover, define the following quantities
	\begin{align}
	&\Phi_{j,q,\varsigma}\defeq 2\max_{1\leq k\leq K} \|X_{jk,\cdot}\|_{q,\varsigma}\|\vps_{j,\cdot}\|_{q,\varsigma}, \,\, \Gamma_{j,q, \varsigma} \defeq  2\|\vps_{j,\cdot}\|_{q,\varsigma}\bigg(\sum_{k=1}^K \|X_{jk,\cdot}\|^{q/2}_{q,\varsigma}\bigg)^{2/q}\notag\\
	&\Theta_{j, q,\varsigma} \defeq \Gamma_{j,q,\varsigma}\wedge \big\{2\||X_{j,\cdot}|_\infty\|_{q,\varsigma}\|\vps_{j,\cdot}\|_{q,\varsigma}(\log K)^{3/2}\big\}.
	\end{align}}
{It is worth noting that the norm $ \||X_{j,\cdot}|_\infty\|_{q,\varsigma}$ is a kind of aggregated dependence adjusted norm for a vector of processes in comparison to the dependence adjusted norm for a univariate process as in Definition \ref{norm}. }



Some additional assumptions are required. Define $L_{1,j} = \{\Phi_{j,4,\varsigma}\Phi_{j,4,0} (\log K)^2\}^{1/\varsigma}$, $W_{1,j}$ $=$ $(\Phi^6_{j,6,0}+ \Phi^4_{j,8,0})\{\log(Kn)\}^7$, $W_{2,j}$ $=$ $\Phi^2_{j,4,\varsigma}\{\log(Kn)\}^4$, $W_{3,j}$ $=$ $[n^{-\varsigma} \{\log (Kn)\}^{3/2} \Theta_{j,2q,\varsigma}]^{1/(1/2-\varsigma-1/q)}$, $N_{1,j} =(n/\log K)^{q/2} \Theta_{j,2q, \varsigma}^{q}$, $N_{2,j}=n(\log K)^{-2}\Phi_{j,4,\varsigma}^{-2}$, $N_{3,j} = \{n^{1/2}(\log K)^{-1/2}\Theta^{-1}_{j,2q, \varsigma}\}^{1/(1/2-\varsigma)}$.
\begin{itemize}
	\item[(A4)]\label{A4}
	i) (weak dependency case) Given $\Theta_{j, 2q,\varsigma} < \infty$ with $q \geq 4$ and $\varsigma > 1/2 - 1/q$, then \\
	$\Theta_{j,2q, \varsigma} n^{1/q-1/2}\{\log (Kn)\}^{3/2} \to 0$ and $L_{1,j}\max(W_{1,j}, W_{2,j}) = \smallO(1) \min (N_{1,j},N_{2,j})$.\\
	ii) (strong dependency case) Given $0<\varsigma< 1/2 -1/q$, then $\Theta_{j,2q,\varsigma}(\log K)^{1/2} = \smallO(n^{\varsigma})$ and $L_{1,j}\max(W_{1,j},W_{2,j},W_{3,j}) = \smallO(1)\min(N_{2,j},N_{3,j})$.
	
\end{itemize}

The assumptions impose mild restrictions on the dependency structure of covariates and error terms. They include a wide class of potential correlation and heterogeneity (including conditional heteroscedasticity), with possible allowance of the lagged dependent variables. {Two examples of large VAR and ARCH for high-dimensional time series can be found in Appendix \ref{example} in the supplementary materials.}

{
	\begin{remark}[Admissible Dimension Rates by the Conditions for Gaussian Approximation]\label{garate}
		As discussed in \citet{ZW15gaussian}, consider the case with $\Theta_{j,2q, \varsigma}=\bigO(K^{1/q})$ and $\Phi_{j,2q,\varsigma}=\bigO(1)$, where $\varsigma>1/2-1/q$. Then $\Theta_{j,2q, \varsigma} n^{1/q-1/2}\{\log (Kn)\}^{3/2} \to 0$ becomes $K\{\log(nK)\}^{3q/2}=\smallO(n^{q/2-1})$, which implies that $L_{1,j}\max(W_{1,j}, W_{2,j}) = \smallO(1) \min (N_{1,j},N_{2,j})$. This means with \hyperref[A4]{(A4)}, the dimension $K$ has to satisfy the condition $K(\log K)^{3q/2}=\smallO(n^{q/2-1})$.
	\end{remark}
}

\begin{theorem}[Gaussian Approximation Results for Dependent Data] \label{gausappro}
	Under \hyperref[A1]{(A1)} and \hyperref[A3]{(A3)}-\hyperref[A4]{(A4)}, for each $j=1,\ldots,J$ assume that there exists a constant $c_j>0$ such that {$\underset{1\leq k\leq K}\min\operatorname{avar}(S_{jk})\geq c_j$}, then we have
	\beq\label{conv}
	\rho\big(D_{j}^{-1}S_{j\cdot}, D_{j}^{-1}Z_j\big)\rightarrow 0, \quad  \text{as } n \to \infty,
	\eeq
	where $Z_j\sim \operatorname{N} (0, \Sigma_j)$, $\Sigma_j$ is the $K \times K$ long-run variance-covariance matrix of $X_{j,t}\varepsilon_{j,t}$, and $D_{j}$ is a diagonal matrix with the square root of the diagonal elements of $\Sigma_j$, namely {$$\bigg\{\sum^{\infty}_{\ell=-\infty} \E (X_{jk,t}X_{jk,(t-\ell)}\vps_{j,t} \vps_{j,(t-\ell)} )\bigg\}^{1/2}=\sqrt{\operatorname{avar}(S_{jk})}, \text{ for }  k=1,\ldots,K.$$}
\end{theorem}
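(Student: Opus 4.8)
The plan is to recognize the vector $S_{j\cdot}$ as the normalized partial sum $n^{-1/2}\sum_{t=1}^n W_{j,t}$ of the $K$-dimensional stationary process $W_{j,t}=(X_{jk,t}\vps_{j,t})_{k=1}^K$, and then to invoke the high-dimensional Gaussian approximation machinery for functionally dependent processes developed in \citet{ZW15gaussian}. Under \hyperref[A1]{(A1)} the product process $W_{j,t}$ inherits a causal representation $W_{jk,t}=g_{jk}(\mathcal F_t)h_j(\mathcal F_t)$, so its componentwise functional dependence measure is exactly the dependence adjusted norm $\|X_{jk,\cdot}\vps_{j,\cdot}\|_{q,\varsigma}$ of Definition \ref{norm}, which is finite by \hyperref[A3]{(A3)}. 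The aggregated quantities $\Phi_{j,q,\varsigma}$, $\Gamma_{j,q,\varsigma}$ and $\Theta_{j,q,\varsigma}$ then summarize the joint dependence across the $K$ coordinates, while the nondegeneracy assumption $\min_k\operatorname{avar}(S_{jk})\geq c_j$ guarantees that the diagonal normalization $D_j^{-1}$ is well-behaved.

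The core of the argument follows the martingale-approximation route of \citet{ZW15gaussian}. First I would approximate $W_{j,t}$ by its $m$-dependent truncation $\widetilde W_{j,t}=\E(W_{j,t}\mid\xi_{t-m},\ldots,\xi_t,\eta_{t-m},\ldots,\eta_t)$, whose $\mathcal L^q$ approximation error is controlled by the tail $\Delta_{m,q,j,k}$ of the functional dependence measure; here the projection decomposition $W_{j,t}-\widetilde W_{j,t}=\sum_{\ell>m}\mathcal{P}_{t-\ell}W_{j,t}$ into martingale differences supplies the requisite moment bounds. Next, I would partition $\{1,\ldots,n\}$ into alternating large and small blocks, the small blocks of width at least $m$ serving as buffers so that the large-block sums are approximately independent. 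For these approximately independent block sums I would apply the Gaussian approximation for maxima of sums of independent high-dimensional vectors \citep{CCK13AoS}, obtaining a Gaussian vector whose covariance matches the block-sum covariance. Finally, a Gaussian comparison step—via the anti-concentration and Slepian-type bounds in \citet{CCK13AoS}—replaces this covariance by the true long-run covariance $\Sigma_j$, delivering the target $Z_j\sim\N(0,\Sigma_j)$.

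The role of \hyperref[A4]{(A4)} is to balance the three sources of error—$m$-dependent truncation, the buffer blocks, and the Gaussian approximation itself—against the dimension $K$ and sample size $n$. The quantities $L_{1,j}$, $W_{1,j},W_{2,j},W_{3,j}$ and $N_{1,j},N_{2,j},N_{3,j}$ are precisely the thresholds that render each error term $\smallO(1)$ once the stated inequalities hold, the dichotomy between the weak-dependence regime ($\varsigma>1/2-1/q$, where $m$ may be taken of polynomial order and $\varpi_n=1$) and the strong-dependence regime ($\varsigma<1/2-1/q$) reflecting how fast the truncation error decays. I expect the main obstacle to be the simultaneous control of these competing rates: the buffer blocks must be wide enough to decouple the large blocks (favoring large $m$) yet narrow enough that their aggregate contribution to the maximum is negligible (favoring small $m$), and the Gaussian approximation error from \citet{CCK13AoS} carries its own $(\log K)$ penalties that interact multiplicatively with the dependence adjusted norms. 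Verifying that \hyperref[A4]{(A4)} closes this gap in both regimes, so that $\rho(D_j^{-1}S_{j\cdot},D_j^{-1}Z_j)\to0$, is the technical heart of the proof.
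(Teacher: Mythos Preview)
Your overall strategy---recognize $S_{j\cdot}=n^{-1/2}\sum_t W_{j,t}$ with $W_{j,t}=(X_{jk,t}\vps_{j,t})_k$ and invoke the high-dimensional Gaussian approximation of \citet{ZW15gaussian}---is exactly the paper's. However, the paper does not reproduce the $m$-dependent truncation, blocking, and Gaussian-comparison machinery you describe; it simply cites Theorem~3.2 of \citet{ZW15gaussian} as a black box. What the paper actually \emph{proves} is the one bridging step you skip over: Assumption~(A3) only bounds the dependence adjusted norms of $X_{jk,\cdot}$ and $\vps_{j,\cdot}$ \emph{separately}, whereas Zhang--Wu's conditions are stated for the product process $X_{jk,\cdot}\vps_{j,\cdot}$. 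The paper closes this gap via Minkowski and H\"older,
\[
\|X_{jk,\cdot}\vps_{j,\cdot}\|_{q,\varsigma}\le \|X_{jk,\cdot}\|_{2q,0}\|\vps_{j,\cdot}\|_{2q,\varsigma}+\|X_{jk,\cdot}\|_{2q,\varsigma}\|\vps_{j,\cdot}\|_{2q,0}\le 2\|X_{jk,\cdot}\|_{2q,\varsigma}\|\vps_{j,\cdot}\|_{2q,\varsigma},
\]
and analogously for the $\ell^q$-aggregate and the $|\cdot|_\infty$-norm. This is precisely why $\Phi_{j,q,\varsigma}$, $\Gamma_{j,q,\varsigma}$, $\Theta_{j,q,\varsigma}$ are \emph{defined} with the factor~$2$ and in terms of $\|X_{jk,\cdot}\|_{2q,\varsigma}$, $\|\vps_{j,\cdot}\|_{2q,\varsigma}$: they are constructed to upper-bound the corresponding Zhang--Wu quantities for the product. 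Once that inequality is in place, (A4) is literally the hypothesis of their Theorem~3.2, and the conclusion follows.

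So your proposal is not wrong, but it inverts the emphasis: you spend your effort re-deriving the internals of \citet{ZW15gaussian} (unnecessary here) while treating the product-norm bound as automatic (``finite by (A3)''), when in fact that bound is the only substantive step the paper carries out.
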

{
	\begin{remark}
		The conclusion in Theorem \ref{gausappro} can be held with stronger tail assumptions, following Theorem 5.2 in \cite{ZW15gaussian}.
\end{remark}}

Theorem \ref{gausappro} justifies the choice of $\lambda_j$ and $\tilde{Q}_j(1-\alpha)$, which leads to the following corollary:
\begin{corollary}\label{gausappro.c}
	Under the conditions of Theorem \ref{gausappro}, for each $j$ we have
	\beq
	\sup_{\alpha \in(0,1)}\big|\P\{\max_{1\leq k\leq K} 2c\sqrt{n} |S_{jk}/\Psi_{jk}| \geq  Q_{j}(1-\alpha) \} - \alpha\big| \to 0, \quad   \text{as } n \to \infty.
	\eeq
\end{corollary}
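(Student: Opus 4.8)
The plan is to obtain the corollary directly from the Gaussian approximation in Theorem \ref{gausappro}, translating the quantile statement into a comparison of tail probabilities. First I would fix notation by writing $M_n \defeq \max_{1\leq k\leq K}|S_{jk}/\Psi_{jk}| = |D_{j}^{-1}S_{j\cdot}|_\infty$ and $M_n^Z \defeq \max_{1\leq k\leq K}|Z_{jk}/\Psi_{jk}| = |D_{j}^{-1}Z_j|_\infty$, noting that $D_j=\diag(\Psi_{jk})$ since the diagonal entries of $\Sigma_j$ are exactly $\operatorname{avar}(S_{jk})=\Psi_{jk}^2$. Setting $T_n \defeq 2c\sqrt{n}M_n$ and $T_n^Z \defeq 2c\sqrt{n}M_n^Z$, these are precisely the statistic in the corollary and the variable whose $(1-\alpha)$-quantile defines $Q_j(1-\alpha)$.

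Second, I would establish the exact identity $\P(T_n^Z \geq Q_j(1-\alpha)) = \alpha$ for every $\alpha\in(0,1)$. This is where the non-degeneracy hypothesis $\min_{1\leq k\leq K}\operatorname{avar}(S_{jk})\geq c_j>0$ of Theorem \ref{gausappro} is used: it forces each standardized coordinate $Z_{jk}/\Psi_{jk}$ to be a unit-variance Gaussian, so the distribution of $M_n^Z$ (and hence of $T_n^Z$) has a continuous CDF with no atoms on $(0,\infty)$, and the $(1-\alpha)$-quantile delivers an exact upper tail mass $\alpha$. With this identity in hand, the quantity controlled by the corollary becomes, for each $\alpha$,
$$\big|\P(T_n \geq Q_j(1-\alpha)) - \alpha\big| = \big|\P(T_n \geq Q_j(1-\alpha)) - \P(T_n^Z \geq Q_j(1-\alpha))\big|.$$

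Third, I would bound this uniformly in $\alpha$ by enlarging to a supremum over all thresholds. Since $2c\sqrt{n}$ is a fixed positive constant, $\P(T_n\geq u)=\P(M_n\geq u/(2c\sqrt{n}))$ and likewise for $T_n^Z$, so the deterministic monotone rescaling gives
$$\sup_{\alpha\in(0,1)}\big|\P(T_n\geq Q_j(1-\alpha)) - \P(T_n^Z\geq Q_j(1-\alpha))\big| \leq \sup_{r\geq 0}\big|\P(M_n\geq r) - \P(M_n^Z\geq r)\big|.$$
The right-hand side is exactly the Kolmogorov distance $\rho(D_{j}^{-1}S_{j\cdot}, D_{j}^{-1}Z_j)$ by its definition, and this tends to zero by Theorem \ref{gausappro}; combined with the second step, taking the supremum over $\alpha$ on the left yields the claim.

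The computation is essentially bookkeeping, and I expect the only point requiring genuine care to be the continuity (anti-concentration) of the Gaussian maximum invoked in the second step, which is what legitimizes replacing the $(1-\alpha)$-quantile by an exact $\alpha$-tail probability; the reduction from the supremum over $\alpha$ to the supremum over thresholds, and the identification of the latter with $\rho(\cdot,\cdot)$, are immediate consequences of rescaling by the fixed factor $2c\sqrt{n}$ and a direct appeal to the Gaussian approximation theorem.
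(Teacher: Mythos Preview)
Your proposal is correct and matches the paper's approach: the paper's proof is the single sentence ``It follows directly from the Gaussian approximation results in Theorem \ref{gausappro},'' and what you have written is precisely the unpacking of that sentence. The continuity of the Gaussian maximum (so that the quantile delivers exact tail mass $\alpha$), the rescaling by $2c\sqrt{n}$, and the identification of the resulting supremum with the Kolmogorov distance $\rho(D_j^{-1}S_{j\cdot},D_j^{-1}Z_j)$ are exactly the implicit steps.
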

{
	It is worth noting that in practice the variance involved in the Gaussian approximation in \ref{gausappro.c} is not known; we shall discuss how we estimate the variance and also the validity of the Gaussian approximation result with an estimated variance.
	Given the realization $X_{j,1}\vps_{j,1},\ldots,X_{j,n}\vps_{j,n}$, we propose to estimate the $K\times K$ long-run variance-covariance matrix $\Sigma_j$ for $j=1,\ldots,J$ as follows, given $\E X_{j,t}\vps_{j,t}=0$, and consider:
	\begin{align}\label{variance}
	\hat \Sigma_j &= \frac{1}{b_nl_n}\sum_{i=1}^{l_n}\big(\sum_{l=(i-1)b_n+1}^{ib_n}X_{j,l}\vps_{j,l}\big)\big(\sum_{l=(i-1)b_n+1}^{ib_n}X_{j,l}\vps_{j,l}\big)^\top.
	\end{align}
	Moreover, the following corollary ensures that the Gaussian approximation results still hold if we use the estimate in \eqref{variance}.
	
	\begin{corollary}\label{gausappro.est}
		Let the conditions of Theorem \ref{gausappro} hold, and assume $\Phi_{j,2q,\varsigma}<\infty$ with $q>4$, $b_n = \bigO(n^{\eta})$ for some $0 <\eta< 1$. Let $F_{\varsigma} = n$, for $\varsigma >1-2/q$; $F_{\varsigma} = l_nb_n^{q/2-\varsigma q/2}$, for $1/2-2/q<\varsigma<1-2/q$; $F_{\varsigma} = l_n^{q/4-\varsigma q/2}b_n^{q/2-\varsigma q/2} $, for $\varsigma<1/2-2/q$. Further assume \\
		$n^{-1}\log^2 K\max\big\{n^{1/2}b_n^{1/2}\Phi^2_{j,2q,\varsigma}, n^{1/2} b_n^{1/2} \sqrt{\log K}\Phi_{j,8,\varsigma}^2,F^{2/q}_{\varsigma} \Gamma^2_{j,2q, \varsigma}K^{2/q}, \Phi_{j,2,0}\Phi_{j,2,\varsigma}v'(b_n)n/\sqrt{\log K}\big\}=\smallO(1)$, with $v'(b_n) = (b_n+1)^{-\varsigma}+ 2v_{n,2}/b_n$, $v_{n,2} = \log b_n $ (resp. $b_n^{-\varsigma+1}$ or 1) for $\varsigma = 1$ (resp. $\varsigma < 1$ or $\varsigma>1$). Then for each $j$ we have	
		\beq
		\rho\big(\hat D_{j}^{-1}S_{j\cdot}, D_{j}^{-1}Z_j\big)\rightarrow 0, \quad  \text{as } n \to \infty,
		\eeq
		where $\hat D_j=\{\diag(\hat\Sigma_j)\}^{1/2}$.
	\end{corollary}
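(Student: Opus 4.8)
The plan is to bootstrap the population-normalized Gaussian approximation of Theorem \ref{gausappro}, $\rho\big(D_{j}^{-1}S_{j\cdot}, D_{j}^{-1}Z_j\big)\to 0$, up to the feasible normalization $\hat D_j$ by showing that the diagonal block estimator perturbs the law of the $|\cdot|_\infty$-statistic negligibly. Since $\rho$ is a metric on the laws of the induced max-norm statistics, the triangle inequality gives
$$\rho\big(\hat D_{j}^{-1}S_{j\cdot}, D_{j}^{-1}Z_j\big)\leq \rho\big(\hat D_{j}^{-1}S_{j\cdot}, D_{j}^{-1}S_{j\cdot}\big)+\rho\big(D_{j}^{-1}S_{j\cdot}, D_{j}^{-1}Z_j\big),$$
and the second summand vanishes by Theorem \ref{gausappro}. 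It therefore remains to control the first summand, which is governed entirely by the uniform accuracy of the diagonal variance estimates $\hat\Psi_{jk}^2=(\hat\Sigma_j)_{kk}$ relative to $\Psi_{jk}^2=\operatorname{avar}(S_{jk})$.

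The central quantitative step is the uniform consistency bound $\max_{1\leq k\leq K}|\hat\Psi_{jk}^2/\Psi_{jk}^2-1|=\smallO_{\P}(1/\log K)$. Writing $\hat\Psi_{jk}^2=n^{-1}\sum_{i=1}^{l_n}T_{jk,i}^2$ with $T_{jk,i}=\sum_{l=(i-1)b_n+1}^{ib_n}X_{jk,l}\vps_{j,l}$, I would split the error into a bias component $\max_k|\E\hat\Psi_{jk}^2-\Psi_{jk}^2|$ and a stochastic component $\max_k|\hat\Psi_{jk}^2-\E\hat\Psi_{jk}^2|$. The bias is the usual truncation/edge effect of a blocked long-run variance estimator: a block sum only records autocovariances up to lag $b_n$, and evaluating this in terms of the dependence-adjusted norms of Definition \ref{norm} bounds it by a multiple of $\Phi_{j,2,0}\Phi_{j,2,\varsigma}v'(b_n)$ with $v'(b_n)=(b_n+1)^{-\varsigma}+2v_{n,2}/b_n$, which is precisely the final entry in the corollary's summability condition and tends to zero since $b_n\to\infty$. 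For the stochastic component I would treat each $T_{jk,i}^2$ as a degree-two functional of the innovations, derive a moment/Nagaev-type tail inequality for the centered block-quadratic average via the functional dependence measure in the spirit of Theorem \ref{gammabound}, and then apply a union bound across the $K$ coordinates; this is where $F_{\varsigma}^{2/q}\Gamma_{j,2q,\varsigma}^2K^{2/q}$ originates, with $K^{2/q}$ coming from the $q$th-moment union bound, $\Gamma_{j,2q,\varsigma}$ from the aggregated dependence-adjusted norm of the products $X_{jk,\cdot}\vps_{j,\cdot}$, and the three cases of $F_{\varsigma}$ recording how the block-sum moments scale across the weak, intermediate, and strong dependence regimes.

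On the event $\mathcal A_n=\{\max_k|\hat\Psi_{jk}/\Psi_{jk}-1|\leq\epsilon_n\}$ with $\epsilon_n=\smallO(1/\log K)$, the sandwich $(1+\epsilon_n)^{-1}|D_{j}^{-1}S_{j\cdot}|_\infty\leq|\hat D_{j}^{-1}S_{j\cdot}|_\infty\leq(1-\epsilon_n)^{-1}|D_{j}^{-1}S_{j\cdot}|_\infty$ holds, so that for every threshold $r$
$$\P\big(|\hat D_{j}^{-1}S_{j\cdot}|_\infty\geq r\big)\leq\P\big(|D_{j}^{-1}S_{j\cdot}|_\infty\geq(1-\epsilon_n)r\big)+\P(\mathcal A_n^c),$$
and symmetrically from below. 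I would then use Theorem \ref{gausappro} to replace $|D_{j}^{-1}S_{j\cdot}|_\infty$ by the standardized Gaussian maximum $|D_{j}^{-1}Z_j|_\infty$ and invoke the anti-concentration inequality for maxima of Gaussian vectors employed in \citet{ZW15gaussian,CCK13AoS} to bound $\P\big((1-\epsilon_n)r\leq|D_{j}^{-1}Z_j|_\infty\leq r\big)$ uniformly in $r$. Because that maximum concentrates on the scale $\sqrt{\log K}$ and its anti-concentration rate is of the same order, the multiplicative perturbation contributes $\bigO(\epsilon_n\log K)=\smallO(1)$, which, traced back through the stochastic and bias bounds, is what forces the leading $\log^2 K$ factor in the stated conditions. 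Together with $\P(\mathcal A_n^c)=\smallO(1)$ this yields $\rho\big(\hat D_{j}^{-1}S_{j\cdot}, D_{j}^{-1}S_{j\cdot}\big)\to0$, and the triangle inequality completes the proof.

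The step I expect to be the main obstacle is the uniform-in-$k$ tail control of the stochastic part of the block variance estimator. In contrast to the linear statistic $S_{jk}$ handled in Theorem \ref{gammabound}, the block-quadratic $n^{-1}\sum_i T_{jk,i}^2$ is a quadratic functional of dependent data, so its concentration must be re-derived from the functional dependence measure while simultaneously tracking the three dependence regimes encoded by $F_{\varsigma}$ and guaranteeing that the resulting rate survives both the $K^{2/q}$ inflation from the union bound and the $\log^2 K$ factor from anti-concentration. Aligning all of these contributions with the stated moment and bandwidth restrictions, in particular $b_n=\bigO(n^{\eta})$ and the finiteness of $\Phi_{j,2q,\varsigma}$, is the delicate bookkeeping at the core of the argument.
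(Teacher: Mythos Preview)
Your proposal is correct and tracks the paper's own proof closely: both reduce the problem to uniform control of $\max_k|\hat\Psi_{jk}^2-\Psi_{jk}^2|$ via a bias-plus-stochastic split (the bias yielding the $\Phi_{j,2,0}\Phi_{j,2,\varsigma}v'(b_n)$ term, the stochastic part yielding the $F_\varsigma$, $\Gamma_{j,2q,\varsigma}$ and $\Phi_{j,8,\varsigma}$ terms), and then feed this through Gaussian anti-concentration \citep{chernozhukov2015comparison} to absorb the multiplicative normalization error at cost $\bigO(\epsilon_n\log K)$. The only differences are cosmetic---the paper uses a one-shot three-term decomposition of $\rho(\hat D_j^{-1}S_{j\cdot},D_j^{-1}Z_j)$ rather than your triangle inequality on $\rho$, and it dispatches the block-quadratic concentration you flag as the main obstacle by directly citing Theorem~5.1 of \citet{ZW15gaussian} rather than re-deriving a Nagaev-type bound.
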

}

{
	It should be noted that given the Gaussian approximation results in Theorem \ref{gausappro}, we can have a refined bound for $\lambda_j^0(1- \alpha)$ and also the oracle inequalities under IC.
	\begin{corollary}[Bounds for $\lambda_j^0(1-\alpha)$ and Oracle Inequalities under IC with Gaussian Approximation Results]\label{betabound4}
		Under the conditions of Theorem \ref{gausappro} together with \hyperref[A2]{(A2)},
		let $2(\log K)^{-1/2} + \rho(D_{j}^{-1}S_{j\cdot}, D_{j}^{-1}Z_j) = \smallO(\alpha)$ and $Z_{\alpha}= 2\tilde c\sqrt{n \log K}$, for $\tilde c\geq \sqrt{2} c$, where $c$ is the one in the definition of $\lambda^0_j(1-\alpha)$, then we have $\lambda^0_j(1-\alpha)$ satisfying
		\beq
		\lambda^0_j(1-\alpha) \leq Z_{\alpha},
		\eeq
		and given the exact sparsity assumption \eqref{sparse}, then $\tilde\beta_j$ obtained from \eqref{beta.tilde} under IC satisfies
		\beq
		|\tilde \beta_j - \beta_{j}^0|_{j, pr} \lesssim \frac{\sqrt{s_j} }{\kappa_j(\bar{c})}\max_{1\leq k \leq K}\Psi_{jk}\sqrt{\log K/n},
		\eeq
		with probability $1 - \alpha - \smallO(1)$.
	\end{corollary}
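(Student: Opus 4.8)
The plan is to control the ideal penalty level $\lambda_j^0(1-\alpha)$ using the Gaussian approximation of Theorem \ref{gausappro}, and then to feed the resulting deterministic bound on $\lambda_j^0(1-\alpha)$ into the prediction-norm inequality of Lemma \ref{betabound}. The two ingredients are the quantile calibration $\lambda_j^0(1-\alpha)\leq Z_\alpha$ and the already-available near-oracle bound, and the corollary is essentially their composition.

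For the penalty bound, recall that $\lambda_j^0(1-\alpha)$ is the $(1-\alpha)$-quantile of $W_j\defeq 2c\sqrt n\,|D_j^{-1}S_{j\cdot}|_\infty$, where $D_j=\diag(\Psi_{j1},\dots,\Psi_{jK})$. Since the quantile is the smallest value whose distribution function is at least $1-\alpha$, to obtain $\lambda_j^0(1-\alpha)\leq Z_\alpha$ it suffices to verify $\P(W_j\geq Z_\alpha)\leq\alpha$. Writing $Z_\alpha/(2c\sqrt n)=(\tilde c/c)\sqrt{\log K}$ and invoking the Kolmogorov-distance bound of Theorem \ref{gausappro} to pass to the Gaussian surrogate, I would estimate
$$
\P\big(|D_j^{-1}S_{j\cdot}|_\infty\geq (\tilde c/c)\sqrt{\log K}\big)\leq \P\big(|D_j^{-1}Z_j|_\infty\geq (\tilde c/c)\sqrt{\log K}\big)+\rho(D_j^{-1}S_{j\cdot},D_j^{-1}Z_j).
$$
Because each coordinate of $D_j^{-1}Z_j$ is marginally $\N(0,1)$, a union bound with the Gaussian tail estimate $\P\{\N(0,1)>r\}\leq (r\sqrt{2\pi})^{-1}e^{-r^2/2}$ and the hypothesis $\tilde c\geq\sqrt2\,c$ (so that $(\tilde c/c)^2\geq2$ and the exponent dominates $\log K$) yields
$$
\P\big(|D_j^{-1}Z_j|_\infty\geq (\tilde c/c)\sqrt{\log K}\big)\leq 2K\,\P\big\{\N(0,1)>(\tilde c/c)\sqrt{\log K}\big\}\leq 2(\log K)^{-1/2}.
$$
Combining the two displays with the standing assumption $2(\log K)^{-1/2}+\rho(D_j^{-1}S_{j\cdot},D_j^{-1}Z_j)=\smallO(\alpha)$ gives $\P(W_j\geq Z_\alpha)=\smallO(\alpha)\leq\alpha$ for $n$ large, hence $\lambda_j^0(1-\alpha)\leq Z_\alpha$.

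For the oracle inequality, I would work on the intersection of the event $\{\lambda_j^0(1-\alpha)\geq W_j\}$, which by the quantile definition has probability at least $1-\alpha$, with the RE event of \hyperref[A2]{(A2)}, of probability $1-\smallO(1)$. On this intersection Lemma \ref{betabound} applies with $\lambda_j=\lambda_j^0(1-\alpha)$ and gives
$$
|\tilde\beta_j-\beta_j^0|_{j,pr}\leq (1+1/c)\frac{\lambda_j^0(1-\alpha)\sqrt{s_j}}{n\,\kappa_j(\oc)}\max_{1\leq k\leq K}\Psi_{jk}.
$$
Substituting the just-established bound $\lambda_j^0(1-\alpha)\leq Z_\alpha=2\tilde c\sqrt{n\log K}$ collapses the factor $\lambda_j^0(1-\alpha)/n$ to order $\sqrt{\log K/n}$ and produces the claimed rate, with total probability at least $1-\alpha-\smallO(1)$ after a union bound over the two events.

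The main obstacle I anticipate is the calibration in the penalty step: the constant $\sqrt2$ in $\tilde c\geq\sqrt2\,c$ must be matched precisely to the Gaussian tail so that $2K\,\P\{\N(0,1)>(\tilde c/c)\sqrt{\log K}\}$ is controlled by exactly the term $2(\log K)^{-1/2}$ that appears in the hypothesis, and one must apply the Gaussian approximation to the \emph{standardized} vector $D_j^{-1}S_{j\cdot}$ (whose surrogate has unit-variance marginals) rather than to $S_{j\cdot}$, so that the marginal normal tail bound is legitimate. Once this calibration is in place, the remaining steps—the union bound over events and the algebraic substitution into Lemma \ref{betabound}—are routine.
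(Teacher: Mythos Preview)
Your proposal is correct and follows essentially the same route as the paper: pass from $|D_j^{-1}S_{j\cdot}|_\infty$ to $|D_j^{-1}Z_j|_\infty$ via the Kolmogorov distance $\rho$, apply a union bound together with the Mill's-ratio Gaussian tail estimate (using $\tilde c\geq\sqrt2\,c$ so that $K\exp\{-(\tilde c/c)^2(\log K)/2\}\leq1$), and then read off $\lambda_j^0(1-\alpha)\leq Z_\alpha$ from the quantile definition; the oracle inequality is obtained exactly as you describe, by substituting this bound into Lemma~\ref{betabound} on the intersection of the IC event and the RE event. The paper's proof is written from the complementary side (bounding $\P(W_j\leq Z_\alpha)$ from below rather than $\P(W_j\geq Z_\alpha)$ from above) and omits the second part as implicit, but the argument is the same.
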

	We note that the allowed dimension $K$ is still of polynomial rate restricted by \hyperref[A4]{(A4)}.
	
}

\subsection{Multiplier Block Bootstrap Procedure}

In this subsection, we discuss how $\Lambda_j(1-\alpha)$ is attainable via block bootstrap. The data over $t =1,\ldots,n $ are divided into $l_n$ blocks with the same number of observations $b_n$, $n=b_nl_n$ (without loss of generality), where $b_n, l_n\in\Z$.
%

Recall that $\Lambda_j(1-\alpha)=2c\sqrt{n}q^{[B]}_{j,(1-\alpha)}$, $q^{[B]}_{j,(1-\alpha)}$ is the $(1-\alpha)$ quantile of $\underset{ 1\leq k \leq K}{\max} |Z^{[B]}_{jk}{/\Psi_{jk}}|$, where $Z^{[B]}_{jk}$ are defined as
\begin{equation}\label{mbb2}
\quad Z^{[B]}_{jk}=\frac{1}{\sqrt{n}} \sum_{i=1}^{l_n} e_{j,i}\sum_{l=(i-1)b_n+1}^{ib_n} \vps_{j,l} X_{jk,l},
\end{equation}
and $e_{j,i}$ are i.i.d. $\operatorname{N}(0,1)$ random variables independent of $X$ and $\vps$.

In fact, the above construction relies on knowing the true residuals $\vps_{j,t}$. In practice, one needs to pre-estimate them using a conservative choice of penalty levels and loadings. 
{We discuss the consistency rate of the bootstrap statistics with generated errors in the supplementary material; see Comment \ref{residuals.remark} and Theorem \ref{residuals.theorem}.}

\begin{theorem}[Validity of Multiplier Block Bootstrap Method] \label{validboot}
	Under the conditions of Theorem \ref{gausappro}, and assume $\Phi_{j,2q,\varsigma}<\infty$ with $q>4$, $b_n = \bigO(n^{\eta})$ for some $0 <\eta< 1$ {(the detailed rate is calculated in \eqref{ratebn} in the supplementary materials)}, then we have
	\beq\label{sup1}
	\sup_{\alpha \in(0,1)}\big|\P\big(\max_{1\leq k\leq K} |S_{jk}{/\Psi_{jk}}| \geq q^{[B]}_{j,(1-\alpha)} 
	\big) - \alpha\big| \to 0,\, \text{as } n \to \infty.
	\eeq 
\end{theorem}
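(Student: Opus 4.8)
The plan is to reduce the claim to a Gaussian-to-Gaussian comparison by exploiting that, conditionally on the data, the multiplier bootstrap statistic is \emph{exactly} Gaussian. Writing $B_{i,k}\defeq\sum_{l=(i-1)b_n+1}^{ib_n}\vps_{j,l}X_{jk,l}$ for the $i$-th block sum, the bootstrap vector $Z^{[B]}_{j\cdot}=n^{-1/2}\sum_{i=1}^{l_n}e_{j,i}(B_{i,k})_{k=1}^K$ is, given $\{X_{j,t},\vps_{j,t}\}_{t=1}^n$, centred Gaussian with covariance $n^{-1}\sum_{i=1}^{l_n}B_{i,\cdot}B_{i,\cdot}^\top$, which is precisely the block long-run-covariance estimator $\hat\Sigma_j$ of \eqref{variance}. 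Hence the conditional law of $D_j^{-1}Z^{[B]}_{j\cdot}$ is $\N(0,D_j^{-1}\hat\Sigma_j D_j^{-1})$, whereas by Theorem \ref{gausappro} the target $\max_k|S_{jk}/\Psi_{jk}|=|D_j^{-1}S_{j\cdot}|_\infty$ lies within $\smallO(1)$ Kolmogorov distance of $|D_j^{-1}Z_j|_\infty$, the sup-norm of a $\N(0,D_j^{-1}\Sigma_j D_j^{-1})$ vector. The two Gaussian maxima I must compare therefore differ only through $\hat\Sigma_j$ versus $\Sigma_j$ in the normalised covariance, and since $\Psi_{jk}^2=\operatorname{avar}(S_{jk})\ge c_j>0$ (as assumed in Theorem \ref{gausappro}), both normalised matrices are genuine correlation-type matrices with bounded entries.

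The second step is the entrywise covariance-consistency bound $\max_{1\le k,k'\le K}\bigl|[\hat\Sigma_j-\Sigma_j]_{kk'}\bigr|=\smallO_{\P}(1)$. This is exactly the estimate established in the proof of Corollary \ref{gausappro.est}: under $\Phi_{j,2q,\varsigma}<\infty$ with $q>4$ and the block rate $b_n=\bigO(n^\eta)$, the block estimator converges to the long-run variance-covariance matrix uniformly over its entries. I would reuse that argument, decomposing the error into a bias term --- from discarding the cross-block lag contributions, controlled by the decay $v'(b_n)$ of the dependence measures --- and a stochastic fluctuation term, controlled uniformly over the $K^2$ entries by a Nagaev-type maximal inequality (this is where $q>4$ enters). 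Dividing through by the bounded-below loadings $\Psi_{jk}$ transfers the bound to $D_j^{-1}\hat\Sigma_j D_j^{-1}-D_j^{-1}\Sigma_j D_j^{-1}$.

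With the covariance gap in hand, I would invoke a Gaussian comparison inequality of \citet{CCK13AoS} type, which controls the Kolmogorov distance between the sup-norms of two centred Gaussian vectors by the maximal covariance discrepancy times a power of $\log K$; this yields $\rho(D_j^{-1}Z^{[B]}_{j\cdot},D_j^{-1}Z_j)=\smallO_{\P}(1)$ on a high-probability event. Coupled with the anti-concentration inequality for the maximum of a Gaussian vector, the distributional closeness converts into quantile closeness: the data-dependent bootstrap quantile $q^{[B]}_{j,(1-\alpha)}$ converges, uniformly in $\alpha\in(0,1)$, to the deterministic $(1-\alpha)$ quantile of $|D_j^{-1}Z_j|_\infty$. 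Chaining the three comparisons --- target to Gaussian via Theorem \ref{gausappro}, and Gaussian quantile to bootstrap quantile via the comparison plus anti-concentration --- and evaluating $\P(\max_k|S_{jk}/\Psi_{jk}|\ge q^{[B]}_{j,(1-\alpha)})$ delivers \eqref{sup1}.

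I expect the covariance-consistency step to be the main obstacle, because it forces a delicate balance on the block size: $b_n$ must grow so that the truncation bias of the long-run-variance estimator vanishes, yet remain $\bigO(n^\eta)$ with $\eta<1$ so that the number of blocks $l_n=n/b_n$ is large enough to control the stochastic fluctuation uniformly over all $K^2$ entries under only finite $2q$-th dependence-adjusted moments. A secondary subtlety is that $q^{[B]}_{j,(1-\alpha)}$ is itself random, so every comparison bound must be stated on one common high-probability event before being combined with anti-concentration to secure the supremum over $\alpha$.
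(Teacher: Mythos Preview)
Your proposal is correct and follows essentially the same route as the paper's proof: recognize that $Z^{[B]}_{j\cdot}$ is conditionally Gaussian with covariance $\hat\Sigma_j$, apply a Gaussian--Gaussian comparison inequality (the paper invokes Theorem~2 of \citet{chernozhukov2015comparison}) driven by the max-entry discrepancy between the normalised $\hat\Sigma_j$ and the target covariance, split that discrepancy into a stochastic fluctuation term (controlled via Theorem~5.1 of \citet{ZW15gaussian}, which is where $q>4$ enters) and a deterministic bias term (controlled by the decay $v(b_n)$ of the dependence measures), and then chain with Theorem~\ref{gausappro} plus anti-concentration (the paper's final step is Theorem~3.1 of \citet{CCK13AoS}) to pass from distributional closeness to the uniform-in-$\alpha$ quantile statement. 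The only cosmetic difference is that the paper takes the intermediate Gaussian target to be $\widetilde Z_j\sim\N(0,\Sigma_{j,n})$ with the exact finite-sample covariance $\Sigma_{j,n}=\sum_{|\ell|\le n}(1-|\ell|/n)\Gamma_j(\ell)$ rather than the limiting $\Sigma_j$, which slightly streamlines the bias bookkeeping but does not change the argument.
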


{
	\subsection{Joint Penalty over Equations}\label{lasso.joint}
	
	Recall that the theoretical choice $\lambda^0(1-\alpha)$ is defined as the $(1-\alpha)$ quantile of $\underset{1\leq k\leq K, 1\leq j\leq J}{\max}2c\sqrt{n}|S_{jk}{/\Psi_{jk}}|$. The empirical choices of the joint penalty level can be:
	\begin{itemize}	
		\item[a)] $Q(1-\alpha)$:  the $(1-\alpha)$ quantile of $ 2 c \underset{1\leq k\leq K, 1\leq j\leq J}{\max} \sqrt{n}|Z_{jk}{/\Psi_{jk}}|$. 
		In practice, one can take an alternative choice such that $\widetilde{Q}(1-\alpha) \defeq 2 c \sqrt{n}\Phi^{-1}\{1-\alpha/(2KJ)\}$.
		\item[b)] $\Lambda(1-\alpha)\defeq 2c\sqrt{n}q_{(1-\alpha)}^{[B]}$, where $q_{(1-\alpha)}^{[B]}$ is the $(1-\alpha)$ quantile of $\underset{1\leq k\leq K, 1\leq j\leq J}{\max} |Z_{jk}^{[B]}{/\Psi_{jk}}|$.
	\end{itemize}
	
	Section \ref{app.joint} in the supplementary material provides the main theorems for joint equation estimation. In particular, the dimension along $k=1,\ldots, K$ and $j=1,\ldots,J$ will be considered together by vectorization, resulting in the dimension of $KJ$. Following the results for the single equation (where $j$ is fixed), we generalize the theorems above to multiple equations case by changing the dimension from $K$ to $KJ$; see Section \ref{app.joint} in the Appendix for more details.
	
}

\subsection{Post-Model Selection Estimation}
LASSO estimation is known to be biased especially for large coefficients. Therefore, a post-selection step helps to reduce the bias by running an OLS as a second step on the selected covariates in the first step.
In particular, we consider the 2-step OLS post-LASSO estimator:
\begin{enumerate}
	\item[i)] $\ell_1$-penalized regression (LASSO selection)
	\begin{equation}
	\breve \beta_j = \arg\min_{\beta \in  {\R}^{K}} \En (Y_{j,t} - X_{j,t}^\top\beta)^2
	+ \frac{\lambda}{n} \sum_{k=1}^{K} | \beta_{jk}| \Psi_{jk},
	\end{equation}
	where $\lambda$ is the joint penalty level.
	\item[ii)] We run the post-selection regression (OLS estimation)
	\begin{equation}
	\hat \beta_j^{[P]} = \arg\min_{\beta \in  {\R}^{K}} \{\En (Y_{j,t} - X_{j,t}^\top\beta)^2: \beta_{k}=0, k\notin\hat T_j\},
	\end{equation}
	where $\hat T_j\defeq \supp(\breve{\beta}_j)=\{k\in\{1,\ldots,K\}: \breve\beta_{jk}\neq0\}$.
\end{enumerate}

To provide the prediction performance bounds for the OLS post-LASSO estimators, we need the following restricted sparse eigenvalue (RSE) condition:
\begin{itemize}
	\item[(A5)]\label{A5}Restricted sparse eigenvalue (RSE): given $p<n$, for $\delta\in {\R}^K$, with probability $1-\smallO(1)$,
	\beq
	\tilde\kappa_j(p)^2 \defeq \min_{|\delta_{T_j^c}|_0\leq p, \delta \neq 0}\frac{|\delta|_{j,pr}^2}{|\delta|_2^2}>0,\quad \phi_j(p)\defeq\max_{|\delta_{T_j^c}|_0\leq p, \delta \neq 0}\frac{|\delta|_{j,pr}^2}{|\delta|_2^2}>0.\notag
	\eeq
\end{itemize}

Here $p$ denotes the restriction on the length of the active set of $T_j^c$. When $T_j=\emptyset$, \hyperref[A5]{(A5)} is reduced to the standard sparse eigenvalue condition. Moreover, let $\mu_j(p)\defeq\sqrt{\phi_j(p)}/\tilde\kappa_j(p)$, and denote by $\hat p_j\defeq|\hat T_j\setminus T_j|$ the number of components outside $T_j\defeq\supp(\beta^0_j)=\{k\in\{1,\ldots,K\}: \beta_{jk}^0\neq0\}$ selected by LASSO in the first step.

{The performance bounds for the OLS post-LASSO estimator are shown in Theorem \ref{post} in the supplementary materials.}

\subsection{Simultaneous Inference}\label{SI}
This subsection develops theory corresponding to Section \ref{inference}.
A key Bahadur representation which linearize the estimator for a proper application of the central limit theorem for inference is provided.

Recall that for each $j=1,\ldots,J$, the following model is considered
\begin{eqnarray}
Y_{j,t} &= &\sum^{p_j}_{k= 1} X_{jk,t}\beta^0_{jk}+\sum^{K}_{k= p_j+1}X_{jk,t}\beta^0_{jk}+\varepsilon_{j,t}, \quad \E(\vps_{j,t}X_{j,t})=0, \quad F_{\varepsilon_{j}}(0) = 1/2,\notag\\
X_{jk,t}& = &   X_{j(-k),t}^{\top}\gamma^0_{j(-k)}+ v_{jk,t},\quad \E( v_{jk,t}X_{j(-k),t}) = 0, \quad k=1,\ldots,p_j,
\end{eqnarray}
where we define $\gamma_{j(-k)}^0\defeq\arg\underset{\gamma_{j(-k)}}{\min}\E(X_{jk,t}- X_{j(-k),t}^{\top}\gamma_{j(-k)})^2$, {and let $F_{\vps_j}$ denote the distribution function of $\vps_{j,t}$.}
In this subsection, we show the validity of the joint confidence region for simultaneous inference on $H_0: \beta_{jk}^0=0, \forall(j,k)\in G$, with $|G| = \sum^J_{j=1}p_j$. In particular, for $j=1,\ldots,J$, $\beta_{jk}^0\,(k=1,\ldots,p_j)$ are the target parameters.
Theoretically, we formulate the estimation as a general $Z$-estimation problem, with the leading examples as the LAD/LS cases. Nevertheless, it can also include a more general class of loss functions.

For each $(j,k)\in G$, we define the score function as $\psi_{jk}\{Z_{j,t},\beta_{jk},h_{jk}(X_{j(-k),t})\}$, where $Z_{j,t}\defeq(Y_{j,t},X_{j,t}^\top)^\top$ and the vector-valued function $h_{jk}(\cdot)$ is a measurable map from ${\R}^{K-1}$ to ${\R}^M$ ($M$ is fixed). In particular, in our linear regression case we have $h_{jk}(X_{j(-k),t})=(X_{j(-k),t}^\top\beta_{j(-k)},X_{j(-k),t}^\top\gamma_{j(-k)})^\top$, and for the LAD regression  $\psi_{jk}\{Z_{j,t},\beta_{jk},h_{jk}(X_{j(-k),t})\}=\{1/2-\IF(Y_{j,t}\leq X_{jk,t}\beta_{jk}+X_{j(-k),t}^\top\beta_{j(-k)})\}(X_{jk,t}-X_{j(-k),t}^{\top}\gamma_{j(-k)})$.

Assume that there exists $s=s_n\geq1$ such that $|\beta_{j(-k)}^0|_0\leq s$, $|\gamma_{j(-k)}^0|_0\leq s$, 
for each $(j,k)\in G$. Moreover, we assume that the nuisance function $h^0_{jk}=(h^0_{jk,m})_{m=1}^M$ admits a sparse estimator $\hat h_{jk}=(\hat h_{jk,m})_{m=1}^M$ of the form
$$
\hat h_{jk,m}(X_{j(-k),t})=X_{j(-k),t}^\top\hat\theta_{jk,m},\quad |\hat\theta_{jk,m}|_0\leq s,\quad m=1,\ldots,M,
$$
where the sparsity level $s$ is small compared to $n$ ($s\ll n$).

The true parameter $\beta_{jk}^0$ is identified as a unique solution to the moment condition
\begin{equation}\label{mc}
\E[\psi_{jk}\{Z_{j,t},\beta^0_{jk},h^0_{jk}(X_{j(-k),t})\}]=0.
\end{equation}
However, the object $\arg\,\underset{\beta_{jk} \in \hat{\mathcal{B}}_{jk}}{\operatorname{zero}}\E_n|[\psi_{jk}\{Z_{j,t},\beta_{jk},h^0_{jk}(X_{j(-k),t})\}]|$ does not necessarily exist due to the discontinuity of the function $\psi_{jk}$. The estimator $\hat\beta_{jk}$ is obtained as a $Z$-estimator by solving the sample analogue of \eqref{mc}
\begin{equation}
\E{_{n}}[\psi_{jk}\{Z_{j,t},\hat{\beta}_{jk},\hat{h}_{jk}(X_{j(-k),t})\}] \leq \operatorname{inf}_{\beta_{jk} \in \hat{\mathcal{B}}_{jk} }|\E{_n}[\psi_{jk}\{Z_{j,t},\beta_{jk},\hat{h}_{jk}(X_{j(-k),t})\}]|+ \smallO(n^{-1/2 }g^{-1}_n),\notag
\end{equation}
where $g_n \defeq \{\log(e|G|)\}^{1/2}$ and $\hat{\mathcal{B}}_{jk}$ is defined in \hyperref[C2]{(C2)}.

We now lay out the following conditions needed in this section, which are assumed to hold uniformly 
over $(j,k)\in G$.
\begin{itemize}
	\item[(C1)]\label{C1}Orthogonality condition:
\begin{equation}\label{ortho}
\E\Big[\partial_{h}\E\{\psi_{jk}(Z_{j,t},\beta^0_{jk},h)|X_{j(-k),t}\}\big|_{h=h^0_{jk}(X_{j(-k),t})}h(X_{j(-k),t})\Big]=0,
\end{equation}
for any $h\in\mathcal{H}_{jk}\cup \{h^0_{jk}\}$, where $\mathcal{H}_{jk}$ is defined in \hyperref[C5]{(C5)}.
\item[(C2)]\label{C2}
The true parameter $\beta_{jk}^0$ satisfies \eqref{mc}. Let $\mathcal{B}_{jk}$ be a fixed and closed interval and $\hat{\mathcal{B}}_{jk}$ be a possibly stochastic interval such that with probability $1- \smallO(1)$, {$[\beta^0_{jk}\pm c_1r_n] \subset \hat{\mathcal{B}}_{jk} \subset \mathcal{B}_{jk}$, where {$r_n\defeq n^{-1/2}\{\log(a_n/\epsilon)\}^{1/2}\underset{(j,k)\in G}{\max}\|\psi^0_{jk,\cdot}\|_{2,\varsigma}+n^{-1}r_\varsigma \{\log(a_n/\epsilon)\}^{3/2}\big\|\underset{(j,k)\in G}{\max}|\psi^0_{jk,\cdot}|\big\|_{q,\varsigma}$, $r_n\lesssim\rho_n$ ($\rho_n$ is defined in \hyperref[C5]{(C5)})}, $a_n \defeq \max(JK,n,e)$,  and $\psi_{jk,t}^0\defeq\psi_{jk}\{Z_{j,t},\beta^0_{jk},h^0_{jk}(X_{j(-k),t})\}$. $r_{\varsigma} = n^{1/q}$ for $\varsigma > 1/2-1/q$ and $r_{\varsigma} = n^{1/2-\varsigma}$ for $\varsigma < 1/2-1/q$.}
\item[(C3)]\label{C3}Properties of the score function:
the map $(\beta,h)\mapsto\E\{\psi_{jk}(Z_{j,t},\beta,h)|X_{j(-k),t}\}$ is twice continuously differentiable, and there exists constant $L_{n}\geq1$ such that for every $\vartheta\in\{\beta,h_1,\ldots,h_M\}$, $\E[\sup\limits_{\beta\in\mathcal{B}_{jk}}|\partial_{\vartheta}\E\{\psi_{jk}(Z_{j,t},\beta,h^0_{jk}(X_{j(-k),t})|X_{j(-k),t}\}|^2]\leq {L_n}$. \\
Moreover, {there exist measurable functions $\ell_1(\cdot),\ell_2(\cdot)$,} constants $L_{1n},L_{2n}\geq1$, $\upsilon>0$, and a cube $\mathcal{T}_{jk}(X_{j(-k),t})=\times_{m=1}^M\mathcal{T}_{jk,m}(X_{j(-k),t})$ in ${\R}^M$ with center $h^0_{jk}(X_{j(-k),t})$ such that for every $\vartheta, \vartheta'\in\{\beta,h_1,\ldots,h_M\}$ we have $\sup\limits_{(\beta,h)\in\mathcal{B}_{jk}\times\mathcal{T}_{jk}(X_{j(-k),t})}|\partial_{\vartheta}\partial_{\vartheta'}\E\{\psi_{jk}(Z_{j,t},\beta,h)|X_{j(-k),t}\}|\leq\ell_1(X_{j(-k),t})$, $\E\{|\ell_1(X_{j(-k),t})|^4\}\leq L_{1n}$, and for every $\beta,\beta'\in\mathcal{B}_{jk}$, $h,h'\in\mathcal{T}_{jk}(X_{j(-k),t})$ we have
{$\E[\{\psi_{jk}(Z_{j,t},\beta,h)-\psi_{jk}(Z_{j,t},\beta',h')\}^2|X_{j(-k),t}]\leq \ell_2(X_{j(-k),t})(|\beta-\beta'|^\upsilon+|h-h'|_2^\upsilon)$, and
$\E\{|\ell_2(X_{j(-k),t})|^4\}\leq L_{2n}$.}
\item[(C4)]\label{C4}Identifiability: $2|\E[\psi_{jk}\{Z_{j,t},\beta,h^0_{jk}(X_{j(-k),t})\}]|\geq|\phi_{jk}(\beta-\beta_{jk}^0)|\wedge c_1$ holds for all $\beta\in\mathcal{B}_{jk}$, where $\phi_{jk}\defeq\partial_{\beta}\E[\psi_{jk}\{Z_{j,t},\beta^0_{jk},h^0_{jk}(X_{j(-k),t})\}]$ and $|\phi_{jk}|\geq c_1$.
\item[(C5)]\label{C5}Properties of the nuisance function: with probability $1-\smallO(1)$, $\hat h_{jk}\in\mathcal{H}_{jk}$, where $\mathcal{H}_{jk}=\times_{m=1}^M\mathcal{H}_{jk,m}$, with each $\mathcal{H}_{jk,m}$ being the class of functions $\tilde h_{jk,m}: X_{j(-k),t}\to\R$ of the form $\tilde h_{jk,m}(X_{j(-k),t})=X_{j(-k),t}^\top\theta_{jk,m}$, $|\theta_{jk,m}|_0\leq s$, $\tilde h_{jk,m}\in\mathcal{T}_{jk,m}$. {There exists sequence of constants $\rho_n\downarrow0$ such that $\E[\{\tilde h_{jk,m}(X_{j(-k),t})-h^0_{jk,m}(X_{j(-k),t})\}^2]\lesssim \rho^2_n$.}
%
\item[(C6)]\label{C6}
The class of functions $\mathcal{F}_{jk}=\{z\mapsto\psi_{jk}\{z,\beta,\tilde h(x_{j(-k)})\}: \beta\in\mathcal{B}_{jk},\tilde h\in\mathcal{H}_{jk}\cup \{h^0_{jk}\}\}$ ($z$ is a random vector taking values in a Borel subset of a Euclidean space which contains the vectors $x_{j(-k)}$ as subvectors) is pointwise measurable and {satisfies the entropy condition $\operatorname{ent}(\epsilon, \mathcal{F}_{jk})\leq Cs\log(a_n/\epsilon)$ for all $0<\epsilon\leq1$.} It also has measurable envelope $F_{jk}\geq\underset{f\in\mathcal{F}_{jk}}{\sup}|f|$, such that $F=\underset{(j,k)\in G}{\max}F_{jk}$ satisfies $\E\{F^q(z)\}<C$ for some $q\geq4$.
\item[(C7)]\label{C7}
The second-order moments of scores are bounded away from zero: $\omega_{jk} =
\E\{(\frac{1}{\sqrt{n}}\sum_{t=1}^n\psi_{jk,t}^0)^2\}\geq c_1$.
\item[(C8)]\label{C8}
Dimension growth rates: 
$\rho_{n,\upsilon}(L_{2n}s\log a_n)^{1/2}+ n^{-1/2}r_{\varsigma}(s \log a_n)^{3/2}\|F(z_{t})\|_q + \rho_n^2 n^{1/2}=\smallO(g_n^{-1})$. In particular, for the mean regression case $\rho_{n,\upsilon} = \rho_n s$ and $\rho_{n,\upsilon} = \rho_n^{1/2}$ for the median regression case. {$n^{-1/2}\{s(\log a_n/\epsilon)\}^{1/2}\underset{f\in\mathcal F'}{\max}\|f(z_t)\|_{2}+ n^{-1}r_\varsigma\{s(\log a_n/\epsilon)\}^{3/2}\|\bar F'(z_t)\|_q=\bigO(\rho_n)$.} 
{$\mathcal{F}'=\{z\mapsto\psi_{jk}\{z,\beta,\tilde h(x_{j(-k)})\}: (j,k)\in G, \beta\in\mathcal{B}_{jk},\tilde h\in\mathcal{H}_{jk}\cup \{h^0_{jk}\}\}$ with $\bar F'=\underset{f\in\mathcal F'}{\sup}|f|$.}
\item[(C9)]\label{C9}
{
Let $B^h_\Phi = \underset{m\in\{1,2\}}{\max}\Phi^h_{m,2,\varsigma}$, $B_{\Omega}^h = \underset{m\in\{1,2\}}{\max}\Omega^h_{m,q,\varsigma}$, $B^{'h}_{\Phi} = \underset{m\in\{1,2\}}{\max} \Phi^{'h}_{m,2,\varsigma}$, and $B^{'h}_{\Omega} = \underset{m\in\{1,2\}}{\max} \Omega^{'h}_{m,q,\varsigma}$ (see \eqref{Phi.h}, \eqref{Phi.beta} and \eqref{Phi'} in the supplementary for the definitions of $\Phi^h_{m,2,\varsigma}$, $\Omega^h_{m,q,\varsigma}$, $\Phi^\beta_{2,\varsigma}$, $\Omega^\beta_{q,\varsigma}$, $\Phi^{'h}_{m,2,\varsigma}$, $\Omega^{'h}_{m,q,\varsigma}$, $\Phi^{'\beta}_{2,\varsigma}$, $\Omega^{'\beta}_{q,\varsigma}$). The following restrictions are assumed:
$$s \rho_n(\log a_n)^{1/2}B^h_{\Phi} +n^{-1/2} r_{\varsigma}\rho_n s^{2} (\log a_n)^{3/2}B^h_{\Omega} = \smallO(g_n^{-1}),$$
$$\rho_n(s\log a_n)^{1/2}\Phi^{\beta}_{2,\varsigma}+ n^{-1/2}r_{\varsigma}\rho_n(s \log a_n)^{3/2} \Omega^{\beta}_{q,\varsigma} = \smallO(g_n^{-1}),$$
$$B_{\Phi}^{'h}\rho_n s^{1/2} = \bigO(\max_{f\in\mathcal F'}\|f(z_t)\|_{2}), \, B_{\Omega}^{'h}\rho_ns^{1/2} = \bigO( \|\bar F'(z_t)\|_q),$$
$$\Phi^{'\beta}_{2,\varsigma}\rho_n = \bigO(\max_{f\in\mathcal F'}\|f(z_t)\|_{2}), \, \Omega^{'\beta}_{q,\varsigma}\rho_n = \bigO(\|\bar F'(z_t)\|_q).$$
}

\item[(C9')]\label{C9e}
{Consider the stronger exponential moment condition as in \eqref{exp.moment} and corresponding to \hyperref[C5]{(C5)}, assume that $\E[\{\tilde h_{jk,m}(X_{j(-k),t})-h^0_{jk,m}(X_{j(-k),t})\}^2]\lesssim (\rho^{e}_n)^2$. Recall the definitions of $\Phi^{h}_{m,\psi_\nu,0}$, $\Phi^{\beta}_{\psi_\nu,0}$, $\Phi^{'h}_{m,\psi_\nu,0}$, $\Phi^{'\beta}_{\psi_\nu,0}$ in \eqref{Phi.exp} and \eqref{Phi.exp'} in the supplementary. The following restrictions are assumed:
$${n^{-1/2} \{(\log a_n/\epsilon)\}^{1/\gamma}\max_{(j,k)\in G} \|\psi_{jk,\cdot}^0\|_{\psi_{\nu},0}\lesssim r_n,}$$
$$(s\log a_n)^{1/\gamma}\big[\rho_{n,\upsilon}^e \vee \rho_n^e \{(s^{1/2}\underset{m\in\{1,2\}}{\max}\Phi^{h}_{m,\psi_{\nu},0}) \vee \Phi^{\beta}_{\psi_{\nu},0}\}\big] = \smallO(g_n^{-1}),$$
$${n^{-1/2}\{s(\log a_n/\epsilon)\}^{1/\gamma}\max_{f\in\mathcal F'}\|f(z_\cdot)\|_{\psi_{\nu},0}= \bigO(\rho_n^e),}$$
$$\rho_n^e \{(s^{1/2}\underset{m\in\{1,2\}}{\max}\Phi^{'h}_{m,\psi_{\nu},0}) \vee \Phi^{'\beta}_{\psi_{\nu},0}\}=\bigO(\max_{f\in\mathcal F'}\|f(z_\cdot)\|_{\psi_{\nu},0}),$$
in particular, for the mean regression case $\rho^e_{n,\upsilon}= \rho_n^{e}s $ and $\rho^e_{n,\upsilon}=\sqrt{\rho_n^{e}}$ for the median regression case.
%
%
%
}
\item[(C10)]\label{C10}
The density of error $f_{\vps_j}(\cdot)$ is continuously differentiable and both of $f_{\vps_j}(\cdot)$ and $f'_{\vps_j}(\cdot)$ are bounded from the above.

\end{itemize}

Conditions \hyperref[C1]{(C1)}-\hyperref[C4]{(C4)} and \hyperref[C7]{(C7)} assume mild restrictions on the $Z$-estimation problems. They include the LAD-based regression (used in Algorithm \hyperref[algo2]{2}) with non-smooth score function. {Conditions \hyperref[C2]{(C2)} and \hyperref[C8]{(C8)} imply that $\underset{(j,k)\in G}{\max}\|\psi^0_{jk,\cdot}\|_{2,\varsigma}\lesssim s^{1/2}\underset{f\in\mathcal F'}{\max}\|f(z_t)\|_{2}$ and $\big\|\underset{(j,k)\in G}{\max}|\psi^0_{jk,\cdot}|\big\|_{q,\varsigma}\lesssim s^{3/2} \|\bar F'(z_t)\|_q$.}
In \hyperref[C5]{(C5)}, we suppose that the nuisance parameters have estimators with good sparsity and convergence rate properties. As discussed in previous sections, given the ideal choice of the tuning parameter, the oracle inequalities provided in Corollary \ref{betabound2} {ensures that our proposed algorithms can produce the estimator of the form $| \hat \beta_{j(-k)}^{[1]} - \beta_{j(-k)}^0|_{j, pr} \lesssim_{\P} \{\sqrt{s\log(a_n/\alpha)/n}\vee n^{1/q-1}(\varpi_na_n/\alpha)^{1/q}\}\underset{1\leq k \leq K}{\max}\|X_{jk,\cdot}\vps_{j,\cdot}\|_{q,\varsigma}$, where for $\varsigma > 1/2-1/q$ (weak dependence case), $\varpi_n = 1$; for $\varsigma< 1/2-1/q$  (strong dependence case), $\varpi_n = n^{q/2-1- \varsigma q}$. 
The moments of the envelopes are assumed to be finite in \hyperref[C6]{(C6)}.}

\begin{remark}[Discussion on the dimension growth rates\label{comment.rate}]
Consider the special case of VAR(1) model. Following the discussion in Comment \ref{var} (Example \ref{examp4}, continued), given a geometric decay rate, we have $L_{2n}, B^h_{\Phi}, B^{'h}_{\Phi},\Phi_{2,\varsigma}^{\beta} ,\Phi_{2,\varsigma}^{'\beta},\underset{f\in\mathcal F'}{\max}\|f(z_t)\|_{2}, \underset{(j,k)\in G}{\max}\big\||\psi^0_{jk,\cdot}|\big\|_{2,\varsigma}\lesssim M_n$, where $M_n$ only depends on the $2q$-th moments of $\vps_t$ and $\varsigma$. Moreover, suppose these quantities are bounded by constant and let $d_n\defeq(|G|\vee J)$, we have $B^h_{\Omega},B^{'h}_{\Omega}\lesssim d_n^{1/q}(1\vee s^{1/2}\rho_n)$, $\Omega_{q,\varsigma}^{\beta},\Omega_{q,\varsigma}^{'\beta}\lesssim d_n^{1/q}s^{1/2}\rho_n$ for mean regression case, and $B^h_{\Omega},B^{'h}_{\Omega}\lesssim d_n^{3/(4q)}(1\vee s^{1/2}\rho_n)$, $\Omega_{q,\varsigma}^{\beta},\Omega_{q,\varsigma}^{'\beta}\lesssim d_n^{1/(2q)}s^{1/2}\rho_n$ for the median regression. Moreover, $\|F(z_t)\|_q, \|F'(z_t)\|_q \lesssim d_n^{1/q}(1\vee\rho_n)$, $\big\|\underset{(j,k)\in G}{\max}|\psi^0_{jk,\cdot}|\big\|_{q,\varsigma} \lesssim d_n^{1/q}(1\vee\rho_n)$. The detailed derivation of these rates can be found in the Comment \ref{omega.rate} in the supplementary. Inserting them into \hyperref[C8]{(C8)} and \hyperref[C9]{(C9)} yields $$n^{-1/2}s^2(\log a_n)^{3/2} + n^{-1}r_\varsigma s^3(\log a_n)^{5/2}d_n^{1/q} + n^{-1/2}r_\varsigma s^{3/2}(\log a_n)^{2}d_n^{1/q}=\smallO(1),$$ and $$n^{-1/4}s^{3/4}(\log a_n)^{5/4} + n^{-1/2}r_\varsigma^{1/2} s^{5/4}(\log a_n)^{7/4}d_n^{3/(8q)} + n^{-1/2}r_\varsigma s^{3/2}(\log a_n)^{2}d_n^{3/(4q)}=\smallO(1),$$ for the smooth and non-smooth cases respectively. As a result, we only allow the dimension $(|G|\vee J)$ is of polynomial order with respect to $n$ if $q$ is not tending to infinity. In particular, under the case of $\varsigma>1/2$ and $q=\infty$, the required rate reduces to $n^{-1/2}s^2(\log a_n)^{3/2} + n^{-1}s^3(\log a_n)^{5/2} + n^{-1/2}s^{3/2}(\log a_n)^{2}=\smallO(1)$ or $n^{-1/4}s^{3/4}(\log a_n)^{5/4} + n^{-1/2}s^{5/4}(\log a_n)^{7/4} + n^{-1/2}s^{3/2}(\log a_n)^{2}=\smallO(1)$, respectively.
In the ideal case where we have weak dependency, the dimension growth rates are slightly slower than the i.i.d. case as in \cite{BCK15Bio} (i.e., $s^2 \log a_n^3=\smallO(n)$ or $s^3 \log a_n^5=\smallO(n)$ for the smooth or non-smooth case, respectively), as we apply a different way to bound the dependence adjusted norm in the concentration inequality.

More generally, suppose $\max\big\{L_{2n}, B^h_{\Phi}, B^{'h}_{\Phi},\Phi_{2,\varsigma}^{\beta} ,\Phi_{2,\varsigma}^{'\beta},\underset{f\in\mathcal F'}{\max}\|f(z_t)\|_{2}, \underset{(j,k)\in G}{\max}\big\||\psi^0_{jk,\cdot}|\big\|_{2,\varsigma}\big\}=\bigO(n^{k_1})$, and $\max\big\{B^h_{\Omega},B^{'h}_{\Omega}, \Omega_{q,\varsigma}^{\beta},\Omega_{q,\varsigma}^{'\beta}, \|F(z_t)\|_q, \|F'(z_t)\|_q,\big\|\underset{(j,k)\in G}{\max}|\psi^0_{jk,\cdot}|\big\|_{q,\varsigma} \big\}=\bigO(n^{k_2})$, with $0\leq k_1\leq k_2$, and let $s=\bigO(n^v)$, $\log a_n=\bigO(n^r)$. Then \hyperref[C8]{(C8)} and \hyperref[C9]{(C9)} imply that
$$r<\max\bigg\{\frac{1-4v-2k_1}{3},-\frac{2}{5q}+\frac{2-6v-2k_2}{5},-\frac{1}{2q}+\frac{1-3v-2k_2}{4}\bigg\},\,\text{if } \varsigma>1/2-1/q,$$
$$r<\max\bigg\{\frac{1-4v-2k_1}{3},\frac{2\varsigma+1-6v-2k_2}{5},\frac{2\varsigma-3v-2k_2}{4}\bigg\},\,\text{if } \varsigma<1/2-1/q,$$ and
$$r<\max\bigg\{\frac{1-3v-4k_1}{5},-\frac{2}{7q}+\frac{2-5v-2k_2}{7},-\frac{1}{2q}+\frac{1-3v-2k_2}{4}\bigg\},\,\text{if } \varsigma>1/2-1/q,$$
$$r<\max\bigg\{\frac{1-3v-4k_1}{3},\frac{2\varsigma+1-5v-2k_2}{7},\frac{2\varsigma-3v-2k_2}{4}\bigg\},\,\text{if } \varsigma<1/2-1/q,$$
for the smooth and non-smooth cases.

%

\end{remark}

\begin{theorem}[Uniform Bahadur Representation]\label{bahadur}
Under conditions \hyperref[A1]{(A1)}-\hyperref[A4]{(A4)} and \hyperref[C1]{(C1)}-\hyperref[C10]{(C10)}, with probability $1- \smallO(1)$, we have 
\begin{equation}
\max_{(j,k) \in G}|n^{1/2}\sigma_{jk}^{-1}(\hat{\beta}_{jk}- \beta^0_{jk})+  n^{-1/2}\sigma^{-1}_{jk}\phi_{jk}^{-1}\sum^n_{t =1} \psi_{jk,t}^0| = \smallO(g^{-1}_n),\,\text{as }n\to \infty,
\end{equation}
where $\sigma_{jk}^2 \defeq \phi_{jk}^{-2}\omega_{jk}$, $\omega_{jk} \defeq\E (\frac{1}{\sqrt{n}}\sum_{t=1}^n\psi_{jk,t}^0)^2$.
\end{theorem}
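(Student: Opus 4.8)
The plan is to linearize the approximate moment condition defining $\hat\beta_{jk}$ around the truth $(\beta^0_{jk},h^0_{jk})$, to exploit the Neyman orthogonality \hyperref[C1]{(C1)} so that the first-order effect of the nuisance estimation error drops out, and thereby to reduce the problem to controlling a single stochastic-equicontinuity term through a maximal inequality adapted to dependent data. Every bound must be made uniform over $(j,k)\in G$ and sharpened to the level $\smallO_{\P}(g_n^{-1})$, where $g_n=\{\log(e|G|)\}^{1/2}$. Write $\Psi_{jk}(\beta,h)\defeq\E[\psi_{jk}\{Z_{j,t},\beta,h(X_{j(-k),t})\}]$ for the population moment and $\hat\Psi_{jk}(\beta,h)\defeq\En[\psi_{jk}\{Z_{j,t},\beta,h(X_{j(-k),t})\}]$ for its empirical analogue; by \eqref{mc} we have $\Psi_{jk}(\beta^0_{jk},h^0_{jk})=0$, while \hyperref[C2]{(C2)} gives $|\hat\Psi_{jk}(\hat\beta_{jk},\hat h_{jk})|=\smallO_{\P}(n^{-1/2}g_n^{-1})$.

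First I would establish the uniform rate $\max_{(j,k)\in G}|\hat\beta_{jk}-\beta^0_{jk}|\lesssim_{\P} r_n$ using the identifiability condition \hyperref[C4]{(C4)} together with a preliminary maximal inequality for $G_n$ over $\mathcal F'$ and the nuisance rate $\rho_n$ from \hyperref[C5]{(C5)}; this pins $\hat\beta_{jk}$ inside $\hat{\mathcal B}_{jk}$ at the radius isolated in \hyperref[C2]{(C2)}. Because $\psi_{jk}$ may be discontinuous in $\beta$ (the LAD case), I would expand the smooth \emph{population} moment rather than the empirical one: by the twice continuous differentiability in \hyperref[C3]{(C3)},
\begin{equation}
\Psi_{jk}(\hat\beta_{jk},\hat h_{jk})=\phi_{jk}(\hat\beta_{jk}-\beta^0_{jk})+\mathrm R^{h}_{jk}+\mathrm R^{\beta}_{jk},\notag
\end{equation}
where the linear-in-$h$ term has already vanished in expectation by orthogonality \hyperref[C1]{(C1)}. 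Consequently $\mathrm R^{h}_{jk}$ is a pure second-order nuisance remainder, bounded through the Hessian bound $\ell_1$ in \hyperref[C3]{(C3)} and $\|\hat h_{jk}-h^0_{jk}\|_2\lesssim_{\P}\rho_n$ at order $\rho_n^2$, while $\mathrm R^{\beta}_{jk}=\bigO(|\hat\beta_{jk}-\beta^0_{jk}|^2)$; both are $\smallO_{\P}(n^{-1/2}g_n^{-1})$ after rescaling by the growth condition \hyperref[C8]{(C8)} (or \hyperref[C9e]{(C9')} under exponential moments).

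The crux is the empirical-process term
\begin{equation}
\mathrm E_{jk}\defeq n^{1/2}\big\{[\hat\Psi_{jk}(\hat\beta_{jk},\hat h_{jk})-\Psi_{jk}(\hat\beta_{jk},\hat h_{jk})]-[\hat\Psi_{jk}(\beta^0_{jk},h^0_{jk})-\Psi_{jk}(\beta^0_{jk},h^0_{jk})]\big\},\notag
\end{equation}
which equals $G_n\{\psi_{jk}(\cdot,\hat\beta_{jk},\hat h_{jk})-\psi_{jk}(\cdot,\beta^0_{jk},h^0_{jk})\}$ and must be shown to be $\smallO_{\P}(g_n^{-1})$ uniformly over $G$. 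This is precisely where the symmetrization argument of \cite{BCK15Bio} is unavailable. Instead I would approximate each score coordinate by martingale differences via the functional dependence measure of Definition \ref{norm}, then invoke a maximal inequality over the shrinking $\mathcal F_{jk}$-neighborhood whose $L^2$-radius is controlled by the H\"older modulus $|\beta-\beta'|^\upsilon+|h-h'|_2^\upsilon$ of \hyperref[C3]{(C3)} (so that the discontinuity of $\psi$ itself is harmless, only its conditional second moment entering), using the entropy bound $\operatorname{ent}(\epsilon,\mathcal F_{jk})\le Cs\log(a_n/\epsilon)$ of \hyperref[C6]{(C6)}. The resulting bound is expressed through the aggregated dependence-adjusted norms $B^h_{\Phi},B^h_{\Omega},\Phi^{\beta}_{2,\varsigma},\Omega^{\beta}_{q,\varsigma}$, and the dimension-growth restrictions \hyperref[C8]{(C8)}--\hyperref[C9]{(C9)} are exactly calibrated to force $\mathrm E_{jk}=\smallO_{\P}(g_n^{-1})$. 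I expect this maximal inequality under temporal and cross-sectional dependence to be the main obstacle, since it demands the martingale-approximation concentration bounds and careful bookkeeping of the dependence-adjusted norms to match the error level $g_n^{-1}$.

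Finally I would assemble the pieces. Combining $|\hat\Psi_{jk}(\hat\beta_{jk},\hat h_{jk})|=\smallO_{\P}(n^{-1/2}g_n^{-1})$ with the linearization and the negligibility of $\mathrm R^{h}_{jk}$, $\mathrm R^{\beta}_{jk}$ and $\mathrm E_{jk}$, and noting $\hat\Psi_{jk}(\beta^0_{jk},h^0_{jk})=\En\psi^0_{jk,t}$, I obtain
\begin{equation}
\phi_{jk}(\hat\beta_{jk}-\beta^0_{jk})+\En\psi^0_{jk,t}=\smallO_{\P}(n^{-1/2}g_n^{-1})\notag
\end{equation}
uniformly over $(j,k)\in G$. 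Multiplying by $n^{1/2}\sigma_{jk}^{-1}\phi_{jk}^{-1}$ and using $|\phi_{jk}|\ge c_1$ from \hyperref[C4]{(C4)} together with $\omega_{jk}\ge c_1$ from \hyperref[C7]{(C7)} (so that $\sigma_{jk}^{-1}\phi_{jk}^{-1}=\omega_{jk}^{-1/2}$ stays bounded), the right-hand side becomes $\smallO_{\P}(g_n^{-1})$ and the left-hand side is exactly $n^{1/2}\sigma_{jk}^{-1}(\hat\beta_{jk}-\beta^0_{jk})+n^{-1/2}\sigma_{jk}^{-1}\phi_{jk}^{-1}\sum_{t=1}^n\psi^0_{jk,t}$, which yields the claimed representation.
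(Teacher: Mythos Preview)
Your proposal is essentially the paper's proof (itself patterned on \cite{BCK15Bio}), with the same three ingredients: Taylor expansion of the \emph{population} moment using orthogonality \hyperref[C1]{(C1)} to kill the linear nuisance term, a stochastic-equicontinuity bound for $G_n$ over the shrinking class via martingale approximation (the paper's Lemma \ref{ratedeltan}), and consistency from identifiability \hyperref[C4]{(C4)} combined with the maximal inequality over $\mathcal F'$ (Lemma \ref{max3}).

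One point is glossed over. You write that \hyperref[C2]{(C2)} gives $|\hat\Psi_{jk}(\hat\beta_{jk},\hat h_{jk})|=\smallO_{\P}(n^{-1/2}g_n^{-1})$, but \hyperref[C2]{(C2)} does not deliver this: the defining inequality for $\hat\beta_{jk}$ only says the empirical moment is within $\smallO(n^{-1/2}g_n^{-1})$ of the \emph{infimum} over $\hat{\mathcal B}_{jk}$, so you still need to show that infimum is itself $\smallO_{\P}(n^{-1/2}g_n^{-1})$. The paper does this in its Step~2 by constructing the auxiliary one-step point $\beta^\ast_{jk}=\beta^0_{jk}-\phi_{jk}^{-1}\En\psi^0_{jk,t}$, using Lemma \ref{max2} to place $\beta^\ast_{jk}$ inside $\hat{\mathcal B}_{jk}$, and then applying the very linearization/equicontinuity you have already developed (with $\tilde\beta_{jk}=\beta^\ast_{jk}$) to see $\En\psi_{jk}(Z_{j,t},\beta^\ast_{jk},\hat h_{jk})=\smallO_{\P}(n^{-1/2}g_n^{-1})$. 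This is not a new idea---your machinery handles it immediately---but it is a required step, and its omission is why you do not yet have the consistency rate you claim at the outset (note also that the rate the argument actually yields is $\rho_n$, not $r_n$; the bound $n^{-1/2}\sup_{\mathcal F'}|G_n(f)|\lesssim_{\P}\rho_n$ of Lemma \ref{max3} is what drives it).
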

{
\begin{remark}\label{comment.rate2}
The same conclusion as in Theorem \ref{bahadur} can be drawn with assuming stronger exponential moment conditions in \eqref{exp.moment} and using \hyperref[C9e]{(C9')} instead of \hyperref[C6]{(C6)}, \hyperref[C8]{(C8)} and \hyperref[C9]{(C9)}. This is implied by Lemma \ref{ratedeltane}, \ref{max2exp} and \ref{max3exp} in the supplementary material.

We now discuss the rates implication under \hyperref[C9e]{(C9')}. 
Suppose all the dependence adjusted norms are bounded by constant with an appropriately chosen $\nu$, the restrictions in \hyperref[C9e]{(C9')} would imply $n^{-1/2}(\log a_n)^{2/\gamma+1/2}s^{2/\gamma+1}=\smallO(1)$ for the case of smooth score, and $n^{-1/4}(\log a_n)^{3/(2\gamma)}s^{3/(2\gamma)+1/2}=\smallO(1)$ for the non-smooth case, where $\gamma = 2/(2\nu+1)$. For example, when $\nu=1/2, \gamma=1$ the required rates would be $s^6\log^5a_n=\smallO(n)$ and $s^6\log^8a_n=\smallO(n)$ for the smooth and non-smooth cases respectively.


\end{remark}}

The results in Theorem \ref{bahadur} imply the asymptotic normality of the proposed estimator by Algorithm \hyperref[algo1]{1} and \hyperref[algo2]{2} by applying central limit theorems and Gaussian Approximation.

\begin{corollary}\label{asy.norm}
Under conditions \hyperref[A1]{(A1)}-\hyperref[A4]{(A4)} and \hyperref[C10]{(C10)}, 
for any $(j,k)\in G$ the estimators obtained by Algorithm \hyperref[algo1]{1} and \hyperref[algo2]{2} satisfy %
	$$\sigma^{-1}_{jk} n^{1/2}(\hat{\beta}^{[2]}_{jk}- \beta^0_{jk}) \stackrel{\mathcal{L}}{\rightarrow}  \N (0,1).$$
\end{corollary}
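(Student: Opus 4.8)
The plan is to obtain the stated limit directly from the uniform Bahadur representation of Theorem \ref{bahadur}, specialized to a single index $(j,k)\in G$, and then to prove a central limit theorem for the resulting leading linear term. Theorem \ref{bahadur} is applicable here because the estimators $\hat\beta^{[2]}_{jk}$ produced by Algorithm \hyperref[algo1]{1} and Algorithm \hyperref[algo2]{2} are precisely the $Z$-estimators it covers, the associated LS and LAD score functions $\psi_{jk}$ having been checked against conditions \hyperref[C1]{(C1)}--\hyperref[C10]{(C10)}. For a fixed $(j,k)$ the uniform remainder $\smallO_{\P}(g_n^{-1})$ is in particular $\smallO_{\P}(1)$, since $g_n=\{\log(e|G|)\}^{1/2}\geq1$. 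Hence
\begin{equation}
\sigma_{jk}^{-1} n^{1/2}(\hat\beta^{[2]}_{jk}-\beta^0_{jk})=-\sigma_{jk}^{-1}\phi_{jk}^{-1}\,n^{-1/2}\sum_{t=1}^n\psi_{jk,t}^0+\smallO_{\P}(1).\notag
\end{equation}

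By Slutsky's theorem it then suffices to show that the leading term converges to $\N(0,1)$. From $\sigma_{jk}^2=\phi_{jk}^{-2}\omega_{jk}$ we get $\sigma_{jk}^{-1}\phi_{jk}^{-1}=\operatorname{sign}(\phi_{jk})\,\omega_{jk}^{-1/2}$, so the leading term equals, up to a sign, $\omega_{jk}^{-1/2}n^{-1/2}\sum_{t=1}^n\psi_{jk,t}^0$. The moment condition \eqref{mc} gives $\E\psi_{jk,t}^0=0$, and by definition $\omega_{jk}=\E\{(n^{-1/2}\sum_{t=1}^n\psi_{jk,t}^0)^2\}$ is exactly the variance of this normalized partial sum; thus the term has variance one for every $n$, and \hyperref[C7]{(C7)} ensures $\omega_{jk}\geq c_1>0$ so the normalization is nondegenerate. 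Since the sign is immaterial for the limiting law, the task reduces to the scalar central limit theorem
\begin{equation}
\omega_{jk}^{-1/2}\,n^{-1/2}\sum_{t=1}^n\psi_{jk,t}^0\stackrel{\mathcal{L}}{\rightarrow}\N(0,1).\notag
\end{equation}

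To establish this I would exploit that $\psi_{jk,t}^0$ is a stationary, mean-zero process which, like $X_{jk,t}$ and $\vps_{j,t}$ in \hyperref[A1]{(A1)}, is a measurable function of the innovation filtration $\mathcal{F}_t$, and whose functional dependence-adjusted norm $\|\psi_{jk,\cdot}^0\|_{2,\varsigma}$ is finite (this is the quantity controlled in \hyperref[C2]{(C2)} and \hyperref[C8]{(C8)} and bounded through \hyperref[A3]{(A3)}). With $\varsigma>0$ the cumulative dependence measure is summable, placing the process in the short-range dependent regime. I would then invoke the martingale approximation of \citet{ZW15gaussian}, which is the same device underlying the Gaussian approximation of Theorem \ref{gausappro}: write $\psi_{jk,t}^0=D_t+(g_{t-1}-g_t)$ with $\{D_t\}$ a stationary ergodic martingale-difference sequence and a coboundary term that telescopes to a contribution negligible after division by $n^{1/2}$. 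Applying the martingale central limit theorem to $n^{-1/2}\sum_t D_t$ and identifying its limiting variance with $\lim_n\omega_{jk}$ delivers the claim. Equivalently, one may specialize the univariate ($K=1$) version of the argument behind Theorem \ref{gausappro} to the scalar process $\psi_{jk,\cdot}^0$.

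The main obstacle is exactly this last step: because the data are temporally and cross-sectionally dependent, no classical i.i.d.\ central limit theorem applies, and the self-normalizing constant $\omega_{jk}$ is the \emph{long-run} variance rather than the one-step variance. Controlling the martingale approximation error and verifying the Lindeberg and variance conditions of the martingale central limit theorem under only the moment and dependence-adjusted-norm hypotheses \hyperref[A3]{(A3)} and \hyperref[C7]{(C7)} is the delicate part. Once this is done, the conclusion holds for both the LS-based estimator of Algorithm \hyperref[algo1]{1} and the LAD-based estimator of Algorithm \hyperref[algo2]{2}, as both enter Theorem \ref{bahadur} as instances of the same $Z$-estimation framework.
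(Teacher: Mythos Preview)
Your two-step architecture---apply the Bahadur representation of Theorem~\ref{bahadur} and then a CLT to the linear leading term---is exactly the paper's plan, and your treatment of the CLT step is essentially the same as the paper's Lemma~\ref{normality}, which invokes Lemma~\ref{lem:cltwbw} (a CLT under summable physical dependence, proved via the martingale approximation you sketch).

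The gap is in the first step. The corollary assumes only \hyperref[A1]{(A1)}--\hyperref[A4]{(A4)} and \hyperref[C10]{(C10)}, whereas Theorem~\ref{bahadur} requires the full list \hyperref[C1]{(C1)}--\hyperref[C9]{(C9)} on the score $\psi_{jk}$, the nuisance class $\mathcal H_{jk}$, and the associated rates. Your phrase ``having been checked'' assumes this away, but that verification \emph{is} the substance of the proof: the paper's argument for Corollary~\ref{asy.norm} is almost entirely devoted to checking, for the concrete LAD (and analogously LS) score $\psi_{jk}(Z_{j,t},\beta_{jk},h_{jk})=\{1/2-\IF(Y_{j,t}\leq X_{jk,t}\beta_{jk}+X_{j(-k),t}^\top\beta_{j(-k)})\}v_{jk,t}$, that orthogonality \hyperref[C1]{(C1)} follows from $F_{\vps_j}(0)=1/2$ and $\E(v_{jk,t}X_{j(-k),t})=0$; that \hyperref[C3]{(C3)} and \hyperref[C4]{(C4)} hold via explicit computation of $\partial_\vartheta\E(\psi_{jk}\mid X_{j(-k),t})$ using the density bound in \hyperref[C10]{(C10)}; that \hyperref[C5]{(C5)} holds because the LASSO steps in Algorithms~\hyperref[algo1]{1}--\hyperref[algo2]{2} deliver nuisance estimators with the oracle rates of Corollary~\ref{betabound4}; and that \hyperref[C7]{(C7)}--\hyperref[C9]{(C9)} hold under the moment and dimension restrictions discussed in Comment~\ref{comment.rate}. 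None of this is automatic from \hyperref[A1]{(A1)}--\hyperref[A4]{(A4)} alone, and it is where \hyperref[C10]{(C10)} actually gets used. So your proposal has the right skeleton but omits the organ that does the work.
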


\begin{corollary}[Uniform-Dimensional Central Limit Theorem]\label{uninorm}
Under the same conditions as in Theorem \ref{bahadur}, 
assume that $\|\psi^0_{jk,\cdot}\|_{2,\varsigma}<\infty$, we have
$$\sigma^{-1}_{jk} n^{1/2}(\hat{\beta}_{jk}- \beta^0_{jk})  \stackrel{\mathcal{L}}{\rightarrow}  \N (0,1),$$
uniformly over $(j,k) \in G$.
\end{corollary}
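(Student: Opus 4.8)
The plan is to read off the limit distribution directly from the uniform Bahadur representation of Theorem~\ref{bahadur}, thereby reducing the problem to a marginal central limit theorem for the linearized scores together with a uniform control of the remainder. Introduce the linearized statistic
\begin{equation*}
L_{jk}\defeq -\sigma_{jk}^{-1}\phi_{jk}^{-1}n^{-1/2}\sum_{t=1}^n\psi_{jk,t}^0,
\end{equation*}
so that Theorem~\ref{bahadur} yields $n^{1/2}\sigma_{jk}^{-1}(\hat\beta_{jk}-\beta^0_{jk})=L_{jk}+R_{jk}$ with $\max_{(j,k)\in G}|R_{jk}|=\smallO_{\P}(g_n^{-1})$ and $g_n=\{\log(e|G|)\}^{1/2}\to\infty$, hence $\max_{(j,k)\in G}|R_{jk}|\to0$ in probability. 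By the very definition $\sigma_{jk}^2=\phi_{jk}^{-2}\omega_{jk}$ with $\omega_{jk}=\E(n^{-1/2}\sum_t\psi_{jk,t}^0)^2$, each $L_{jk}$ has mean zero and variance exactly one. It therefore suffices to prove that $L_{jk}\stackrel{\mathcal{L}}{\rightarrow}\N(0,1)$ uniformly over $(j,k)\in G$ and then absorb the vanishing remainder.

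First I would establish the marginal convergence of $L_{jk}$. Since $\psi_{jk,t}^0$ is a stationary causal process whose functional dependence is controlled by the assumed dependence adjusted norm $\|\psi^0_{jk,\cdot}\|_{2,\varsigma}<\infty$ (with $\varsigma>0$), the dependence measures are summable and a martingale approximation in the spirit of \citet{ZW15gaussian} delivers a univariate central limit theorem for $n^{-1/2}\sum_t\psi_{jk,t}^0$ with limiting variance $\omega_{jk}$; after normalization by $\sigma_{jk}$ this gives $L_{jk}\stackrel{\mathcal{L}}{\rightarrow}\N(0,1)$ for each fixed $(j,k)$. This is precisely the Gaussian approximation machinery underlying Theorem~\ref{gausappro}, now applied to the scores in place of $S_{j\cdot}$. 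To upgrade this to a uniform statement I would invoke the quantitative (Berry--Esseen type) form of the approximation, whose error is governed by the higher moments of the envelope $F$ (finite of order $q\ge4$ by \hyperref[C6]{(C6)}), the relevant dependence adjusted norms, and the lower bound $\omega_{jk}\ge c_1$ from \hyperref[C7]{(C7)}. As all of these quantities are bounded uniformly over $G$ under the standing conditions, one obtains $\sup_{(j,k)\in G}\sup_{r\in\R}|\P(L_{jk}\le r)-\Phi(r)|\to0$.

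Finally I would transfer the conclusion from $L_{jk}$ to $L_{jk}+R_{jk}$. Fix $\epsilon>0$; since $\P(\max_{(j,k)\in G}|R_{jk}|>\epsilon)\to0$, the bound $\P(L_{jk}+R_{jk}\le r)\le\P(L_{jk}\le r+\epsilon)+\P(\max_{(j,k)\in G}|R_{jk}|>\epsilon)$ holds uniformly in $r$ and in $(j,k)$, with the matching lower bound obtained analogously. Combining this with the uniform marginal convergence of the previous paragraph and the Lipschitz continuity of $\Phi$ (bounded density $1/\sqrt{2\pi}$, which provides the anti-concentration needed to absorb the $\pm\epsilon$ shifts), and then letting $\epsilon\downarrow0$, yields $\sup_{(j,k)\in G}\sup_{r\in\R}|\P(n^{1/2}\sigma_{jk}^{-1}(\hat\beta_{jk}-\beta^0_{jk})\le r)-\Phi(r)|\to0$, the claimed uniform convergence. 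The main obstacle is exactly this uniformity: the pointwise CLT is routine given $\|\psi^0_{jk,\cdot}\|_{2,\varsigma}<\infty$, but extending it to hold uniformly over the possibly high-dimensional index set $G$ requires a quantitative central limit theorem for dependent data whose constants are controlled uniformly in $(j,k)$, which is where the moment condition \hyperref[C6]{(C6)}, the variance lower bound \hyperref[C7]{(C7)}, and the uniform boundedness of the dependence adjusted norms enter decisively.
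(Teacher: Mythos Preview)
Your proposal is correct and follows the same core strategy as the paper: combine the uniform Bahadur representation of Theorem~\ref{bahadur} with a central limit theorem for the linearized scores $n^{-1/2}\sum_t\psi_{jk,t}^0$, obtained via martingale approximation under the summability condition $\|\psi^0_{jk,\cdot}\|_{2,\varsigma}<\infty$. The paper's proof is a single line invoking Lemma~\ref{normality}, which in turn applies the CLT of \citet{MR2988107} (Lemma~\ref{lem:cltwbw}) to $T_n^{jk}=\sigma_{jk}^{-1}\phi_{jk}^{-1}\sum_t\psi^0_{jk,t}$; this is exactly the ``martingale approximation'' step you describe in your first paragraph.

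The difference is that you work harder on the word ``uniformly'': you explicitly propose a quantitative Berry--Esseen bound to obtain $\sup_{(j,k)\in G}\sup_r|\P(L_{jk}\le r)-\Phi(r)|\to0$, and then a Slutsky-type transfer using the uniform remainder $\max_{(j,k)}|R_{jk}|=\smallO_{\P}(g_n^{-1})$ together with the Lipschitz continuity of $\Phi$. The paper's proof does not spell this out---Lemma~\ref{normality} is stated and proved pointwise---so your version is actually more complete on the uniformity front. Your route costs the extra input of a quantitative CLT with constants controlled uniformly in $(j,k)$ (which, as you correctly note, is supplied by \hyperref[C6]{(C6)}--\hyperref[C7]{(C7)} and the uniform finiteness of the dependence adjusted norms), whereas the paper's one-line proof leaves that passage implicit. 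One small quibble: $g_n=\{\log(e|G|)\}^{1/2}$ need not tend to infinity if $|G|$ is bounded, but since $g_n\ge1$ the conclusion $\max_{(j,k)}|R_{jk}|=\smallO_{\P}(1)$ holds regardless, so your argument is unaffected.
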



Consider the vector $\widetilde\zeta_t \defeq \operatorname{vec}\{(\zeta_{jk,t})_{(j,k)\in G}\}$, $\zeta_{jk,t} \defeq -\sigma^{-1}_{jk}\phi_{j,k}^{-1}\psi_{jk,t}^0$, and define the aggregated dependence adjusted norm as follows:
\begin{equation}\label{dan3}
\|\widetilde\zeta_{\cdot}\|_{q,\varsigma}\defeq\sup_{m\geq 0}(m+1)^{\varsigma} \sum^{\infty}_{t=m}\| |\widetilde\zeta_t- \widetilde\zeta_t^\ast|_\infty \|_q,
\end{equation}
where $q\geq1$, and $\varsigma>0$. Moreover, define the following quantities
\begin{align}
&\Phi^\zeta_{q,\varsigma}\defeq \max_{(j,k)\in G} \|\zeta_{jk,\cdot}\|_{q,\varsigma}, \,\, \Gamma^\zeta_{q, \varsigma} \defeq  \bigg(\sum_{(j,k)\in G} \|\zeta_{jk,\cdot}\|^{q}_{q,\varsigma}\bigg)^{1/q}\notag,\\
&\Theta^\zeta_{q,\varsigma} \defeq \Gamma^\zeta_{q,\varsigma}\wedge \big\{\|\widetilde\zeta_{\cdot}\|_{q,\varsigma}(\log|G|)^{3/2}\big\}.
\end{align}

Define $L_1^\zeta = \{\Phi^\zeta_{2,\varsigma}\Phi^\zeta_{2,0} (\log|G|)^2\}^{1/\varsigma}$, $W_1^\zeta = \{(\Phi^\zeta_{3,0})^6+ (\Phi^\zeta_{4,0})^4\}\{\log(|G|n)\}^7$, $W_2^\zeta = (\Phi^\zeta_{2,\varsigma})^2\{\log(|G|n)\}^4$, $W_3^\zeta = [n^{-\varsigma} \{\log (|G|n)\}^{3/2} \Theta^\zeta_{q,\varsigma}]^{1/(1/2-\varsigma-1/q)}$, $N_1^\zeta =(n/\log|G|)^{q/2} (\Theta^\zeta_{q, \varsigma})^{q}$, $N_2^\zeta=n(\log|G|)^{-2}(\Phi^\zeta_{2,\varsigma})^{-2}$, $N_3^\zeta = \{n^{1/2}(\log|G|)^{-1/2}(\Theta^{\zeta}_{q, \varsigma}\})^{1/(1/2-\varsigma)}$.
\begin{itemize}
	\item[(A6)]\label{A6}
	i) (weak dependency case) Given $\Theta^\zeta_{q,\varsigma} < \infty$ with $q \geq 2$ and $\varsigma > 1/2 - 1/q$, then \\
	$\Theta^\zeta_{q, \varsigma} n^{1/q-1/2}\{\log (|G|n)\}^{3/2} \to 0$ and $L_1^\zeta\max(W_1^\zeta, W_2^\zeta) = \smallO(1) \min (N_1^\zeta,N_2^\zeta)$.\\
	ii) (strong dependency case) Given $0<\varsigma< 1/2 -1/q$, then $\Theta^\zeta_{q,\varsigma}(\log |G|)^{1/2} = \smallO(n^{\varsigma})$ and $L_1^\zeta \max(W_1^\zeta,W_2^\zeta,W_3^\zeta) = \smallO(1)\min(N_2^\zeta,N_3^\zeta)$.
\end{itemize}

\begin{corollary}[Consistency of the Estimated Confidence Interval]\label{cons.boot}
Under \hyperref[A6]{(A6)} and the same conditions as in Theorem \ref{bahadur}, for each $(j,k)\in G$ assume that there exists a constant $c>0$ such that {$\underset{(j,k)\in G}\min\operatorname{avar}\big(n^{-1/2}\sum_{t=1}^n\zeta_{jk,t}\big)\geq c$}, with probability $1-\smallO(1)$, we have 
\begin{equation}
\sup_{\alpha \in (0,1)}|\P({\beta}^0_{jk} \in \widetilde{\operatorname{CI}}_{jk}(\alpha),\,\forall (j,k) \in G) -(1-\alpha)| = \smallO(1),\,\text{ as }n\to \infty,
\end{equation}
where $\widetilde{\operatorname{CI}}_{jk}(\alpha)\defeq \[\hat{\beta}_{jk}\pm \hat\sigma_{jk}n^{-1/2} q(1-\alpha)\]$, and $q(1-\alpha)$ is the $(1-\alpha)$ quantile of the $\underset{(j,k) \in G}{\max} |\mathcal{Z}_{jk}|$, where $\mathcal{Z}_{jk}$'s are the standard normal random variables and $\hat{\sigma}_{jk}$ is a consistent estimator of $\sigma_{jk}$. 
\end{corollary}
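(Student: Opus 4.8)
The plan is to reduce the simultaneous coverage probability to the distribution of a maximum of a Gaussian vector and to control all approximation errors through a Gaussian anti-concentration argument. Writing $T_{jk}\defeq n^{1/2}(\hat\beta_{jk}-\beta^0_{jk})/\hat\sigma_{jk}$ as in \eqref{single}, the coverage event can be rewritten as
\beq
\{\beta^0_{jk}\in\widetilde{\operatorname{CI}}_{jk}(\alpha),\,\forall(j,k)\in G\}=\Big\{\max_{(j,k)\in G}|T_{jk}|\leq q(1-\alpha)\Big\},\notag
\eeq
so it suffices to establish $\sup_{\alpha\in(0,1)}\big|\P(\max_{(j,k)\in G}|T_{jk}|\leq q(1-\alpha))-(1-\alpha)\big|=\smallO(1)$. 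I would prove this by successively replacing $\max_{(j,k)}|T_{jk}|$ by (i) the infeasible statistic $\max_{(j,k)}|\bar T_{jk}|$ with $\bar T_{jk}\defeq n^{1/2}\sigma_{jk}^{-1}(\hat\beta_{jk}-\beta^0_{jk})$, (ii) the leading linear term $\max_{(j,k)}|n^{-1/2}\sum_{t=1}^n\zeta_{jk,t}|$, and (iii) the Gaussian maximum $\max_{(j,k)}|\mathcal{Z}_{jk}|$ whose $(1-\alpha)$ quantile is exactly $q(1-\alpha)$.

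The first step is the linearization. By the uniform Bahadur representation (Theorem \ref{bahadur}) and the identity $\zeta_{jk,t}=-\sigma^{-1}_{jk}\phi^{-1}_{jk}\psi^0_{jk,t}$, with probability $1-\smallO(1)$ one has
\beq
\max_{(j,k)\in G}\Big|\bar T_{jk}-n^{-1/2}\sum_{t=1}^n\zeta_{jk,t}\Big|=\smallO_{\P}(g_n^{-1}),\notag
\eeq
where $g_n=\{\log(e|G|)\}^{1/2}$. To pass from $\bar T_{jk}$ to the feasible $T_{jk}$, write $T_{jk}=(\sigma_{jk}/\hat\sigma_{jk})\bar T_{jk}$; since the Gaussian-approximation step below yields $\max_{(j,k)}|\bar T_{jk}|=\bigO_{\P}(g_n)$, the uniform consistency of $\hat\sigma_{jk}$ at the rate $\max_{(j,k)}|\hat\sigma_{jk}/\sigma_{jk}-1|=\smallO_{\P}(g_n^{-2})$ gives $\max_{(j,k)}|T_{jk}-\bar T_{jk}|=\smallO_{\P}(g_n^{-1})$. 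Hence $\max_{(j,k)}|T_{jk}|$ and $\max_{(j,k)}|n^{-1/2}\sum_{t=1}^n\zeta_{jk,t}|$ differ by $\smallO_{\P}(g_n^{-1})$.

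The second step is the Gaussian approximation for the dependent, high-dimensional vector $(n^{-1/2}\sum_{t=1}^n\zeta_{jk,t})_{(j,k)\in G}$. This is the exact analogue of Theorem \ref{gausappro} with $S_{jk}$ replaced by $n^{-1/2}\sum_{t=1}^n\zeta_{jk,t}$ and the dimension $K$ replaced by $|G|$: the quantities $\Phi^\zeta_{q,\varsigma},\Gamma^\zeta_{q,\varsigma},\Theta^\zeta_{q,\varsigma}$ and assumption \hyperref[A6]{(A6)} are precisely what the proof of Theorem \ref{gausappro} requires, so the same martingale-approximation and blocking argument delivers
\beq
\rho\Big(\big(n^{-1/2}\sum_{t=1}^n\zeta_{jk,t}\big)_{(j,k)\in G},\,(\mathcal{Z}_{jk})_{(j,k)\in G}\Big)\to0,\notag
\eeq
where $(\mathcal{Z}_{jk})$ is centered Gaussian with the long-run covariance of $(\zeta_{jk,t})$. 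The lower bound $\min_{(j,k)\in G}\operatorname{avar}(n^{-1/2}\sum_{t=1}^n\zeta_{jk,t})\geq c$ together with $\sigma_{jk}^2=\phi_{jk}^{-2}\omega_{jk}$ guarantees that each $\mathcal{Z}_{jk}$ has unit variance, so $q(1-\alpha)$ is indeed the $(1-\alpha)$ quantile of $\max_{(j,k)}|\mathcal{Z}_{jk}|$.

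The final step glues the pieces through Gaussian anti-concentration: the maximum of a unit-variance Gaussian vector over $|G|$ coordinates satisfies $\sup_{x\geq0}\P\big(\big|\max_{(j,k)}|\mathcal{Z}_{jk}|-x\big|\leq\epsilon\big)\lesssim\epsilon\,g_n$, so a perturbation of order $\smallO_{\P}(g_n^{-1})$ in the statistic alters the coverage by at most $\smallO(1)$. Combining this with the Kolmogorov bound of the previous step and evaluating at the threshold $q(1-\alpha)$ gives, uniformly in $\alpha$,
\beq
\P\Big(\max_{(j,k)\in G}|T_{jk}|\leq q(1-\alpha)\Big)=\P\Big(\max_{(j,k)\in G}|\mathcal{Z}_{jk}|\leq q(1-\alpha)\Big)+\smallO(1)=1-\alpha+\smallO(1).\notag
\eeq
The main obstacle is precisely this last gluing: the Bahadur remainder and the variance-estimation error are only of size $\smallO(g_n^{-1})$, and they are harmless solely because the density of the Gaussian maximum grows no faster than $g_n$; carrying out the anti-concentration bound uniformly over all thresholds (equivalently over $\alpha$) and checking that the consistency rate demanded of $\hat\sigma_{jk}$ is compatible with \hyperref[A6]{(A6)} is the delicate part, whereas the Gaussian-approximation step is a direct transcription of Theorem \ref{gausappro}.
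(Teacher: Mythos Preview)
Your proposal is correct and follows the same route as the paper's own proof, only with the details spelled out that the paper leaves implicit. The paper's argument is two lines: apply the high-dimensional CLT of Zhang and Wu to the vector $\widetilde\Im=n^{-1/2}\sum_{t=1}^n\widetilde\zeta_t$ under \hyperref[A6]{(A6)} to obtain $\rho(D^{-1}\widetilde\Im,D^{-1}\widetilde{\mathcal Z})\to0$, then say ``the rest is similar to Corollary~\ref{gausappro.c}.'' Your steps (ii)--(iii) are exactly this, and your step (i) together with the anti-concentration bound is the piece the paper suppresses when it invokes the Bahadur representation implicitly. One small imprecision: the unit variance of $\mathcal Z_{jk}$ comes directly from the normalization $\zeta_{jk,t}=-\sigma_{jk}^{-1}\phi_{jk}^{-1}\psi_{jk,t}^0$ with $\sigma_{jk}^2=\phi_{jk}^{-2}\omega_{jk}$; the assumed lower bound on $\operatorname{avar}$ is there to license the Gaussian approximation theorem, not to produce unit variance.
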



Following Theorem \ref{bahadur}, a joint confidence region and the corresponding confidence interval for each component can be constructed via a block bootstrap method. In particular, the bootstrap statistics are defined by
$\frac{1}{\sqrt{n}} \sum_{i=1}^{l_n} e_{j,i}\sum_{l=(i-1)b_n+1}^{ib_n} \hat\zeta_{jk,l}$, where $e_{j,i}$'s are independent and identically distributed draws of standard normal random variables and are independent with respect to the data sample $(Z_{j,t})_{j=1}^J$. Recall that  $\hat\zeta_{jk,t}$ are pre-estimators with a certain range of accuracy. {More details can be found in Comment \ref{zeta.remark} in the supplementary material.}

\begin{corollary}[Validity of Multiplier Bootstrap]\label{proofboot}
Under \hyperref[A6]{(A6)} and the same conditions as in Theorem \ref{bahadur}, assume $\Phi^\zeta_{q,\varsigma}<\infty$ with $q>4$, $b_n = \bigO(n^{\eta})$ for some $0 <\eta< 1$ (the detailed rate is specified in \eqref{ratebn.G}), we have 
\beq
\sup_{\alpha \in (0,1)}|\P({\beta^0_{jk} \in \widetilde{\operatorname{CI}}^\ast_{jk}(\alpha)},\,\forall (j,k) \in G) -(1-\alpha)| = \smallO(1),\,\text{ as }n\to \infty,
\eeq
where $\widetilde{\operatorname{CI}}^\ast_{jk}(\alpha)\defeq \[\hat{\beta}_{jk}\pm \hat{\sigma}_{jk}n^{-1/2} q^\ast(1-\alpha)\]$, and $ q^\ast(1-\alpha)$ is the $(1- \alpha)$ conditional quantile of $\underset{(j,k) \in G}{\max}\frac{1}{\sqrt{n}}|\sum_{i=1}^{l_n} e_{j,i}\sum_{l=(i-1)b_n+1}^{ib_n} \hat\zeta_{jk,l}|$.
\end{corollary}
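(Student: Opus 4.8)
The plan is to reduce the simultaneous coverage statement to a bootstrap-consistency statement for the maximum of a Gaussian-approximable process, and then to control the discrepancy between the conditional bootstrap law and the target Gaussian law through a covariance comparison.

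First I would linearise. Writing $T_{jk}\defeq n^{1/2}\sigma_{jk}^{-1}(\hat\beta_{jk}-\beta^0_{jk})$ and recalling $\zeta_{jk,t}=-\sigma^{-1}_{jk}\phi_{jk}^{-1}\psi^0_{jk,t}$, the uniform Bahadur representation (Theorem \ref{bahadur}) gives $\max_{(j,k)\in G}\big|T_{jk}-n^{-1/2}\sum_{t=1}^n\zeta_{jk,t}\big|=\smallO_\P(g_n^{-1})$. Replacing $\sigma_{jk}$ by the consistent $\hat\sigma_{jk}$ (whose consistency uses $\min_{(j,k)\in G}\operatorname{avar}(n^{-1/2}\sum_t\zeta_{jk,t})\geq c$) contributes a further factor that is $1+\smallO_\P(1)$ uniformly. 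Hence the event $\{\beta^0_{jk}\in\widetilde{\operatorname{CI}}^\ast_{jk}(\alpha)\ \forall(j,k)\in G\}$ coincides, up to an $\smallO_\P(g_n^{-1})$ perturbation of the test statistic, with $\{\max_{(j,k)\in G}|n^{-1/2}\sum_t\zeta_{jk,t}|\leq q^\ast(1-\alpha)\}$.

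Second, on the population side I would invoke the Gaussian approximation established in Corollary \ref{cons.boot}: under \hyperref[A6]{(A6)} the law of $\max_{(j,k)\in G}|n^{-1/2}\sum_t\zeta_{jk,t}|$ lies within $\smallO(1)$ Kolmogorov distance of the law of $\max_{(j,k)\in G}|\mathcal Z_{jk}|$, where $\mathcal Z\sim\N(0,\Sigma^\zeta)$ and $\Sigma^\zeta$ is the long-run covariance matrix defining the reference vector $\mathcal Z$ of that corollary. The Gaussian anti-concentration inequality, whose factor is of order $\sqrt{\log|G|}\asymp g_n$, guarantees that the $\smallO_\P(g_n^{-1})$ Bahadur remainder and the $\hat\sigma_{jk}$ perturbation have asymptotically negligible effect at the level of distribution functions.

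Third, on the bootstrap side I would argue conditionally on the data. Conditionally, $\{\sum_{i=1}^{l_n}e_{j,i}\hat\eta_{jk,i}\}_{(j,k)\in G}$ is a centered Gaussian vector whose covariance $\widehat\Sigma^\zeta$ has entries built from the block aggregates $\hat\eta_{jk,i}$. A Gaussian comparison inequality of Chernozhukov--Chetverikov--Kato type bounds the Kolmogorov distance between the conditional bootstrap maximum and $\max_{(j,k)\in G}|\mathcal Z_{jk}|$ by $C\{\max|\widehat\Sigma^\zeta-\Sigma^\zeta|\}^{1/3}(\log|G|)^{2/3}$, so the decisive estimate is $\max|\widehat\Sigma^\zeta-\Sigma^\zeta|=\smallO_\P(\{\log|G|\}^{-2})$. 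I would split this as
\begin{equation*}
\max\Big|\sum_i\hat\eta_{jk,i}\hat\eta_{j'k',i}-\sum_i\eta_{jk,i}\eta_{j'k',i}\Big|+\max\Big|\sum_i\eta_{jk,i}\eta_{j'k',i}-\Sigma^\zeta_{(jk),(j'k')}\Big|,
\end{equation*}
where both maxima run over $(j,k),(j',k')\in G$. The first term is $\smallO_\P(\{\log(JK)\}^{-2})$, hence $\smallO_\P(\{\log|G|\}^{-2})$ since $|G|\leq JK$, by the pre-estimation accuracy imposed on $\hat\zeta_{jk,t}$ in Section \ref{sec.single}; the second is the consistency of the non-overlapping block estimator of the long-run covariance, established exactly as in Corollary \ref{gausappro.est} and Theorem \ref{validboot} but applied to the score process $\{\zeta_{jk,t}\}$, using \hyperref[A6]{(A6)}, $\Phi^\zeta_{q,\varsigma}<\infty$ with $q>4$, and $b_n=\bigO(n^{\eta})$.

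Finally, combining the three comparisons by the triangle inequality shows that $q^\ast(1-\alpha)$ and the Gaussian quantile $q(1-\alpha)$ of Corollary \ref{cons.boot} agree up to $\smallO_\P(1)$ uniformly in $\alpha$ (anti-concentration converts distributional closeness into quantile closeness), which delivers the stated coverage. I expect the main obstacle to be the second covariance term: showing that the non-overlapping block long-run-covariance estimator of the possibly non-smooth, temporally and cross-sectionally dependent score process concentrates around $\Sigma^\zeta$ at the sharp rate $\{\log|G|\}^{-2}$ uniformly over all $|G|^2$ entries. This requires the martingale-approximation and Nagaev-type bounds underpinning \hyperref[A6]{(A6)} together with the bias-variance trade-off in the block length $b_n$ (bias of order $b_n^{-\varsigma}$ against a stochastic error increasing in $b_n$), mirroring the $F_\varsigma$ bookkeeping of Corollary \ref{gausappro.est}.
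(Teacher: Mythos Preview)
Your proposal is correct and follows essentially the same route as the paper. The paper's own proof simply notes that the argument parallels that of Theorem \ref{validboot} and Theorem \ref{validboot.joint} (Gaussian approximation plus the Chernozhukov--Chetverikov--Kato covariance comparison, with the block long-run-variance discrepancy split into a stochastic part controlled by a Nagaev-type bound and a bias part of order $v(b_n)$), and records the resulting rate conditions on $b_n$ in \eqref{ratebn.G}; you have made the Bahadur linearisation, the $\hat\sigma_{jk}$ replacement, and the $\hat\zeta$ versus $\zeta$ pre-estimation step explicit, which the paper treats in the surrounding Comment \ref{zeta.remark} and Lemma \ref{block.lemma} rather than in the proof itself.
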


{\begin{remark}[Admissible rate of $b_n$]\label{bn.remark}
Again, consider the special case of VAR(1) with i.i.d. errors (Example \ref{examp4}, continued), with $\Theta^\zeta_{q,\varsigma}=\bigO(|G|^{1/q})$ and $\Phi^\zeta_{q,\varsigma}=\bigO(1)$, for $\varsigma>1$. Then in Corollary \ref{proofboot}, the restrictions on $b_n$ in \eqref{ratebn.G} along with \hyperref[A6]{(A6)} boil down to a set of simple admissible rates. In particular, letting $\log |G| = \bigO(n^{r})$, we need $2r<\eta<1-5r$ and $|G| (\log|G|)^{3q/2} \vee |G|^2(\log|G|)^{q}c_n^{q/2}=\smallO(n^{q/2-1})$, where $c_n^{-1}=\smallO(1)$. Note that the rate can be further improved by employing the exponential inequality under stronger tail assumptions.
\end{remark}}

\section{Simulation Study}\label{sim}
In this section, we illustrate the performance of our proposed methodology under different simulation scenarios. The first part concerns the performance of the jointly selected penalty level over equations, and the second part discusses the simultaneous inference.
\subsection{Estimation with a Jointly Selected Penalty Level}\label{sim.lambda}

Consider the system of regression equations:
\begin{equation}\label{sim.reg}
Y_{j,t} = X_{t}^\top\beta_{j}^0 + \varepsilon_{j,t},,\quad t=1,\ldots,n, \,j=1,\ldots,J,
\end{equation}
where $X_{t}\in{\R}^K$. We generate $X_{t}$ independently from $\operatorname{N}(0,\Sigma)$, where $\Sigma_{k_1,k_2}=\gamma^{|k_1-k_2|}$, $\gamma=0.5$, $\varepsilon_{j,t}\stackrel{\operatorname{i.i.d.}}{\sim}\operatorname{N}(0,1)$. The coefficient vectors $\beta_j$ are assumed to be sparse. In particular, we divide the indices $\{1,\ldots,K\}$ evenly into blocks with fixed block size 5. $\beta_{jk}^0=10$ if $k$ and $j$ belong to the same block and 0 otherwise.

We take $n=100$, $\#$ of bootstrap replications = 5000. We set $J,K=50, 100$ and $150$. The prediction norm $| \hat \beta_j - \beta^0_{j}|_{j,pr}$ and the Euclidean norm $| \hat \beta_j - \beta^0_{j}|_{2}$ ratios are presented in Table \ref{ratio 2}. The ratios measure the relative difference between the results using the penalty level determined from the equation-by-equation case and from the joint equation case ($\lambda_j$ and $\lambda$ are selected by the multiplier bootstrap procedure). In particular, a ratio smaller than $1$ indicates a better performance of using the jointly selected penalty level.

\begin{table}[H]
	\begin{center}
		\begin{tabular}{p{1cm} ccc}
			\hline\hline
			& \footnotesize{$J=K=50$} &  \footnotesize{$J=K=100$} &  \footnotesize{$J=K=150$} \\
			\cline{2-4}
			& \multicolumn{3}{c}{\small{Prediction norm}}\\
			\hline
			\footnotesize{Mean} & 0.9634 & 0.9474 & 0.9347 \\
			\footnotesize{Median} & 0.9695 & 0.9516 & 0.9371 \\
			\footnotesize{Std.} & 0.0323 & 0.0272 & 0.0254 \\
			\hline
			& \multicolumn{3}{c}{\small{Euclidean norm}}\\
			\hline
			\footnotesize{Mean} & 0.9590 & 0.9429 & 0.9286 \\
			\footnotesize{Median} & 0.9679 & 0.9468 & 0.9316 \\
			\footnotesize{Std.} & 0.0367 & 0.0292 & 0.0286 \\
			\hline\hline
		\end{tabular}
	\end{center}
	\caption{Prediction norm and Euclidean norm ratios (overall $\lambda$ relative to equation-by-equation $\lambda_j$'s, average over equations). Results (mean, median and standard deviation) are computed over $1000$ replications.}
	\label{ratio 2}
\end{table}

It is evident from Table \ref{ratio 2} that the proposed estimation procedure delivers much better performance in terms of the two measures. In particular, the superiority tends to be more evident (more than $10 \% $) with higher dimension of the covariates and more equations.


Still consider the system of regression equations as in \eqref{sim.reg}, but here we generate the data with dependency by following the Appendix D in \citet{ZW15_sup}. In particular, assume the linear process such that $X_t=\sum_{\ell=0}^{\infty}A_\ell\xi_{t-\ell}$, with $A_\ell=(\ell+1)^{-\rho-1}M_\ell$, where $M_\ell$ are independently drawn from Ginibre matrices, i.e. all the entries of $M_\ell$ are i.i.d. $\operatorname{N}(0,1)$, and in practice the sum is truncated to $\sum_{\ell=0}^{1000}$. We set $\rho$ to be 1.0 for the weaker dependence and 0.1 for the stronger dependence cases respectively. Let $\xi_{k,t}=e_{k,t}(0.8e_{k,t-1}^2+0.2)^{1/2}$ where $e_{k,t}$ are i.i.d. distributed as $t(d)/\sqrt{d/(d-2)}$ and $t(d)$ is the Student's $t$ with degree of freedom $d$ (take $d=8$ for example). $\eps_t$ are generated by following the same fashion independently.


We take $n=100$, $\#$ of bootstrap replications = 5000, $J,K=50, 100$ and $150$. 
{Based on bias-variance trade-off, several approaches were suggested to determine the optimal choice of $b_n$ for univariate case. Concerning the high-dimensional case, we propose to take the one which gives the lowest prediction norm as the optimal choice. Below we report the average prediction norm $J^{-1}\sum_{j=1}^J|\hat \beta_j - \beta^0_{j}|_{j,pr}$ with several block sizes $b_n$ under different settings and the minimal ones are in bold. }

\begin{table}[H]
	\begin{center}
		\begin{tabular}{p{1.3cm} ccccccc}
			\hline\hline
			& \multicolumn{3}{c}{\small{$\rho=0.1$ (stronger dependency)}} && \multicolumn{3}{c}{\small{$\rho=1.0$ (weaker dependency)}}\\
			\cline{2-4}\cline{6-8}
			& \footnotesize{$J=K=50$} &  \footnotesize{$J=K=100$} &  \footnotesize{$J=K=150$} && \footnotesize{$J=K=50$} &  \footnotesize{$J=K=100$} &  \footnotesize{$J=K=150$} \\
			\hline
			$b_n=2$ &  2.0721 & 2.9122 & 3.5932 & & \bf{2.0165} & 2.6270 & 3.2286 \\
			\hline
			$b_n=4$ & 2.0627 & 2.8924 & 3.5617 & & 2.0303 & \bf{2.6183} & 3.2225 \\
			\hline
			$b_n=6$ & 2.0487 & 2.9007 & 3.5235 & & 2.0834 & 2.6288 & \bf{3.2198} \\
			\hline
			$b_n=8$ & \bf{2.0388} & 2.8841 & \bf{3.5073} & & 2.2149 & 2.6502 & 3.2320 \\
			\hline
			$b_n=10$ & 2.0521 & \bf{2.8836} & 3.5268 & & 2.3576 & 2.7099 & 3.2975 \\
			\hline
			$b_n=12$ & 2.0581 & 2.9065 & 3.5687 & & 2.5592 & 2.8310 & 3.3895 \\
			\hline\hline
		\end{tabular}
	\end{center}
	\caption{The prediction norm (average over equations) using several choices of $b_n$. Results are computed over $1000$ simulations.}
	\label{bn}
\end{table}

{From Table \ref{bn}, it is apparent that a larger block size is required for the stronger dependency case. Moreover, the choice also depends on the dimensionality, which is more evident for relatively weaker dependent data. We note that when $J=K=50$, $\rho=1.0$ the ordinary multiplier bootstrap (with $b_n=1$) produces 2.1003 as the average prediction norm, therefore we suggest $b_n=2$ for this case.}

The prediction norm $| \hat \beta_j - \beta^0_{j}|_{j,pr}$ and the Euclidean norm $| \hat \beta_j - \beta^0_{j}|_{2}$ ratios {(using the optimal $b_n$ suggested in Table \ref{bn} for each case correspondingly)} are presented in Table \ref{ratio 3}. Again we report the results with the jointly estimated $\lambda$ (selected by the algorithm proposed in section 3.2 based on multiplier block bootstrap) relative to using the single equation $\lambda_j$'s. 

\begin{table}[H]
	\begin{center}
		\begin{tabular}{p{1.1cm} ccccccc}
			\hline\hline
			& \multicolumn{3}{c}{\small{$\rho=0.1$ (stronger dependency)}} && \multicolumn{3}{c}{\small{$\rho=1.0$ (weaker dependency)}}\\
			\cline{2-4}\cline{6-8}
			& \footnotesize{$J=K=50$} &  \footnotesize{$J=K=100$} &  \footnotesize{$J=K=150$} && \footnotesize{$J=K=50$} &  \footnotesize{$J=K=100$} &  \footnotesize{$J=K=150$} \\
			\cline{2-8}
			& \multicolumn{7}{c}{\small{Prediction norm}}\\
			\hline
			\footnotesize{Mean} & 0.9141 & 0.8534 & 0.8250 & & 0.9356 & 0.8786 & 0.8326 \\
			\footnotesize{Median} & 0.9165 & 0.8532 & 0.8255 & & 0.9384 & 0.8792 & 0.8330 \\
			\footnotesize{Std.} & 0.0436 & 0.0377 & 0.0326 & & 0.0380 & 0.0338 & 0.0296 \\
			\hline
			& \multicolumn{7}{c}{\small{Euclidean norm}}\\
			\hline
			\footnotesize{Mean} & 0.9017 & 0.8447 & 0.8114 & & 0.9251 & 0.8648 & 0.8154 \\
			\footnotesize{Median} & 0.9062 & 0.8453 & 0.8135 & & 0.9290 & 0.8652 & 0.8157 \\
			\footnotesize{Std.} & 0.0515 & 0.0401 & 0.0348 & & 0.0453 & 0.0368 & 0.0317 \\
			\hline\hline
		\end{tabular}
	\end{center}
	\caption{Prediction norm and Euclidean norm ratios (overall $\lambda$ relative to equation-by-equation $\lambda_j$'s, average over equations). Results (mean, median and standard deviation) are computed over $1000$ replications.}
	\label{ratio 3}
\end{table}

The results show that the coefficient estimation performance measured by both the prediction norm and the Euclidean norm is in favor of the joint penalty level with multiplier block bootstrap approach. The results are robust over different dimension cases with stronger or weaker dependency. 

\subsection{Simultaneous Inference}

In this subsection we consider the following regression model for the purpose of simultaneous inference on the parameters within a system of equations

\begin{equation}
Y_{j,t} = d_{j,t}\alpha^0_{j} + X_{t}^\top\beta^0_{j} +\varepsilon_{j,t}, \,\, d_{j,t} = X_t^\top\theta^0_j + v_{j,t},\,\,t=1,\ldots,n, \,\, j=1,\ldots,J,
\end{equation}
where $\alpha^0_j=\alpha^0$ for all $j$. Also, $\beta^0_j,\theta^0_j\in{\R}^K$ are assumed to be sparse. In particular, we divide the indices $1,\ldots,K$ evenly into blocks with a fixed block size $5$, $\beta_{jk}^0$ and $\theta_{jk}^0$ are independently drawn from $\operatorname{Unif}[0,5]$ and $\operatorname{Unif}[0,0.25]$ respectively, if $k$ and $j$ belong to the same block and $0$ otherwise. The way to generate $X_t$, $\eps_{t}$ and $v_t$ is same as the dependent data setting above.


We consider the sample size $n=100$. Our goal is to estimate and make inferences on the target variables $d_{j,t}$'s based on the procedure proposed in Section \ref{inference}. We evaluate and compare the empirical power and size performance of the confidence intervals constructed by the asymptotic distribution theory \eqref{CI.asy}, block bootstrap \eqref{CI.boot.ind} and the simultaneous confidence regions via block bootstrap \eqref{CI.boot.sim}. The bootstrap statistics are computed based on 5000 replications and we also take the optimal block size according to the numerical comparison conducted above. 
Note that the case of $\alpha^0=0$ gives the size performance under the null hypothesis, while $\alpha^0$ uniformly lies in $[0,2.5]$ and $[0,5]$ illustrate the power results.


Table \ref{reject} shows the average rejection rate of $H_0^j: \alpha_j^0=0$ over $j$ for individual (or multiple) inference and the rejection rate of $H_0: \alpha_1^0=\cdots=\alpha_J^0=0$ for simultaneous inference under different settings of $J,K$ and $\rho$. {Multiple testing procedure via step-down method (see e.g. \citet{romano2005exact,CCK13AoS}), is considered to control the false positives in evaluating the power performance.} The rejection rates are computed over $1000$ simulation samples.

\begin{table}[H]
	\begin{center}
		\begin{tabular}{p{2.1cm} ccccccc}
			\hline\hline
			& \multicolumn{3}{c}{\small{$\rho=0.1$ (stronger dependency)}} && \multicolumn{3}{c}{\small{$\rho=1.0$ (weaker dependency)}}\\
			\cline{2-4}\cline{6-8}
			& \footnotesize{$J=K=50$} &  \footnotesize{$J=K=100$} &  \footnotesize{$J=K=150$} && \footnotesize{$J=K=50$} &  \footnotesize{$J=K=100$} &  \footnotesize{$J=K=150$} \\
			\cline{2-8}
			& \multicolumn{7}{c}{\small{$\alpha^0=0$}}\\
			\hline
			\footnotesize{Ind. Asym.} &  0.0166 & 0.0126 & 0.0126 & & 0.0242 & 0.0148 & 0.0119 \\
			\hline
			\footnotesize{Ind. Boot.} &  0.0303 & 0.0202 & 0.0155 & & 0.0224 & 0.0169 & 0.0141 \\
			\hline
			\footnotesize{Simult. Boot.} & 0.0260 & 0.0473 & 0.0527 & & 0.0520 & 0.0547 & 0.0587 \\
			\hline
			& \multicolumn{7}{c}{\small{$\alpha^0\sim\operatorname{Unif}[0,2.5]$}}\\
			\hline
			\footnotesize{Ind. Asym.} &  0.8714 & 0.8558 & 0.8553 & & 0.8763 & 0.8622 & 0.8572 \\
			\hline
			\footnotesize{Ind. Boot.} & 0.8746 & 0.8573 & 0.8566 & & 0.8761 & 0.8629 & 0.8578 \\
			\hline
			\footnotesize{Mult. Boot.} & 0.8413 & 0.8027 & 0.8004 & & 0.8438 & 0.8249 & 0.8091 \\
			\hline
			& \multicolumn{7}{c}{\small{$\alpha^0\sim\operatorname{Unif}[0,5]$}}\\
			\hline
			\footnotesize{Ind. Asym.} &  0.9376 & 0.9247 & 0.9282 & & 0.9380 & 0.9319 & 0.9269 \\
			\hline
			\footnotesize{Ind. Boot.} & 0.9390 & 0.9254 & 0.9331 & & 0.9288 & 0.9325 & 0.9273 \\
			\hline
			\footnotesize{Mult. Boot.} & 0.9282 & 0.9070 & 0.9072 & & 0.9262 & 0.9182 & 0.9082 \\
			\hline\hline
		\end{tabular}
	\end{center}
	\caption{Average rejection rate of $H_0^j: \alpha_j^0=0$ over $j$ for the individual (or multiple) inference and the rejection rate of $H_0: \alpha_1^0=\cdots=\alpha_J^0=0$ for simultaneous inference under several true $\alpha^0$ values (given the significance level = 0.05).}
	\label{reject}
\end{table}

{It is shown that for individual inference our proposed individual bootstrap approach provides a closer size control to the nominal $\alpha$ and more powerful empirical rejection probabilities compared to constructing the confidence intervals by asymptotic normality in most of the cases. Moreover, the simultaneous inference outperforms the individual inference in size accuracy and in terms of the power performance, the multiple testing is relatively conservative after controlling the false positives. Overall, we observe that the results using bootstrap approach are robust over different dimension settings under either stronger or weaker dependency cases. }

\section{Empirical Analysis: Textual Sentiment Spillover Effects}\label{app}
Financial markets are driven by information, and this is a well-known phenomenon among investors. More frequent news and availability of sentiment data allows study of the impact of firm-specific investor sentiment on market behavior such as stock returns, volatility and liquidity; see \citealp{baker2006, tetlock2007}, among others. Moreover, powerful statistical tools (e.g. LASSO-type estimators) are being used to model complex relationships among individuals. For example, \citet{audrino_sentiment} analyze the influence of news on US and European companies by constructing a sparse predictive network via adaptive LASSO and related testing procedures. In this section the developed technology is applied to study textual sentiment spillover effects across individual stocks. This is different from the "equation-by-equation" analysis in \citet{audrino_sentiment}, since we build up a system of regression equations and implement the estimation and the inference of the network jointly.

\subsection{Data Source}
The empirical study in this paper is carried out based on the financial news articles published on the NASDAQ community platform from January 2, 2015 to December 29, 2015 (252 trading days). The data were gathered via a self-written web scraper to automate the downloading process. The dataset is available at the Research Data Centre (RDC), Humboldt-Universit\"at zu Berlin. Moreover, unsupervised learning approaches are employed to extract sentiment variables from the articles. Two sentiment dictionaries: the BL option lexicon \citep{BL2004text} and the LM financial sentiment dictionary \citep{LM2011text} were used in \citet{zhang2016text}. For each article $i$ (published on day $t$), the average proportion of positive/negative words using BL or LM lexica - $Pos_{j,i,t}^\text{BL}$, $Neg_{j,i,t}^\text{BL}$, $Pos_{j,i,t}^\text{LM}$, $Neg_{j,i,t}^\text{LM}$ - are considered as the text sentiment variables. Furthermore, the bullishness indicator for stock $j$ on day $t$ with the related articles $i=1,\ldots,m$ (based on a particular lexicon) is constructed by following \citet{af2004text}
\begin{equation}
B_{j,t}=\log\Big(\Big\{1+m^{-1}\sum_{i=1}^m\IF(Pos_{j,i,t}>Neg_{j,i,t})\Big\}\Big/\Big\{1+m^{-1}\sum_{i=1}^m\IF(Pos_{j,i,t}<Neg_{j,i,t})\Big\}\Big).
\end{equation}
We refer to \citet{zhang2016text} for more details about the data gathering and processing procedure.
63 individual stocks which are S\&P 500 component stocks from 9 Global Industrial Classification Standard (GICS) sectors are considered. They are traded at NSDAQ Stock Exchange or NYSE. The list of the stock symbols and the corresponding company names can be found in Table \ref{table:List_of_Firms} in Appendix \ref{table} in the supplementary materials.

The daily log returns $R_{j,t}$ and log volatilities $\log(\sigma_{j,t}^2)$ for the stocks over the same time span are taken as response variables. More precisely, the \citet{gk1980vola} range-based measure to represent the volatility level is employed:
\begin{equation}
\sigma_{j,t}^2 = 0.511 (u_{j,t}-d_{j,t})^2 - 0.019\{r_{j,t}(u_{j,t}+d_{j,t})-2u_{j,t}d_{j,t}\} - 0.383r_{j,t}^2,
\end{equation}
where $u_{j,t} = \log(P_{j,t}^H) - \log(P_{j,t}^O), d_{j,t} = \log(P_{j,t}^L) - \log(P_{j,t}^O), r_{j,t} = \log(P_{j,t}^C) - \log(P_{j,t}^O)$, with $P_{j,t}^H, P_{j,t}^L, $, $P_{j,t}^O$, and $P_{j,t}^C$ denote the highest, lowest, opening and closing prices, respectively. In addition, the S\&P 500 index returns and Chicago Board Options Exchange volatility index (VIX) are included as the state variables. The financial time series data were originally obtained from Datastream, and GICS sector information was found at Compustat.

\subsection{Model Setting and Results}
We now construct a network model to detect the spillover effects from sentiment variables to financial variables by
\begin{align}\label{ret}
r_{j,t} &= c_j + B_{t}^\top\beta_j + z_{t}^\top\gamma_j + r_{j,t-1}\delta_j + \varepsilon_{j,t},\notag\\
\log\sigma^2_{j,t} &= c_j + B_{t}^\top\beta_j + z_{t}^\top\gamma_j + \log\sigma^2_{j,t-1}\delta_j + \varepsilon_{j,t},
\end{align}
where $j=1,\ldots,J$ indicate the stock symbols, $B_{t}=(B_{1,t},\ldots,B_{J,t})^\top$ and $z_t$ includes the state variables. 

It is of interest to make inferences on the parameters $\beta_j\in {\R}^J$, $j=1,\ldots J$. Following the framework introduced in Section \ref{inference}, an estimation procedure with three steps needs to be implemented.

\begin{enumerate}
	\item[S1] For each $j$, run LASSO on \eqref{ret} and keep the estimator $\hat{\beta}^{[1]}_{j(-j)}$, $\hat{\gamma}^{[1]}_j$, $\hat{\delta}^{[1]}_j$ and $\hat{c}^{[1]}_j$.
	\item[S2] For each $j$, run LASSO on $B_{j,t} = (B_{-j,t}^\top,z_{t}^\top,r_{j,t-1})^\top\theta_{j} + v_{j,t}$ to model the dependence among sentiment variables. In particular, we propose to take the joint penalty level obtained via block multiplier bootstrap (discussed in Section \ref{mblambda}) for this regression system. Keep the residuals as $\hat{v}_{j,t} = B_{j,t} - (B_{-j,t}^\top,z_{t}^\top,r_{j,t-1})^\top\hat{\theta}_{j}$.
	\item[S3] For each $(j,k)$, run IV regression of $r_{j,t} - \hat{c}^{[1]}_j - B_{-j,t}^{\top}\hat{\beta}^{[1]}_{j(-j)} - z_{t}^\top\hat{\gamma}^{[1]}_j- r_{j,t-1}\hat{\delta}^{[1]}_j$ on $B_{k,t}$ using $\hat{v}_{k,t}$ as an instrument variable. Then we obtain the final estimator $\hat{\beta}^{[2]}_{jk}$.
\end{enumerate}

If for stock $j$, the sentiment variable of firm $k$ is selected into the active set after the individual significance test i.e., the null hypothesis $H_0^{jk}: \beta_{jk}=0$ is rejected under the block multiplier bootstrap procedure {(as discussed in Section \ref{sim.lambda} we pre-determine $b_n=5$ by choosing the one gives the lowest prediction norm in the LASSO estimation in S1 on a grid search)}, then we put a directional edge from $k$ to $j$. As a result, we achieve a $0-1$ adjacency matrix describing the dependency network from sentiment variable to financial variable. Note that the diagonal elements in the matrix show the self-effect of stocks.

The graphical network for stock returns and volatility modelled by \eqref{ret} based on BL and LM lexica (from 01/02/15 to 12/29/15) is depicted in Figures \ref{graph}-\ref{graph2}.

\begin{figure}[H]\begin{centering}
		\begin{subfigure}[t]{1\textwidth}
			\centering
			\includegraphics[scale=0.5]{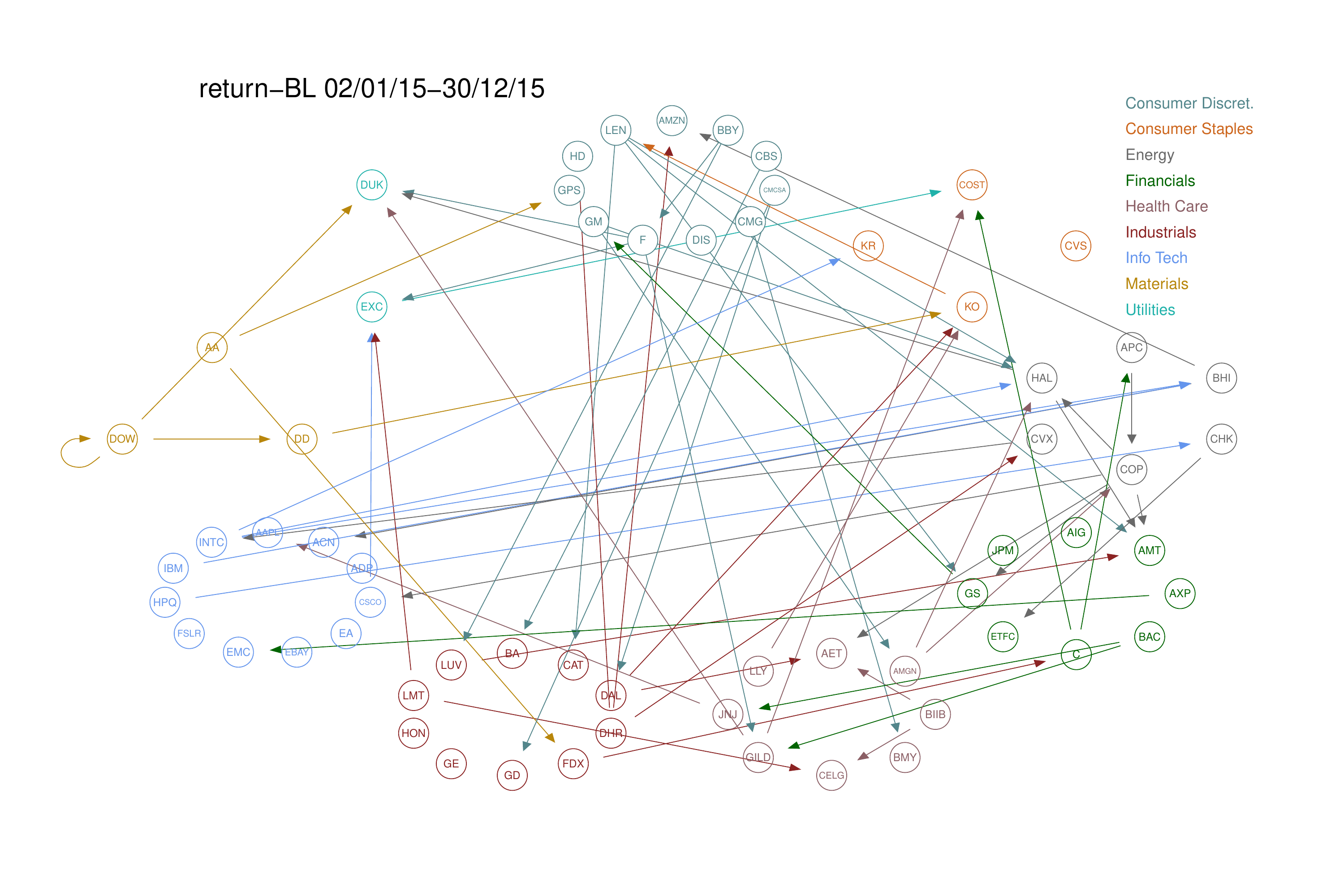}
		\end{subfigure}
		\vspace{-1.2cm}
		\begin{subfigure}[t]{1\textwidth}
			\centering
			\includegraphics[scale=0.5]{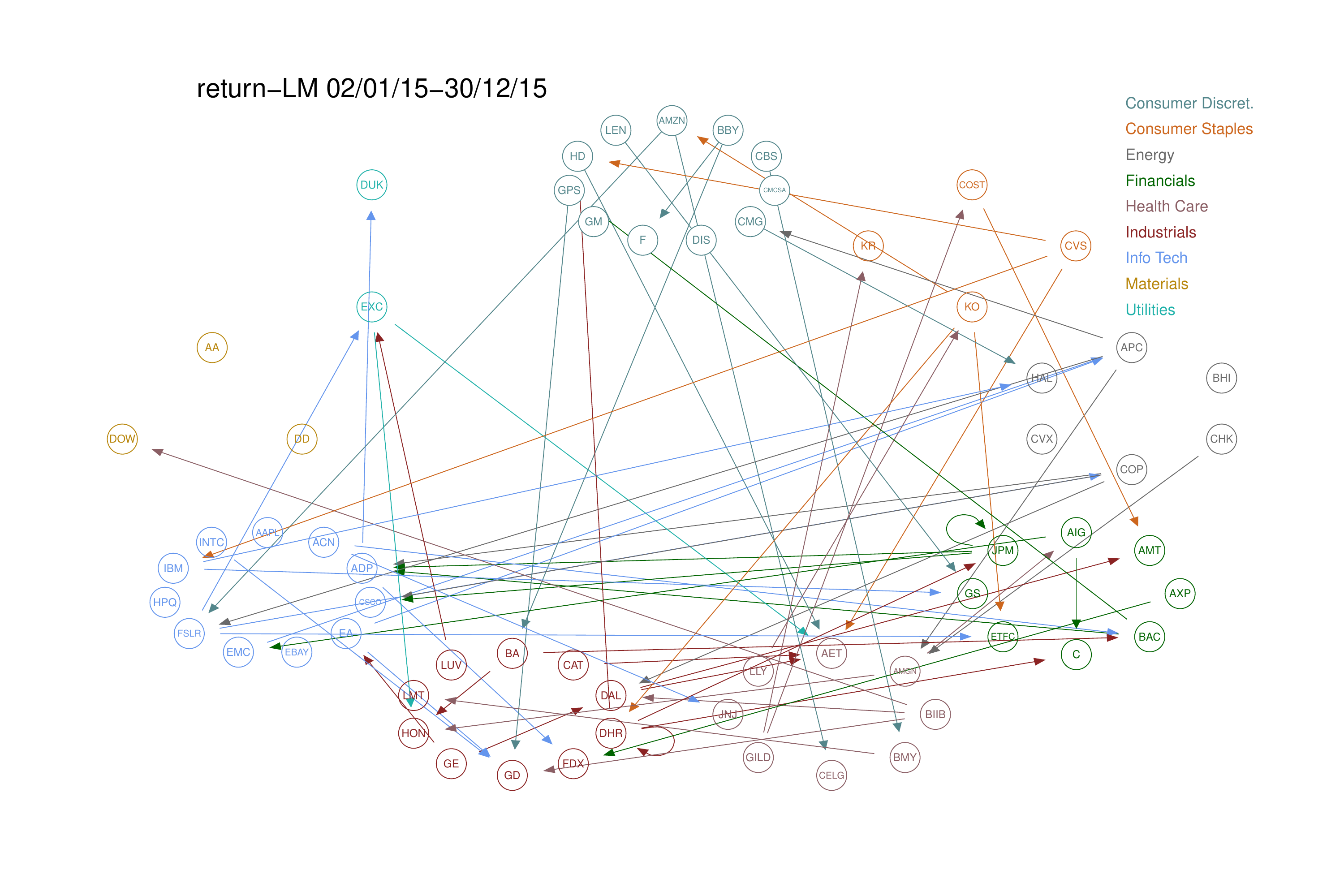}
		\end{subfigure}		
		\caption{The dependency network among individual stocks from sentiment variables to return.}
		\label{graph}
	\end{centering}
\end{figure}

\begin{figure}[H]\begin{centering}
		\begin{subfigure}[t]{1\textwidth}
			\centering
			\includegraphics[scale=0.5]{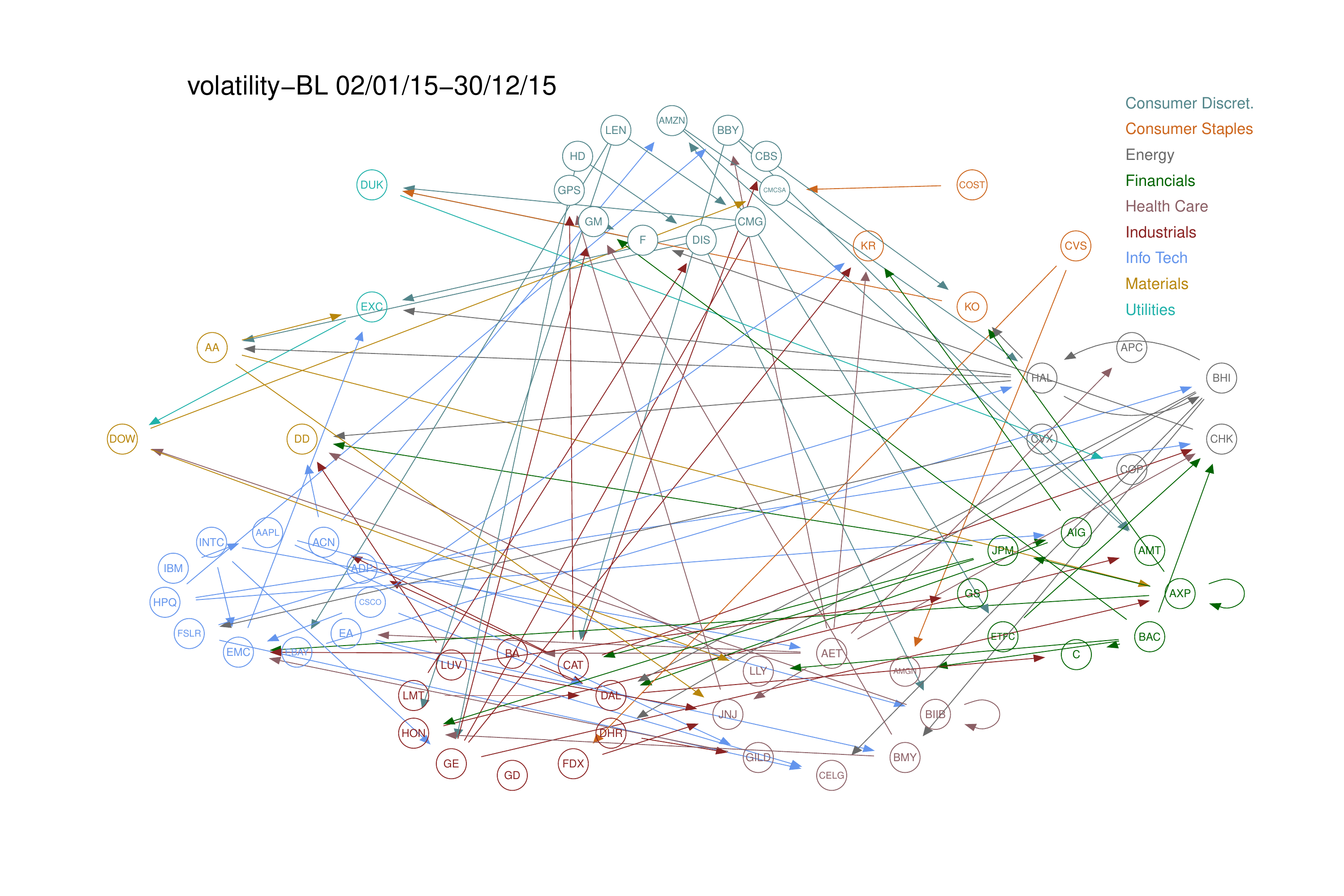}
		\end{subfigure}
		\vspace{-1.2cm}
		\begin{subfigure}[t]{1\textwidth}
			\centering
			\includegraphics[scale=0.5]{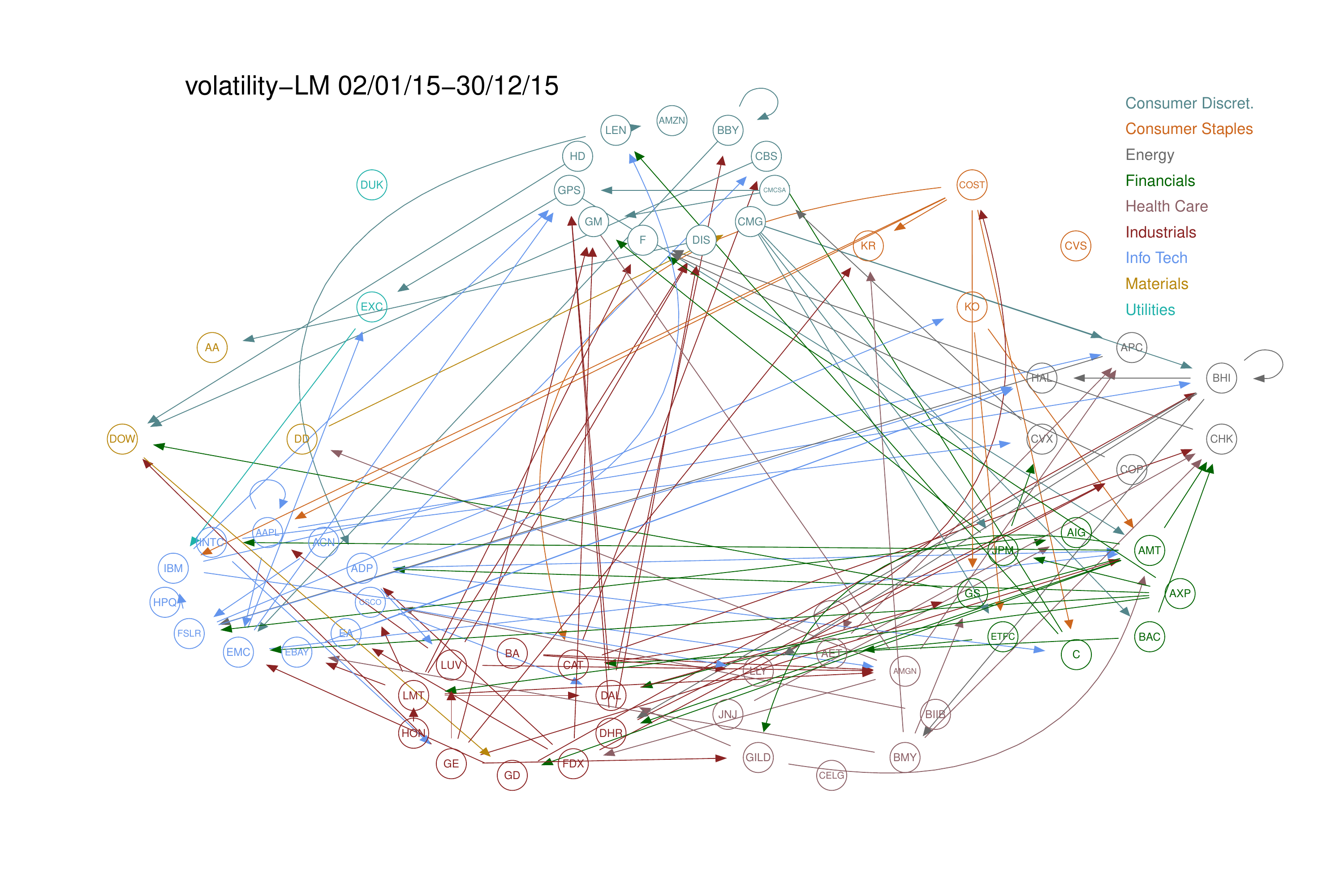}
		\end{subfigure}
		\caption{The dependency network among individual stocks from sentiment variables to volatility.}
		\label{graph2}
	\end{centering}
\end{figure}

Figures \ref{graph}-\ref{graph2} depict the dependency networks among individual stocks. Given that the time series of returns and volatility are scaled and centered before implementing the estimation procedure, we find even denser spillover effects in the volatility analysis. This indicates the stock volatility is more sensitive to sentiment than returns. Moreover, the relationships between sectors are also of interest. The simultaneous confidence region constructed via the bootstrap approach introduced in Section \ref{scr} may help us to detect whether the sentiment information from one sector has joint influence on the returns of the stocks in another sector. In particular, we look at the null hypothesis: $H_0^{S_1,S_2}:\, \beta_{jk}=0,\,\,\forall j\in S_1,\,k\in S_2$, where $S_1$ and $S_2$ represent two groups of stocks that belong to two sectors, respectively. The conclusion that the sentiment from sector $S_2$ has a joint effect on the returns or volatility of sector $S_1$ can be drawn if the null hypothesis is rejected with the simultaneous confidence region \eqref{CI.boot.sim} under the significance level = 0.05.

\begin{figure}[H]\begin{centering}
		\begin{subfigure}[t]{0.45\textwidth}
			\centering
			\includegraphics[scale=0.47]{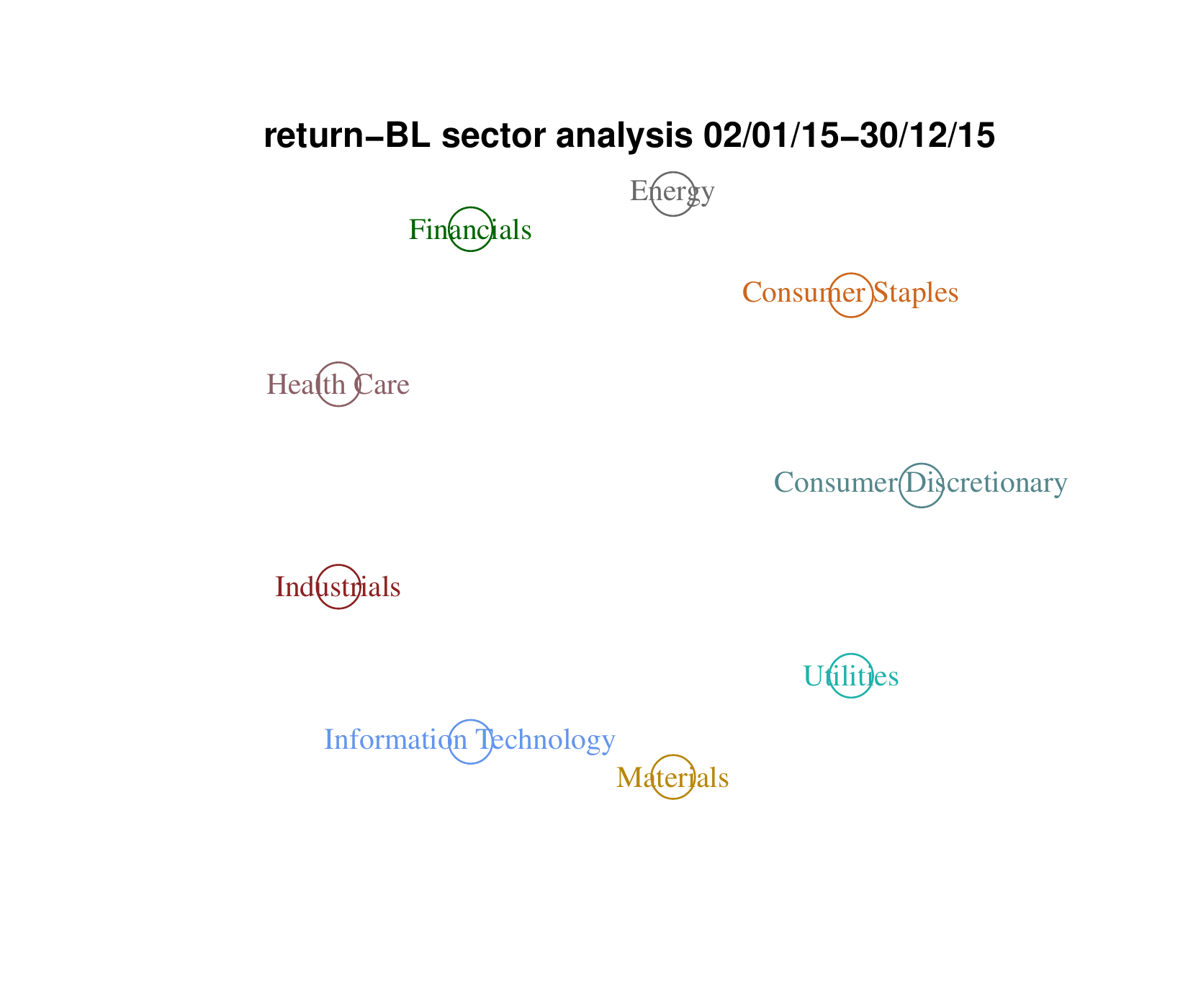}
		\end{subfigure}
		\begin{subfigure}[t]{0.54\textwidth}
			\centering
			\includegraphics[scale=0.47]{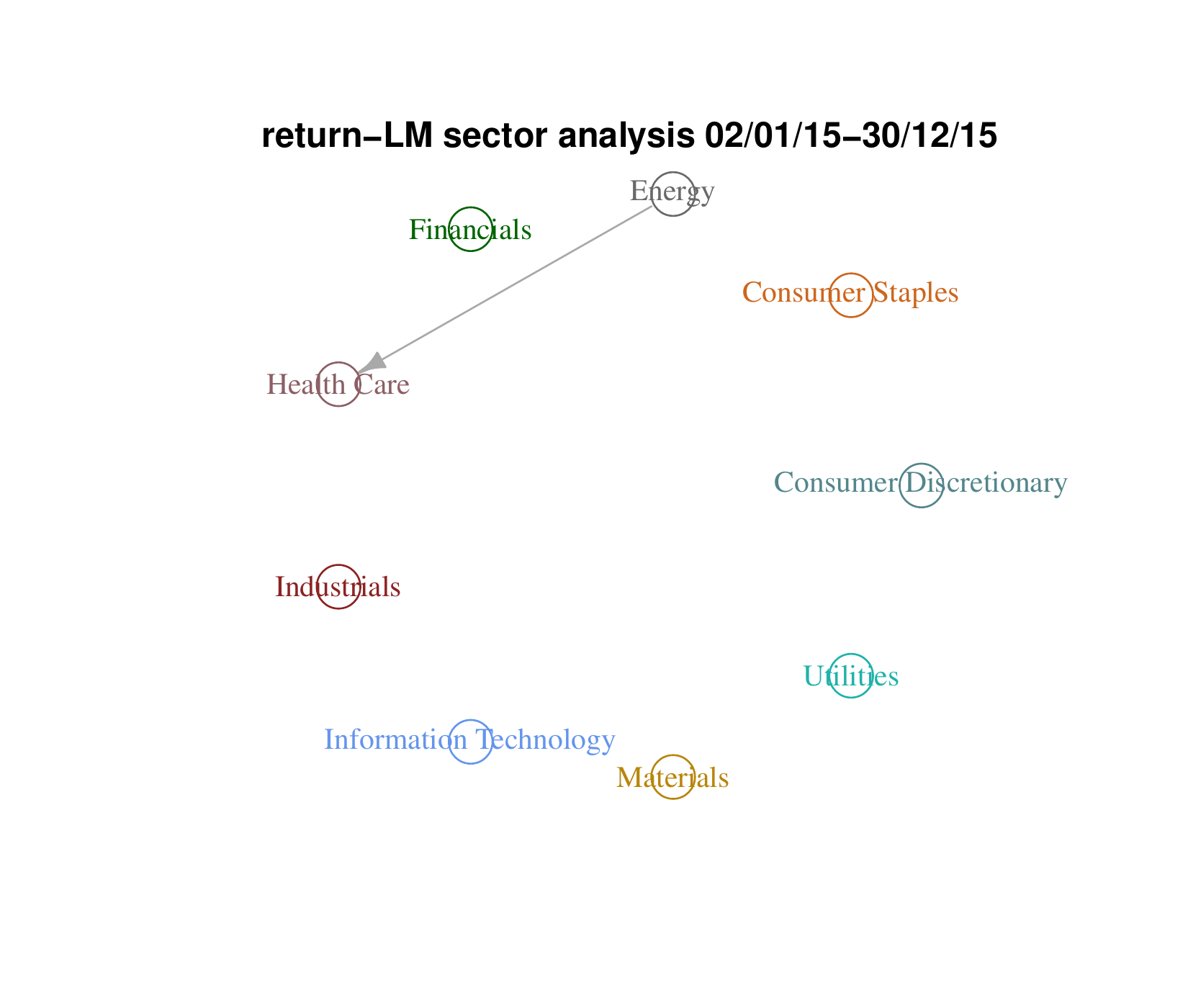}
		\end{subfigure}
		\vspace{-1cm}
		\begin{subfigure}[t]{0.45\textwidth}
			\centering
			\includegraphics[scale=0.47]{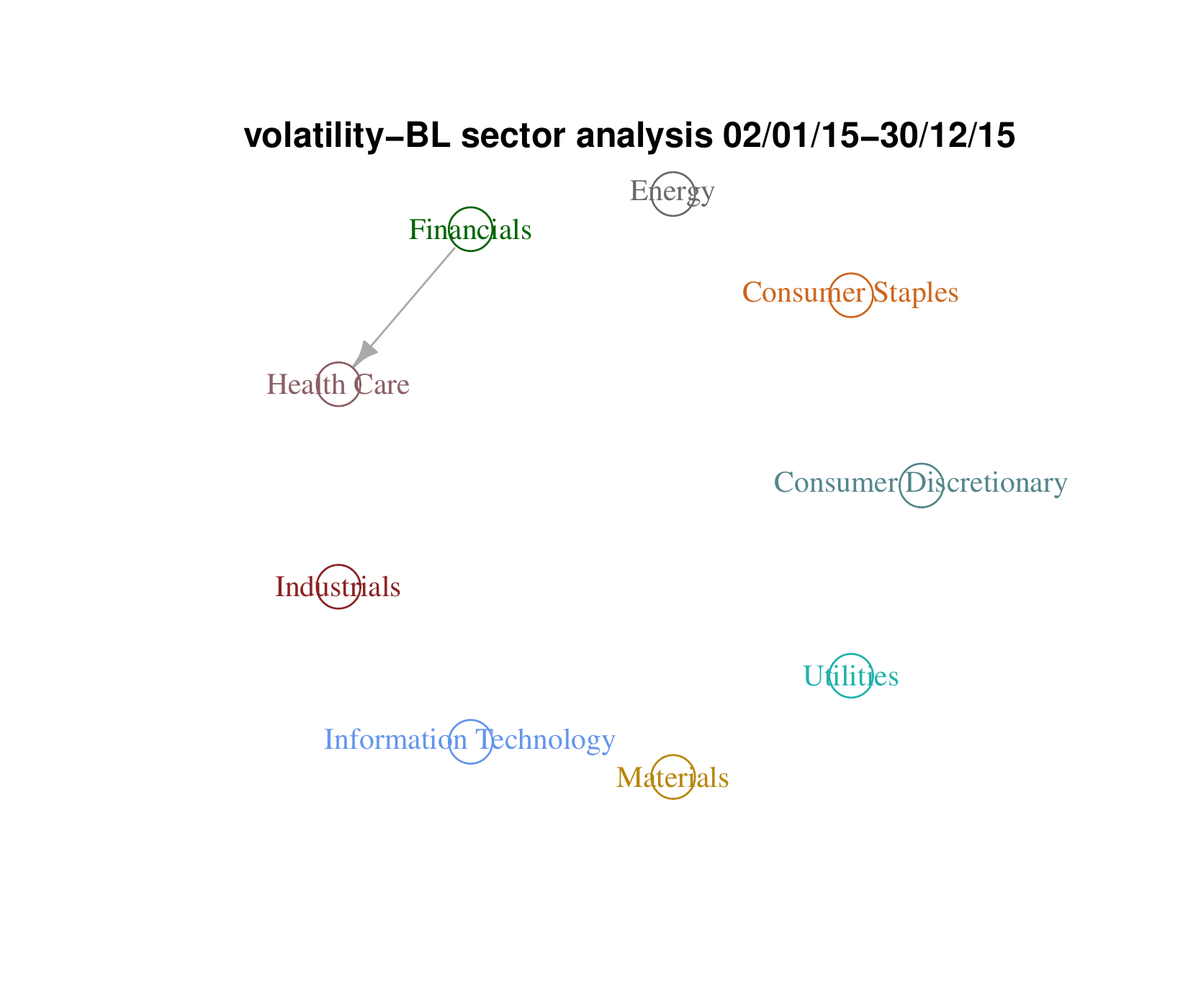}
		\end{subfigure}
		\begin{subfigure}[t]{0.54\textwidth}
			\centering
			\includegraphics[scale=0.47]{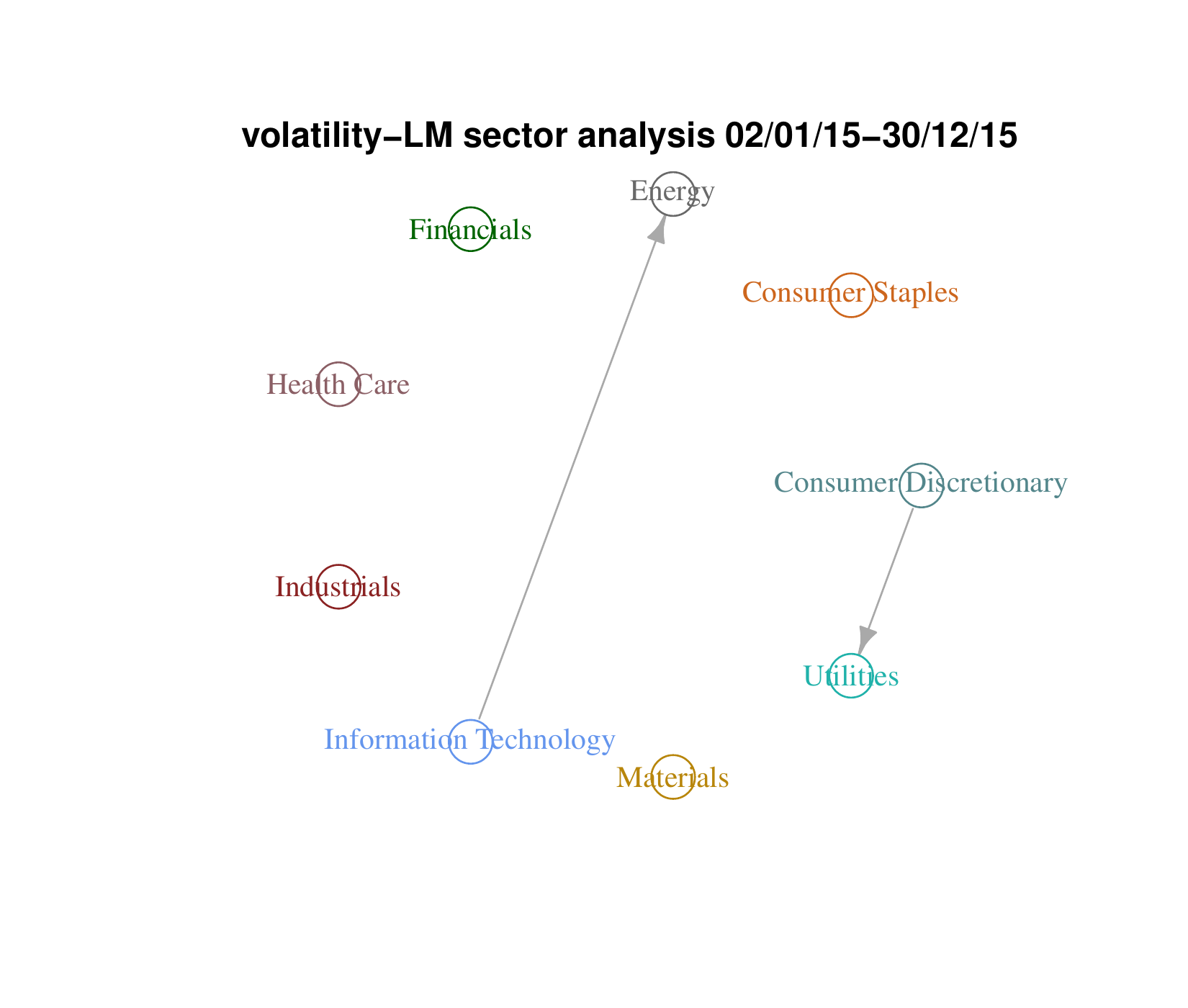}
		\end{subfigure}
		\caption{The dependency network among sectors from sentiment variables to financial variables.}
		\label{sector}
	\end{centering}
\end{figure}

Figure \ref{sector} describes the spillover effect network from sentiment to financial variables on the sector levels. In particular, the connections from energy to health care is found to be significant in the analysis of stock returns; while if volatility is focused on then the spillover effects from financials to health care, from information technology to energy, also from consumer discretionary to utilities are detected.

{
	\begin{remark}[Link to GGM]
		Another popular way to conduct the network analysis in the literature is the GGM, which is corresponding to the estimation of a high dimensional precision matrix. And under the Gaussian assumption our SRE can be linked to a nodal wise GGM. In particular, one can estimate the coefficients in each equation of SRE by using a sparse Graphical model estimation, for example the LASSO type estimation as in \citet{yuan2007model}, and thus we build the link equation-by- equation.
		
		Consider a high-dimensional VAR(1) model as in Example \ref{examp4}, the $j$th equation in the SRE is given by $Y_{j,t}=\Phi_{j\cdot}Y_{t-1}+\varepsilon_{j,t}$, where $Y_t$ is covariance stationary with $\operatorname{Var}(Y_{t})=\Gamma$ (p.d.). Correspondingly, we look at the vector $\widetilde Y_{j,t}=(Y_{j,t},Y_{1,t-1},\ldots,Y_{J,t-1})^\top$ belonging to an undirected graph $(V_j,E_j)$ with vertex set $(1,\ldots,J+1)$. Suppose $\widetilde{Y}_{j,t}\sim\operatorname{MVN}(0,\Sigma_j)$, $\Sigma_j=\begin{bmatrix}
		\Gamma_{jj} & \Phi_{j\cdot} \Gamma \\
		(\Phi_{j\cdot} \Gamma)^{\top} & \Gamma \end{bmatrix}$. Define $C_j\defeq\Phi_{j\cdot}\Gamma\Phi_{j\cdot}^\top$, then we have the precision matrix as $\Theta_j=\Sigma_j^{-1}=\begin{bmatrix}
		(\Gamma_{jj}-C_j)^{-1} & -(\Gamma_{jj}-C_j)^{-1}\Phi_{j\cdot} \\
		-\Phi_{j\cdot}^{\top}(\Gamma_{jj}-C_j)^{-1} & \Gamma^{-1}+\Phi_{j\cdot}^{\top}(\Gamma_{jj}-C_j)^{-1} \Phi_{j\cdot}
		\end{bmatrix} $. It can be seen that $\Phi_{jk}=0$ would imply that the $(1,k+1)$th element of $\Theta_j$ is zero and vice versa. In addition, a LASSO type estimator proposed in \citet{yuan2007model} can be obtained by solving $$\widehat\Theta_j= \arg \max_{\Theta}\{-\log \det(\Theta)+ \operatorname{trace}( S_j\Theta)+ \lambda_j \sum_{\ell k}|\Theta_{\ell k}|\},$$
		where $S_j\defeq n^{-1}\sum^n_{t=1} \widetilde Y_{j,t} \widetilde Y_{j,t}^{\top}$.
		
		In an unreported simulation study we compare the estimation performance between our proposed approach and the nodal wise GGM under the VAR(1) model. The results show that the nodal wise GGM which is approximated to SRE has worse prediction performance than our method, which can be obtained from the authors upon request.
	\end{remark}
	
}

\newpage%

\vskip 2em \centerline{\Large \bf Supplementary Material} \vskip -1em
\setcounter{subsection}{0}
\vskip 2em



\begin{appendices}
\renewcommand{\thesubsection}{A.\arabic{subsection}}
\setcounter{equation}{0}
\renewcommand{\theequation}{A.\arabic{equation}}
\setcounter{theorem}{0}
\renewcommand{\thetheorem}{A.\arabic{theorem}}
\setcounter{lemma}{0}
\renewcommand{\thelemma}{A.\arabic{lemma}}
\setcounter{table}{0}
\renewcommand{\thetable}{A.\arabic{table}}
\setcounter{corollary}{0}
\renewcommand{\thecorollary}{A.\arabic{corollary}}

\section{Theorems for Joint Penalty over Equations}\label{app.joint}

Recall that the theoretical choice $\lambda^0(1-\alpha)$ is defined as the $(1-\alpha)$ quantile of \\
$\underset{1\leq k\leq K, 1\leq j\leq J}{\max}2c\sqrt{n}|S_{jk}{/\Psi_{jk}}|$. First, we provide the analogue results of Theorem \ref{gammabound} and Corollary \ref{betabound2}.

\begin{theorem}\label{gammabound.joint}
	Under \hyperref[A1]{(A1)} and \hyperref[A3]{(A3)}, we have
	\begin{align}
	\P(2c\sqrt{n}\max_{1\leq k\leq K, 1\leq j\leq J} |S_{jk}{/\Psi_{jk}}|\geq r) \leq  &C_1\varpi_nnr^{-q}\sum_{j=1}^J\sum_{k=1}^K\frac{\|X_{jk,\cdot}\vps_{j,\cdot}\|^q_{q,\varsigma}}{\Psi_{jk}^q}\notag\\
	&+C_2\sum_{j=1}^J\sum_{k=1}^K\exp\Big(\frac{-C_3r^2\Psi_{jk}^2}{n \|X_{jk,\cdot}\vps_{j,\cdot}\|^2_{2,\varsigma}}\Big),
	\end{align}
	where for $\varsigma > 1/2-1/q$ (weak dependence case), $\varpi_n = 1$; for $\varsigma< 1/2-1/q$  (strong dependence case), $\varpi_n = n^{q/2-1- \varsigma q}$. $C_1,C_2,C_3$ are constants depending on $q$ and $\varsigma$.
\end{theorem}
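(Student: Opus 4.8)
The plan is to recognize this as the direct multi-equation analogue of Theorem~\ref{gammabound}: the only change is that the maximum now ranges over the full index grid $\{1,\dots,J\}\times\{1,\dots,K\}$ rather than over $k$ alone for a fixed $j$, and correspondingly the bound becomes a double sum. The argument therefore rests on the same two ingredients already underlying Theorem~\ref{gammabound} --- a union bound over coordinates and a per-coordinate Nagaev-type tail inequality for partial sums of a dependent process expressed through its dependence adjusted norm.

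First I would apply the union bound
\[
\P\Big(2c\sqrt{n}\max_{1\le j\le J,\,1\le k\le K}|S_{jk}/\Psi_{jk}|\ge r\Big)\le\sum_{j=1}^{J}\sum_{k=1}^{K}\P\big(2c\sqrt{n}\,|S_{jk}/\Psi_{jk}|\ge r\big).
\]
Then, for each fixed pair $(j,k)$, I would bound the individual summand. Writing $2c\sqrt{n}\,S_{jk}=2c\sum_{t=1}^n X_{jk,t}\vps_{j,t}$, the event reads $|\sum_{t=1}^n X_{jk,t}\vps_{j,t}|\ge r\Psi_{jk}/(2c)$. Under \hyperref[A1]{(A1)} the product process $\{X_{jk,t}\vps_{j,t}\}$ is stationary with mean zero, and under \hyperref[A3]{(A3)} it has finite dependence adjusted norm $\|X_{jk,\cdot}\vps_{j,\cdot}\|_{q,\varsigma}$ (controlled from the finiteness of $\|X_{jk,\cdot}\|_{q,\varsigma}$ and $\|\vps_{j,\cdot}\|_{q,\varsigma}$ via the product rule for these norms). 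The Nagaev inequality for functional dependence measures then yields the per-coordinate bound with $x=r\Psi_{jk}/(2c)$, producing a polynomial term of order $\varpi_n n r^{-q}\|X_{jk,\cdot}\vps_{j,\cdot}\|_{q,\varsigma}^q/\Psi_{jk}^q$ and an exponential (Gaussian-type) term $\exp\{-C_3 r^2\Psi_{jk}^2/(n\|X_{jk,\cdot}\vps_{j,\cdot}\|_{2,\varsigma}^2)\}$, with the $(2c)$ factors absorbed into the constants. The dichotomy in $\varpi_n$ is exactly the weak/strong dependence split: $\varpi_n=1$ when $\varsigma>1/2-1/q$, and $\varpi_n=n^{q/2-1-\varsigma q}$ when $\varsigma<1/2-1/q$ (so that $\varpi_n n$ scales as $n^{q/2-\varsigma q}$, matching the strong-dependence polynomial rate). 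Summing the per-coordinate bound over all $(j,k)$ gives the two double sums in the statement.

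The main point to emphasize is that none of these steps is new relative to Theorem~\ref{gammabound}: the single-equation result is precisely the sum of the identical per-coordinate Nagaev bound over $k$ for fixed $j$, and the joint result simply extends that sum to the full index set. The only technical care needed is that the constants $C_1,C_2,C_3$ produced by the Nagaev inequality depend solely on $q$ and $\varsigma$, and not on $(j,k)$, so that the union bound yields clean double sums; this uniformity is immediate since the inequality's constants are universal in the moment order and the decay exponent. Consequently I do not expect a genuine obstacle here --- the entire content is inherited from the per-coordinate tail bound underlying Theorem~\ref{gammabound}, and the proof is a one-line modification of the union-bound step in that result.
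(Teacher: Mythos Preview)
Your proposal is correct and matches the paper's approach exactly: the paper's proof of Theorem~\ref{gammabound} applies the per-coordinate Nagaev-type inequality of \cite{wu2016performance} and then the Bonferroni bound over $k$, and the joint version simply extends that union bound to all pairs $(j,k)$, with the constants uniform in the indices as you note. The paper does not even spell out a separate proof for Theorem~\ref{gammabound.joint}, treating it as an immediate generalization by vectorizing over $j$ and $k$.
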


\begin{corollary}[Bound for $\lambda^0(1-\alpha)$ and Oracle Inequalities under IC]\label{lambdabound.joint}
	Under \hyperref[A1]{(A1)} and \hyperref[A3]{(A3)}, given $\lambda^0(1-\alpha)$ satisfies
	\beq\label{lambdabound.eq}
	\lambda^0(1-\alpha) \lesssim \max_{1\leq k\leq K, 1\leq j\leq J}\bigg\{\|X_{jk,\cdot}\vps_{j,\cdot}\|_{2,\varsigma}\sqrt{n\log(KJ/\alpha) }\vee\|X_{jk,\cdot}\vps_{j,\cdot}\|_{q,\varsigma}(n\varpi_nKJ/\alpha)^{1/q}\bigg\},
	\eeq
	additionally assume that the RE condition \hyperref[A2]{(A2)} holds uniformly over equations $j=1,\ldots,J$ with probability $1-\smallO(1)$, and under the exact sparsity assumption \eqref{sparse}, then $\hat\beta_j$ obtained from \eqref{beta} under IC satisfy
	\beq
	|\hat \beta_j - \beta_{j}^0|_{j, pr} \lesssim C\sqrt{s}\max_{1\leq k \leq K}\Psi_{jk}\max_{1\leq j\leq J}\bigg\{\|X_{jk,\cdot}\vps_{j,\cdot}\|_{2,\varsigma}\sqrt{\log (KJ/\alpha) /n}\vee\|X_{jk,\cdot}\vps_{j,\cdot}\|_{q,\varsigma}n^{1/q-1}(\varpi_nKJ/\alpha)^{1/q} \bigg\},
	\eeq
	with probability $1- \alpha - \smallO(1)$, where for $\varsigma > 1/2-1/q$ (weak dependence case), $\varpi_n = 1$; for $\varsigma< 1/2-1/q$  (strong dependence case), $\varpi_n = n^{q/2-1- \varsigma q}$, and the constant $C$ depends on the RE constants.
\end{corollary}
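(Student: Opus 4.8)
The plan is to mirror the single-equation argument behind Corollary \ref{betabound2}, replacing the union bound over the $K$ regressors of a fixed equation by one over all $KJ$ pairs $(j,k)$. First I would pin down the quantile $\lambda^0(1-\alpha)$ using the joint Nagaev-type tail bound of Theorem \ref{gammabound.joint}, and then feed the resulting threshold into the deterministic prediction bound of Lemma \ref{betabound}, applied simultaneously to every equation with the common penalty $\lambda_j=\lambda^0(1-\alpha)$.

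For the quantile bound, it suffices to exhibit a level $r^\ast$ at which the right-hand side of Theorem \ref{gammabound.joint} is at most $\alpha+\smallO(1)$, since then $\lambda^0(1-\alpha)\leq r^\ast$ by definition of the $(1-\alpha)$ quantile. I would take $r^\ast$ to be the maximum of a polynomial piece and an exponential piece. For the polynomial term, crude bounding $\sum_{j,k}\|X_{jk,\cdot}\vps_{j,\cdot}\|^q_{q,\varsigma}/\Psi_{jk}^q\leq KJ\max_{j,k}\|X_{jk,\cdot}\vps_{j,\cdot}\|^q_{q,\varsigma}/\Psi_{jk}^q$ and solving $C_1\varpi_n n(r^\ast)^{-q}KJ(\cdots)\lesssim\alpha$ gives the scaling $\|X_{jk,\cdot}\vps_{j,\cdot}\|_{q,\varsigma}(n\varpi_n KJ/\alpha)^{1/q}$. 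For the exponential term, bounding the sum by $KJ$ times its largest summand and solving $C_2 KJ\exp\{-C_3(r^\ast)^2/(n\max_{j,k}\|X_{jk,\cdot}\vps_{j,\cdot}\|^2_{2,\varsigma}/\Psi_{jk}^2)\}\lesssim\alpha$ yields $\|X_{jk,\cdot}\vps_{j,\cdot}\|_{2,\varsigma}\sqrt{n\log(KJ/\alpha)}$. Taking $r^\ast$ to be the larger of the two delivers \eqref{lambdabound.eq}; the $\Psi_{jk}$ loadings produced by the normalization are absorbed into the $\lesssim$ exactly as in the proof of Corollary \ref{betabound2}.

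For the oracle inequality, note that by the very definition of $\lambda^0(1-\alpha)$ as the quantile of the maximum over \emph{both} indices, the event $\{\lambda^0(1-\alpha)\geq 2c\sqrt{n}\max_{1\leq j\leq J,1\leq k\leq K}|S_{jk}/\Psi_{jk}|\}$ holds with probability at least $1-\alpha$. On this single event the per-equation threshold $2c\sqrt{n}\max_{1\leq k\leq K}|S_{jk}/\Psi_{jk}|$ is dominated for every $j$ at once, so the hypothesis of Lemma \ref{betabound} is met simultaneously across all equations. Intersecting with the uniform RE assumption \hyperref[A2]{(A2)}, Lemma \ref{betabound} gives for each $j$ that $|\hat\beta_j-\beta_j^0|_{j,pr}\leq(1+1/c)\lambda^0(1-\alpha)\sqrt{s_j}\,\{n\kappa_j(\bar c)\}^{-1}\max_{1\leq k\leq K}\Psi_{jk}$. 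Substituting the quantile bound from the previous step, using $s_j\leq s$, and absorbing $(1+1/c)/\kappa_j(\bar c)$ into $C$, the extra $n^{-1}$ factor turns $\sqrt{n\log(KJ/\alpha)}$ into $\sqrt{\log(KJ/\alpha)/n}$ and $(n\varpi_n KJ/\alpha)^{1/q}$ into $n^{1/q-1}(\varpi_n KJ/\alpha)^{1/q}$, which is the claimed bound.

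The main obstacle is the probabilistic bookkeeping rather than any single calculation: I must ensure that one high-probability event — the one coming from the joint quantile taken over $(j,k)$ — simultaneously validates all $J$ invocations of Lemma \ref{betabound}, which is precisely why the maximum inside the quantile ranges over both indices. The union bound now runs over $KJ$ terms, explaining why $\log K$ becomes $\log(KJ)$ and $K$ becomes $KJ$; the residual difficulty is that the heavy-tailed polynomial term confines $KJ$ to polynomial growth in $n$, while the algebraic inversion of the two tail terms is otherwise routine.
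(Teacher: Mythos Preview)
Your proposal is correct and matches the paper's approach: the paper treats Corollary \ref{lambdabound.joint} as the direct joint analogue of Corollary \ref{betabound2}, obtained by combining the $KJ$-term Nagaev bound of Theorem \ref{gammabound.joint} with Lemma \ref{betabound} applied under the single event $\{\lambda^0(1-\alpha)\geq 2c\sqrt{n}\max_{j,k}|S_{jk}/\Psi_{jk}|\}$. Your explicit bookkeeping of why one joint quantile event suffices for all $J$ invocations of Lemma \ref{betabound} is exactly the point the paper leaves implicit when it says the single-equation results generalize ``by changing the dimension from $K$ to $KJ$.''
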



The other empirical choices of the joint penalty level can be:
\begin{itemize}	
	\item[a)]\label{lambda_a} $Q(1-\alpha)$:  the $(1-\alpha)$ quantile of $ 2 c \underset{1\leq k\leq K, 1\leq j\leq J}{\max} \sqrt{n}|Z_{jk}{/\Psi_{jk}}|$. 
	In practice, one can take an alternative choice such that $\widetilde{Q}(1-\alpha) \defeq 2 c \sqrt{n}\Phi^{-1}\{1-\alpha/(2KJ)\}$.
	\item[b)]\label{lambda_b} $\Lambda(1-\alpha)\defeq 2c\sqrt{n}q_{(1-\alpha)}^{[B]}$, where $q_{(1-\alpha)}^{[B]}$ is the $(1-\alpha)$ quantile of $\underset{1\leq k\leq K, 1\leq j\leq J}{\max} |Z_{jk}^{[B]}{/\Psi_{jk}}|$.
\end{itemize}

For \hyperref[lambda_a]{a)} again we need the Gaussian approximation results for the vectorized process $\widetilde{\mathcal{S}}\defeq \operatorname{vec}[\{(S_{jk})_{k=1}^K\}_{j=1}^J]=\frac{1}{\sqrt{n}}\sum_{t=1}^{n}\widetilde{\mathcal{X}}_{t}$, where  $\widetilde{\mathcal{X}}_t \defeq \operatorname{vec}[\{(X_{jk,t}\vps_{j,t})_{k=1}^K\}_{j=1}^J]$ similar to Theorem \ref{gausappro} and Corollary \ref{gausappro.c} to justify the choice of $\lambda$ as $Q(1-\alpha)$.

{Let $\mathcal{X}_t \defeq \operatorname{vec}[\{(X_{jk,t})_{k=1}^K\}_{j=1}^J]$. 
	We first aggregate the dependence adjusted norm over $j=1,\ldots,J$ and $k=1,\ldots,K$:
	\begin{equation}\label{dan2}
	\||\mathcal{X}_{\cdot}|_\infty\|_{q,\varsigma}\defeq\sup_{m\geq 0}(m+1)^{\varsigma} \sum^{\infty}_{t=m} \delta_{q,t}, \,\, \delta_{q,t} \defeq \| |\mathcal{X}_t- \mathcal{X}_t^\ast|_\infty \|_q,
	\end{equation}
	where $q\geq1$, and $\varsigma>0$. Moreover, define the following quantities
	\begin{align}
	&\Phi_{q,\varsigma}\defeq 2\max_{1\leq k\leq K, 1\leq j\leq J} \|X_{jk,\cdot}\|_{q,\varsigma}\|\vps_{j,\cdot}\|_{q,\varsigma}, \,\, \Gamma_{q, \varsigma} \defeq  2\bigg(\sum_{j=1}^J\|\vps_{j,\cdot}\|^{q/2}_{q,\varsigma}\bigg)^{2/q}\bigg(\sum_{k=1}^K\sum_{j=1}^J \|X_{jk,\cdot}\|^{q/2}_{q,\varsigma}\bigg)^{2/q}\notag\\
	&\Theta_{q,\varsigma} \defeq \Gamma_{q,\varsigma}\wedge \big\{\||\mathcal{X}_{\cdot}|_\infty\|_{q,\varsigma}\|\vps_{j,\cdot}\|_{q,\varsigma}(\log KJ)^{3/2}\big\}.
	\end{align}}

Let $L_1 = [\Phi_{4,\varsigma}\Phi_{4,0} \{\log (KJ)\}^2]^{1/\varsigma}$, $W_1 = (\Phi^6_{6,0}+ \Phi^4_{8,0})\{\log(KJn)\}^7$, $W_2 = \Phi^2_{4,\varsigma}\{\log(KJn)\}^4$, $W_3 = [n^{-\varsigma} \{\log (KJn)\}^{3/2} \Theta_{2q,\varsigma}]^{1/(1/2-\varsigma-1/q)}$, $N_1$ $=$ $\{n/\log (KJ)\}^{q/2} \Theta_{2q, \varsigma}^{q}$, $N_2=n\{\log (KJ)\}^{-2}\Phi_{4,\varsigma}^{-2}$, $N_3$ $=$ $[n^{1/2}\{\log (KJ)\}^{-1/2}\Theta^{-1}_{2q, \varsigma}]^{1/(1/2-\varsigma)}$.
\begin{itemize}
	\item[(A4')]\label{A4'}
	i) (weak dependency case) Given $\Theta_{2q,\varsigma} < \infty$ with $q \geq 4$ and $\varsigma > 1/2 - 1/q$, then \\
	$\Theta_{2q, \varsigma} n^{1/q-1/2}\{\log (KJn)\}^{3/2} \to 0$ and $L_1\max(W_1, W_2) = \smallO(1) \min (N_1,N_2)$.\\
	ii) (strong dependency case) Given $0<\varsigma< 1/2 -1/q$, then $\Theta_{2q,\varsigma}\{\log (KJ)\}^{1/2} = \smallO(n^{\varsigma})$ and $L_1 \max(W_1,W_2,W_3) = \smallO(1)\min(N_2,N_3)$.
\end{itemize}


\begin{theorem}\label{gausappro.joint}
	Under \hyperref[A1]{(A1)}, \hyperref[A3]{(A3)} and \hyperref[A4]{(A4')}, for each $k=1,\ldots,K$, $j=1,\ldots,J$ assume that there exists a constant $c>0$ such that {$\underset{1\leq k\leq K,1\leq j\leq J}\min\operatorname{avar}(S_{jk})\geq c$}, then we have
	\beq
	\rho\big(D^{-1}\widetilde{\mathcal{S}}, D^{-1}\widetilde{\mathcal{Z}}\big)\rightarrow 0, \quad \text{as } n \to \infty,
	\eeq
	where $\widetilde{\mathcal{Z}}\sim \operatorname{N} (0, \Sigma_{\widetilde{\mathcal{X}}})$, $\Sigma_{\widetilde{\mathcal{X}}}$ is the $JK \times JK$ long-run variance-covariance matrix of $\widetilde{\mathcal{X}}_{t}$, and $D$ is a diagonal matrix with the square root of the diagonal elements of $\Sigma_{\widetilde{\mathcal{X}}}$, namely {$$\bigg\{\sum^{\infty}_{\ell=-\infty} \E (X_{jk,t}X_{jk,(t-\ell)}\vps_{j,t} \vps_{j,(t-\ell)} )\bigg\}^{1/2}=\sqrt{\operatorname{avar}(S_{jk})}, \text{ for  } k=1,\ldots,K, j=1,\ldots,J.$$}
\end{theorem}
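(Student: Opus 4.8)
The plan is to recognize that Theorem \ref{gausappro.joint} is nothing more than the $KJ$-dimensional restatement of the single-equation Theorem \ref{gausappro}, and to obtain it by applying the same Gaussian-approximation machinery to the stacked process, rather than reproving anything from first principles. First I would observe that, under \hyperref[A1]{(A1)}, every coordinate $X_{jk,t}\vps_{j,t}$ of $\widetilde{\mathcal{X}}_t=\operatorname{vec}[\{(X_{jk,t}\vps_{j,t})_{k=1}^K\}_{j=1}^J]$ is a measurable function of the common innovation filtration $\mathcal{F}_t$ (the shocks $\xi_t,\eta_t$ are allowed to overlap across equations). Hence $\widetilde{\mathcal{X}}_t$ is itself a stationary, causal, mean-zero process of dimension $p\defeq KJ$, of exactly the type to which the high-dimensional CLT for stationary processes of \citet{ZW15gaussian}---the result underlying Theorem \ref{gausappro}---applies. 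Its normalized partial sum is precisely $\widetilde{\mathcal{S}}=n^{-1/2}\sum_{t=1}^n\widetilde{\mathcal{X}}_t$, and its long-run covariance is the full $JK\times JK$ matrix $\Sigma_{\widetilde{\mathcal{X}}}$, whose off-block (cross-equation) entries are automatically incorporated.

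Second, I would verify that the aggregated quantities $\Phi_{q,\varsigma},\Gamma_{q,\varsigma},\Theta_{q,\varsigma}$ and $L_1,W_1,W_2,W_3,N_1,N_2,N_3$ defined just before \hyperref[A4']{(A4')} are the $p=KJ$ versions of the single-equation quantities $\Phi_{j,q,\varsigma},\ldots,N_{3,j}$ appearing in \hyperref[A4]{(A4)}: every maximum or sum over $1\leq k\leq K$ is replaced by the corresponding one over the combined index set $\{(j,k):1\leq j\leq J,\,1\leq k\leq K\}$, and the aggregated measure $\||\mathcal{X}_{\cdot}|_\infty\|_{q,\varsigma}$ in \eqref{dan2} is the correct $\ell_\infty$ dependence-adjusted norm for the stacked vector. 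Consequently \hyperref[A4']{(A4')} is literally \hyperref[A4]{(A4)} with $K$ replaced by $KJ$, so the growth/balance conditions demanded by the Gaussian-approximation theorem hold for $\widetilde{\mathcal{X}}_t$. Together with \hyperref[A3]{(A3)} (guaranteeing finiteness of the relevant $q$th-moment dependence norms) and the non-degeneracy hypothesis $\min_{1\leq k\leq K,\,1\leq j\leq J}\operatorname{avar}(S_{jk})\geq c>0$---the uniform-over-equations analogue of the condition in Theorem \ref{gausappro}---all hypotheses of the dimension-$p$ result are in force.

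Third, I would invoke that result with $p=KJ$ to conclude $\rho\big(D^{-1}\widetilde{\mathcal{S}},\,D^{-1}\widetilde{\mathcal{Z}}\big)\to 0$, where $D=\{\diag(\Sigma_{\widetilde{\mathcal{X}}})\}^{1/2}$ and $\widetilde{\mathcal{Z}}\sim\operatorname{N}(0,\Sigma_{\widetilde{\mathcal{X}}})$; the diagonal normalization by $D$ matches the stated square roots of the long-run variances of $X_{jk,t}\vps_{j,t}$. The internal mechanics are identical to the single-equation proof: a blocking/martingale-approximation step reduces the dependent sum to an (approximately) $m$-dependent one, after which the Gaussian comparison for the $\ell_\infty$ statistic is applied, now over $KJ$ coordinates.

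The main obstacle I anticipate is not analytic but combinatorial bookkeeping: passing from $K$ to $KJ$ must not silently degrade the growth conditions, since the aggregated $\Gamma_{q,\varsigma}$ and $\Theta_{q,\varsigma}$ now sum over $KJ$ coordinates and can be substantially larger than their per-equation counterparts, while every logarithmic factor becomes $\log(KJ)$. The content of \hyperref[A4']{(A4')} is exactly that these enlarged quantities still balance against $n$; the care lies in confirming that the cross-equation dependence transmitted through the shared innovations is genuinely captured by $\Theta_{q,\varsigma}$ via \eqref{dan2}, and that the uniform non-degeneracy constant $c$ survives across all $KJ$ coordinates. Once these points are checked, the conclusion follows verbatim from the argument establishing Theorem \ref{gausappro}.
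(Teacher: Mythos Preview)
Your proposal is correct and matches the paper's approach: the paper explicitly states (Section \ref{lasso.joint}) that the joint-equation theorems follow from the single-equation ones by vectorizing over $(j,k)$ and replacing the dimension $K$ by $KJ$, and indeed does not even supply a separate proof of Theorem \ref{gausappro.joint} beyond this remark. The key observation you make---that \hyperref[A4']{(A4')} is precisely \hyperref[A4]{(A4)} with $K$ replaced by $KJ$ and the dependence-adjusted norms aggregated over all $KJ$ coordinates---is exactly what the paper relies on, after which Theorem 3.2 of \citet{ZW15gaussian} applies to the stacked process $\widetilde{\mathcal{X}}_t$ just as in the proof of Theorem \ref{gausappro}.
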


\begin{corollary}\label{gausappro.c2}
	Under the conditions of Theorem \ref{gausappro.joint}, we have
	\beq
	\sup_{\alpha \in(0,1)}|\P\{\max_{1\leq k\leq K,1\leq j\leq J} 2c\sqrt{n} |S_{jk}/\Psi_{jk}| \geq  Q(1-\alpha) \} - \alpha| \to 0, \quad   \text{as } n \to \infty.
	\eeq
\end{corollary}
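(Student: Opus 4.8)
The plan is to reduce the claim to the Gaussian approximation already established in Theorem \ref{gausappro.joint}, exploiting the fact that the loadings $\Psi_{jk}$ are exactly the diagonal normalisation appearing in $D$. First I would rewrite both the statistic and the threshold in standardised form. Since $\Psi_{jk}=\sqrt{\operatorname{avar}(S_{jk})}$ is the $(j,k)$-th diagonal entry of $D$, we have $\max_{1\leq k\leq K,1\leq j\leq J}|S_{jk}/\Psi_{jk}|=|D^{-1}\widetilde{\mathcal{S}}|_\infty$ and, for the Gaussian analogue $\widetilde{\mathcal{Z}}\sim\operatorname{N}(0,\Sigma_{\widetilde{\mathcal{X}}})$, $\max_{j,k}|Z_{jk}/\Psi_{jk}|=|D^{-1}\widetilde{\mathcal{Z}}|_\infty$. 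Writing $Q(1-\alpha)=2c\sqrt{n}\,q_Z(1-\alpha)$ with $q_Z(1-\alpha)$ the $(1-\alpha)$ quantile of $|D^{-1}\widetilde{\mathcal{Z}}|_\infty$, the common factor $2c\sqrt{n}$ cancels and the event of interest becomes $\{|D^{-1}\widetilde{\mathcal{S}}|_\infty\geq q_Z(1-\alpha)\}$.

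Next I would invoke the nondegeneracy hypothesis $\min_{j,k}\operatorname{avar}(S_{jk})\geq c>0$, which makes $D^{-1}\widetilde{\mathcal{Z}}$ a centered Gaussian vector with unit marginal variances; by Gaussian anti-concentration its maximum-norm has a continuous distribution function, so that $\P(|D^{-1}\widetilde{\mathcal{Z}}|_\infty\geq q_Z(1-\alpha))=\alpha$ for every $\alpha\in(0,1)$. Inserting this identity and using the definition of the Kolmogorov distance $\rho$ gives, for each fixed $\alpha$,
\begin{align}
\big|\P(|D^{-1}\widetilde{\mathcal{S}}|_\infty\geq q_Z(1-\alpha))-\alpha\big|
&=\big|\P(|D^{-1}\widetilde{\mathcal{S}}|_\infty\geq q_Z(1-\alpha))-\P(|D^{-1}\widetilde{\mathcal{Z}}|_\infty\geq q_Z(1-\alpha))\big|\notag\\
&\leq\rho\big(D^{-1}\widetilde{\mathcal{S}},D^{-1}\widetilde{\mathcal{Z}}\big).\notag
\end{align}
Because the right-hand side does not depend on $\alpha$, taking the supremum over $\alpha\in(0,1)$ on the left yields $\sup_{\alpha}\big|\P(\cdot)-\alpha\big|\leq\rho(D^{-1}\widetilde{\mathcal{S}},D^{-1}\widetilde{\mathcal{Z}})$, which tends to zero by Theorem \ref{gausappro.joint}, establishing the claim.

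The bookkeeping (cancellation of $2c\sqrt{n}$ and passing to the standardised vectors) is routine. The only delicate point is the exact identity $\P(|D^{-1}\widetilde{\mathcal{Z}}|_\infty\geq q_Z(1-\alpha))=\alpha$ holding \emph{uniformly} in $\alpha$, which is precisely where continuity of the Gaussian maximum (anti-concentration) is needed and why the lower bound on the marginal long-run variances is assumed. All the substantive difficulty---the $KJ$-dimensional Gaussian approximation over the vectorized process $\widetilde{\mathcal{S}}$ under the functional-dependence conditions \hyperref[A4']{(A4')}---is already discharged by Theorem \ref{gausappro.joint}, so this corollary is a direct transcription of that result into a statement about the feasibility of the penalty quantile $Q(1-\alpha)$, exactly paralleling the single-equation Corollary \ref{gausappro.c}.
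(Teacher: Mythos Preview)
Your proposal is correct and takes essentially the same approach as the paper. The paper's own proof of the single-equation analogue (Corollary~\ref{gausappro.c}) is the one-line remark ``It follows directly from the Gaussian approximation results in Theorem~\ref{gausappro},'' and the joint version is handled identically by appeal to Theorem~\ref{gausappro.joint}; you have simply spelled out the routine bookkeeping (cancelling $2c\sqrt{n}$, identifying the statistic with $|D^{-1}\widetilde{\mathcal{S}}|_\infty$, and using continuity of the Gaussian maximum to convert the quantile condition into an exact probability) that the paper leaves implicit.
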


\begin{corollary}\label{gausappro.est2}
	Under the conditions of Theorem \ref{gausappro.joint}, and assume $\Phi_{2q,\varsigma}<\infty$ with $q>4$, $b_n = \bigO(n^{\eta})$ for some $0 <\eta< 1$. Let $F_{\varsigma} = n$, for $\varsigma >1-2/q$; $F_{\varsigma} = l_nb_n^{q/2-\varsigma q/2}$, for $1/2-2/q<\varsigma<1-2/q$; $F_{\varsigma} = l_n^{q/4-\varsigma q/2}b_n^{q/2-\varsigma q/2} $, for $\varsigma<1/2-2/q$. Given $n^{-1}\{\log(KJ)\}^2\max\big\{n^{1/2}b_n^{1/2}\Phi^2_{2q,\varsigma},\\
	n^{1/2} b_n^{1/2} \sqrt{\log (KJ)}\Phi_{8,\varsigma}^2,F^{2/q}_{\varsigma} \Gamma^2_{2q, \varsigma}(KJ)^{2/q}, \Phi_{2,0}\Phi_{2,\varsigma}v'(b_n)n/\sqrt{\log (KJ)}\big\}=\smallO(1)$, where $v'(b_n) = (b_n+1)^{-\varsigma}+2 v_{n,2}/b_n$,
	$v_{n,2} = \log b_n $ (resp. $b_n^{-\varsigma+1}$ or 1) for $\varsigma = 1$ (resp. $\varsigma < 1$ or $\varsigma>1$), then we have	
	\beq
	\rho\big(\hat D^{-1}\widetilde{\mathcal{S}}, D^{-1}\widetilde{\mathcal{Z}}\big)\rightarrow 0, \quad \text{as } n \to \infty,
	\eeq
	where $\hat D=\{\diag(\hat\Sigma_{\widetilde{\mathcal{X}}})\}^{1/2}$, $\hat \Sigma_{\widetilde{\mathcal{X}}} = \frac{1}{b_nl_n}\sum_{i=1}^{l_n}\big(\sum_{l=(i-1)b_n+1}^{ib_n}\widetilde{\mathcal{X}}_{l}\big)\big(\sum_{l=(i-1)b_n+1}^{ib_n}\widetilde{\mathcal{X}}_{l}\big)^\top$.
\end{corollary}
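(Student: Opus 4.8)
The plan is to reduce Corollary \ref{gausappro.est2} to the single-equation statement Corollary \ref{gausappro.est} via the vectorization device already used to pass from Theorem \ref{gausappro} to Theorem \ref{gausappro.joint}: stack the $KJ$ coordinates $(X_{jk,t}\vps_{j,t})$ into $\widetilde{\mathcal{X}}_t$ and treat the entire system as a single ``equation'' of dimension $KJ$. Under this identification the aggregated norms $\Phi_{2q,\varsigma}$, $\Gamma_{2q,\varsigma}$, $\Theta_{2q,\varsigma}$ and the condition \hyperref[A4']{(A4$'$)} play exactly the roles that $\Phi_{j,2q,\varsigma}$, $\Gamma_{j,2q,\varsigma}$, $\Theta_{j,2q,\varsigma}$ and \hyperref[A4]{(A4)} played in the single-equation case, with $K$ everywhere replaced by $KJ$. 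Theorem \ref{gausappro.joint} already delivers $\rho(D^{-1}\widetilde{\mathcal{S}}, D^{-1}\widetilde{\mathcal{Z}}) \to 0$ with the \emph{true} normalizer $D$, so it remains only to show that replacing $D$ by the block estimator $\hat D = \{\diag(\hat\Sigma_{\widetilde{\mathcal{X}}})\}^{1/2}$ perturbs the Kolmogorov distance negligibly.

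First I would decompose the coordinatewise estimation error. For each $(j,k)$ write $\hat D_{jk}^2 = n^{-1}\sum_{i=1}^{l_n}(\sum_{l\in B_i}X_{jk,l}\vps_{j,l})^2$, where $B_i$ is the $i$-th block, and split $\hat D_{jk}^2 - D_{jk}^2$ into a bias part $\E[\hat D_{jk}^2] - \operatorname{avar}(S_{jk})$ and a stochastic part $\hat D_{jk}^2 - \E[\hat D_{jk}^2]$. The bias is the gap between the block-length-$b_n$ truncated long-run variance and the full long-run variance; controlling the tails of the autocovariances of $X_{jk,t}\vps_{j,t}$ through the dependence-adjusted norms $\Phi_{2,0}$ and $\Phi_{2,\varsigma}$ yields a uniform bound of order $v'(b_n)$, which is precisely the last entry inside the $\max\{\cdots\}$ in the hypothesis and accounts for the three cases of $v_{n,2}$ according to whether $\varsigma=1$, $\varsigma<1$, or $\varsigma>1$.

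Next I would bound the stochastic part uniformly over the $KJ$ coordinates. The term $\hat D_{jk}^2 - \E[\hat D_{jk}^2]$ is a normalized sum of $l_n$ weakly dependent block summands $(\sum_{l\in B_i}X_{jk,l}\vps_{j,l})^2$; applying a Nagaev-type moment inequality within the functional-dependence-measure framework (as in the proof of Theorem \ref{gammabound.joint} and in \citet{ZW15gaussian}), the effective variance of these block sums produces, through the polynomial term, the three-regime factor $F_\varsigma$, while a union bound over the $KJ$ coordinates introduces the $(KJ)^{2/q}$ and $\{\log(KJ)\}$ factors; the first two entries of the $\max\{\cdots\}$ capture the sub-Gaussian contributions. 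Assembling the two parts and invoking the stated growth condition $n^{-1}\{\log(KJ)\}^2\max\{\cdots\}=\smallO(1)$ gives $\max_{1\le k\le K,\,1\le j\le J}|\hat D_{jk}^2/D_{jk}^2 - 1| = \smallO_{\P}(\{\log(KJ)\}^{-1})$, where $\min_{jk}\operatorname{avar}(S_{jk})\ge c$ from Theorem \ref{gausappro.joint} is used to pass from additive to multiplicative control.

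Finally I would convert this multiplicative control of the normalizers into control of $\rho$. Since $|\widetilde{\mathcal{S}}|_\infty$ and $|\widetilde{\mathcal{Z}}|_\infty$ are maxima over $KJ$ coordinates, a perturbation of the per-coordinate scales of relative size $\epsilon_n$ shifts the threshold $r$ by at most $\epsilon_n$ times the level, so Gaussian anti-concentration (Nazarov's inequality, as used in \citet{CCK13AoS,ZW15gaussian}), which bounds $\sup_r \P(\,|\,r - |D^{-1}\widetilde{\mathcal{Z}}|_\infty\,| \le \delta)$ by a constant times $\delta\sqrt{\log(KJ)}$, shows that the additional contribution to the Kolmogorov distance is of order $\epsilon_n\log(KJ) = \smallO(1)$. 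Combining this with $\rho(D^{-1}\widetilde{\mathcal{S}}, D^{-1}\widetilde{\mathcal{Z}}) \to 0$ from Theorem \ref{gausappro.joint} through the triangle inequality for $\rho$ yields the claim. The main obstacle is the uniform concentration of the block long-run-variance estimator in the third step: under only the polynomial moment condition $q>4$ one must track the fourth-moment-type aggregate $\Gamma_{2q,\varsigma}$ and secure the sharp $(KJ)^{2/q}$ dependence, and it is exactly this step that forces the delicate three-regime bookkeeping of $F_\varsigma$ together with the block-size restriction $b_n = \bigO(n^\eta)$.
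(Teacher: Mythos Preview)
Your proposal is correct and follows essentially the same route as the paper: the paper obtains Corollary \ref{gausappro.est2} by applying the vectorization device to the proof of the single-equation Corollary \ref{gausappro.est}, which in turn splits $\max_{jk}|\hat\Psi_{jk}^2-\Psi_{jk}^2|$ into a stochastic part controlled by Theorem 5.1 of \citet{ZW15gaussian} (yielding the $F_\varsigma$, $\Gamma_{2q,\varsigma}$, $(KJ)^{2/q}$ terms) and a bias part bounded by $\Phi_{2,0}\Phi_{2,\varsigma}v'(b_n)$, and then feeds this into the anti-concentration bound of \citet{chernozhukov2015comparison} combined with $\rho(D^{-1}\widetilde{\mathcal S},D^{-1}\widetilde{\mathcal Z})\to0$ from Theorem \ref{gausappro.joint}. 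Your bias/stochastic decomposition, the Nagaev-type block concentration, and the anti-concentration closing step are exactly these ingredients, so nothing is missing.
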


{
	
	Similar to Corollary \ref{betabound4}, we can provide a refined bound for $\lambda^0(1- \alpha)$ and also the oracle inequalities under IC as follows.
	\begin{corollary}[Bounds for $\lambda^0(1-\alpha)$ and Oracle Inequalities under IC with Gaussian Approximation Results]\label{betabound4.joint}
		Under the conditions of Theorem \ref{gausappro.joint},
		suppose $2\{\log (KJ)\}^{-1/2} + \rho(D^{-1}\widetilde{\mathcal{S}}, D^{-1}\widetilde{\mathcal{Z}}) = \smallO(\alpha)$ and let $Z_{\alpha}= 2\tilde c\sqrt{n \log(KJ)}$, where $\tilde c$ is no less than the $c$ in the definition of $\lambda^0(1-\alpha)$, then we have $\lambda^0(1-\alpha)$ satisfying
		\beq
		\lambda^0(1-\alpha) \leq Z_{\alpha},
		\eeq
		additionally assume that the RE condition \hyperref[A2]{(A2)} holds uniformly over equations $j=1,\ldots,J$ with probability $1-\smallO(1)$, and given the exact sparsity assumption \eqref{sparse}, then $\hat\beta_j$ obtained from \eqref{beta} under IC satisfies
		\beq
		|\hat \beta_j - \beta_{j}^0|_{j, pr} \lesssim C\sqrt{s}\max_{1\leq k \leq K}\Psi_{jk}\sqrt{\log (KJ) /n},
		\eeq
		with probability $1 - \alpha - \smallO(1)$, where the constant $C$ depends on the RE constants.
		
	\end{corollary}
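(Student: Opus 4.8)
The plan is to follow the same two-step route as in the proof of Corollary \ref{betabound4}, replacing the single-equation Gaussian approximation of Theorem \ref{gausappro} with the joint one of Theorem \ref{gausappro.joint} and enlarging the effective dimension from $K$ to $KJ$. The first step controls the penalty level $\lambda^0(1-\alpha)$ from above by $Z_\alpha$; the second feeds this bound into the deterministic prediction inequality of Lemma \ref{betabound}, applied equation by equation.

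For the first step, recall that $\lambda^0(1-\alpha)$ is the $(1-\alpha)$ quantile of $2c\sqrt{n}\max_{1\le k\le K,\,1\le j\le J}|S_{jk}/\Psi_{jk}|$, and that by construction of $D$ the coordinates of $D^{-1}\widetilde{\mathcal S}$ are exactly the ratios $S_{jk}/\Psi_{jk}$, so that $|D^{-1}\widetilde{\mathcal S}|_\infty=\max_{j,k}|S_{jk}/\Psi_{jk}|$. I would estimate, for $Z_\alpha=2\tilde c\sqrt{n\log(KJ)}$ with $\tilde c\ge\sqrt 2\,c$,
\begin{align*}
\P\big(2c\sqrt n\,|D^{-1}\widetilde{\mathcal S}|_\infty\ge Z_\alpha\big)
&=\P\big(|D^{-1}\widetilde{\mathcal S}|_\infty\ge (\tilde c/c)\sqrt{\log(KJ)}\big)\\
&\le \P\big(|D^{-1}\widetilde{\mathcal Z}|_\infty\ge \sqrt{2\log(KJ)}\big)+\rho\big(D^{-1}\widetilde{\mathcal S},D^{-1}\widetilde{\mathcal Z}\big),
\end{align*}
where the last inequality uses the definition of the Kolmogorov distance together with $\tilde c/c\ge\sqrt2$ and Theorem \ref{gausappro.joint}. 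Since each coordinate of $D^{-1}\widetilde{\mathcal Z}$ is standard normal, a union bound and the Gaussian tail estimate $\P\{\N(0,1)\ge t\}\le(t\sqrt{2\pi})^{-1}e^{-t^2/2}$ give $\P(|D^{-1}\widetilde{\mathcal Z}|_\infty\ge\sqrt{2\log(KJ)})\le\{\pi\log(KJ)\}^{-1/2}\le 2\{\log(KJ)\}^{-1/2}$ for $KJ$ large. The assumed condition $2\{\log(KJ)\}^{-1/2}+\rho(D^{-1}\widetilde{\mathcal S},D^{-1}\widetilde{\mathcal Z})=\smallO(\alpha)$ then forces the displayed probability to be $\smallO(\alpha)$, hence below $\alpha$ for $n$ large, so that the $(1-\alpha)$ quantile satisfies $\lambda^0(1-\alpha)\le Z_\alpha$.

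For the second step, observe that by definition of the quantile the event $\lambda^0(1-\alpha)\ge 2c\sqrt n\max_{k}|S_{jk}/\Psi_{jk}|$ holds simultaneously for every $j$ with probability at least $1-\alpha$, since the maximum over both indices dominates the maximum over $k$ alone for each fixed $j$. On this event, intersected with the $1-\smallO(1)$-probability event on which the RE condition \hyperref[A2]{(A2)} holds uniformly over $j=1,\ldots,J$, Lemma \ref{betabound} applies to each equation and yields
\begin{equation*}
|\hat\beta_j-\beta_j^0|_{j,pr}\le (1+1/c)\frac{\lambda^0(1-\alpha)\sqrt{s_j}}{n\,\kappa_j(\bar c)}\max_{1\le k\le K}\Psi_{jk}.
\end{equation*}
Substituting $\lambda^0(1-\alpha)\le Z_\alpha=2\tilde c\sqrt{n\log(KJ)}$ and $s_j\le s$ from \eqref{sparse} collapses the right-hand side to $C\sqrt s\,\max_k\Psi_{jk}\sqrt{\log(KJ)/n}$, with $C$ absorbing $(1+1/c)$, $2\tilde c$ and $1/\kappa_j(\bar c)$, i.e. depending only on the RE constants; a union bound over the penalty and RE events delivers the claimed probability $1-\alpha-\smallO(1)$. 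The only genuinely delicate point is the Gaussian maximal estimate in the first step: it is precisely the anti-concentration slack quantified by the $2\{\log(KJ)\}^{-1/2}$ term that lets the conservative choice $\tilde c\ge\sqrt2\,c$ push the $(1-\alpha)$ quantile below $Z_\alpha$, and everything else is a faithful transcription of the single-equation argument with $K$ replaced by $KJ$.
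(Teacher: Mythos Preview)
Your proof is correct and follows essentially the same route as the paper's proof of Corollary~\ref{betabound4}, replacing $K$ by $KJ$ throughout and invoking Theorem~\ref{gausappro.joint} in place of Theorem~\ref{gausappro}; this is exactly the generalization the paper intends. One small remark: the statement as written only requires ``$\tilde c$ no less than $c$'', but---as you correctly note and use---the Gaussian tail step needs $\tilde c\ge\sqrt 2\,c$, matching the explicit hypothesis in Corollary~\ref{betabound4}.
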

}

Next, we need to show the validity of \hyperref[lambda_b]{b)}. Let $\widetilde{\mathcal{Z}}^{[B]}\defeq \operatorname{vec}[\{(Z^{[B]}_{jk})_{k=1}^K\}_{j=1}^J]$ and $\widetilde{\Psi}\defeq \operatorname{vec}[\{(\Psi_{jk})_{k=1}^K\}_{j=1}^J]$. Similarly to Theorem \ref{validboot} we have the following results:
\begin{theorem}\label{validboot.joint}
	Under \hyperref[A1]{(A1)}, \hyperref[A3]{(A3)} and \hyperref[A4]{(A4')}, assume $\Phi_{2q,\varsigma}<\infty$ with $q>4$, $b_n = \bigO(n^{\eta})$ for some $0 <\eta< 1$ {(the detailed rate is calculated in \eqref{ratebn.joint})}, then
	\begin{equation}
	\tilde{\rho}_n \defeq \sup_{r\in{\R}}| \operatorname{P} (|\widetilde{\mathcal{Z}}^{[B]}/\widetilde{\Psi}|_{\infty} \leq r|\mathcal{X}_{\cdot},\vps_{\cdot}) -\operatorname{P}(|\widetilde{\mathcal{Z}}/\widetilde{\Psi}|_{\infty}\leq r)| \to 0,\,\text{as } n \to \infty,
	\end{equation}
	and
	\begin{equation}
	\underset{\alpha \in(0,1)}{\sup}\big|\operatorname{P}(|\widetilde{\mathcal{S}}{/\widetilde{\Psi}}|_{\infty} \geq q^{[B]}_{(1-\alpha)} 
	\big) - \alpha\big| \to 0,\, \text{as } n \to \infty.
	\end{equation}
\end{theorem}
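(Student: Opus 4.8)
The plan is to reproduce the logic behind the single-equation result Theorem~\ref{validboot}, but now for the vectorized, $KJ$-dimensional index set, leaning on the joint Gaussian-approximation machinery already in place (Theorem~\ref{gausappro.joint} and Corollaries~\ref{gausappro.c2}, \ref{gausappro.est2}). First I would condition on the sample $(\mathcal{X}_\cdot,\vps_\cdot)$. Because the multipliers $e_{j,i}$ are i.i.d.\ $\operatorname{N}(0,1)$ and independent of the data, the conditional law of $\widetilde{\mathcal{Z}}^{[B]}$ is centered Gaussian with covariance equal to the non-overlapping block estimator
$$\operatorname{Var}\big(\widetilde{\mathcal{Z}}^{[B]}\,\big|\,\mathcal{X}_\cdot,\vps_\cdot\big)=\frac{1}{b_nl_n}\sum_{i=1}^{l_n}\Big(\sum_{l=(i-1)b_n+1}^{ib_n}\widetilde{\mathcal{X}}_l\Big)\Big(\sum_{l=(i-1)b_n+1}^{ib_n}\widetilde{\mathcal{X}}_l\Big)^\top=\hat\Sigma_{\widetilde{\mathcal{X}}},$$
which is exactly the long-run covariance estimator $\hat\Sigma_{\widetilde{\mathcal{X}}}$ of Corollary~\ref{gausappro.est2}. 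Hence $D^{-1}\widetilde{\mathcal{Z}}^{[B]}\mid(\mathcal{X}_\cdot,\vps_\cdot)\sim\operatorname{N}(0,D^{-1}\hat\Sigma_{\widetilde{\mathcal{X}}}D^{-1})$, while $D^{-1}\widetilde{\mathcal{Z}}\sim\operatorname{N}(0,D^{-1}\Sigma_{\widetilde{\mathcal{X}}}D^{-1})$. The first display $\tilde\rho_n\to0$ is therefore a Gaussian-to-Gaussian comparison of the laws of the $\ell_\infty$ norms of two centered Gaussian vectors sharing the normalization $D$, and it reduces to controlling the entrywise deviation $\Delta_n\defeq|D^{-1}(\hat\Sigma_{\widetilde{\mathcal{X}}}-\Sigma_{\widetilde{\mathcal{X}}})D^{-1}|_\infty$.

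The hard part, and the only genuinely new ingredient relative to Theorem~\ref{validboot}, is the uniform consistency of this block long-run covariance estimator: I would invoke Corollary~\ref{gausappro.est2} (and the estimate underlying it) to show $\Delta_n=\smallO_{\P}(\{\log(KJ)\}^{-2})$, the rate needed to feed the comparison lemma. This is where the bandwidth restriction $b_n=\bigO(n^\eta)$ and condition~(A4$'$) are used: the block length must be large enough that the bias from autocovariances straddling block boundaries (a $v'(b_n)$-type term) is negligible, yet small enough that the variance of the $l_n$-block average, inflated by the $(KJ)^{2/q}$ factor required for simultaneous control of all $(KJ)^2$ entries, still vanishes. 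Since symmetrization is unavailable for dependent data, this control must come from the dependence-adjusted norms $\Phi_{2q,\varsigma}$, $\Gamma_{2q,\varsigma}$ and the factor $F_\varsigma$ appearing in Corollary~\ref{gausappro.est2}, via the Nagaev-type deviation bound already exploited in Theorem~\ref{gammabound.joint}, now applied to the within-block products $\widetilde{\mathcal{X}}_t\widetilde{\mathcal{X}}_{t'}^\top$.

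With $\Delta_n$ controlled, I would apply an $\ell_\infty$ Gaussian comparison theorem — the Chernozhukov--Chetverikov--Kato bound or its stationary-process counterpart in \citet{ZW15gaussian} — which dominates the Kolmogorov distance between the laws of the two maxima by a power of $\Delta_n$ times polynomial factors in $\log(KJ)$. Combined with the covariance bound of the previous step this yields the first display, $\tilde\rho_n\to0$.

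Finally, to obtain the coverage statement I would combine $\tilde\rho_n\to0$ with the joint Gaussian approximation of Theorem~\ref{gausappro.joint}/Corollary~\ref{gausappro.c2}, namely $\sup_{r}|\P(|D^{-1}\widetilde{\mathcal{S}}|_\infty\leq r)-\P(|D^{-1}\widetilde{\mathcal{Z}}|_\infty\leq r)|\to0$. Chaining the two comparisons shows that the conditional quantile $q^{[B]}_{(1-\alpha)}$ of $|\widetilde{\mathcal{Z}}^{[B]}/\widetilde\Psi|_\infty$ is, uniformly in $\alpha$, asymptotically equal to the $(1-\alpha)$ quantile of $|\widetilde{\mathcal{S}}/\widetilde\Psi|_\infty$. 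The last step invokes Gaussian anti-concentration (Nazarov's inequality), which makes the distribution function of $|D^{-1}\widetilde{\mathcal{Z}}|_\infty$ Lipschitz at the $\sqrt{\log(KJ)}$ scale, so that $\smallO(1)$ closeness of distribution functions transfers to $\smallO(1)$ closeness of the event probabilities uniformly over $\alpha\in(0,1)$, delivering $\sup_\alpha|\P(|\widetilde{\mathcal{S}}/\widetilde\Psi|_\infty\geq q^{[B]}_{(1-\alpha)})-\alpha|\to0$.
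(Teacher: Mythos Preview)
Your proposal is correct and follows essentially the same route as the paper: reduce $\tilde\rho_n$ to a Gaussian-to-Gaussian comparison controlled by the maximal entrywise deviation of the block covariance estimator (split into a stochastic piece and a block-truncation bias $v(b_n)$ term), then chain with the joint Gaussian approximation and anti-concentration to obtain the quantile statement. The only imprecision is in the tools you cite: Corollary~\ref{gausappro.est2} concerns only the diagonal of $\hat\Sigma_{\widetilde{\mathcal X}}$, and Theorem~\ref{gammabound.joint} is a Nagaev bound for partial sums, not for block sums of products; the paper instead bounds the stochastic part $\delta_1$ directly via Theorem~5.1 of \citet{ZW15gaussian} (the analogue of the $L_{n,j,1}$ analysis), which is the appropriate tail inequality for $\max_{j_1,j_2,k_1,k_2}|\sum_i S_{j_1k_1,i}S_{j_2k_2,i}-l_n\E S_{j_1k_1,i}S_{j_2k_2,i}|$ and yields the rate conditions \eqref{ratebn.joint}.
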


{Lastly, we show the performance bounds for the OLS post-LASSO estimator in the following theorem.
	
	For each $p\leq n$, $\widetilde{T}_{j} \subset \{1, \cdots, K\}$, $|\widetilde{T}_j \backslash T_j|\leq p$, we define the class of functions \\
	$\mathcal{G}_{\widetilde{T}_j} = \{\varepsilon_{j,t} X_{j,t}^{\top}\delta/|\delta|_{j,pr}, \supp(\delta) \subseteq \widetilde{T}_j, |\delta|_2 = 1\}$. The covering number of the function class is given by $\sup_{\mathcal Q}\mathcal{N}(\epsilon ,\mathcal{G}_{\widetilde{T}_j}, \|\cdot\|_{\mathcal Q,1})$. Also define $\mathcal{F}_{j,p} = \{\mathcal{G}_{\widetilde{T}_j}: \widetilde{T}_j \subset \{1, \cdots, K\}, |\widetilde{T}_j\backslash T_j|\leq p\}$. For any $f\in\mathcal{F}_{j,p}$, there exists a set $F_{j,p}$ such that $\min_{f'\in F_{j,p}}\|f-f'\|_{\mathcal Q,1} \leq \epsilon$, and the cardinality of the set is denoted by $|F_{j,p}|$. Consider the vector $\vartheta_t$ of length $|F_{j,p}|$, such that for $l=1,\ldots,|F_{j,p}|$, there is $\vartheta_{l,t}=(f-\E f)/\psi_f$ with $\psi_f=\{\operatorname{avar}(G_n(f))\}^{1/2}$, corresponding to each $f\in F_{j,p}$. The aggregated dependence adjusted norm is given by \begin{equation}
	\|\vartheta_{\cdot}\|_{q,\varsigma}\defeq\sup_{m\geq 0}(m+1)^{\varsigma} \sum^{\infty}_{t=m}\| |\vartheta_t- \vartheta_t^\ast|_\infty \|_q,
	\end{equation}
	where $q\geq1$, and $\varsigma>0$. Moreover, define the following quantities (for simplicity we drop the subscripts $j,p$)
	\begin{align}
	&\Phi^\vartheta_{q,\varsigma}\defeq \max_{1\leq l\leq|F_{j,p}|} \|\vartheta_{l,\cdot}\|_{q,\varsigma}, \,\, \Gamma^\vartheta_{q, \varsigma} \defeq  \bigg(\sum_{l=1}^{|F_{j,p}|} \|\vartheta_{l,\cdot}\|^{q}_{q,\varsigma}\bigg)^{1/q}\notag,\\
	&\Theta^\vartheta_{q,\varsigma} \defeq \Gamma^\vartheta_{q,\varsigma}\wedge \big\{\|\vartheta_{\cdot}\|_{q,\varsigma}(\log|F_{j,p}|)^{3/2}\big\}.
	\end{align}
	To evoke the Gaussian approximation on $G_n(f)/\psi_f$, we need to impose the following assumptions additionally.
	Define $L_1^\vartheta = \{\Phi^\vartheta_{2,\varsigma}\Phi^\vartheta_{2,0} (\log|F_{j,p}|)^2\}^{1/\varsigma}$, $W_1^\vartheta = \{(\Phi^\vartheta_{3,0})^6+ (\Phi^\vartheta_{4,0})^4\}\{\log(|F_{j,p}|n)\}^7$, $W_2^\vartheta = (\Phi^\vartheta_{2,\varsigma})^2\{\log(|F_{j,p}|n)\}^4$, $W_3^\vartheta = [n^{-\varsigma} \{\log (|F_{j,p}|n)\}^{3/2} \Theta^\vartheta_{q,\varsigma}]^{1/(1/2-\varsigma-1/q)}$, \\
	$N_1^\vartheta =(n/\log|F_{j,p}|)^{q/2} (\Theta^\vartheta_{q, \varsigma})^{q}$, $N_2^\vartheta=n(\log|F_{j,p}|)^{-2}(\Phi^\vartheta_{2,\varsigma})^{-2}$, $N_3^\vartheta = \{n^{1/2}(\log|F_{j,p}|)^{-1/2}(\Theta^{\vartheta}_{q, \varsigma}\})^{1/(1/2-\varsigma)}$.
	\begin{itemize}
		\item[(A7)]\label{A7}
		i) (weak dependency case) Given $\Theta^\vartheta_{q,\varsigma} < \infty$ with $q \geq 2$ and $\varsigma > 1/2 - 1/q$, then \\
		$\Theta^\vartheta_{q, \varsigma} n^{1/q-1/2}\{\log (|F_{j,p}|n)\}^{3/2} \to 0$ and $L_1^\vartheta\max(W_1^\vartheta, W_2^\vartheta) = \smallO(1) \min (N_1^\vartheta,N_2^\vartheta)$.\\
		ii) (strong dependency case) Given $0<\varsigma< 1/2 -1/q$, then $\Theta^\vartheta_{q,\varsigma}(\log |F_{j,p}|)^{1/2} = \smallO(n^{\varsigma})$ and $L_1^\vartheta \max(W_1^\vartheta,W_2^\vartheta,W_3^\vartheta) = \smallO(1)\min(N_2^\vartheta,N_3^\vartheta)$.
	\end{itemize}
	
	\begin{remark}
		For a random vector $z_t\in R^K$, suppose there exist constants $C,D>0$, such that $\max_k \E \{\exp(|z_{k,t}/D|^q)\} \leq C$. Then by Jensen's inequality it follows that  $\||z_{t}|_\infty\|_q\leq D (\log K+ \log C)^{1/q}$. In particular, for the case of sub-Gaussian random variables, there exists constant $D>0$ such that  $\E \{\exp(|z_{k,t}/D|^2)\}  -1 \leq 1$, which implies $\||z_{t}|_\infty\|_2\lesssim D \sqrt{\log K}$.	
		
		Similar to the discussion in Comment \ref{garate}, consider the case with $\Theta_{q, \varsigma}^\vartheta=\bigO((\log|F_{j,p}|)^{1/q})$ and $\Phi_{q,\varsigma}^\vartheta=\bigO(1)$, where $\varsigma>1/2-1/q$. Then $\Theta_{q, \varsigma}^\vartheta n^{1/q-1/2}\{\log (|F_{j,p}|n)\}^{3/2} \to 0$ becomes $\log|F_{j,p}|\{\log(n|F_{j,p}|)\}^{3q/2}=\smallO(n^{q/2-1})$, which implies that $L_1^\vartheta\max(W_1^\vartheta, W_2^\vartheta) = \smallO(1) \min (N_1^\vartheta,N_2^\vartheta)$. 
		
		As shown in the proof of Theorem \ref{post}, $|F_{j,p}|\lesssim K^p(6 \mu_j(p)\sigma/ \epsilon)^{s+p}$ with \\
		$\epsilon=\sqrt{p\log K+(p+s)\log  (6 \mu_j(p) \sigma)}(4 \sqrt{n})^{-1}$. This means with \hyperref[A7]{(A7)}, the dimension K has to satisfy the condition $\{p\log K + (s+p)\log(\sqrt{n})\}^{1+3q/2}=\smallO(n^{q/2-1})$, where we consider the case such that $|F_{j,p}|$ is larger than $n$.
	\end{remark}
}

\begin{theorem}[Prediction Performance Bounds for OLS Post-LASSO]\label{post}
	Given \hyperref[A1]{(A1)}, \hyperref[A3]{(A3)} and \hyperref[A7]{(A7)}, suppose \hyperref[A2]{(A2)} (with $\bar c=\frac{c+1}{c-1}, c>1$) and \hyperref[A5]{(A5)} (with $\hat p_j=|\hat T_j\setminus T_j|$) hold uniformly over equations with probability $1-\smallO(1)$, then under the exact sparsity assumption \eqref{sparse}, for any $\tau>0$, there is a constant $C_\tau$ independent of $n$, for all $j =1,\ldots, J$ we have
	\begin{align}
	&|\hat \beta_j^{[P]} - \beta_{j}^0|_{j, pr} \leq C_\tau{ \max_{1\leq k \leq K}\Psi_{jk}\sqrt{\frac{p\log K+(p+s)\{\log  (6 \mu_j(p) \sigma)+\log n/2\}}{n}}}\notag\\
	&+\IF(T_j\nsubseteq \hat T_j)C\sqrt{s}\underset{1\leq k \leq K}{\max}\Psi_{jk}\max_{1\leq j\leq J}\big\{\|X_{jk,\cdot}\vps_{j,\cdot}\|_{2,\varsigma}\sqrt{\frac{\log (KJ/\alpha) }{n}}\vee\|X_{jk,\cdot}\vps_{j,\cdot}\|_{q,\varsigma}n^{1/q-1}(\varpi_nKJ/\alpha)^{1/q} \big\},
	\end{align}
	with probability $1-\alpha -\tau - \smallO(1)$, where for $\varsigma > 1/2-1/q$ (weak dependence case), $\varpi_n = 1$; for $\varsigma< 1/2-1/q$ (strong dependence case), $\varpi_n = n^{q/2-1- \varsigma q}$. {$\sigma=\max\limits_{1\leq j\leq J}\{\operatorname{avar}(n^{-1/2}\sum_{t=1}^n\vps_{j,t})\}^{1/2}$} and the constant $C$ depends on the RE constants.
\end{theorem}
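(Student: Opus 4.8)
The plan is to adapt the post-selection least-squares argument of \citet{belloni2009least} to the dependent-data setting, replacing the symmetrization-based control of the sparse empirical process with the Gaussian approximation licensed by \hyperref[A7]{(A7)}. Write $\delta_j\defeq\hat\beta_j^{[P]}-\beta_j^0$. Since $\hat\beta_j^{[P]}$ minimizes $\En(Y_{j,t}-X_{j,t}^\top\beta)^2$ over vectors supported on $\hat T_j$, the vector $\delta_j$ is supported on $\hat T_j\cup T_j$ with $\hat T_j^c$-block equal to $-\beta^0_{j,\hat T_j^c}$, and the first-order conditions read $\En X_{jk,t}(\vps_{j,t}-X_{j,t}^\top\delta_j)=0$ for $k\in\hat T_j$. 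The first move is to isolate the noise: for any $\delta$ supported on a set $\widetilde T_j$ with $|\widetilde T_j\setminus T_j|\le p$,
\begin{equation}
\En\vps_{j,t}X_{j,t}^\top\delta=\frac{|\delta|_{j,pr}}{\sqrt n}\,G_n\!\Big(\frac{\vps_{j,t}X_{j,t}^\top\delta}{|\delta|_{j,pr}}\Big),\notag
\end{equation}
so the governing stochastic object is $\mathcal C_{n,p}\defeq\sup_{f\in\mathcal F_{j,p}}|G_n(f)|$, the supremum of the empirical process over the class $\mathcal F_{j,p}$ defined just before the theorem.

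Next I would split on whether LASSO under-selects. In the case $T_j\subseteq\hat T_j$ the vector $\delta_j$ is supported entirely on $\hat T_j$; multiplying the first-order conditions by $\delta_{j,k}$ and summing gives $|\delta_j|_{j,pr}^2=\En\vps_{j,t}X_{j,t}^\top\delta_j\le(\mathcal C_{n,\hat p_j}/\sqrt n)|\delta_j|_{j,pr}$, hence $|\delta_j|_{j,pr}\le\mathcal C_{n,\hat p_j}/\sqrt n$, which is the first term of the bound and carries no indicator. In the case $T_j\nsubseteq\hat T_j$ I would compare the post-LASSO fit against the first-step estimator $\breve\beta_j$, which is itself feasible on $\hat T_j$; this yields the basic inequality $|\delta_j|_{j,pr}^2\le|\breve\beta_j-\beta_j^0|_{j,pr}^2+2\En\vps_{j,t}X_{j,t}^\top(\hat\beta_j^{[P]}-\breve\beta_j)$, and because $\hat\beta_j^{[P]}-\breve\beta_j$ is supported on $\hat T_j$ the cross term is again $\le(\mathcal C_{n,\hat p_j}/\sqrt n)(|\delta_j|_{j,pr}+|\breve\beta_j-\beta_j^0|_{j,pr})$. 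Solving the resulting quadratic inequality gives $|\delta_j|_{j,pr}\lesssim|\breve\beta_j-\beta_j^0|_{j,pr}+\mathcal C_{n,\hat p_j}/\sqrt n$, where the second summand reproduces the first term of the bound and the first summand, bounded by the joint LASSO oracle inequality of Corollary \ref{lambdabound.joint}, produces the second, indicator-weighted term; its uniformity over $j$ and the probability $1-\alpha$ come directly from that corollary.

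The crux is controlling $\mathcal C_{n,p}$ uniformly over all admissible selected supports under temporal and cross-sectional dependence, precisely where the i.i.d.\ symmetrization toolkit fails. I would first discretize the continuum class $\mathcal G_{\widetilde T_j}$ by a finite $\epsilon$-net $F_{j,p}$, estimating its cardinality by counting the $\binom{K}{p}$ admissible support extensions together with the Euclidean $\epsilon$-covering of the unit sphere within each $(s+p)$-dimensional coordinate subspace, transported to the prediction seminorm through $\mu_j(p)=\sqrt{\phi_j(p)}/\tilde\kappa_j(p)$ under \hyperref[A5]{(A5)}; this gives $|F_{j,p}|\lesssim K^p(6\mu_j(p)\sigma/\epsilon)^{s+p}$, and choosing $\epsilon$ of order $n^{-1/2}$ as in the remark preceding the theorem yields $\log|F_{j,p}|\asymp p\log K+(p+s)\{\log(6\mu_j(p)\sigma)+\tfrac12\log n\}$. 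On this finite class I would normalize each increment by $\psi_f=\{\operatorname{avar}(G_n(f))\}^{1/2}$ and invoke the Gaussian approximation of \hyperref[A7]{(A7)} (built, as elsewhere in the paper, on the functional-dependence-measure machinery of \citet{ZW15gaussian}) to obtain $\max_{f\in F_{j,p}}|G_n(f)/\psi_f|\le C_\tau\sqrt{\log|F_{j,p}|}$ with probability $1-\tau$; combined with the variance bound $\max_f\psi_f\lesssim\max_{1\le k\le K}\Psi_{jk}$, which follows from \hyperref[A3]{(A3)} and the RSE norm relations, this gives $\mathcal C_{n,\hat p_j}\le C_\tau\max_k\Psi_{jk}\sqrt{\log|F_{j,\hat p_j}|}$ and, upon inserting the cardinality estimate with $p=\hat p_j$, exactly the first term of the stated bound.

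The main obstacle, and the place I expect the real work to lie, is this last empirical-process step: establishing the Gaussian approximation uniformly over the data-dependent index set of all possible selected models (rather than a single fixed linear functional as in Theorem \ref{gausappro}), while simultaneously keeping the net resolution $\epsilon$ and the implied $\log|F_{j,p}|$ small enough that \hyperref[A7]{(A7)} applies and the approximation error is $\smallO(1)$. Everything else---the two basic inequalities, the quadratic solve, and the invocation of Corollary \ref{lambdabound.joint}---is routine once this uniform bound is in hand, and the final probability $1-\alpha-\tau-\smallO(1)$ is assembled from the oracle-inequality event ($1-\alpha$), the empirical-process tail ($1-\tau$), and the RE/RSE and Gaussian-approximation slack ($\smallO(1)$).
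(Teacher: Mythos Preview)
Your proposal is correct and follows essentially the same route as the paper: both establish the covering-number bound $|F_{j,p}|\lesssim K^p(6\mu_j(p)\sigma/\epsilon)^{s+p}$ via the Lipschitz/RSE argument, then replace symmetrization by the Gaussian approximation under \hyperref[A7]{(A7)} applied to the normalized increments $G_n(f)/\psi_f$ followed by a Gaussian-tail union bound over the net, and finally insert the joint LASSO bound from Corollary~\ref{lambdabound.joint}. The only difference is presentational: the paper carries out the covering and Gaussian-approximation steps explicitly but then delegates your entire case split ($T_j\subseteq\hat T_j$ versus $T_j\nsubseteq\hat T_j$) and the resulting quadratic inequality to ``a direct application of Theorem~5 of \cite{belloni2009least}'', whereas you spell out that decomposition in full.
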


{In particular, suppose the Gaussian approximation results hold for $\lambda^0(1-\alpha)$, the bound for it can be replaced according to Corollary \ref{betabound4.joint}.}


\renewcommand{\thesubsection}{B.\arabic{subsection}}
\setcounter{equation}{0}
\renewcommand{\theequation}{B.\arabic{equation}}
\setcounter{theorem}{0}
\renewcommand{\thetheorem}{B.\arabic{theorem}}
\setcounter{lemma}{0}
\renewcommand{\thelemma}{B.\arabic{lemma}}
\setcounter{table}{0}
\renewcommand{\thetable}{B.\arabic{table}}
\setcounter{remark}{0}
\renewcommand{\theremark}{B.\arabic{remark}}

\section{Detailed Proofs}

\subsection{Proofs of Single Equation Estimation}
%

\begin{proof}[Proof of Theorem \ref{gammabound}]
For each $j=1,\ldots J$, $k=1,\ldots,K$, applying Theorem 2 of \cite{wu2016performance} gives
\beq
\P(\sqrt{n}|S_{jk}| \geq x) \leq \frac{C'_1\varpi_nn \|X_{jk,\cdot}\vps_{j,\cdot}\|^q_{q,\varsigma}}{x^q}+C'_2 \exp\big(\frac{-C_3x^2}{n \|X_{jk,\cdot}\vps_{j,\cdot}\|^2_{2,\varsigma} }\big),\notag
\eeq
where for $\varsigma > 1/2-1/q$, $\varpi_n = 1$; for $\varsigma< 1/2-1/q$, $\varpi_n = n^{q/2-1- \varsigma q}$. $C'_1,C'_2,C_3$ are three constants depending on $q$ and $\varsigma$. 
It follows that the conclusion holds if we set $x=(2c)^{-1}\Psi_{jk}r$ and apply the Bonferroni inequality. 
\end{proof}


\begin{proof}[Proof of Theorem \ref{gausappro}]
According to the Minkowski's inequality and H\"{o}lder's inequality, we have
\begin{align*}
\sum^{\infty}_{t=m}\|X_{jk,t}\vps_{j,t} - X^\ast_{jk,t} \vps^\ast_{j,t}\|_{q} &\leq \sum^{\infty}_{t=m}\big\{\|X_{jk,t}(\vps_{j,t}-\vps^\ast_{j,t})\|_{q}+ \|(X_{jk,t} - X_{jk,t}^\ast)\vps^\ast_{j,t}\|_{q}\big\}\notag\\
&\leq \sum^{\infty}_{t=m}\big\{\|X_{jk,t}\|_{2q} \|\vps_{j,t} -\vps^\ast_{j,t}\|_{2q} + \| X_{jk,t} - X^\ast_{jk,t}\|_{2q} \|\vps_{j,t}\|_{2q}\big\}.
\end{align*}
Thus, it is easy to see that
\beq
\|X_{jk,\cdot} \vps_{j,\cdot}\|_{q, \varsigma} \leq \|X_{jk,\cdot}\|_{2q,0} \|\vps_{j,\cdot}\|_{2q,\varsigma}+ \|X_{jk,\cdot}\|_{2q,\varsigma} \|\vps_{j,\cdot}\|_{2q,0}
\leq 2  \|X_{jk,\cdot}\|_{2q,\varsigma} \|\vps_{j,\cdot}\|_{2q,\varsigma}.\notag
\eeq
Consequently, we have the following relationships:
\begin{align*}
&\max_{1\leq k\leq K} \|X_{jk,\cdot}\vps_{j,\cdot}\|_{q,\varsigma}\leq2\max_{1\leq k\leq K} \|X_{jk,\cdot}\|_{2q,\varsigma}\|\vps_{j,\cdot}\|_{2q,\varsigma},\notag\\
& (\sum_{k=1}^K \|X_{jk,\cdot}\vps_{j,\cdot}\|^{q}_{q,\varsigma})^{1/q} \leq2 \|\vps_{j,\cdot}\|_{2q,\varsigma} (\sum_{k=1}^K \|X_{jk,\cdot}\|^{q}_{2q,\varsigma})^{1/q},\notag\\
&\|X_{j,\cdot}\vps_{j,\cdot}\|_{q,\varsigma} \leq 2\|X_{j,\cdot}\|_{2q,\varsigma} \|\vps_{j,\cdot}\|_{2q,\varsigma}.
\end{align*}
Therefore, the conditions in Theorem 3.2 of \cite{ZW15gaussian} can be verified for the $K$-dimensional stationary process $X_{j,t}\vps_{j,t}$. Finally, applying that theorem yields the Gaussian approximation results.
\end{proof}

\begin{proof}[Proof of Corollary \ref{gausappro.c}]
It follows directly from the Gaussian approximation results in Theorem \ref{gausappro}.
\end{proof}

{
\begin{proof}[Proof of Corollary \ref{gausappro.est}]
{The proof follows that of Corollary 5.4 in \citet{ZW15gaussian}.}\\
For $w>0$, we have
\begin{align*}
\rho(\hat D_{j}^{-1}S_{j\cdot}, D_{j}^{-1}Z_j)&=\sup_{r\geq0}\big|\P(|\hat D_j^{-1}S_{j\cdot}|_{\infty} \geq r) - \P(|D_j^{-1}Z_j|_{\infty} \geq r)\big|\\
&\leq \rho(D_{j}^{-1}S_{j\cdot}, D_{j}^{-1}Z_j) + \sup_{r\geq0}\P(||D_j^{-1}Z_j|_{\infty}-r|\leq w) + \P(|(D_j^{-1}-\hat D_j^{-1})S_{j\cdot}|_\infty\geq w)\\
&\lesssim \rho(D_{j}^{-1}S_{j\cdot}, D_{j}^{-1}Z_j) + w\sqrt{\log K} + \P(|(D_j^{-1}-\hat D_j^{-1})S_{j\cdot}|_\infty\geq w),
\end{align*}
where the last line uses the arguments of Theorem 3 in \citet{chernozhukov2015comparison}. Let $V_{n,j}\defeq\underset{1\leq k\leq K}{\max}|\Psi_{jk}/\hat\Psi_{jk}-1|$ and $L_{n,j}\defeq\underset{1\leq k\leq K}{\max}|\Psi_{jk}^2-\hat\Psi_{jk}^2|$. Then $|(D_j^{-1}-\hat D_j^{-1})S_{j\cdot}|_\infty\leq V_{n,j}|D_j^{-1}S_{j\cdot}|_\infty$. As $\underset{1\leq k\leq K}{\min}\Psi_{jk}^2\geq c_j$, let $w=xy$, $0<x<c_j/2$, $y>0$, then
\begin{align*}
\P(|(D_j^{-1}-\hat D_j^{-1})S_{j\cdot}|_\infty\geq w)&\leq\P(V_{n,j}\geq2x/c_j) + \P(|D_j^{-1}S_{j\cdot}|_\infty\geq c_jy/2)\\
&\leq \P(L_{n,j}\geq x) + \rho(D_{j}^{-1}S_{j\cdot}, D_{j}^{-1}Z_j) + \P(|D_j^{-1}Z_{j}|_\infty\geq c_jy/2).
\end{align*}
It follows that
$$
\rho(\hat D_{j}^{-1}S_{j\cdot}, D_{j}^{-1}Z_j) \leq \rho(D_{j}^{-1}S_{j\cdot}, D_{j}^{-1}Z_j) + xy\sqrt{\log K} + \P(L_{n,j}\geq x) + \P(|D_j^{-1}Z_{j}|_\infty\geq c_jy/2).
$$

In particular, $L_{n,j} \leq  L_{n,j,1}+ L_{n,j,2}$, with $L_{n,j,1} = \max_{1\leq k\leq K}| \E\hat{\Psi}^2_{jk} - \hat{\Psi}^2_{jk}| $ and $L_{n,j,2}=\max_{1\leq k\leq K}|\Psi^2_{jk} - \E\hat{\Psi}^2_{jk}|$.

As for $L_{n,j,1}$, applying Theorem 5.1 of \cite{ZW15gaussian}, for $u\geq n^{1/2}b_n^{1/2}\Phi^2_{j,2q,\varsigma}$, we have
\beq
\P(nL_{n,j,1}\geq u) \lesssim \frac{F_{\varsigma}\Gamma_{j,2q,\varsigma}^{q}}{u^{q/2}} + K\exp\bigg(- \frac{C_{j} u^2}{n b_n \Phi^4_{j, 8,\varsigma}}\bigg),\notag
\eeq
where the constants $C_j$ depend on $\eta$, $q$, and $\varsigma$. Then we have $\P(L_{n,j,1}>x)\to0$, as $n\to\infty$, {if we set $x\geq\frac{\sqrt{\log K}}{n}\max\big\{n^{1/2}b_n^{1/2}\Phi^2_{j,2q,\varsigma}, cn^{1/2} b_n^{1/2} \sqrt{\log K}\Phi_{j,8,\varsigma}^2, cF^{2/q}_{\varsigma} \Gamma^2_{j,2q, \varsigma}\big\}$, for sufficiently large $c$.}

For $L_{n,j,2}$, define $v'(b_n) = (b_n+1)^{-\varsigma}+ 2v_{n,2}/b_n$, $v_{n,2} = \log b_n $ (resp. $b_n^{-\varsigma+1}$ or 1) for $\varsigma = 1$ (resp. $\varsigma < 1$ or $\varsigma>1$).
It can be shown that $L_{n,j,2}\leq \Phi_{j,2,0}\Phi_{j,2,\varsigma}v'(b_n)$. Note that $v'(b_n)$ is a special case of $v(b_n)$ in the proof of Theorem \ref{validboot} given $n\to\infty$, and the conclusion follows similarly.

It follows that $\P(L_{n,j}>x)\to0$, as $n\to\infty$, {if we set
\beq
x\geq\frac{\sqrt{\log K}}{n}\max\big\{n^{1/2}b_n^{1/2}\Phi^2_{j,2q,\varsigma}, cn^{1/2} b_n^{1/2} \sqrt{\log K}\Phi_{j,8,\varsigma}^2, cF^{2/q}_{\varsigma} \Gamma^2_{j,2q, \varsigma}, \Phi_{j,2,0}\Phi_{j,2,\varsigma}v'(b_n)n/\sqrt{\log K}\big\},\notag
\eeq
where $c$ is sufficiently large.} Moreover, given Theorem \ref{gausappro} and choosing $y=C\sqrt{\log K}$ (the constant $C>0$ is sufficiently large), the conclusion can be obtained.
\end{proof}
}

{
\begin{proof}[Proof of Corollary \ref{betabound4}]
Let $\tilde\rho_n\defeq\rho(D_{j}^{-1}S_{j\cdot}, D_{j}^{-1}Z_j)$ and by its definition, we have
\begin{align*}
\P(2c \sqrt{n} \max_{1\leq k\leq K} |S_{jk}/\Psi_{jk}|\leq Z_\alpha)&\geq\P(2c \sqrt{n} \max_{1\leq k\leq K} |Z_{jk}/\Psi_{jk}||\leq Z_\alpha) - \tilde\rho_n\\
&\geq1- \sum_{k=1}^K \P\{|Z_{jk}/\Psi_{jk}|\geq Z_{\alpha}/(2c\sqrt{n})\} - \tilde\rho_n\\
&\geq 1- \sum_{k=1}^K2\{Z_{\alpha}/(2c\sqrt{n})\}^{-1} \exp[- Z_{\alpha}^2/\{2(2c\sqrt{n})^2\}] - \tilde\rho_n\\
&\geq 1-2(\log K)^{-1/2} - \tilde\rho_n,
\end{align*}
where we have applied the union bound, the tail probability of Gaussian random variable and the condition that $Z_\alpha=2\tilde c\sqrt{n\log K}\geq2\sqrt{2}c\sqrt{n\log K}$.

It follows that $\lambda^0_j(1-\alpha)\leq Z_\alpha$ as $
1- \alpha = \P\{2c \sqrt{n} \underset{1\leq k\leq K}{\max} |S_{jk}/\Psi_{jk}|\leq \lambda_j^0(1-\alpha)\} \leq  \P(2c \sqrt{n} \underset{1\leq k\leq K}{\max} |S_{jk}/\Psi_{jk}|\leq Z_{\alpha})$, given $2(\log K)^{-1/2} + \tilde\rho_n=\smallO(\alpha)$ (note that Theorem \ref{gausappro} ensures that $\tilde\rho_n\to0$ with a polynomial rate as $n\to\infty$).

\end{proof}

}

\begin{proof}[Proof of Theorem \ref{validboot}]
Let $S_{jk,i}=\frac{1}{\sqrt{n}}\sum_{l=(i-1)b_n+1}^{ib_n}X_{jk,l}\varepsilon_{j,l}$, we first need to prove that 
\begin{align*}\label{sup2}
\rho_{n,j} &\defeq \sup_{r\in{\R}}\big| \P \big\{\max_{1\leq k\leq K}(Z_{jk}^{[B]}/\Psi_{jk})\leq r| X_{j,\cdot}, \vps_{j,\cdot}\big\} -\P\big\{\max_{1\leq k\leq K} (\widetilde{Z}_{jk}/\Psi_{jk})\leq r\big\}\big|\notag\\
&= \sup_{r\in{\R}}\big| \P \big\{\max_{1\leq k\leq K}\big(\sum_{i=1}^{l_n}e_{j,i}S_{jk,i}/\Psi_{jk}\big)\leq r| X_{j,\cdot}, \vps_{j,\cdot}\big\} -\P\big\{\max_{1\leq k\leq K} (\widetilde{Z}_{jk}/\Psi_{jk})\leq r\big\}\big|\to 0,\, \text{as } n \to \infty.
\end{align*}
{Given the sample variance covariance matrix ($K\times K$) $\Sigma_{j,n} = \sum^{n}_{\ell = -n}(1-|\ell|/n) \Gamma_j(\ell)$, where $\Gamma_j(\ell) = \E(X_{j,t}\vps_{j,t}X_{j,t-\ell}^{\top} \vps_{j,t-\ell})$, 
let $\widetilde{Z}_{j} = (\widetilde{Z}_{jk})_{k=1}^K\sim \N(0,\Sigma_{j,n})$. In addition, define $\Sigma_{j,b_n} = \sum^{b_n}_{\ell = -b_n}(1- |\ell|/b_n) \Gamma_{j}(\ell)$ and $\widehat\Sigma_j = \sum^{l_n}_{i=1}  S_{j,i}S_{j,i}^{\top}$, where $S_{j,i} = (S_{jk,i})_{k=1}^K$. Let $\Psi_j = \diag (\Psi_{jk})$, $\delta_j = \delta_{j1} + \delta_{j2}$, with $\delta_{j1} = |\Psi_j^{-1} \widehat\Sigma_j \Psi_j^{-1} - \Psi_j^{-1} \Sigma_{j,b_n}\Psi_j^{-1}|_{\max}$ and $\delta_{j2} = |\Psi_j^{-1} \Sigma_{j,b_n}\Psi_j^{-1} - \Psi_j^{-1} \Sigma_{j,n}\Psi_j^{-1}|_{\max}$, where $|\cdot|_{\max}$ is the maximum norm of a matrix.}
According to Theorem 2 of \cite{chernozhukov2015comparison}, 
$\rho_{n,j}$ is bounded by {$\pi(\delta_{j1})\vee\pi(\delta_{j2})$, with $\pi(\delta_j) \defeq C\delta_j^{1/3}\{1\vee a_K^2 \vee \log(1/\delta_j)\}^{1/3}(\log K)^{1/3}$}, where $a_K=\E(\underset{1\leq k\leq K}{\max}Z_{jk}/\Psi_{jk})\leq\sqrt{2\log K}$.

For the first part,
\begin{align*}
\delta_{j1} &=\max_{1\leq k_1,k_2\leq K}\bigg|\frac{\sum_{i=1}^{l_n}S_{jk_1,i}S_{jk_2,i}}{\Psi_{jk_1}\Psi_{jk_2}} - \frac{l_n\E (S_{jk_1,i}S_{jk_2,i})}{\Psi_{jk_1}\Psi_{jk_2}}\bigg|\notag\\
&\leq\frac{\underset{1\leq k_1,k_2\leq K}{\max}\big|\sum_{i=1}^{l_n}S_{jk_1,i}S_{jk_2,i}-l_n\E (S_{jk_1,i}S_{jk_2,i})\big|}{\underset{1\leq k_1,k_2\leq K}{\min}\Psi_{jk_1}\Psi_{jk_2}}.
\end{align*}
We need to analyze the tail probability of {$\delta_{j1}$}. Applying Theorem 5.1 of \cite{ZW15gaussian}, for $x\geq n^{1/2} b_n^{1/2}\Phi^2_{j,2q,\varsigma}$, we have
\beq
\P\bigg(n \delta_{j1}\geq \frac{x}{\underset{1\leq k_1,k_2\leq K}{\min}\Psi_{j_1k_1}\Psi_{j_2k_2}}\bigg) \lesssim \frac{K F_{\varsigma}\Gamma_{j,2q,\varsigma}^{q}}{x^{q/2}} + K^2 \exp\bigg(- \frac{C_{j} x^2}{n b_n \Phi^4_{j, 8,\varsigma}}\bigg),\notag
\eeq
for all large $n$, where $F_{\varsigma} = n$, for $\varsigma >1-2/q$; $F_{\varsigma} = l_nb_n^{q/2-\varsigma q/2}$, for $1/2-2/q<\varsigma<1-2/q$; $F_{\varsigma} = l_n^{q/4-\varsigma q/2}b_n^{q/2-\varsigma q/2} $, for $\varsigma<1/2-2/q$. The constants $C_j$ depend on $\eta$, $q$, and $\varsigma$. This ensures that {when $x\geq\max\big\{n^{1/2} b_n^{1/2}\Phi^2_{j,2q,\varsigma},cn^{1/2} b_n^{1/2} (\log K)^{1/2}\Phi_{j,8,\varsigma}^2,cK^{2/q} F^{2/q}_{\varsigma} \Gamma^2_{j,2q, \varsigma}\big\}$, 
the tail probability tends to 0, as $n\to\infty$, for sufficiently large $c$.}

It follows that {$\pi(\delta_{j1})\to0$} as $n\to\infty$, given $x=\smallO\{n(\log K)^{-2}\}$, 
which implies the following conditions on $b_n$:
\begin{align*}
&b_n=\smallO\{n(\log K)^{-4}\Phi^{-4}_{j,2q,\varsigma}\wedge n(\log K)^{-5}\Phi_{j,8,\varsigma}^{-4}\},\quad F_\varsigma=\smallO\{n^{q/2}(\log K)^{-q}K^{-1}\Gamma_{j,2q,\varsigma}^{-q}\}.
\end{align*}

{
For the second part, by defining $\psi_j \defeq \underset{1\leq k_1,k_2\leq K}{\min} \Psi_{jk_1} \Psi_{jk_2}$, we have $$\delta_{j2} \leq \bigg|\psi_j^{-1}\bigg\{\sum_{b_n< |\ell|\leq n} (1- |\ell|/n) \Gamma_{j}(\ell)+ \sum_{\ell=-b_n}^{b_n} |\ell| (-1/n + 1/b_n) \Gamma_j(\ell)\bigg\}\bigg|_{\max}.$$
Recall that
\begin{align*}
|\Gamma_{j,k_1,k_2}(\ell)|&= \bigg|\sum^{\infty}_{h = 0}\E\{(\mathcal{P}_h (X_{jk,0} \vps_{j0}) \mathcal{P}_h( X_{jk_2,\ell}\vps_{j,\ell})\}\bigg|\\
&\leq \sum^{\infty}_{h=0} 
\|X_{jk_1,h}\vps_{j,h} - X_{jk_1,h}^\ast\vps_{j,h}^\ast\|_2\|X_{jk_2,h+\ell}\vps_{j,h+\ell} - X_{jk_2,h+\ell}^\ast\vps_{j,h+\ell}^\ast\|_2,
\end{align*}
where the operator is given by $\mathcal{P}_h(\cdot) \defeq \E(\cdot|\mF_{h}) - \E(\cdot|\mF_{h-1})$. It follows that
\begin{eqnarray}\label{delta2.bound}
&&\bigg|\sum_{b_n<|\ell|\leq n} (1-|\ell|/n) \Gamma_{j,k_1,k_2}(\ell)+\sum_{\ell=-b_n}^{b_n} |\ell|(-1/n + 1/b_n) \Gamma_{j,k_1,k_2}(\ell)\bigg|\notag\\
&\leq& \Delta_{0,2,j,k_1}\Delta_{b_n+1, 2, j,k_2} + \frac{2}{n}\Delta_{0,2,j,k_1}\sum^n_{\ell= b_n+1}\Delta_{\ell,2,j,k_2}+ 2\frac{n-b_n}{n b_n}  \Delta_{0,2,j,k_1}\sum_{\ell=1}^{b_n}\Delta_{\ell,2,j,k_2},
\end{eqnarray}
where $\Delta_{m,2,j,k}=\sum_{t=m}^\infty\|X_{jk,t}\vps_{j,t} - X_{jk,t}^\ast\vps_{j,t}^\ast\|_2$.
Given the fact that $\Delta_{0,2,j,k} \leq \Phi_{j,4,0}$, $\Delta_{\ell,2,j,k}\leq  \Phi_{j,4,\varsigma}\ell^{-\varsigma}$, 
\eqref{delta2.bound} is bounded by $\Phi_{j,4,0}\Phi_{j,4,\varsigma} \{(b_n+ 1)^{-\varsigma}+ 2n^{-1}\sum^{n}_{\ell = b_n+1} \ell^{-\varsigma}+ 2\frac{n-b_n}{n b_n} \sum^{b_n}_{\ell=1} \ell^{-\varsigma}\}=\Phi_{j,4,0}\Phi_{j,4,\varsigma} v(b_n)$ for any $k_1,k_2$, where $v(b_n)$ is a function of $b_n$. Note that $v(b_n)\lesssim (b_n+1)^{-\varsigma}+ 2v_{n,1}/n + 2(n-b_n) v_{n,2}/(n b_n)$, where $v_{n,1} = \log \{n/(b_n+1)\}$ (resp. $n^{-\varsigma+1}$ or $(b_n+1)^{-\varsigma+1}$) for $\varsigma = 1$ (resp. $\varsigma<1$ or $\varsigma>1$), $v_{n,2} = \log b_n $ (resp. $b_n^{-\varsigma+1}$ or 1) for $\varsigma = 1$ (resp. $\varsigma < 1$ or $\varsigma>1$).
Therefore, the bound of $\delta_{j2}$ 
would decrease as $b_n$ increases. In particular, we need to impose an addition assumption such that 
$\Phi_{j,4,0}\Phi_{j,4,\varsigma} v(b_n) = \smallO\{(\log K)^{-2}\}$ to guarantee $\pi(\delta_{j2})\to0$.

The results for the two parts above ensure that $\rho_{n,j}\to0$ as $n\to\infty$, given $x=\smallO\{n(\log K)^{-2}\}$ and $\Phi_{j,4,0}\Phi_{j,4,\varsigma} v(b_n) = \smallO\{(\log K)^{-2}\}$, which imply the following conditions on $b_n$:
	\begin{align}\label{ratebn}
	&b_n=\smallO\{n(\log K)^{-4}\Phi^{-4}_{j,2q,\varsigma}\wedge n(\log K)^{-5}\Phi_{j,8,\varsigma}^{-4}\},\,F_\varsigma=\smallO\{n^{q/2}(\log K)^{-q}K^{-1}\Gamma_{j,2q,\varsigma}^{-q}\}.\notag\\
	&\Phi_{j,4,0}\Phi_{j,4,\varsigma}\{b_n^{-1}+\log(n/b_n)/n+(n-b_n)\log b_n/(nb_n)\}(\log K)^2=\smallO(1),\,\text{if } \varsigma=1;\notag\\
	&\Phi_{j,4,0}\Phi_{j,4,\varsigma}\{b_n^{-1}+n^{-\varsigma}+(n-b_n)b_n^{-\varsigma+1}/(nb_n)\}(\log K)^2=\smallO(1),\,\text{if } \varsigma<1;\notag\\
	&\Phi_{j,4,0}\Phi_{j,4,\varsigma}\{b_n^{-1}+n^{-1}b_n^{-\varsigma+1}+(n-b_n)/(nb_n)\}(\log K)^2=\smallO(1),\,\text{if } \varsigma>1.
	\end{align}

At last, combining the Gaussian approximation results for $S_{jk}/\Psi_{jk}$ 
and applying Theorem 3.1 in \cite{CCK13AoS}, we have
\beq\label{sup1s}
\sup_{\alpha \in(0,1)}\big|\P\big(\max_{1\leq k\leq K} |S_{jk}{/\Psi_{jk}}| \geq q^{[B]}_{j,(1-\alpha)}
	\big) - \alpha\big|\lesssim \rho_{n,j}+ \pi'(z)+ \P(\delta_j \geq z),\notag
\eeq
where $\pi'(z) = z^{1/3}\{1\vee \log(K/z)\}^{2/3}$. We need to pick $z$ such that $\pi'(z)+ \P(\delta_j \geq z)\to 0$ as $n\to \infty$ and it can be obtained by taking $z = R^{1/2}_n/(\log K)$, 
with \\
{$R_n = n^{-1} \max\big\{n^{1/2} b_n^{1/2}\Phi^2_{j,2q,\varsigma}, cn^{1/2} b_n^{1/2} (\log K)^{1/2}\Phi_{j,8,\varsigma}^2, cK^{2/q} F^{2/q}_{\varsigma} \Gamma^2_{j,2q, \varsigma}, n \Phi_{j,2,0}\Phi_{j,2,\varsigma} v(b_n) \big\}$, with sufficiently large $c$.}
}

\end{proof}

\begin{remark}[Admissible rate of $b_n$]
Consider the special case with $\Phi_{j,2q,\varsigma}=\bigO(1)$ and $\Gamma_{j,2q,\varsigma}=\bigO(1)$, for $q>4$. Let $\log K = \bigO(n^{r})$, 
and assume $1/2-2/q<\varsigma<1-2/q$. Then \eqref{ratebn} implies an admissible rate of $b_n=\bigO(n^\eta)$ such that 
$2r/\varsigma<\eta<\max\{1-5r,(q/2-qr-r-1)/(q/2-\varsigma q/2-1)\}$.
\end{remark}

\begin{remark}[Validity of multiplier block bootstrap under stronger tail assumptions]\label{validboot.exp}
Note that in case with stronger exponential moment conditions on the underlying processes, we shall change the tail probabilities to bound $\delta_{j1}$.

Let $\Phi_{j,\psi_\nu,\varsigma}= \underset{1\leq k\leq K}{\max}\,\underset{q\geq 2}{\sup}\, q^{-\nu}\|X_{jk,\cdot}\vps_{j,\cdot}\|_{q,\varsigma} < \infty $, then according to Theorem 5.2 of \cite{ZW15gaussian}, for all $x>0$, we have
\begin{equation*}
\P(n\delta_{j1} \geq  \frac{x}{\underset{1\leq k_1,k_2\leq K}{\min}\Psi_{j_1k_1}\Psi_{j_2k_2}} )\lesssim K^2 \exp\bigg(- \frac{x^{\gamma}}{4e\gamma(\sqrt{n}b_n\Phi^2_{j,\psi_\nu,0})^\gamma }\bigg),
\end{equation*}
where $\gamma = 1/(2\nu+ 1)$.
This implies that {when 
$x\geq c(\log K)^{1/\gamma}\sqrt{n}b_n\Phi_{j,\psi_\nu,0}^2$, with sufficiently large $c$, the tail probability tends to 0, as $n\to\infty$.} It follows that $\pi(\delta_{j1})\to0$ as $n\to\infty$, given $x= \smallO\{n(\log K)^{-2}\}$. As a result, \eqref{ratebn} will be replaced by
\begin{align*}
&b_n=\smallO\{n^{-1/2}(\log K)^{-2-1/\gamma}\Phi_{j,\psi_\nu,0}^{-2}\}.\notag\\
&\Phi_{j,4,0}\Phi_{j,4,\varsigma}\{b_n^{-1}+\log(n/b_n)/n+(n-b_n)\log b_n/(nb_n)\}(\log K)^2=\smallO(1),\,\text{if } \varsigma=1;\notag\\
&\Phi_{j,4,0}\Phi_{j,4,\varsigma}\{b_n^{-1}+n^{-\varsigma}+(n-b_n)b_n^{-\varsigma+1}/(nb_n)\}(\log K)^2=\smallO(1),\,\text{if } \varsigma<1;\notag\\
&\Phi_{j,4,0}\Phi_{j,4,\varsigma}\{b_n^{-1}+n^{-1}b_n^{-\varsigma+1}+(n-b_n)/(nb_n)\}(\log K)^2=\smallO(1),\,\text{if } \varsigma>1.
\end{align*}
\end{remark}

\begin{remark}[Consistency of the bootstrap statistics with pre-estimated residuals]\label{residuals.remark}
We note that the errors $\varepsilon_{j,t}$ in $Z_{jk}^{[B]}$ (defined in \eqref{mbb2}) are always unobservable. In practice, one can pre-estimate them using a conservative choice of penalty levels and loadings. It is needed to discuss the consistency rate of the bootstrap statistics with the generated errors. Let $\hat Z_{jk}^{[B]}=\frac{1}{\sqrt{n}}\sum_{i=1}^{l_n} e_{j,i}\sum_{l=(i-1)b_n+1}^{ib_n} \hat\vps_{j,l} X_{jk,l}$ denote the feasible bootstrap statistics. We need to show that $\sup_{r\in{\R}}\big| \P \big\{\max\limits_{1\leq k\leq K}(\hat Z_{jk}^{[B]}/\Psi_{jk})\leq r| X_{j,\cdot}, \vps_{j,\cdot}\big\} -\P\big\{\max\limits_{1\leq k\leq K} (\widetilde{Z}_{jk}/\Psi_{jk})\leq r\big\}\big|\to0$, as $n\to\infty$. For $w>0$, it can be decomposed as follows
\begin{eqnarray*}\label{residuals.eq}
&&\sup_{r\in\R}\big| \P \big(\max_{1\leq k\leq K}|\hat Z_{jk}^{[B]}/\Psi_{jk}|\geq r| X_{j,\cdot}, \vps_{j,\cdot}\big) -\P\big(\max_{1\leq k\leq K} |\widetilde{Z}_{jk}/\Psi_{jk}|\geq r\big)\big|\\
&\leq& \P\big(\max_{1\leq k\leq K}|\hat{Z}_{jk}^{[B]}/\Psi_{jk}- Z_{jk}^{[B]}/\Psi_{jk}|\geq w |X_{j,\cdot}, \vps_{j,\cdot}\big) + \sup_{r\in \R}\P\big(\big|\max_{1\leq k\leq K} |\widetilde{Z}_{jk}/\Psi_{jk}| -r\big|\leq w\big)\\
&& + \sup_{r\in\R}\big|\P\big(\max_{1\leq k\leq K}|Z_{jk}^{[B]}/\Psi_{jk}|\geq r|X_{j,\cdot}, \vps_{j,\cdot}\big)- \P\big(\max_{1\leq k\leq K} |\widetilde{Z}_{jk}/\Psi_{jk}|\geq r\big)\big|,
\end{eqnarray*}
where the second term $\sup_{r\in \R}\P\big(\big|\max\limits_{1\leq k\leq K} |\widetilde{Z}_{jk}/\Psi_{jk}| -r\big|\leq w\big)\lesssim w\sqrt{\log K}$ by the anti-concentration bound. Besides, in the proof of Theorem \ref{validboot}, we have demonstrated that under some conditions the third term tends to 0 as $n\to\infty$. Theorem \ref{residuals.theorem} below presents the rate of the first term when exponential moment conditions are satisfied.
\end{remark}

\begin{theorem}\label{residuals.theorem}
Assume $\Phi^X_{j,\psi_\nu}\defeq\underset{1\leq k\leq K}{\max}\|X^2_{jk,\cdot}\|_{\psi_\nu,0}=\underset{1\leq k\leq K}{\max}\underset{q\geq 2}{\sup}\, q^{-\nu}\|X^2_{jk,\cdot}\|_{q,0} < \infty$. Given the exact sparsity assumption \eqref{sparse}, suppose the LASSO estimator $\hat\beta_j$ satisfies $|\hat\beta_j-\beta^0_j|_1\lesssim_{\P}\sqrt{s_j}\rho_n$. Then, for $w\geq c\big[n^{-1/2}s_j\rho_n^2b_n(\Phi^X_{j,\psi_\nu})^2\{\log (K^2)\}^{1/\gamma}c + s_j\rho^2_nb_n\max\limits_{1\leq k\leq K}\|X_{jk,t}\|_4^4\big]^{1/2}$, with sufficiently large $c$, $\gamma=2/(2\nu+1)$, we have
$$\P\big(\max_{1\leq k\leq K}|\hat{Z}_{jk}^{[B]}/\Psi_{jk}- Z_{jk}^{[B]}/\Psi_{jk}|\geq w |X_{j,\cdot}, \vps_{j,\cdot}\big)\to0,\,\text{ as }n\to\infty,$$
\end{theorem}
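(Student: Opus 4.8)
The plan is to exploit the conditional Gaussianity of the bootstrap statistics. Write $\Delta_j\defeq\hat\beta_j-\beta_j^0$, so that $\hat\vps_{j,l}-\vps_{j,l}=-X_{j,l}^\top\Delta_j$ and hence
\beq
\hat Z_{jk}^{[B]}-Z_{jk}^{[B]}=-\frac{1}{\sqrt n}\sum_{i=1}^{l_n}e_{j,i}\,d_{jk,i},\qquad d_{jk,i}\defeq\sum_{l=(i-1)b_n+1}^{ib_n}(X_{j,l}^\top\Delta_j)X_{jk,l}.\notag
\eeq
Conditional on the data $(X_{j,\cdot},\vps_{j,\cdot})$ (and thus on $\Delta_j$), the multipliers $e_{j,i}$ are i.i.d.\ $\N(0,1)$, so the vector $\{(\hat Z_{jk}^{[B]}-Z_{jk}^{[B]})/\Psi_{jk}\}_{k}$ is centered jointly Gaussian with conditional variances $\tilde\sigma_{jk}^2\defeq (n\Psi_{jk}^2)^{-1}\sum_{i=1}^{l_n}d_{jk,i}^2$. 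By a Gaussian tail bound and a union bound over $k$,
\beq
\P\Big(\max_{1\leq k\leq K}|\hat Z_{jk}^{[B]}/\Psi_{jk}-Z_{jk}^{[B]}/\Psi_{jk}|\geq w\mid X_{j,\cdot},\vps_{j,\cdot}\Big)\leq 2K\exp\Big(-\frac{w^2}{2\max_{k}\tilde\sigma_{jk}^2}\Big),\notag
\eeq
so the whole task reduces to controlling $\max_k\tilde\sigma_{jk}^2$ on an event of probability $1-\smallO(1)$ and then checking that the stated threshold for $w$ makes the exponent dominate $\log K$.

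To bound $\max_k\tilde\sigma_{jk}^2$ I would peel off the block structure and the quadratic form in two Cauchy--Schwarz steps. Within each block, $d_{jk,i}^2\leq b_n\sum_{l}(X_{j,l}^\top\Delta_j)^2X_{jk,l}^2$, whence $\sum_i d_{jk,i}^2\leq b_n\sum_{l=1}^n(X_{j,l}^\top\Delta_j)^2X_{jk,l}^2$. A weighted Cauchy--Schwarz inequality gives $(X_{j,l}^\top\Delta_j)^2\leq|\Delta_j|_1\sum_{m}|\Delta_{jm}|X_{jm,l}^2$, and summing over $l$ and maximizing over the weights yields
\beq
\max_{1\leq k\leq K}\tilde\sigma_{jk}^2\leq\frac{b_n\,|\Delta_j|_1^2}{\min_{k}\Psi_{jk}^2}\,\max_{1\leq m,k\leq K}\En X_{jm,t}^2X_{jk,t}^2.\notag
\eeq
The point of this reduction is that it turns the object into a maximum over only $K^2$ pairs of the empirical cross second moments of the \emph{squared} regressors, which matches exactly the dependence-adjusted norm $\Phi^X_{j,\psi_\nu}$ recorded in the statement (a norm of $X_{jk,\cdot}^2$).

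I would then split each empirical average into its mean and fluctuation. The mean is handled by Cauchy--Schwarz, $\E(X_{jm,t}^2X_{jk,t}^2)\leq\|X_{jm,t}\|_4^2\|X_{jk,t}\|_4^2\leq\max_k\|X_{jk,t}\|_4^4$, producing the second term of the threshold. For the fluctuation $\max_{m,k}|\En\{X_{jm,t}^2X_{jk,t}^2-\E(X_{jm,t}^2X_{jk,t}^2)\}|$ I would invoke the exponential Nagaev-type inequality (Theorem~5.2 of \citet{ZW15gaussian}), applied to the stationary product process $X_{jm,\cdot}^2X_{jk,\cdot}^2$. Its dependence-adjusted norm is controlled by the product rule $\|X_{jm,\cdot}^2X_{jk,\cdot}^2\|_{q,0}\leq\|X_{jm,\cdot}^2\|_{2q,0}\|X_{jk,\cdot}^2\|_{2q,0}\lesssim(\Phi^X_{j,\psi_\nu})^2$, where the squaring of the exponent is what turns the sub-exponential parameter $\nu$ into $\gamma=2/(2\nu+1)$. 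Taking the union bound over the $K^2$ indices, the fluctuation is $\lesssim n^{-1/2}(\Phi^X_{j,\psi_\nu})^2\{\log(K^2)\}^{1/\gamma}$ with probability $1-\smallO(1)$. Combining the two pieces with $|\Delta_j|_1\lesssim_{\P}\sqrt{s_j}\rho_n$ and $\min_k\Psi_{jk}^2\geq c$ gives $\max_k\tilde\sigma_{jk}^2\lesssim_{\P}b_ns_j\rho_n^2\{\max_k\|X_{jk,t}\|_4^4+n^{-1/2}(\Phi^X_{j,\psi_\nu})^2\{\log(K^2)\}^{1/\gamma}\}$, i.e.\ a constant multiple of the bracket appearing in the lower bound for $w$.

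Finally, substituting this bound into the conditional union bound of the first step, the choice $w\geq c[\cdots]^{1/2}$ from the statement forces $w^2/(2\max_k\tilde\sigma_{jk}^2)$ to exceed $\log K$ by a diverging margin (the factor $\{\log(K^2)\}^{1/\gamma}$ inside the bracket, together with the large constant $c$, supplies the extra logarithm needed to defeat the $K$ in front), so that $2K\exp\{-w^2/(2\max_k\tilde\sigma_{jk}^2)\}\to0$. I expect the main obstacle to be the fluctuation step: one must verify that the squared-regressor products $X_{jm,\cdot}^2X_{jk,\cdot}^2$ retain a finite dependence-adjusted (sub-Weibull) norm under the product rule, track the resulting change of the tail exponent to $\gamma=2/(2\nu+1)$, and control the $K^2$-dimensional maximal deviation uniformly; the careful bookkeeping of the logarithmic factors so that the stated $w$ indeed dominates the Gaussian maximal-inequality threshold $\{\max_k\tilde\sigma_{jk}^2\log K\}^{1/2}$ is the delicate part.
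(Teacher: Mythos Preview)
Your reduction to the conditional variance $\max_k\tilde\sigma_{jk}^2$ and your bound for it are correct and essentially parallel the paper's argument: the paper also pulls out $|\Delta_j|_1$ (via H\"older rather than your weighted Cauchy--Schwarz), applies Jensen within blocks to reach $b_n\sum_{l}X_{jk',l}^2X_{jk,l}^2$, and then controls the centered fluctuation of $T_{j,kk'}\defeq\sum_i(\sum_l X_{jk,l}X_{jk',l})^2$ by a martingale-projection argument (Rio's inequality together with Lemma~\ref{buck}) that plays the same role as your exponential Nagaev step and delivers the same $n^{-1/2}(\Phi^X_{j,\psi_\nu})^2\{\log(K^2)\}^{1/\gamma}$ rate for the deviation.

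The gap is in your final step. The Gaussian tail plus union bound yields $2K\exp\{-w^2/(2\max_k\tilde\sigma_{jk}^2)\}$, and since the \emph{same} bracket appears both in your upper bound for $\max_k\tilde\sigma_{jk}^2$ and in the stated lower bound for $w^2$, the ratio $w^2/\max_k\tilde\sigma_{jk}^2$ is only of order $c^2$, a fixed constant. The factor $\{\log(K^2)\}^{1/\gamma}$ inside the bracket does \emph{not} supply an extra logarithm: it sits in numerator and denominator alike and cancels. Hence $2K\exp(-c^2/C)$ does not tend to zero as $n\to\infty$ unless $c$ grows like $\sqrt{\log K}$, contradicting ``sufficiently large (fixed) $c$''. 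The paper avoids this by applying Chebyshev's (Markov's) inequality directly to the conditional Gaussian variable instead of a union bound, obtaining
\[
I_n\;\lesssim_{\P}\;\frac{\max_{k,k'}\sum_{i}\sigma_{i,kk'}}{u^2}\;\lesssim\;\frac{[\text{bracket}]}{u^2}\;\lesssim\;\frac{1}{c^2},
\]
which is small for large $c$ without any union-bound penalty in $K$. If you replace your Gaussian maximal inequality with Chebyshev at that point, the remainder of your argument goes through and matches the paper's threshold for $w$.
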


\begin{proof}[Proof of Theorem \ref{residuals.theorem}]
Observe that
\begin{eqnarray*}
I_n&:=&\P\big(\max_{1\leq k\leq K}|\hat{Z}_{jk}^{[B]}/\Psi_{jk}- Z_{jk}^{[B]}/\Psi_{jk}|\geq w |X_{j,\cdot}, \vps_{j,\cdot}\big)\\
&\leq&\P\big(\max_{1\leq k\leq K}|\hat{Z}_{jk}^{[B]}- Z_{jk}^{[B]}|\geq u |X_{j,\cdot}, \vps_{j,\cdot}\big)\qquad(u:=w\min_{1\leq k\leq K}\Psi_{jk})\\
&=&\P\Big(n^{-1/2} \max_{1\leq k\leq K}\Big|\sum^{l_n}_{i=1} e_{j,i} \sum^{ib_n}_{l= (i-1)b_n+1} (\hat{\vps}_{j,l} - \vps_{j,l})X_{jk,l} \Big|\geq u | X_{j,\cdot}, \vps_{j,\cdot}\Big)\\
&=&\P\Big(n^{-1/2} \max_{1\leq k\leq K}\Big|\sum^{l_n}_{i=1} e_{j,i} \sum^{ib_n}_{l= (i-1)b_n+1}X_{j,l}^\top (\hat\beta_j - \beta^0_j)X_{jk,l} \Big|\geq u | X_{j,\cdot}, \vps_{j,\cdot}\Big)\\
&=&\P\Big(n^{-1/2} |\hat\beta_j - \beta^0_j|_1\max_{1\leq k,k'\leq K}\Big|\sum^{l_n}_{i=1} e_{j,i} \sum^{ib_n}_{l= (i-1)b_n+1}X_{j,l}X_{jk,l} \Big|\geq u | X_{j,\cdot}, \vps_{j,\cdot}\Big).
\end{eqnarray*}
Let $\sigma_{i,kk'}( X_{j,\cdot} ,\vps_{j,\cdot})= n^{-1}s_j\rho^2_n(\sum^{ib_n}_{l = (i-1)b_n+1} X_{jk',l} X_{jk,l})^2$. Applying the Markov inequality yields $I_n\lesssim\underset{1\leq k,k'\leq K}{\max}\sum^{l_n}_{i=1}\sigma_{i,kk'}( X_{j,\cdot} ,\vps_{j,\cdot})/u^2$.

Next, we define $T_{j,kk'}\defeq\sum^{l_n}_{i=1}(\sum^{ib_n}_{l = (i-1)b_n+1}X_{jk,l}X_{jk',l})^2$ and let $\E_0(T_{j,kk'})=T_{j,kk'}-\E(T_{j,kk'})$. Thus we have $\E_0(T_{j,kk'})=\sum_{v=-\infty}^{l_nb_n}\mP_v(T_{j,kk'})$, where the operator $\mP_v(\cdot) \defeq \E(\cdot|\mathcal{F}_v) -\E(\cdot|\mathcal{F}_{v-1})$ produces martingale difference sequences. In addition, by using the Jensen inequality we have $(\sum^{ib_n}_{l = (i-1)b_n+1} X_{jk',l} X_{jk,l})^2\leq b_n\sum^{ib_n}_{l = (i-1)b_n+1}X_{jk',l}^2 X_{jk,l}^2$. Therefore, by applying Theorem 2.1 of \citet{rio2009moment} and Lemma \ref{buck}, it follows that
\begin{align*}
\|{\E}_0(T_{j,kk'})\|_{q/2}^2&\leq(q/2-1)\sum_{v=-\infty}^{l_nb_n}\|\mP_v(T_{j,kk'})\|_{q/2}^2\notag\\
&\leq (q/2-1)^{3/2}b_n^2\sum_{v=-\infty}^{l_nb_n}\sum^{l_n}_{i=1}\Big\|\mP_v\Big(\sum^{ib_n}_{l = (i-1)b_n+1}X_{jk',l}^2 X_{jk,l}^2\Big)\Big\|_{q/2}^2\notag\\
&\leq (q/2-1)^{2}b_n^2\sum_{v=-\infty}^{l_nb_n}\sum^{l_n}_{i=1}\sum^{ib_n}_{l = (i-1)b_n+1}\|\mP_v(X_{jk',l}^2 X_{jk,l}^2)\|_{q/2}^2\notag\\
&\leq (q-2)^{2}l_nb_n^3(2q)^{2\nu}\|X_{jk',l}^2\|_{\psi_\nu,0}^2 \|X_{jk,l}^2\|_{\psi_\nu,0}^2.
\end{align*}
Following the similar argument as in the proof of Theorem 3 in \citet{wu2016performance}, we have
$$\E[\exp(\tau|{\E}_0(T_{j,kk'})/(l_n^{1/2}b_n^{3/2})|^\gamma]\leq1+C_\gamma(1-\tau/\tau_0)^{-1/2}\tau/\tau_0,$$
with $\tau_0=(2e\gamma\|X_{jk',l}^2\|_{\psi_\nu,0}^\gamma \|X_{jk,l}^2\|_{\psi_\nu,0}^\gamma)^{-1}$ and $\gamma=2/(2\nu+1)$. Finally, by letting $\tau=\tau_0/2$, and applying the Markov inequality and the Bonferroni inequality, we obtain
$$\P\big(\max_{1\leq k,k'\leq K}|{\E}_0(T_{j,kk'})|\geq x\big)\lesssim K^2\exp\bigg(-\frac{x^\gamma}{4e\gamma\{l_n^{1/2}b_n^{3/2}(\Phi^X_{j,\psi_\nu})^2\}^\gamma}\bigg) .$$
It follows that $I_n\lesssim_{\P} (n^{-1}s_j\rho_n^2x+ s_j\rho^2_nb_n\max\limits_{1\leq k\leq K}\|X_{jk,t}\|_4^4)/u^2$ (the $\lesssim_{\P}$ only depends on $q$ and $\gamma$), for $x\geq cl_n^{1/2}b_n^{3/2}(\Phi^X_{j,\psi_\nu})^2\{\log (K^2)\}^{1/\gamma}$ with sufficiently large $c$, and thus the desired conclusion holds.
\end{proof}

\subsection{Proofs of Joint Equation Estimation}

\begin{proof}[Proof of Theorem \ref{validboot.joint}] 
Analogue to the proof of Theorem \ref{validboot}, the conclusions are implied by
\beq
\P\bigg(n { \delta_1}\geq \big(\underset{1\leq k_1,k_2\leq K, 1\leq j_1,j_2\leq J}{\min}\Psi_{j_1k_1}\Psi_{j_2k_2}\big)^{-1}x\bigg) \lesssim \frac{JK F_{\varsigma}\Gamma_{2q,\varsigma}^{q}}{x^{q/2}} + (JK)^2 \exp\bigg(-\frac{C x^2}{n b_n \Phi^4_{8,\varsigma}}\bigg),\notag
\eeq
for $x\geq  n^{1/2} b_n^{1/2}\Phi_{2q,\varsigma}^2$ and all large $n$, where
\beq
{\delta_1}\defeq \max_{1\leq k_1,k_2\leq K, 1\leq j_1,j_2\leq J} \bigg|\frac{\sum_{i=1}^{l_n}S_{j_1k_1,i}S_{j_2k_2,i}}{\Psi_{j_1k_1}\Psi_{j_2k_2}} - \frac{l_n\E (S_{j_1k_1,i}S_{j_2k_2,i})}{\Psi_{j_1k_1}\Psi_{j_2k_2}}\bigg|.\notag
\eeq
In particular, {when $x\geq\max\big\{{n^{1/2} b_n^{1/2}\Phi_{2q,\varsigma}^2, cn^{1/2} b_n^{1/2} \{\log (JK)\}^{1/2}\Phi_{8,\varsigma}^2}, c(JK)^{2/q} F^{2/q}_{\varsigma} \Gamma^2_{2q, \varsigma}\big\}$, where $c$ is sufficiently large, 
the tail probability tends to 0, as $n\to\infty$.}

{By similar proof to that of Theorem \ref{validboot}, it follows that $\tilde\rho_n\to0$ as $n\to\infty$, given $x=\smallO[n\{\log(KJ)\}^{-2}]$ and $\Phi_{4,0}\Phi_{4,\varsigma} v(b_n) = \smallO\{(\log KJ)^{-2}\}$, 
which imply the following conditions on $b_n$:
	\begin{align}\label{ratebn.joint}
	&b_n=\smallO[n\{\log(KJ)\}^{-4}\Phi^{-4}_{2q,\varsigma}\wedge n\{\log(KJ)\}^{-5}\Phi_{8,\varsigma}^{-4}],\, F_\varsigma=\smallO[n^{q/2}\{\log(KJ)\}^{-q}(KJ)^{-1}\Gamma_{2q,\varsigma}^{-q}].\notag\\
	&\Phi_{4,0}\Phi_{4,\varsigma}\{b_n^{-1}+\log(n/b_n)/n+(n-b_n)\log b_n/(nb_n)\}\{\log (KJ)\}^2=\smallO(1),\,\text{if } \varsigma=1;\notag\\
	&\Phi_{4,0}\Phi_{4,\varsigma}\{b_n^{-1}+n^{-\varsigma}+(n-b_n)b_n^{-\varsigma+1}/(nb_n)\}\{\log (KJ)\}^2=\smallO(1),\,\text{if } \varsigma<1;\notag\\
	&\Phi_{4,0}\Phi_{4,\varsigma}\{b_n^{-1}+n^{-1}b_n^{-\varsigma+1}+(n-b_n)/(nb_n)\}\{\log (KJ)\}^2=\smallO(1),\,\text{if } \varsigma>1.
	\end{align}
Recall that $F_{\varsigma} = n$, for $\varsigma >1-2/q$; $F_{\varsigma} = l_nb_n^{q/2-\varsigma q/2}$, for $1/2-2/q<\varsigma<1-2/q$; $F_{\varsigma} = l_n^{q/4-\varsigma q/2}b_n^{q/2-\varsigma q/2} $, for $\varsigma<1/2-2/q$.}

The rest of the proof is similar to that of Theorem \ref{validboot} and thus is omitted.

\end{proof}


{
	\begin{proof}[Proof of Theorem \ref{post}]
		For any $\delta,\tilde{\delta}\in\R^K$ in $\mathcal{G}_{\widetilde{T}_j}$, we have
		\begin{align*}
		\bigg|{\E}_n \bigg\{\vps_{j,t}\bigg(\frac{X_{j,t}^{\top}\delta}{|\delta|_{j,pr}}-\frac{X_{j,t}^{\top}\tilde{\delta}}{|\tilde{\delta}|_{j,pr}}\bigg)\bigg\}\bigg| &= \bigg|{\E}_n \bigg[\vps_{j,t}\bigg\{\frac{X_{j,t}^{\top}(\delta-\tilde\delta)}{|\delta|_{j,pr}} + \frac{X_{j,t}^{\top}\tilde{\delta}}{|\delta|_{j,pr}} -\frac{X_{j,t}^{\top}\tilde{\delta}}{|\tilde{\delta}|_{j,pr}}\bigg\}\bigg]\bigg|\\
		&\leq \bigg|{\E}_n \bigg[\vps_{j,t}\bigg\{\frac{X_{j,t}^{\top}(\delta-\tilde\delta)}{|\delta|_{j,pr}}\bigg\}\bigg]\bigg| + \bigg|{\E}_n \bigg[\vps_{j,t}\bigg\{\frac{X_{j,t}^{\top}\tilde{\delta}}{|\delta|_{j,pr}} -\frac{X_{j,t}^{\top}\tilde{\delta}}{|\tilde{\delta}|_{j,pr}}\bigg\}\bigg]\bigg|\\
		&\leq ({\E}_n\vps_{j,t}^2)^{1/2}\bigg\{{\E}_n\bigg|\frac{X_{j,t}^{\top}(\delta-\tilde\delta)}{|\delta|_{j,pr}}\bigg|^2\bigg\}^{1/2} + ({\E}_n\vps_{j,t}^2)^{1/2}\bigg(\frac{|\tilde\delta|_{j,pr} - |\delta|_{j,pr}}{|\delta|_{j,pr}}\bigg)\\
		&\leq 2\sigma \mu_j(p)|\delta- \tilde{\delta}|_2.
		\end{align*}
		Then by following the proof of Lemma 5 (Step 2) in \cite{belloni2009least}, we have $\sup_{\mathcal Q}\mathcal{N}(\epsilon,\mathcal{G}_{\widetilde{T}_j}, \|\cdot\|_{\mathcal Q,1})\lesssim (6 \mu_j(p)\sigma/ \epsilon)^{s+p}$. And it follows that $|F_{j,p}|\lesssim {{K}\choose {p}}(6 \mu_j(p)\sigma/ \epsilon)^{s+p}$.
		
		Moreover, it is not hard to see that $\sup_{f \in \mathcal{F}_{j,p}} |G_n(f)|\leq 2\sqrt{n} \epsilon + \sup_{f \in F_{j,p}} |G_n(f)|$. Let $\psi=\max_{f\in F_{j,p}}\psi_f$ (assume $\psi$ is bounded by constant) and applying the Gaussian approximation results on the vector $G_n(f)/\psi_f$ (given \hyperref[A6]{(A6)}), we have
		\begin{align*}
		\P\big\{\sup_{f\in F_{j,p}} |G_n(f)| \geq \kappa_n/2\big\} &\leq \P\big\{\sup_{f\in F_{j,p}} |G_n(f)/\psi_f| \geq \kappa_n/(2\psi)\big\}\\
		&\leq 2|F_{j,p}|\{1-\Phi(\kappa_n/(2\psi))\} + d_n\\
		&\leq 2K^p(6 \mu_j(p)\sigma/ \epsilon)^{s+p}\exp\{-\kappa_n^2/(8\psi^2)\}\{\kappa_n/(2\psi)\}^{-1} + d_n,
		\end{align*}
		as ${{K}\choose {p}}\leq K^p$. Therefore, for $\kappa_n=\psi\sqrt{p\log K+(p+s)\{\log  (6 \mu_j(p) \sigma)+\log n/2\}}$ and {$\epsilon=c\sqrt{p\log K+(p+s)\log  (6 \mu_j(p) \sigma)}(4 \sqrt{n})^{-1}$, it follows that $\sup_{f\in \mathcal F_{j,p}} |G_n(f)| \lesssim_{\P} \kappa_n$ with sufficiently large $c$} (note that $d_n\to0$ with a polynomial rate as $n\to\infty$).
		
		The rest of the proof is a direct application of Theorem 5 of \cite{belloni2009least} by inserting the bound for $\lambda^0(1-\alpha)$ \eqref{lambdabound.eq} provided in Corollary \ref{lambdabound.joint}, and thus is omitted.	
	\end{proof}
}

\subsection{Plausibility of RE and RSE Conditions}
{Define the $s$-sparse sphere as $F_{\delta}= \{\delta: |\delta|_0\leq s, |\delta|_2 = 1\}$.
According to \cite{rudelson2012reconstruction}, the $\epsilon$-covering number of $F_{\delta}$ w.r.t. the Euclidean metric is $l = \exp(s\log(3eK/m\epsilon))$, with $m\geq1$. This is the cardinality of the $\epsilon$-cover set $\Pi_{\delta}$ of $F_{\delta}$. Moreover, for any point $\delta \in F_{\delta}$, let $\pi_{\delta}$ denote the closest point to $\delta$ within $\Pi_{\delta}$. Let $\breve X^{\pi(\delta)}_{j,t}\defeq \{\widetilde X_{j,t}^\top\pi(\delta)\}^2-n^{-1}\pi(\delta)^\top\E\{X_{j,t}X_{j,t}^\top\}\pi(\delta)$, where $\widetilde{X}_{j}\defeq n^{-1/2}X_{j}$ and $X_{j}(n\times K)$ is a matrix of $X_{j,t}$. Note that $\breve X^{\pi(\delta)}_{j,t}$ is a vector of the cardinality of $\Pi_{\delta}$.}

\begin{theorem}[Plausibility of RE and RSE]\label{ratere}
	For any $j=1,\ldots,J$, suppose the vectors $X_{j,t}$ of length $K$ satisfy
	\begin{equation*}
	0< \kappa \leq \min_{|\delta|_0\leq s, |\delta|_1 = 1}\delta^{\top}\E(X_{j,t}X_{j,t}^{\top})\delta \leq \max_{|\delta|_0\leq s, |\delta|_1 = 1}\delta^{\top}\E(X_{j,t}X_{j,t}^{\top})\delta \leq \psi < \infty,
	\end{equation*}
	where $\psi$ and $\kappa$ are positive constants. {Given $\breve\Phi_{2,\varsigma} \defeq \underset{\pi(\delta) \in \Pi_{\delta}}{\max}\|\breve{X}^{\pi(\delta)}_{j,\cdot}\|_{2,\varsigma}<\infty$, and for $q>2$, $\big\|\underset{\pi(\delta) \in \Pi_{\delta}}{\max}|\breve X^{\pi(\delta)}_{j,\cdot}|\big\|_{q,\varsigma}<\infty$,
	$$n^{-1/2}(\log l)^{1/2}\breve\Phi_{2,\varsigma}+n^{-1}r_\varsigma(\log l)^{3/2}\big\|\underset{\pi(\delta) \in \Pi_{\delta}}{\max}|\breve X^{\pi(\delta)}_{j,\cdot}|\big\|_{q,\varsigma}=\smallO(1),$$}
	where $r_{\varsigma} = n^{1/q}$ for $\varsigma > 1/2-1/q$ and $r_{\varsigma} = n^{1/2-\varsigma}$ for $\varsigma < 1/2-1/q$, then {the RE and RSE conditions hold with probability $1- \smallO(1)$, with $p+s_j \leq s$}.
\end{theorem}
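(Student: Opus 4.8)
The plan is to derive both conditions from a single uniform two-sided control of the empirical Gram form against its population counterpart over sparse directions. Write $|\delta|_{j,pr}^2=\delta^\top\E_n(X_{j,t}X_{j,t}^\top)\delta$ with $\E_n(X_{j,t}X_{j,t}^\top)=n^{-1}\sum_{t=1}^nX_{j,t}X_{j,t}^\top$, while the hypothesis bounds $\delta^\top\E(X_{j,t}X_{j,t}^\top)\delta$ between $\kappa$ and $\psi$ over $s$-sparse $\delta$. For \hyperref[A5]{(A5)}, the constraint $|\delta_{T_j^c}|_0\leq p$ together with $|T_j|=s_j$ forces $|\delta|_0\leq s_j+p\leq s$, so $\tilde\kappa_j(p)^2$ and $\phi_j(p)$ are exactly the extreme empirical sparse eigenvalues over the $s$-sparse sphere $F_\delta$. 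It therefore suffices to show that
\begin{equation*}
\Delta_n\defeq\sup_{\delta\in F_\delta}\big|\delta^\top(\E_n-\E)(X_{j,t}X_{j,t}^\top)\delta\big|=\smallO_{\P}(1),
\end{equation*}
since then $\tilde\kappa_j(p)^2\geq\kappa-\Delta_n$ and $\phi_j(p)\leq\psi+\Delta_n$ on the same event. For the RE condition \hyperref[A2]{(A2)} I would follow \cite{rudelson2012reconstruction}: the identical sparse control, combined with the population lower eigenvalue $\kappa$, transfers to the restricted cone $\{|\delta_{T_j^c}|_1\leq\bar c|\delta_{T_j}|_1\}$ through the usual decomposition of a cone vector into successive blocks of its $s_j$ largest remaining coordinates, yielding $\kappa_j(\bar c)>0$ with probability $1-\smallO(1)$.

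To bound $\Delta_n$ I would discretize $F_\delta$ by the $\epsilon$-net $\Pi_\delta$ of cardinality $l=\exp\{s\log(3eK/(m\epsilon))\}$ introduced before the theorem. A standard net estimate for quadratic forms supported on $s$-sparse sets gives, for $\epsilon$ small enough, $\Delta_n\lesssim\max_{\pi\in\Pi_\delta}|\pi^\top(\E_n-\E)(X_{j,t}X_{j,t}^\top)\pi|$, so the problem reduces to a maximum over the finite set $\Pi_\delta$. By construction $\pi^\top(\E_n-\E)(X_{j,t}X_{j,t}^\top)\pi=\sum_{t=1}^n\breve X^{\pi}_{j,t}$, so the target becomes $\max_{\pi\in\Pi_\delta}|\sum_{t=1}^n\breve X^{\pi}_{j,t}|$, i.e. a maximum of centered partial sums of the $l$-dimensional stationary process $(\breve X^{\pi}_{j,t})_{\pi\in\Pi_\delta}$. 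I would then invoke the dependent-data maximal inequality that underlies the rate $r_n$ in \hyperref[C2]{(C2)} --- the Nagaev/Gaussian-approximation bound expressed through the dependence adjusted norm --- applied to this finite family. This produces, with probability $1-\smallO(1)$, exactly the bound
\begin{equation*}
\max_{\pi\in\Pi_\delta}\Big|\sum_{t=1}^n\breve X^{\pi}_{j,t}\Big|\lesssim n^{-1/2}(\log l)^{1/2}\breve\Phi_{2,\varsigma}+n^{-1}r_\varsigma(\log l)^{3/2}\big\|\max_{\pi(\delta)\in\Pi_\delta}|\breve X^{\pi(\delta)}_{j,\cdot}|\big\|_{q,\varsigma},
\end{equation*}
with $r_\varsigma=n^{1/q}$ in the weak-dependence regime and $r_\varsigma=n^{1/2-\varsigma}$ in the strong-dependence regime; the theorem's hypothesis is precisely that this quantity is $\smallO(1)$.

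Combining the two steps yields $\Delta_n=\smallO_{\P}(1)$. Fixing $\epsilon$ (hence $m$ and $l$) so that the discretization factor and the maximal-inequality bound are each below $\kappa/2$, the empirical sparse eigenvalues are confined to $[\kappa/2,\,2\psi]$ with probability $1-\smallO(1)$, which is \hyperref[A5]{(A5)}; the cone-transfer step then delivers \hyperref[A2]{(A2)}. The main obstacle is the maximal inequality over the exponentially large net under temporal and cross-sectional dependence: one must control the dependence adjusted norm of the quadratic functionals $\{\widetilde X^\top_{j,t}\pi\}^2$ --- a second-order, hence heavier-tailed, transform of the underlying process --- uniformly in $\pi$, and correctly balance its polynomial and exponential tail contributions, which is exactly the role of the two-regime factor $r_\varsigma$ and of the aggregated norm $\|\max_{\pi}|\breve X^{\pi}_{j,\cdot}|\|_{q,\varsigma}$ in the hypothesis. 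A secondary technicality is reconciling the $\ell_1$-normalized population bound with the $\ell_2$-normalized net $F_\delta$, handled on $s$-sparse vectors via $|\delta|_2\leq|\delta|_1\leq\sqrt s\,|\delta|_2$, and making the Rudelson--Zhou cone argument quantitative enough that the RE constant stays bounded away from zero.
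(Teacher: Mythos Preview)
Your proposal is correct and follows essentially the same route as the paper: discretize the $s$-sparse sphere by the Rudelson--Zhou net $\Pi_\delta$, apply the Zhang--Wu high-dimensional tail bound (Lemma \ref{tail}) to $\max_{\pi\in\Pi_\delta}|\sum_t\breve X^{\pi}_{j,t}|$, and read off the two-sided sparse-eigenvalue control. The only cosmetic difference is that the paper runs the net argument on the norm $|\widetilde X_j\delta|_2$ via the triangle inequality (yielding the explicit $(1-\epsilon)^{-1}$ factors) rather than on the centered quadratic form $\delta^\top(\E_n-\E)(X_{j,t}X_{j,t}^\top)\delta$ directly; your version is equally valid once the usual bootstrap $\Delta_n\leq C\epsilon\Delta_n+\max_\pi|\cdots|$ is made explicit, and your cone-transfer remark for RE is in fact more explicit than the paper's closing line.
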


\begin{proof}[Proof of Theorem \ref{ratere}]
	
	Firstly, we need to check the implication of the population matrix. We know that $\delta^{\top}X_{j}^{\top}X_{j}\delta/n = |\widetilde{X}_{j}\delta|_2^2$.
	Then we have the following inequalities for any point $\delta\in F_{\delta}$,
	\begin{equation}\label{inq}
	-|\widetilde{X}_{j}\{\delta- \pi(\delta)\}|_2 + |\widetilde{X}_{j}\pi(\delta)|_2 \leq |\widetilde{X}_{j}\delta|_2\leq |\widetilde{X}_{j}\{\delta- \pi(\delta)\}|_2 + |\widetilde{X}_{j}\pi(\delta)|_2.
	\end{equation}
	We first check the right hand side of \eqref{inq}. Define $\|\widetilde{X}_{j}\|_{2, F_{\delta}} \defeq \underset{\delta\in F_{\delta}}{\sup}|\widetilde{X}_{j}\delta|_2$. As indicated in the proof of Theorem 16 in \citet{rudelson2012reconstruction}, we have $|\widetilde{X}_{j}\{\delta - \pi(\delta)\}|_2 \leq \epsilon \|\widetilde{X}_{j}\|_{2, F_{\delta}}$. To bound $\underset{\pi(\delta) \in \Pi_{\delta}}{\max} |\widetilde{X}_{j} \pi(\delta)|_2$, we invoke the tail probability inequality in Lemma \ref{tail}, which gives 
	\begin{align*}
	\P\big(\max_{\pi(\delta) \in \Pi_{\delta}}\big|\sum_{t=1}^n\breve X^{\pi(\delta)}_{j,t}\big|\geq x\big)&=\P\big[\max_{\pi(\delta) \in \Pi_{\delta}}\big||\widetilde{X}_{j}\pi(\delta)|_2^2 - \pi(\delta)^\top\E\{X_{j,t}X_{j,t}^\top\}\pi(\delta)\big|\geq x\big]\to 0, \text{as } n\to\infty,
	\end{align*}
	{if $x\geq c\sqrt{n \log l}\breve\Phi_{2,\varsigma} + cr_{\varsigma}(\log l)^{3/2}\big\|\underset{\pi(\delta) \in \Pi_{\delta}}{\max}|\breve X^{\pi(\delta)}_{j,\cdot}|\big\|_{q,\varsigma}$, for sufficiently large $c$.}
	
	Therefore, given $\kappa,\psi>0$, $\kappa- x_n \leq |\widetilde{X}_{j}\pi(\delta)|_2^2 \leq x_n +\psi$ holds with probability $1-\smallO(1)$ for all $\pi(\delta) \in \Pi_{\delta}$, where 
	{$x_n\defeq cn^{-1/2}(\log l)^{1/2}\breve\Phi_{2,\varsigma}+ cn^{-1}r_\varsigma(\log l)^{3/2}\big\|\underset{\pi(\delta) \in \Pi_{\delta}}{\max}|\breve X^{\pi(\delta)}_{j,\cdot}|\big\|_{q,\varsigma}=\smallO(1)$.}
	
	Hence, the right inequality in \eqref{inq} leads to $|\widetilde{X}_{j}\delta|_2 \leq \epsilon\|\widetilde{X}_{j}\|_{2, F_{\delta}}+ \sqrt{x_n}+ \sqrt{\psi}$. Taking the supremum over all $\delta\in F_\delta$ on both sides shows that $\underset{\delta \in F_{\delta}}{\sup} |\widetilde{X}_{j}\delta|_2 \leq (\sqrt{x_n}+ \sqrt{\psi})/ (1-\epsilon)$ with probability $1- \smallO(1)$. Moreover, by the left hand side of \eqref{inq}, we have $|\widetilde{X}_{j}\delta|_2\geq \sqrt{\kappa- x_n} - \epsilon(\sqrt{x_n}+ \sqrt{\psi})/ (1-\epsilon)$, with probability $1- \smallO(1)$.
	
	Collecting the results together, we have shown that for all $\delta\in F_\delta$,
	\beq
	\sqrt{\kappa- x_n} - \frac{\epsilon(\sqrt{x_n}+ \sqrt{\psi})}{(1-\epsilon)}\leq|\widetilde{X}_{j}\delta|_2\leq \frac{\sqrt{x_n}+ \sqrt{\psi}}{(1-\epsilon)},
	\eeq
	with probability $1- \smallO(1)$.

{Let $c^*(s)= \max_{\delta \in F_{\delta}} |\widetilde{X}_{j}\delta|_2$, $c_*(s)= \min_{\delta \in F_{\delta}} |\widetilde{X}_{j}\delta|_2$, with properly chosen $\epsilon$, $c^*(s), c_*(s)$ are bounded from above and below, and the desired results follow by the fact
$\kappa^2_{j}(p) \geq c_*(s_j+p)$, $\phi_j(p)\leq c^*(s_j +p)$, with $s_j + p \leq s$.}
\end{proof}

\subsection{Proofs of Simultaneous Inference}

\subsubsection{Some Useful Lemmas}

\begin{lemma}[\cite{burkholder1988sharp,rio2009moment}]\label{buck} Let $q>1$, $q'=\min(q,2)$. Let $M_n = \sum^n_
{t=1} \xi_t$; where $\xi_t \in \mathcal{L}^{q}$ (i.e., $\|\xi_t\|_q<\infty$) are martingale differences. Then
\begin{equation*}
\|M_n\|_q^{q'} \leq K^{q'}_q\sum_{t=1}^n\|\xi_t\|^{q'}_q \quad \text{where} \quad K_q = \max((q -1)^{-1},\sqrt{q-1}).
\end{equation*}
\end{lemma}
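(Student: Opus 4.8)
The plan is to prove the bound by a one-step telescoping argument, reducing the entire statement to a single scalar smoothness inequality for the map $x\mapsto\|x+\xi\|_q$ valid for every scalar $x$ and every mean-zero $\xi\in\mathcal L^q$. Write $M_t=\sum_{s\le t}\xi_s$ and let $(\mathcal F_t)$ be the natural filtration, so that $\E[\xi_t\mid\mathcal F_{t-1}]=0$. The two regimes $1<q\le2$ (where $q'=q$, $K_q=(q-1)^{-1}$) and $q\ge2$ (where $q'=2$, $K_q=\sqrt{q-1}$) are handled in parallel, the only difference being the exponent in the recursion. First I would establish the one-step bound $\|M_t\|_q^{q'}\le\|M_{t-1}\|_q^{q'}+K_q^{q'}\|\xi_t\|_q^{q'}$; summing over $t=1,\dots,n$ (starting from $M_0=0$ and using that $K_q\ge1$ in both regimes) then gives the claim at once.

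For the main case $q\ge2$ I would argue as follows. Conditioning on $\mathcal F_{t-1}$, the variable $M_{t-1}$ is a.s.\ constant while $\xi_t$ is mean-zero, so the key scalar inequality
$$\bigl(\E\,|x+\xi|^q\bigr)^{2/q}\le x^2+(q-1)\bigl(\E\,|\xi|^q\bigr)^{2/q},\qquad \E\,\xi=0,$$
applied pointwise in $\omega$ with $x=M_{t-1}(\omega)$ and the conditional law of $\xi_t$, yields
$$\bigl(\E[\,|M_t|^q\mid\mathcal F_{t-1}]\bigr)^{2/q}\le M_{t-1}^2+(q-1)\bigl(\E[\,|\xi_t|^q\mid\mathcal F_{t-1}]\bigr)^{2/q}.$$
Raising to the power $q/2$, taking expectations, and then applying Minkowski's inequality in $\mathcal L^{q/2}$ (legitimate since $q/2\ge1$) to separate the two summands, together with the identities $\|M_{t-1}^2\|_{q/2}=\|M_{t-1}\|_q^2$ and $\|(\E[|\xi_t|^q\mid\mathcal F_{t-1}])^{2/q}\|_{q/2}=\|\xi_t\|_q^2$, delivers exactly the recursion with $K_q^2=q-1$. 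The case $1<q\le2$ is even cleaner: there the scalar inequality reads $\E\,|x+\xi|^q\le|x|^q+(q-1)^{-q}\,\E\,|\xi|^q$, and conditioning followed by taking expectations telescopes directly in $q$-th powers, with no need for Minkowski.

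The main obstacle is the scalar smoothness inequality itself, namely securing the sharp constants $(q-1)^{-1}$ and $\sqrt{q-1}$. This is precisely the one-dimensional manifestation of the $q$- (respectively $2$-) uniform smoothness of the spaces $\mathcal L^q$, and its sharp constant is the genuinely hard, classical analytic fact underlying the whole statement; I would not reprove it but instead invoke it from \cite{burkholder1988sharp} and \cite{rio2009moment}, where it is obtained through a careful convexity and second-derivative analysis of $x\mapsto\E\,|x+\xi|^q$. Since Lemma~\ref{buck} is exactly the inequality proved in those works, the cleanest route within the paper is simply to cite these references and use the bound as a black box; the reduction sketched above merely records why the martingale statement follows from the scalar one.
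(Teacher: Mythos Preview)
The paper provides no proof of this lemma at all: it is stated with attribution to \cite{burkholder1988sharp,rio2009moment} and used as a black box in later arguments (e.g., in the proofs of Theorem~\ref{residuals.theorem} and Lemma~\ref{normality}). Your final remark---that the cleanest route is simply to cite the references---is therefore exactly what the paper does.

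Your sketch goes further than the paper and is essentially correct; the telescoping reduction to a one-step scalar smoothness inequality is precisely the architecture of Rio's proof. One small remark: the parenthetical ``using that $K_q\ge1$ in both regimes'' is unnecessary, since the recursion $\|M_t\|_q^{q'}\le\|M_{t-1}\|_q^{q'}+K_q^{q'}\|\xi_t\|_q^{q'}$ telescopes to the claimed bound regardless of whether $K_q\ge1$.
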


\begin{lemma}[Freedman's inequality]\label{freedman}
Let $\{\xi_{a,i}\}_{i=1}^n$ be a martingale difference sequence w.r.t. the filtration $\{\mathcal{F}_i\}_{i=1}^n$. Let $V_a = \sum^n_{i=1}\E(\xi^2_{a,i}| \mathcal{F}_{i-1})$ and $M_a = \sum^n_{i=1} \xi_{a,i} $. Then for all $x,u,v>0$, we have,
\begin{equation}
\P\big(\max_{a \in \mathcal{A}} |M_a|\geq x\big) \leq \sum^{n}_{i=1}\P\big(\max_{a\in \mathcal{A}} |\xi_{a,i}|\geq u\big)+ 2 \P\big(\max_{a \in \mathcal{A} } V_a \geq v \big)+ 2|\mathcal{A}| e^{-x^2/(2zu+ 2v)},
\end{equation}
where $\mathcal A$ is an index set with $|\mathcal A|<\infty$.
\end{lemma}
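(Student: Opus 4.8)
The plan is to establish this as a maximal, truncated version of Freedman's classical exponential inequality: I would combine a union bound over the finite index set $\mathcal{A}$ with the standard exponential-supermartingale (Chernoff) argument, while using stopping times to convert the two \emph{random} boundedness hypotheses — that no increment exceeds $u$ and that no $V_a$ exceeds $v$ — into almost-sure bounds, charging the failure of those events to the first two terms on the right. (Here I read the exponent $2zu$ in the statement as the intended $2xu$.) As a first reduction, it suffices to bound $\P(\max_{a\in\mathcal{A}} M_a \ge x)$; applying the identical estimate to the martingale differences $-\xi_{a,i}$ and summing the two handles $\max_a|M_a|$ and accounts for the leading factor $2$.

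Next I would set up the truncation. Writing $M_{a,k}=\sum_{i=1}^k\xi_{a,i}$ and $V_{a,k}=\sum_{i=1}^k\E(\xi_{a,i}^2|\mathcal{F}_{i-1})$ (the latter being predictable), define $\sigma_a=\tau_a\wedge\tau_a'\wedge n$ with $\tau_a=\inf\{i:|\xi_{a,i}|>u\}$ and $\tau_a'=\inf\{k:V_{a,k+1}>v\}$. Both are stopping times, since $\xi_{a,i}$ is $\mathcal{F}_i$-measurable and $V_{a,k+1}$ is $\mathcal{F}_k$-measurable. The stopped sum $M_{a,k\wedge\sigma_a}$ then has increments bounded by $u$ and, because $V$ is increasing, stopped predictable variation $V_{a,\sigma_a}\le v$ almost surely. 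On the good event $G=\{\max_{a,i}|\xi_{a,i}|\le u\}\cap\{\max_a V_a\le v\}$ one has $\sigma_a=n$, so $M_{a,\sigma_a}=M_a$ exactly.

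The core estimate is the single-index exponential bound. With $\psi(\lambda)=(e^{\lambda u}-1-\lambda u)/u^2$ and the conditional inequality $\E[e^{\lambda\xi_{a,i}}\,|\,\mathcal{F}_{i-1}]\le\exp\{\psi(\lambda)\,\E(\xi_{a,i}^2|\mathcal{F}_{i-1})\}$, valid once $|\xi_{a,i}|\le u$, one checks that $Z_{a,k}=\exp\{\lambda M_{a,k\wedge\sigma_a}-\psi(\lambda)V_{a,k\wedge\sigma_a}\}$ is a supermartingale, so $\E Z_{a,n}\le 1$. Since $V_{a,\sigma_a}\le v$ almost surely, this yields $\P(M_{a,\sigma_a}\ge x)\le e^{-\lambda x+\psi(\lambda)v}$, and the usual Freedman optimization over $\lambda$ gives $e^{-x^2/(2xu+2v)}$. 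Assembling the pieces: on $G$ we may replace $M_{a,\sigma_a}$ by $M_a$, so a union bound over $a$ gives $\P(G,\max_a M_a\ge x)\le|\mathcal{A}|\,e^{-x^2/(2xu+2v)}$, doubled for the two-sided version to produce the last term. The complement splits as $G^c=\{\max_{a,i}|\xi_{a,i}|>u\}\cup\{\max_a V_a>v\}$, and a further union over the time index bounds the first piece by $\sum_{i=1}^n\P(\max_a|\xi_{a,i}|\ge u)$ while the second contributes $\P(\max_a V_a\ge v)$; collecting the three contributions gives the claim.

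I expect the main obstacle to be the bookkeeping in the truncation step rather than any analytic difficulty: verifying that $\sigma_a$ is a genuine stopping time for the given filtration, that $Z_{a,k}$ is a true supermartingale under only the almost-sure bounds enjoyed by the \emph{stopped} process, and that $M_a$ coincides with $M_{a,\sigma_a}$ precisely on $G$, whose complement is what the first two terms are designed to absorb. The Chernoff optimization and the union bounds are then routine.
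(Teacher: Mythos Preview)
Your proposal is correct and follows the standard route to a maximal Freedman inequality: truncate via stopping times to force the almost-sure bounds $|\xi_{a,i}|\le u$ and $V_a\le v$, run the exponential-supermartingale argument on the stopped process to get the single-index tail bound, take a union over $a\in\mathcal A$, and charge the complement of the good event to the first two terms. You also correctly flag that the $z$ in the exponent is a typo for $x$.

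There is nothing to compare against: the paper does not prove this lemma. It is stated as an auxiliary tool with the one-line attribution ``Lemma~\ref{freedman} is a maximal form of Freedman's inequality \citet{freedman1975tail}'' and no further argument. Your write-up is exactly the derivation one would supply if asked to fill in that citation. One cosmetic point: your decomposition yields $\P(G^c)+2|\mathcal A|e^{-x^2/(2xu+2v)}$, i.e.\ no factor of $2$ on the $V_a$ term, which is slightly sharper than (and of course implies) the displayed bound; the extra $2$ in the statement presumably comes from the cruder two-sided split $\P(\max_a|M_a|\ge x)\le \P(\max_a M_a\ge x)+\P(\min_a M_a\le -x)$ applied before peeling off $G^c$, in which case the first sum should also carry a $2$. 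Either way the inequality as stated holds and your argument establishes it.
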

Lemma \ref{freedman} is a maximal form of Freedman's inequality \citet{freedman1975tail}.

\begin{lemma}[Theorem 6.2 of \cite{ZW15gaussian} Tail probabilities for high dimensional partial sums]\label{tail}
For a zero-mean $p$-dimensional random variable $X_t\in {\R}^p$, let $S_n = \sum^{n}_{t=1}X_t$ and assume that $\||X_\cdot|_{\infty}\|_{q,\varsigma}< \infty, $ where $q>2$ and $\varsigma\geq0$, and $\Phi_{2, \varsigma} = \underset{1\leq j \leq p}{\max} \|X_{j,\cdot}\|_{2, \varsigma}<\infty$. \\
i) If $\varsigma> 1/2 -1/q$, then for $x \gtrsim \sqrt{n \log p} \Phi_{2,\varsigma}+ n^{1/q}(\log p)^{3/2}\||X_\cdot|_{\infty}\|_{q,\varsigma}$,
$$\P(|S_{n}|_{\infty}\geq x )\leq \frac{C_{q,\varsigma}n (\log p)^{q/2} \||X_{\cdot}|_{\infty}\|^q_{q,\varsigma}}{x^q}+ C_{q,\varsigma}\exp\bigg(\frac{-C_{q,\varsigma}x^2}{n\Phi^2_{2,\varsigma}}\bigg).$$
ii) If $0<\varsigma< 1/2 -1/q$, then for $x \gtrsim \sqrt{n \log p} \Phi_{2,\varsigma}+ n^{1/2-\varsigma}(\log p)^{3/2}\||X_\cdot|_{\infty}\|_{q,\varsigma}$,
$$\P(|S_{n}|_{\infty}\geq x )\leq \frac{C_{q,\varsigma}n^{q/2-\varsigma q} (\log p)^{q/2} \||X_{\cdot}|_{\infty}\|^q_{q,\varsigma}}{x^q}+ C_{q,\varsigma}\exp\bigg(\frac{-C_{q,\varsigma}x^2}{n\Phi^2_{2,\varsigma}}\bigg).$$
\end{lemma}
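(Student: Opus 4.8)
The statement is exactly Theorem~6.2 of \citet{ZW15gaussian}, so the plan is to outline the argument underpinning it rather than to grind through the full optimisation; it combines a truncation with the two martingale tools just recorded, Lemmas~\ref{buck} and~\ref{freedman}, the dependence entering through the functional dependence measures of Definition~\ref{norm}. First I would pass to the martingale decomposition of the partial sum. Writing $\mathcal{P}_\ell(\cdot)\defeq\E(\cdot|\mathcal{F}_\ell)-\E(\cdot|\mathcal{F}_{\ell-1})$ and $D_\ell\defeq\sum_{t=\max(1,\ell)}^n\mathcal{P}_\ell X_t$, one has $S_n=\sum_{\ell\leq n}D_\ell$ with $\{D_\ell\}$ a vector-valued martingale difference sequence; coordinatewise, $\|\mathcal{P}_\ell X_{j,t}\|_q$ is dominated by the physical dependence measure $\delta_{q,j,t-\ell}$, so the second and $q$-th moments of $D_\ell$ are controlled by the tail sums $\Delta_{m,q}$ and hence by $\Phi_{2,\varsigma}$ and $\||X_{\cdot}|_\infty\|_{q,\varsigma}$.

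Next I would split each summand at a truncation level $\tau$ to be optimised, $X_t=X_t\,\IF(|X_t|_\infty\leq\tau)+X_t\,\IF(|X_t|_\infty>\tau)$, treating the two pieces separately. To the bounded part, re-centred and reduced to martingale differences via the same projection, I would apply the maximal Freedman inequality (Lemma~\ref{freedman}) over the coordinate index set $\mathcal{A}=\{1,\dots,p\}$, with jump bound $u\asymp\tau$ and predictable quadratic variation proxy $v\asymp n\Phi_{2,\varsigma}^2$; this produces the sub-Gaussian term $\exp\{-Cx^2/(n\Phi_{2,\varsigma}^2)\}$ together with lower-order contributions from the events $\{\max_a|\xi_{a,i}|\geq u\}$ and $\{\max_a V_a\geq v\}$. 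For the untruncated, heavy-tailed part I would bound the $q$-th moment of each coordinate's martingale sum by Burkholder's inequality (Lemma~\ref{buck}), then apply Markov and a union bound over the $p$ coordinates; this delivers the polynomial term, the logarithmic factors $(\log p)^{q/2}$ arising from passing to the maximum over coordinates.

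The split between the weak- and strong-dependence regimes ($\varsigma\gtrless1/2-1/q$) is then entirely a matter of bookkeeping on the tail sums $\sum_{t\geq m}\delta_{q,\cdot,t}$: when $\varsigma>1/2-1/q$ they decay fast enough that the effective sample-size factor in the polynomial numerator is $n$ and the admissible range begins at $x\gtrsim n^{1/q}(\log p)^{3/2}\||X_{\cdot}|_\infty\|_{q,\varsigma}$, whereas for $\varsigma<1/2-1/q$ the long-range accumulation inflates the factor to $n^{q/2-\varsigma q}$ and pushes the threshold to $n^{1/2-\varsigma}(\log p)^{3/2}\||X_{\cdot}|_\infty\|_{q,\varsigma}$. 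The delicate step, and the main obstacle, is the joint calibration of $\tau$ against the approximation scale so that the sub-Gaussian and polynomial contributions are simultaneously of the stated order while reproducing the exact $(\log p)^{3/2}$ and $n^{1/q}$ (respectively $n^{1/2-\varsigma}$) powers that demarcate the admissible $x$; since this calibration is carried out in full in \citet{ZW15gaussian}, I would ultimately invoke their Theorem~6.2 to conclude.
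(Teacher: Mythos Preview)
Your proposal is correct and matches the paper's treatment exactly: the paper states this lemma purely as a citation of Theorem~6.2 in \citet{ZW15gaussian} and gives no proof of its own, so invoking that reference (as you ultimately do) is precisely what is required. Your additional sketch of the martingale decomposition, truncation, and Freedman/Burkholder split is a faithful summary of the Zhang--Wu argument and goes beyond what the paper itself provides.
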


\begin{lemma}[Tail probabilities for high dimensional partial sums with strong tail assumptions] \label{lemma.exp}
For a zero-mean $p$-dimensional random variable $X_t\in{\R}^p$, let $S_n = \sum^{n}_{t=1}X_t$ and assume that $\Phi_{\psi_\nu,\varsigma}= \underset{1\leq j\leq p}{\max}\,\underset{q\geq 2}{\sup}\, q^{-\nu}\|X_{j,\cdot}\|_{q,\varsigma} < \infty $ for some $\nu\geq0$, and let $\gamma = 2/(1+ 2\nu)$. Then for all $x>0$, we have
$$\P(|S_n|_\infty\geq x) \lesssim p \exp\{-C_{\gamma}x^{\gamma}/(\sqrt{n}\Phi_{\psi_\nu,0})^{\gamma}\},$$
where $C_\gamma$ is a constant only depends on $\gamma$.
\end{lemma}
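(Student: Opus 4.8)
The plan is to reduce the $p$-dimensional maximal tail bound to a one-dimensional stretched-exponential tail bound and then pay the (cheap) price of a union bound over the $p$ coordinates. Writing $S_{n,j}=\sum_{t=1}^n X_{j,t}$ for the $j$th coordinate of $S_n$, Bonferroni's inequality gives
$$
\P(|S_n|_\infty\geq x)\leq\sum_{j=1}^p\P(|S_{n,j}|\geq x)\leq p\max_{1\leq j\leq p}\P(|S_{n,j}|\geq x),
$$
so it suffices to show that each $\P(|S_{n,j}|\geq x)\lesssim\exp\{-C_\gamma x^\gamma/(\sqrt n\,\Phi_{\psi_\nu,0})^\gamma\}$ with a constant $C_\gamma$ not depending on $j$.

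For the univariate bound I would first control the $q$th moment of $S_{n,j}$ with a sharp dependence on $q$. Using the projection operators $\mathcal{P}_v(\cdot)=\E(\cdot\mid\mathcal{F}_v)-\E(\cdot\mid\mathcal{F}_{v-1})$ attached to the representation \hyperref[A1]{(A1)}, decompose $S_{n,j}=\sum_{v}\mathcal{P}_v(S_{n,j})$ into martingale differences indexed by $v$, where $\|\mathcal{P}_v(S_{n,j})\|_q\leq\sum_{t\geq v}\delta_{q,j,t-v}\leq\|X_{j,\cdot}\|_{q,0}$ by the definition of the functional dependence measure. Applying the Burkholder--Rio inequality (Lemma \ref{buck}) with $q'=2$ and $K_q=\sqrt{q-1}$, and noting that the sum $\sum_v\|\mathcal{P}_v(S_{n,j})\|_q^2$ is of order $n\,\|X_{j,\cdot}\|_{q,0}^2$ (the indices $v\leq0$ contributing only a summable tail), yields
$$
\|S_{n,j}\|_q\lesssim\sqrt{qn}\,\|X_{j,\cdot}\|_{q,0}.
$$
The sub-exponential/sub-Gaussian assumption $\|X_{j,\cdot}\|_{q,0}\leq q^\nu\Phi_{\psi_\nu,0}$ (finite because $\Phi_{\psi_\nu,\varsigma}<\infty$ forces $\Phi_{\psi_\nu,0}<\infty$) then gives $\|S_{n,j}\|_q\lesssim\sqrt n\,\Phi_{\psi_\nu,0}\,q^{1/2+\nu}=\sqrt n\,\Phi_{\psi_\nu,0}\,q^{1/\gamma}$, which already pins down the exponent $\gamma=2/(1+2\nu)$.

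To pass from moments to tails, set $A\defeq\sqrt n\,\Phi_{\psi_\nu,0}$ and invoke the standard equivalence: since $\|S_{n,j}\|_q\lesssim A\,q^{1/\gamma}$ for all $q\geq2$, a Taylor expansion of $\E\exp\{\tau(|S_{n,j}|/A)^\gamma\}$ combined with $\E|S_{n,j}|^{\gamma k}\lesssim A^{\gamma k}(\gamma k)^k$ and Stirling's formula shows this exponential moment is finite for every $\tau$ below a threshold $\tau_0\asymp(e\gamma)^{-1}$; this is exactly the computation carried out in the proof of Theorem \ref{residuals.theorem}. Markov's inequality applied to the exponential moment then delivers $\P(|S_{n,j}|\geq x)\lesssim\exp\{-C_\gamma(x/A)^\gamma\}$, and inserting this into the union bound above completes the proof. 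Alternatively, the univariate step can be quoted directly from Theorem 3 of \cite{wu2016performance} or Theorem 5.2 of \cite{ZW15gaussian}.

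The main obstacle is the univariate moment inequality: the final exponent $\gamma$ is entirely determined by the $q$-growth of $\|S_{n,j}\|_q$, so one must extract the sharp $\sqrt q$ factor from the Burkholder--Rio constant rather than a cruder bound, and one must check that the implied constants are uniform in $j$ so that dimensionality enters only through the benign factor $p$. Once the clean bound $\|S_{n,j}\|_q\lesssim\sqrt n\,\Phi_{\psi_\nu,0}\,q^{1/\gamma}$ is in hand, the moment-to-tail conversion and the union bound are routine.
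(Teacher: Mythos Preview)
Your proposal is correct and follows essentially the same route as the paper: reduce to the coordinatewise bound via the Bonferroni inequality, and obtain the univariate stretched-exponential tail from Theorem 3 of \cite{wu2016performance}. The paper simply cites that result, whereas you additionally sketch its proof (martingale projection decomposition, Burkholder--Rio to get $\|S_{n,j}\|_q\lesssim\sqrt{qn}\,\|X_{j,\cdot}\|_{q,0}$, then moment-to-tail conversion), but the underlying argument is the same and you even note the citation explicitly.
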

{Lemma \ref{lemma.exp} follows from Theorem 3 of \citet{wu2016performance} and applying the Bonferroni inequality. }

\begin{lemma}[Theorem 1 of \cite{MR2988107}]\label{lem:cltwbw}
Denote $Y_t=f(\F_t),$ where $f$ is some measurable function. Let $S_n=\sum_{t=1}^n Y_t,$ and $\delta_{\varsigma,t}=\|Y_t-Y_{t}^\ast\|_\varsigma$.	If $\E(Y_i)=0,$ $\sum_{t=0}^\infty\delta_{\varsigma,t} <\infty,$ some $\varsigma\geq 2$, and $\sigma_n^2\defeq\E(S_n^2)\rightarrow\infty,$
then
$$\sigma_n^{-1} S_n \stackrel{\mathcal{L}}{\rightarrow} \N(0,1).$$
\end{lemma}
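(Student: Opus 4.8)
The plan is to prove the result by the classical martingale approximation method adapted to the functional dependence framework. Write $\mathcal{P}_k(\cdot)\defeq\E(\cdot\mid\mathcal{F}_k)-\E(\cdot\mid\mathcal{F}_{k-1})$ for the one-step projection operators and define the stationary sequence
$$
D_k\defeq\sum_{i=0}^{\infty}\mathcal{P}_k(Y_{k+i}).
$$
By the coupling characterisation of the dependence measure one has $\|\mathcal{P}_0(Y_i)\|_\varsigma\le\delta_{\varsigma,i}$, so that $\|D_0\|_\varsigma\le\sum_{i=0}^\infty\delta_{\varsigma,i}<\infty$ by hypothesis; hence $D_0\in\mathcal{L}^\varsigma\subseteq\mathcal{L}^2$ since $\varsigma\ge2$. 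Because $\mathcal{P}_k$ annihilates everything measurable with respect to $\mathcal{F}_{k-1}$, the array $\{D_k,\mathcal{F}_k\}$ is a stationary, ergodic martingale difference sequence, ergodicity being inherited from the i.i.d.\ innovations generating $\mathcal{F}_k$. Set $M_n\defeq\sum_{k=1}^n D_k$.

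The decisive step is the $L^2$ martingale approximation $\|S_n-M_n\|_2=\smallO(\sqrt n)$. Writing $S_n-M_n=\sum_{t=1}^n(Y_t-D_t)$ and re-summing by projection index, the remainder decomposes into mutually orthogonal contributions $\sum_t\mathcal{P}_{t-\ell}(\cdot)$, whose $L^2$ norms I would control using Lemma \ref{buck} (the Burkholder--Rio inequality) together with the tail decay $\sum_{i\ge m}\delta_{\varsigma,i}\to0$. I expect this to be the main obstacle: the hypothesis only guarantees $\sum_i\delta_{\varsigma,i}<\infty$ and not $\sum_i i\,\delta_{\varsigma,i}<\infty$, so the Gordin coboundary representation, which would require the tail conditional sums $\sum_{i\ge1}\E(Y_{t+i}\mid\mathcal{F}_t)$ to lie in $\mathcal{L}^2$, is unavailable. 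One must therefore work with $D_k$ directly and show that the accumulated error is genuinely $\smallO(\sqrt n)$ rather than merely $\bigO(\sqrt n)$.

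With the approximation in hand, apply the martingale central limit theorem to $M_n$. Stationarity and ergodicity yield, via Birkhoff's theorem, the conditional variance convergence $n^{-1}\sum_{k=1}^n\E(D_k^2\mid\mathcal{F}_{k-1})\to\sigma^2\defeq\|D_0\|_2^2$ almost surely, while $\varsigma\ge2$ and stationarity supply the conditional Lindeberg condition by uniform integrability of $\{D_k^2\}$; hence $n^{-1/2}M_n\stackrel{\mathcal{L}}{\rightarrow}\N(0,\sigma^2)$.

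Finally I would reconcile the intrinsic normalisation $\sigma_n$ with $\sqrt n$. Expanding $\sigma_n^2=\E(S_n^2)=\E(M_n^2)+2\E(M_nR_n)+\E(R_n^2)$ with $R_n\defeq S_n-M_n$, and using $\E(M_n^2)=n\sigma^2$ together with $\|R_n\|_2=\smallO(\sqrt n)$ and Cauchy--Schwarz, gives $\sigma_n^2=n\sigma^2+\smallO(n)$. When $\sigma^2>0$ this forces $\sigma_n\sim\sqrt n\,\sigma$, so $\sigma_n^{-1}R_n\to0$ in $L^2$ and Slutsky's lemma transfers the limit from $\sigma_n^{-1}M_n$ to $\sigma_n^{-1}S_n\stackrel{\mathcal{L}}{\rightarrow}\N(0,1)$. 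The degenerate case $\sigma^2=0$ (that is, $D_0=0$) must be handled separately, since then $M_n\equiv0$ and the limit is carried entirely by the remainder $R_n$; here one invokes the standing hypothesis $\sigma_n^2\to\infty$ together with a finer decomposition of $R_n$ to recover asymptotic normality, and this is the delicate boundary subcase of the argument.
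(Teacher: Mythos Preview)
The paper does not supply its own proof of this lemma: it is stated as Theorem~1 of \cite{MR2988107} and invoked as a black box (for instance in the proof of Lemma~\ref{normality}). There is therefore nothing in the paper to compare your argument against directly.

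That said, your outline is the standard route to such results in the functional dependence framework and is essentially the proof one finds in the cited source and related work of Wu. The projection operators $\mathcal{P}_k$, the bound $\|\mathcal{P}_0(Y_i)\|_\varsigma\le\delta_{\varsigma,i}$, the construction of the stationary martingale difference $D_k=\sum_{i\ge0}\mathcal{P}_k(Y_{k+i})$, and the $L^2$ martingale approximation $\|S_n-M_n\|_2=\smallO(\sqrt{n})$ are exactly the ingredients used there; the paper itself employs the same projection calculus elsewhere (e.g.\ in the proofs of Theorem~\ref{validboot} and Lemma~\ref{normality}). Your identification of the degenerate case $\sigma^2=\|D_0\|_2^2=0$ as the delicate boundary subcase is also accurate: in that situation the standard argument does not immediately yield the CLT, and the cited reference handles it by a separate (and more involved) analysis of the remainder. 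So your proposal is correct in spirit and matches the literature, even though the present paper simply imports the result rather than re-proving it.
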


\begin{lemma}\label{ratedeltan}
Under the same conditions as in Theorem \ref{bahadur}, let $\tilde{\beta}_{jk}$ be any estimator such that $|\tilde{\beta}_{jk}- \beta^0_{jk}|\leq C \rho_n$ with probability $1-\smallO(1)$. Then we have
\beq\label{bound.D}
n^{-1}\max_{(j,k)\in G}\Delta_n \lesssim \smallO(n^{-1/2}g_n^{-1}),
\eeq
holds with probability $1-\smallO(1)$, where $\Delta_n\defeq n^{1/2}G_n\{\psi_{jk}(Z_{j,t},\tilde\beta_{jk},\hat h_{jk})-\psi_{jk}(Z_{j,t},\beta^0_{jk},h^0_{jk})\}$.
\end{lemma}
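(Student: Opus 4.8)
The plan is to recognize \eqref{bound.D} as a stochastic equicontinuity statement for the centered empirical process and to bound it by localizing and then invoking a martingale-based maximal inequality in place of the usual symmetrization. Write $f^{\beta,h}_{jk}(z)\defeq\psi_{jk}\{z,\beta,h(X_{j(-k),t})\}$ and $f^0_{jk}\defeq f^{\beta^0_{jk},h^0_{jk}}_{jk}$, so that $n^{-1}\Delta_n=n^{-1/2}G_n(f^{\tilde\beta_{jk},\hat h_{jk}}_{jk}-f^0_{jk})$ and the claim is equivalent to $\max_{(j,k)\in G}|G_n(f^{\tilde\beta_{jk},\hat h_{jk}}_{jk}-f^0_{jk})|=\smallO_{\P}(g_n^{-1})$. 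First I would localize: on the event of probability $1-\smallO(1)$ on which $|\tilde\beta_{jk}-\beta^0_{jk}|\leq C\rho_n$ for all $(j,k)$ and $\hat h_{jk}\in\mathcal{H}_{jk}$ (guaranteed by the hypothesis and by \hyperref[C5]{(C5)}), it suffices to bound the deterministic supremum $\max_{(j,k)\in G}\sup|G_n(f^{\beta,h}_{jk}-f^0_{jk})|$ over the shrinking neighbourhood $\{|\beta-\beta^0_{jk}|\leq C\rho_n,\ h\in\mathcal{H}_{jk}\}$.

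Next I would control the $\mathcal L^2$-size of the increments. By the local Lipschitz bound in \hyperref[C3]{(C3)}, $\E[\{f^{\beta,h}_{jk}-f^0_{jk}\}^2]\leq\E\{\ell_2(X_{j(-k),t})(|\beta-\beta^0_{jk}|^\upsilon+|h-h^0_{jk}|_2^\upsilon)\}$; combining $|\beta-\beta^0_{jk}|\leq C\rho_n$, the $\mathcal L^2$-rate $|h-h^0_{jk}|_2\lesssim\rho_n$ from \hyperref[C5]{(C5)}, and $\E|\ell_2|^4\leq L_{2n}$ through Cauchy--Schwarz, the localized increments have $\mathcal L^2$-radius of order $(L_{2n})^{1/4}\rho_n^{\upsilon/2}$, encoded by $\rho_{n,\upsilon}$. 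The modulus of the empirical process over the localized class then has to be bounded. Since the symmetrization inequality is unavailable for dependent data, I would instead build a chain of $\epsilon$-nets whose cardinalities are governed by the entropy bound $\operatorname{ent}(\epsilon,\mathcal{F}_{jk})\leq Cs\log(a_n/\epsilon)$ of \hyperref[C6]{(C6)}, apply the high-dimensional partial-sum tail inequality (Lemma \ref{tail}, i.e. Theorem 6.2 of \citet{ZW15gaussian}) to the finite family of net-point differences indexed jointly over $(j,k)\in G$ and the net, and absorb the discretization residual via the envelope $\bar F'$ and its dependence-adjusted norm $\|\bar F'(z_t)\|_q$. Lemma \ref{tail} delivers two contributions: a Gaussian-type term $n^{-1/2}\{s\log(a_n/\epsilon)\}^{1/2}\max_{f\in\mathcal F'}\|f(z_t)\|_2$ reflecting the $\mathcal L^2$-radius and entropy, and a polynomial (Nagaev) correction $n^{-1}r_\varsigma\{s\log(a_n/\epsilon)\}^{3/2}\|\bar F'(z_t)\|_q$, whose prefactor $r_\varsigma$ distinguishes the weak-dependence regime $\varsigma>1/2-1/q$ from the strong-dependence regime $\varsigma<1/2-1/q$.

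Finally I would invoke the growth restriction \hyperref[C8]{(C8)}, which is tailored exactly so that $\rho_{n,\upsilon}(L_{2n}s\log a_n)^{1/2}+n^{-1/2}r_\varsigma(s\log a_n)^{3/2}\|F(z_t)\|_q=\smallO(g_n^{-1})$; multiplying the $\mathcal L^2$-shrinkage by the entropy factor and adding the Nagaev correction yields the desired $\smallO_{\P}(g_n^{-1})$ bound for the localized supremum, and hence for $\max_{(j,k)\in G}|n^{-1}\Delta_n|$ after undoing the localization. The main obstacle will be the chaining step: without symmetrization every level of the chain must be controlled directly through the martingale-approximation tail inequality of \citet{ZW15gaussian}, and one must verify that the sparsity $s$, the dimension $|G|$ (entering through $a_n$ and $g_n$), the $\mathcal L^2$-shrinkage $\rho_n^{\upsilon/2}$, and the two dependence regimes encoded in $r_\varsigma$ combine to keep every term $\smallO(g_n^{-1})$; the bookkeeping across the weak- and strong-dependence cases is the delicate part.
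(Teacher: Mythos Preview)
Your outline has the right overall shape (localize, discretize via the entropy bound, apply Lemma~\ref{tail} on the net, invoke the rate conditions), but there is a genuine gap at the step where you claim that ``Lemma~\ref{tail} delivers \ldots\ a Gaussian-type term $n^{-1/2}\{s\log(a_n/\epsilon)\}^{1/2}\max_{f\in\mathcal F'}\|f(z_t)\|_2$ reflecting the $\mathcal L^2$-radius.'' Lemma~\ref{tail} does \emph{not} give a bound in terms of the ordinary $\mathcal L^2$-norm $\|f(z_t)\|_2$; the Gaussian term carries the dependence-adjusted norm $\Phi_{2,\varsigma}=\max_l\|\tilde\varphi_{l,\cdot}\|_{2,\varsigma}$. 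Your localization argument (via \hyperref[C3]{(C3)} and \hyperref[C5]{(C5)}) shrinks only the $\mathcal L^2$-norm to order $L_{2n}^{1/4}\rho_{n,\upsilon}$; it says nothing about the dependence-adjusted norm of the localized increments $f^{\beta,h}_{jk}-f^0_{jk}$, which in general does \emph{not} inherit the $\mathcal L^2$-shrinkage. Applied directly to the full process, Lemma~\ref{tail} would therefore produce a Gaussian term of order $n^{-1/2}\{s\log a_n\}^{1/2}\Phi_{2,\varsigma}$ with no $\rho_{n,\upsilon}$ factor, and \hyperref[C8]{(C8)} alone does not make that $\smallO(g_n^{-1})$.

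The paper closes this gap by an additional decomposition that your proposal omits: it splits each net-point term as $f-\E(f\mid\mathcal F_{t-1},X_{j(-k),t})$ plus $\E(f\mid\mathcal F_{t-1},X_{j(-k),t})-\E f$, giving $K_n+N_n$. For the martingale-difference piece $K_n$ one has the crucial collapse $\|\tilde\varphi_{l,\cdot}\|_{2,\varsigma}\lesssim\|\tilde\varphi_{l,t}\|_2\lesssim\|\varphi_{l,t}\|_2$, so the dependence-adjusted norm \emph{is} controlled by the $\mathcal L^2$-radius $L_{2n}^{1/2}\rho_{n,\upsilon}$, and Lemma~\ref{tail} then yields exactly the $K_n$ bound that \hyperref[C8]{(C8)} handles. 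For the predictable piece $N_n$ the map $(\beta,h)\mapsto\E(f\mid\mathcal F_{t-1},X_{j(-k),t})$ is smooth by \hyperref[C3]{(C3)}, so a mean-value expansion factors out $|\beta-\beta^0_{jk}|$ and $|\tilde\theta_{jk,m}-\theta^0_{jk,m}|_1\lesssim\sqrt{s}\rho_n$ \emph{before} applying Lemma~\ref{tail}; the remaining dependence-adjusted norms are the $\Phi^h_{m,2,\varsigma},\Omega^h_{m,q,\varsigma},\Phi^\beta_{2,\varsigma},\Omega^\beta_{q,\varsigma}$ of \eqref{Phi.h}--\eqref{Phi.beta}, and this is precisely why \hyperref[C9]{(C9)} (which you do not invoke) is needed in addition to \hyperref[C8]{(C8)}. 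Without this $K_n/N_n$ split your chaining step cannot convert $\mathcal L^2$-localization into the required shrinkage of the tail-inequality inputs.
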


\begin{proof}[Proof of Lemma \ref{ratedeltan}]
As indicated in the proof of Theorem 2 in \cite{BCK15Bio}, the entropy $\operatorname{ent}(\epsilon,\widetilde{\mathcal F})\leq cs\log(a_n/\epsilon)$ for the function class $\widetilde{\mathcal{F}} = \{z\mapsto\psi_{jk}\{z,\beta,\tilde h(x_{j(-k)})\} - \psi_{jk}\{z,\beta^0_{jk},h^0_{jk}(x_{j(-k)})\}: (j,k)\in G, \beta\in\mathcal{B}_{jk},|\beta-\beta^0_{jk}|\leq C\rho_n,\tilde h\in\mathcal{H}_{jk}\}$, which has $2F$ as the envelope (the definition of $F$ is given in \hyperref[C6]{(C6)}). Therefore, for any $f\in\widetilde{\mathcal F}$, there exists a set $F_n$ such that $\min_{f'\in F_n}\|f-f'\|_{\mathcal Q,2}\leq\tilde\epsilon$, where $\tilde\epsilon\defeq\epsilon\|2F\|_{\mathcal Q,2}$, and the cardinality of the set $|F_n|=(a_n/\epsilon)^{cs}$. Then we have
\begin{equation*}
\sup_{f\in\widetilde{\mathcal F}}\bigg|\sum_{t=1}^n\big[f-\pi(f)-\E\{f-\pi(f)\}\big]\bigg|\leq2\tilde\epsilon n,
\end{equation*}
where $\pi(f)\defeq\arg\,\underset{f'\in F_n}{\min}\|f-f'\|_{\mathcal Q,2}$. Hence, with probability $1-\smallO(1)$,
\begin{align}\label{decomp}
\max_{(j,k)\in G}\Delta_n&\leq n^{1/2}\sup_{f\in\widetilde{\mathcal{F}}}|G_n(f)|\notag\\
&=n\sup_{f\in\widetilde{\mathcal F}}\big|[\E{_n}(f)-\E{_n}\{\pi(f)\}-\E (f)+\E\{\pi(f)\}]+[\E{_n}\{\pi(f)\}-\E\{\pi(f)\}]\big|\notag\\
&\leq2n\tilde\epsilon + n\max_{f\in F_n}|\E{_n}(f)-\E(f)|\notag\\
&\leq2n\tilde\epsilon + n\max_{f\in F_n}|\E{_n}(f) - \E{_n}\E(f|\mF_{t-1},X_{j(-k),t})| + n\max_{f\in F_n}|\E{_n}\E(f|\mF_{t-1},X_{j(-k),t}) - \E(f)|\notag\\
&=:2n\tilde\epsilon + K_n + N_n
\end{align}

Next, we look for the bounds for $K_n$ and $N_n$, respectively. Note the summands of $K_n$ form martingale differences. Consider the function set $F_n$, for each $f\in F_n$, let $\varphi_{l,t}\defeq f(z_t)$ and $\tilde \varphi_{l,t}\defeq \varphi_{l,t} - \E(\varphi_{l,t}|\mF_{t-1},X_{j(-k),t})$. Note that $\varphi_t$ and $\tilde \varphi_{t}$ are vectors of length $|F_n|= (a_n/\epsilon)^{cs}$. For $l=1,\ldots,|F_n|$, the dependence adjusted norm of $\tilde\varphi_{l,t}$ obeys that $\|\tilde\varphi_{l,\cdot}\|_{2,\varsigma} \leq 2\|\tilde\varphi_{l,t}\|_2\lesssim 8\|\varphi_{l,t}\|_2$. Moreover, by \hyperref[C3]{(C3)} and \hyperref[C5]{(C5)}, 
we have $\|\varphi_{l,t}\|_2^2 \lesssim L_{2n} \rho^2_{n,\upsilon}$. In particular, for the mean regression case $\rho_{n,\upsilon}=\rho_ns$, while $\rho_{n,\upsilon}=\rho_n^{1/2}$ for the median regression case.

Apply the tail inequality as in Lemma \ref{tail} to the vector $\tilde{\varphi}_{t}$. As $\underset{1\leq l\leq|F_n|}{\max} \|\tilde\varphi_{l,\cdot}\|_{2,\varsigma} \lesssim \sqrt{L_{2n}} \rho_{n,\upsilon}$ and $\|\underset{1\leq l\leq|F_n|}{\max}\tilde\varphi_{l,\cdot}\|_{q,\varsigma} \lesssim \|4F(z_t)\|_q$ (by \hyperref[C6]{(C6)}), 
 then we can see that with probability greater than $1- \bigO(|F_n|^{-1} + (\log|F_n|)^{-q})$,
{\begin{eqnarray*}
	K_n& \lesssim & \sqrt{n s\log (a_n/\epsilon)}\max_{1\leq l\leq|F_n|}\|\tilde{\varphi}_{l,\cdot}\|_{2,\varsigma }+ r_{\varsigma} \{s\log (a_n/\epsilon)\}^{3/2}
	\|\underset{1\leq l\leq|F_n|}{\max}\tilde\varphi_{l,\cdot}\|_{q,\varsigma}\notag\\
	&\leq&  \sqrt{n L_{2n} s\log (a_n/\epsilon)}\rho_{n,\upsilon}+  r_{\varsigma}\{s\log(a_n/\epsilon)\}^{3/2}\|8F(z_t)\|_q,
\end{eqnarray*}}
{given $\epsilon$ is sufficiently small.}
Hence, we have
\beq\label{bound.K}
K_n \lesssim_{\P} \rho_{Kn},
\eeq
where $\rho_{K_n} \defeq r_{k1} + r_{\varsigma} r_{k2}$ with {$r_{k1} \defeq \sqrt{n L_{2n} s\log (a_n/\epsilon)}\rho_{n,\upsilon}$, $r_{k2} \defeq \{s\log (a_n/\epsilon)\}^{3/2}\| 8F(z_t)\|_q $} and $r_{\varsigma} = n^{1/q}$ for $\varsigma > 1/2-1/q$ and $r_{\varsigma} = n^{1/2-\varsigma}$ for $\varsigma < 1/2-1/q$.

Then we handle the term $N_n$. Again consider the function set $F_n$, for each $f\in F_n$, let $\breve \varphi_{l,t}\defeq \E(\varphi_{l,t}|\mF_{t-1},X_{j(-k),t}) - \E(\varphi_{l,t})$, where $\varphi_{l,t}= f(z_t)$. Then
\begin{equation*}
N_n \leq\max_{1\leq l\leq|F_n|}|\sum_{t=1}^n\breve{\varphi}_{l,t}|.
\end{equation*}
Moreover, for $l=1,\ldots,|F_n|$, there is a function $g$ corresponding to each $f\in F_n$ such that $\breve\varphi_{l,t}=g(z_t,\beta,\tilde h)$, where $\beta\in\mathcal{B}_{jk}, |\beta-\beta^0_{jk}|\leq C\rho_n, \tilde h\in\mathcal{H}_{jk}, (j,k)\in G$. By the mean value theorem and the continuity of the function $g$, we have
\begin{align*}
g(Z_{j,t},\beta,\tilde h)=&\,\partial_\beta g(Z_{j,t},\bar\beta,\tilde h)(\beta-\beta^0_{jk})\\
&+\sum_{m=1}^2\partial_{h_m} g(Z_{j,t},\beta,\bar h)\{\tilde h_m(X_{j(-k),t})-h^0_{jk,m}(X_{j(-k),t})\},
\end{align*}
where $(\bar\beta,\bar h(\cdot))$ is the corresponding point which joins the line segment between $(\beta,\tilde h(\cdot))$ and $(\beta^0_{jk},h^0_{jk}(\cdot))$.
Then
\begin{align*}
\underset{1\leq l\leq|F_n|}{\max}\sum_{t=1}^n\breve{\varphi}_{l,t}=&\max_{\bar\beta\in F^{\beta}_n }\sum_{t=1}^{n}\partial_\beta g(Z_{j,t},\bar\beta,\tilde h)(\beta-\beta^0_{jk})\\
&+ \max_{\bar h\in F^h_n }\sum_{m=1}^2\sum_{t=1}^n\partial_{h_m} g(Z_{j,t},\beta,\bar h)\{\tilde h_m(X_{j(-k),t})-h^0_{jk,m}(X_{j(-k),t})\},
\end{align*}
where $F^\beta_n$ and $F^{\tilde h}_n$ collect all the points of $\beta$ and $\tilde h$ according to $F_n$, respectively.

Recall that in our linear model setting, $h^0_{jk}(X_{j(-k),t})=(X_{j(-k),t}^\top\beta^0_{j(-k)},X_{j(-k),t}^\top\gamma^0_{j(-k)})^\top=(X_{j(-k),t}^\top\theta^0_{jk,1},X_{j(-k),t}^\top\theta^0_{jk,2})^\top$, and $\tilde h(X_{j(-k),t})=(X_{j(-k),t}^\top\tilde\theta_{jk,1},X_{j(-k),t}^\top\tilde\theta_{jk,2})^\top$, where $\theta^0_{jk,m}$ and $\tilde\theta_{jk,m}$ ($m=1,2$) are vectors of length $K-1$. Let $T^0_{jk}\defeq\{1\leq \ell\leq K-1:\theta^0_{jk,1,\ell}\neq0,\theta^0_{jk,2,\ell}\neq0\}$, $\widetilde T_{jk}\defeq\{1\leq \ell\leq K-1:\tilde\theta_{jk,1,\ell}\neq0,\tilde\theta_{jk,2,\ell}\neq0\}$, and $\breve X_{t}^{jk}\defeq\operatorname{vec}\{(X_{j(-k),t,\ell})_{\ell\in T^0_{jk}\bigcup\widetilde T_{jk}}\}$. Now we apply Lemma \ref{tail} on $\sum_{t=1}^n\partial_{h_m} g(Z_{j,t},\beta,\bar h)\{\tilde h_m(X_{j(-k),t})-h^0_{jk,m}(X_{j(-k),t})\}$ and \\
$\sum_{t=1}^n\partial_\beta g(Z_{j,t},\bar\beta,\tilde h)(\beta-\beta^0_{jk})$. To this end, we define the following quantities:
{\begin{align}\label{Phi.h}
&\Phi^h_{m,2,\varsigma}\defeq \max_{\bar h\in F_n^{\tilde h}}\big\||\breve X_{\cdot}^{jk}\partial_{h_m} g(Z_{j,\cdot},\beta,\bar h)|_\infty\big\|_{2,\varsigma},\,
\Omega^h_{m,q,\varsigma}\defeq\big\|\max_{\bar h\in F_n^{\tilde h}}|\breve X_{\cdot}^{jk}\partial_{h_m} g(Z_{j,\cdot},\beta,\bar h)|_\infty\big\|_{q,\varsigma}.
\end{align}}

Let $\chi_t^m\defeq\partial_{h_m} g(Z_{j,t},\beta,\bar h)\{\tilde h_m(X_{j(-k),t})-h^0_{jk,m}(X_{j(-k),t})\}$ and define the projector operator $\mP_l(\chi^m_t) \defeq \E(\chi^m_t|\mathcal{F}_l) -\E(\chi^m_t|\mathcal{F}_{l-1})$. According to Theorem 1(i) of \citet{wu2005nonlinear}, it is not hard to see that $\|\chi^m_\cdot\|_{q, \varsigma} \lesssim \sup_{d\geq0} (d+1)^{\varsigma}\sum^{\infty}_{t=d} \|\mathcal{P}_0(\chi^m_t)\|_q$, for $m=1,2$. {Moreover, as $|\tilde{\theta}_{jk,m}-\theta^0_{jk,m}|_1\lesssim \sqrt{s_j} \rho_n\leq\sqrt{s}\rho_n$, we have
\begin{align*}
\|\mathcal{P}_0(\chi^m_t)\|_q &\leq \big(\E[\mathcal{P}_0\{|\partial_{h_m} g(Z_{j,t},\beta,\bar h)\breve X_{t}^{jk}|_{\infty}\}|\tilde{\theta}_{jk,m}-\theta^0_{jk,m}|_1]^q\big)^{1/q}\\
&\lesssim\sqrt{ s} \rho_n \big(\E[\mathcal{P}_0\{|\partial_{h_m} g(Z_{j,t},\beta,\bar h)\breve X_{t}^{jk}|_{\infty}\}]^{q}\big)^{1/q}.
\end{align*}}
It follows that $\|\chi^m_\cdot\|_{q, \varsigma}  \lesssim \sqrt{ s} \rho_n\big\||\breve X_{\cdot}^{jk}|_\infty|\partial_{h_m} g(Z_{j,\cdot},\beta,\bar h)|\big\|_{q, \varsigma}$. Then applying the tail probability bounds in Lemma \ref{tail} yields with probability approaching $1$,
\begin{align*}
\max_{\bar h\in F_n^{\tilde h}}\big|\sum_{t=1}^n\partial_{h_m} g(Z_{j,t},\beta,\bar h)\{\tilde h_m(X_{j(-k),t})-h^0_{jk,m}(X_{j(-k),t})\}\big|\lesssim r_{N1,m}+ r_{\varsigma} r_{N2,m},
\end{align*}
where $r_{N1,m}= \sqrt{n} s \rho_n \{\log(a_n/\epsilon)\}^{1/2} \Phi^h_{m,2,\varsigma}$, $r_{N2,m} = s^{2}\rho_n\{\log(a_n/\epsilon)\}^{3/2}\Omega^h_{m,q,\varsigma}$, {with a sufficiently small $\epsilon$}. The rates of $\Phi^h_{m,2,\varsigma}$ and $\Omega^h_{m,q,\varsigma}$ are restricted in \hyperref[C9]{(C9)}.

Similarly, by defining
\begin{align}\label{Phi.beta}
\Phi^\beta_{2,\varsigma}\defeq \max_{\bar\beta\in F_n^{\beta}}\big\|\partial_{\beta} g(Z_{j,\cdot},\bar\beta,\tilde h)\big\|_{2,\varsigma},\,\Omega^\beta_{q,\varsigma}\defeq\big\|\max_{\bar\beta\in F_n^{\beta}}|\partial_{\beta} g(Z_{j,\cdot},\bar\beta,\tilde h)|\big\|_{q,\varsigma},
\end{align}
we have with probability tending to 1
\begin{align*}
\max_{\bar\beta\in F_n^{\beta}}\big|\sum_{t=1}^n\partial_{\beta} g(Z_{j,t},\bar\beta,\tilde h)(\beta-\beta^0_{jk})\big|\lesssim r_{N1,0}+ r_{\varsigma} r_{N2,0},
\end{align*}
where $r_{N1,0}= \rho_n\sqrt{n s\log(a_n/\epsilon)}\Phi^\beta_{2,\varsigma}$, $r_{N2,0} = \rho_n\{s\log(a_n/\epsilon)\}^{3/2}\Omega^\beta_{q,\varsigma}$, {with a sufficiently small $\epsilon$}. {And \hyperref[C9]{(C9)} constrains the rates of $\Phi^\beta_{2,\varsigma}$ and $\Omega^\beta_{q,\varsigma}$.}

As a result, with probability $1-\smallO(1)$,
\begin{equation}\label{bound.N}
N_n\lesssim \rho_{N_n},
\end{equation}
by letting $\underset{m\in\{0,1,2\}}{\max}\{r_{N1,m} + r_{\varsigma}r_{N2,m}\}=\bigO(\rho_{N_n})$.

As $\P(K_n+N_n \geq x) \leq \P(K_n \geq x/2)+ \P(N_n \geq x/2)$ and collecting the results from \eqref{decomp}, \eqref{bound.K}, and \eqref{bound.N}, we have shown that $\Delta_n$ satisfies
\beq
n^{-1}\max_{(j,k)\in G}\Delta_n \lesssim \rho_{\Delta_n},\notag
\eeq
with probability $1- \smallO(1)$, where $\rho_{\Delta_n} = n^{-1}(\rho_{K_n} + \rho_{N_n})=\smallO(n^{-1/2}g_n^{-1})$ (given $\tilde\epsilon$ is sufficiently small, and using \hyperref[C8]{(C8)} and \hyperref[C9]{(C9)}).

\end{proof}

\begin{remark}[The rates of $\Omega^h_{m,q,\varsigma}$ and $\Omega^\beta_{q,\varsigma}$]\label{omega.rate} 
It is worth discussing the rates of $\Omega^h_{m,q,\varsigma}$ and $\Omega^\beta_{q,\varsigma}$ by the definition under some special cases. For example, consider the VAR(1) model as in Comment \ref{var} 
given by $Y_t=AY_{t-1}+\vps_t$, where $Y_t,\vps_t \in {\R}^J$, and $\vps_t\sim\mbox{i.i.d.}\N(0, \Sigma)$. 
At first, as shown in the proof of Theorem \ref{gausappro}, we have
$$\Omega^h_{m,q,\varsigma}=\big\|\max_{\bar h\in F_n^{\tilde h}}|\breve X_{\cdot}^{jk}\partial_{h_m} g(Z_{j,\cdot},\beta,\bar h)|_\infty\big\|_{q,\varsigma}\lesssim\big\|\max_{(j,k)\in G}|\breve X_{\cdot}^{jk}|_\infty\big\|_{2q,\varsigma}\big\|\max_{\bar h\in F_n^{\tilde h}}\partial_{h_m} g(Z_{j,\cdot},\beta,\bar h)\big\|_{2q,\varsigma}.$$
For the first term, it is not hard to see that
$$\big\|\max_{(j,k)\in G}\{|\breve X_{t}^{jk}|_\infty-|(\breve X_{t}^{jk})^\ast|_\infty\}\big\|_{2q}\lesssim |A|^{t-1}_{\infty}\||\vps_0|_\infty\|_{2q}\lesssim J^{1/(2q)},$$
where the last inequality is by the union bound, assuming $|A|_{\infty}< 1$,
and the $q$th moments of $\vps_{j,0}$ ($\forall j$) are bounded by a constant $\mu_q$. As for the second term, let $d_n\defeq|G|\vee J$. In the mean regression case, for $f\in\widetilde{\mathcal F}$, $\E(f(z_t)|\mathcal F_{t-1})=\{X_{jk,t}(\beta_{jk}^0-\beta)+h_1^0-\tilde h_1\}(v_{jk,t}+h^0_2-\tilde h_2)$, it can be seen that
\begin{eqnarray*}
&&\big\|\max_{\bar h\in F_n^{\tilde h}}\{\partial_{h_1} g(Z_{j,t},\beta,\bar h)-\partial_{h_1} g(Z_{j,t}^\ast,\beta,\bar h)\}\big\|_{2q}\\
&\leq&\big\|\max_{(j,k)\in G}|v_{jk,t}-v_{jk,t}^\ast|\big\|_{2q} + \big\|\max_{(j,k)\in G}\big\{|X^\top_{j(-k),t}-(X^\top_{j(-k),t})^\ast|\max_{\bar\gamma_{j(-k)}}|\gamma_{j(-k)}^0-\bar\gamma_{j(-k)}|\big\}\big\|_{2q}\\
&\lesssim& d_n^{1/(2q)}(1\vee s^{1/2}\rho_n),
\end{eqnarray*}
while in the median regression case, for $f\in\widetilde{\mF}$, $\E(f(z_t)|\mathcal F_{t-1})=[\frac{1}{2} - F_{\vps_{j,t}|\mathcal F_{t-1}}\{X_{jk,t}(\beta_{jk}^0-\beta)+h_1^0-\tilde h_1\}](v_{jk,t}+h^0_2-\tilde h_2)$,
\begin{eqnarray*}
&&\big\|\max_{\bar h\in F_n^{\tilde h}}\{\partial_{h_1} g(Z_{j,t},\beta,\bar h)-\partial_{h_1} g(Z_{j,t}^\ast,\beta,\bar h)\}\big\|_{2q}\\
&\lesssim&\big\|\max_{(j,k)\in G}|v_{jk,t}-v_{jk,t}^\ast|\big\|_{4q} + \big\|\max_{(j,k)\in G}\big\{|X^\top_{j(-k),t}-(X^\top_{j(-k),t})^\ast|\max_{\bar\gamma_{j(-k)}}|\gamma_{j(-k)}^0-\bar\gamma_{j(-k)}|\big\}\big\|_{4q}\\
&\lesssim& d_n^{1/(4q)}(1\vee s^{1/2}\rho_n),
\end{eqnarray*}
where we use the assumption such that the $4q$th moment of the conditional density is bounded. Moreover, we have
\begin{eqnarray*}
&&\big\|\max_{\bar h\in F_n^{\tilde h}}\{\partial_{h_2} g(Z_{j,t},\beta,\bar h)-\partial_{h_2} g(Z_{j,t}^\ast,\beta,\bar h)\}\big\|_{2q}\\
&\leq&\big\|\max_{(j,k)\in G}|(X_{j(-k),t}-X_{j(-k),t}^\ast)(\beta_{jk}^0-\beta)|\big\|_{2q} \\
&& +\,\big\|\max_{(j,k)\in G}\big\{|X^\top_{j(-k),t}-(X^\top_{j(-k),t})^\ast|\max_{\bar\beta_{j(-k)}}|\beta_{j(-k)}^0-\bar\beta_{j(-k)}|\big\}\big\|_{2q}\\
&\lesssim& d_n^{1/(2q)}(1\vee s^{1/2}\rho_n),
\end{eqnarray*}
or $\big\|\max_{\bar h\in F_n^{\tilde h}}\{\partial_{h_2} g(Z_{j,t},\beta,\bar h)-\partial_{h_2} g(Z_{j,t}^\ast,\beta,\bar h)\}\big\|_{2q}=\bigO(1)$ for the two cases. Therefore, we are able to conclude that $\Omega_{m,q,\varsigma}^h\lesssim d_n^{1/q}(1\vee s^{1/2}\rho_n)$ or $\Omega_{m,q,\varsigma}^h\lesssim d_n^{3/(4q)}(1\vee s^{1/2}\rho_n)$, respectively.

Similarly, it can be shown that $\Omega_{q,\varsigma}^\beta\lesssim d_n^{1/q}s^{1/2}\rho_n$ or $\Omega_{q,\varsigma}^\beta\lesssim d_n^{1/(2q)}s^{1/2}\rho_n$ for the two cases, since
\begin{eqnarray*}
&&\big\|\max_{\bar\beta\in F_n^{\beta}}|\partial_{\beta} g(Z_{j,\cdot},\bar\beta,\tilde h)|\big\|_{q}\\
&\lesssim&\big\|\max_{(j,k)\in G}|X^\top_{j(-k),t}-(X^\top_{j(-k),t})^\ast|\big\|_{2q} \big\|\max_{(j,k)\in G}|\{X^\top_{j(-k),t}-(X^\top_{j(-k),t})^\ast\}\{\gamma_{j(-k)}^0-\bar\gamma_{j(-k)}\}|\big\|_{2q}\\
&\lesssim& d_n^{1/q}s^{1/2}\rho_n,
\end{eqnarray*}
or
\begin{eqnarray*}
&&\big\|\max_{\bar\beta\in F_n^{\beta}}|\partial_{\beta} g(Z_{j,\cdot},\bar\beta,\tilde h)|\big\|_{q}\\
&\lesssim&\big\|\max_{(j,k)\in G}|X^\top_{j(-k),t}-(X^\top_{j(-k),t})^\ast|\big\|_{4q} \big\|\max_{(j,k)\in G}|\{X^\top_{j(-k),t}-(X^\top_{j(-k),t})^\ast\}\{\gamma_{j(-k)}^0-\bar\gamma_{j(-k)}\}|\big\|_{4q}\\
&\lesssim& d_n^{1/(2q)}s^{1/2}\rho_n.
\end{eqnarray*}
In addition, a similar derivation can show that $\|F(z_t)\|_q\lesssim d_n^{1/q}(1\vee\rho_n)$ and $\big\|\underset{(j,k)\in G}{\max}|\psi^0_{jk,\cdot}|\big\|_{q,\varsigma} \lesssim d_n^{1/q}(1\vee\rho_n)$.
\end{remark}

\begin{lemma}\label{max2}
Under the same conditions as in Theorem \ref{bahadur}, we have with probability $1-\smallO(1)$,
\begin{equation}
\max_{(j,k)\in G}|\E{_n} \psi_{jk}\{Z_{j,t},\beta^0_{jk},h^0_{jk}(X_{j(-k),t})\}|\lesssim r_n. 
\end{equation}
\end{lemma}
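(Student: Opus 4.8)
The plan is to recognize $\max_{(j,k)\in G}|\E_n\psi^0_{jk,t}|$ as the sup-norm of a centered, high-dimensional partial sum and to control it with the Nagaev-type tail inequality of Lemma \ref{tail}. First I would use the moment condition \eqref{mc}, which gives $\E\psi^0_{jk,t}=0$ for every $(j,k)\in G$, so that $\E_n\psi^0_{jk,t}=n^{-1}\sum_{t=1}^n\psi^0_{jk,t}$ is already centered. Stacking the coordinates into the $|G|$-dimensional vector $\psi^0_t\defeq(\psi^0_{jk,t})_{(j,k)\in G}$ and setting $S_n\defeq\sum_{t=1}^n\psi^0_t$, the target becomes $\max_{(j,k)\in G}|\E_n\psi^0_{jk,t}|=n^{-1}|S_n|_\infty$. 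Under \hyperref[A1]{(A1)} each $\psi^0_{jk,t}$ is a measurable function of $\mathcal F_t$ (the map $h^0_{jk}$ being linear in $X_{j(-k),t}$), so $\psi^0_t$ is a zero-mean stationary process whose aggregated dependence adjusted norms are exactly $\Phi_{2,\varsigma}=\max_{(j,k)\in G}\|\psi^0_{jk,\cdot}\|_{2,\varsigma}$ and $\||\psi^0_{\cdot}|_\infty\|_{q,\varsigma}=\big\|\max_{(j,k)\in G}|\psi^0_{jk,\cdot}|\big\|_{q,\varsigma}$, both finite by \hyperref[C2]{(C2)} and \hyperref[C8]{(C8)}.

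Next I would apply Lemma \ref{tail} with dimension $p=|G|$ and level $x=Cnr_n$ for a sufficiently large constant $C$ (the hypothesis $q>2$ being supplied by $q\geq4$ in \hyperref[C6]{(C6)}). Because $|G|\leq JK\leq a_n$ and $\epsilon\leq1$ force $\log|G|\leq\log(a_n/\epsilon)$, and because the $r_\varsigma$ appearing in the definition of $r_n$ is precisely the exponent $n^{1/q}$ for $\varsigma>1/2-1/q$ (resp.\ $n^{1/2-\varsigma}$ for $\varsigma<1/2-1/q$) that the two regimes of Lemma \ref{tail} prescribe, the quantity $nr_n=n^{1/2}\{\log(a_n/\epsilon)\}^{1/2}\Phi_{2,\varsigma}+r_\varsigma\{\log(a_n/\epsilon)\}^{3/2}\||\psi^0_{\cdot}|_\infty\|_{q,\varsigma}$ dominates the admissibility threshold $\sqrt{n\log|G|}\,\Phi_{2,\varsigma}+r_\varsigma(\log|G|)^{3/2}\||\psi^0_{\cdot}|_\infty\|_{q,\varsigma}$ required by that lemma. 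Hence $x=Cnr_n$ is an admissible level with $n^{-1}x\lesssim r_n$, and it suffices to prove $\P(|S_n|_\infty\geq Cnr_n)=\smallO(1)$.

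Finally I would read off the two terms of the tail bound at $x=Cnr_n$. Since $x\gtrsim r_\varsigma(\log|G|)^{3/2}\||\psi^0_{\cdot}|_\infty\|_{q,\varsigma}$, in the weak-dependence regime $x^q\gtrsim n(\log|G|)^{3q/2}\||\psi^0_{\cdot}|_\infty\|^q_{q,\varsigma}$, which makes the polynomial term $n(\log|G|)^{q/2}\||\psi^0_{\cdot}|_\infty\|^q_{q,\varsigma}/x^q\lesssim(\log|G|)^{-q}$; the identical cancellation of the $n^{q/2-\varsigma q}$ prefactor occurs in the strong-dependence regime, again leaving $(\log|G|)^{-q}$. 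Since also $x\gtrsim\sqrt{n\log|G|}\,\Phi_{2,\varsigma}$, the exponential term is bounded by $\exp(-C'\log|G|)=|G|^{-C'}$. Both vanish once $C$ is taken large, which establishes the claim with probability $1-\smallO(1)$.

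I expect the only genuine obstacle to be the bookkeeping: verifying that $nr_n$ clears the admissibility threshold of Lemma \ref{tail} simultaneously in both dependence regimes, and that the $r_\varsigma$ built into $r_n$ lines up with the regime-dependent prefactors ($n$ versus $n^{q/2-\varsigma q}$) of the polynomial tail so that the powers of $n$ and the $q$th-order norm cancel cleanly. Everything else is direct substitution; the finiteness of the dependence adjusted norms is guaranteed by the standing assumptions rather than re-proved here.
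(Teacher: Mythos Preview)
Your proposal is correct and follows essentially the same approach as the paper: both stack the $\psi^0_{jk,t}$ into a $|G|$-dimensional vector, note that it is centered by the moment condition, and apply the Nagaev-type tail inequality of Lemma \ref{tail} at level $x\asymp nr_n$, checking that the polynomial and exponential tails vanish. The paper phrases the dimension bookkeeping through a covering number $|G|/\epsilon$ for the finite function class $\mathcal F_G$ (which is why $\log(a_n/\epsilon)$ rather than $\log|G|$ appears in $r_n$), but this is the same argument as your direct $\log|G|\leq\log(a_n/\epsilon)$ observation.
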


\begin{proof}[Proof of Lemma \ref{max2}]
	Consider the class of function $\mathcal{F}_G = \{z\mapsto\psi_{jk}\{z,\beta^0_{jk},h^0_{jk}(x_{j(-k)})\}: (j,k)\in G\}$, the cardinality of the set is $|G|$.
	Therefore, the corresponding covering number is given by $\sup_{\mathcal Q}\mathcal{N}(\epsilon \|\bar F_G\|_{\mathcal Q,2},\mathcal{F}_G, \|\cdot\|_{\mathcal Q,2})=|G|/\epsilon$, with $\bar F_G=\sup_{f\in\mathcal F_G}|f|$. Let $\psi_{jk,t}^0\defeq\psi_{jk}\{Z_{j,t},\beta^0_{jk},h^0_{jk}(X_{j(-k),t})\}$ and applying the tail probability bounds in Lemma \ref{tail}, we have with probability $1-\smallO(1)$,
	\begin{equation}
	\max_{(j,k)\in G}|\E{_n} \psi^0_{jk,t}|\lesssim n^{-1} (r_1+r_{\varsigma} r_2) = r_n,
	\end{equation}
	{where $r_1=\{n\log(a_n/\epsilon)\}^{1/2}\underset{(j,k)\in G}{\max}\|\psi^0_{jk,\cdot}\|_{2,\varsigma}$, $r_2=\{\log(a_n/\epsilon)\}^{3/2} \|\underset{(j,k)\in G}{\max}|\psi^0_{jk,\cdot}|\|_{q,\varsigma}$, with a sufficiently small $\epsilon$, $r_{\varsigma} = n^{1/q}$ for $\varsigma > 1/2-1/q$ and $r_{\varsigma} = n^{1/2-\varsigma}$ for $\varsigma < 1/2-1/q$.} 
\end{proof}

\begin{lemma}\label{max3}
Under the same conditions as in Theorem \ref{bahadur}, consider the class of functions $\mathcal{F}'=\{z\mapsto\psi_{jk}\{z,\beta,\tilde h(x_{j(-k)})\}: (j,k)\in G, \beta\in\mathcal{B}_{jk},\tilde h\in\mathcal{H}_{jk}\cup \{h^0_{jk}\}\}$, we have with probability $1-\smallO(1)$,
\begin{equation}
n^{-1/2}\sup_{f\in \mathcal{F}'}|G_n(f)|\lesssim \rho_n.
\end{equation}
\end{lemma}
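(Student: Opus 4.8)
The plan is to bound the empirical process indexed by $\mathcal F'$ by a covering argument followed by the high-dimensional partial-sum tail inequality of Lemma \ref{tail}, exactly as in the proof of Lemma \ref{max2} but now taking into account that $\mathcal F'$ is a continuum rather than a finite set. Writing $n^{-1/2}\sup_{f\in\mathcal F'}|G_n(f)|=n^{-1}\sup_{f\in\mathcal F'}|\sum_{t=1}^n\{f(z_t)-\E f(z_t)\}|$, the target rate $\rho_n$ is precisely the magnitude promised by the second growth condition of \hyperref[C8]{(C8)}, so the entire argument is arranged to reproduce that bound.

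First I would use the entropy condition \hyperref[C6]{(C6)}: each subclass has $\operatorname{ent}(\epsilon,\mathcal F_{jk})\leq Cs\log(a_n/\epsilon)$, and since $\mathcal F'=\bigcup_{(j,k)\in G}\mathcal F_{jk}$ with $\log|G|\leq\log a_n$, the union inherits $\operatorname{ent}(\epsilon,\mathcal F')\lesssim s\log(a_n/\epsilon)$. For a small fixed $\epsilon$ this furnishes a net $F_n'$ of cardinality $|F_n'|\leq(a_n/\epsilon)^{cs}$ such that each $f\in\mathcal F'$ admits a representative $\pi(f)\in F_n'$ with $\|f-\pi(f)\|_{\mathcal Q,2}\leq\tilde\epsilon$, $\tilde\epsilon\defeq\epsilon\|\bar F'\|_{\mathcal Q,2}$. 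The usual covering split then gives
\begin{equation*}
\sup_{f\in\mathcal F'}\Big|\sum_{t=1}^n\{f(z_t)-\E f(z_t)\}\Big|\leq 2\tilde\epsilon n+\max_{f\in F_n'}\Big|\sum_{t=1}^n\{f(z_t)-\E f(z_t)\}\Big|,
\end{equation*}
where, choosing $\epsilon$ of order $\rho_n$ so that $\log(a_n/\epsilon)\lesssim\log a_n$, the discretization term contributes only $\bigO(\tilde\epsilon)=\bigO(\rho_n)$ after dividing by $n$.

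For the finite maximum I would apply Lemma \ref{tail} to the zero-mean $|F_n'|$-dimensional process $X_t=(f(z_t)-\E f(z_t))_{f\in F_n'}$. Since centering by a constant leaves the functional dependence measure unchanged, the two inputs of that lemma reduce to $\Phi_{2,\varsigma}=\max_{f\in F_n'}\|f(\cdot)\|_{2,\varsigma}\leq\max_{f\in\mathcal F'}\|f(\cdot)\|_{2,\varsigma}$ and an aggregated norm bounded by the dependence-adjusted envelope norm $\|\bar F'(\cdot)\|_{q,\varsigma}$. With $\log|F_n'|\lesssim s\log(a_n/\epsilon)$, Lemma \ref{tail} then yields, with probability $1-\smallO(1)$,
\begin{equation*}
n^{-1}\max_{f\in F_n'}\Big|\sum_{t=1}^n\{f(z_t)-\E f(z_t)\}\Big|\lesssim n^{-1/2}\{s\log(a_n/\epsilon)\}^{1/2}\max_{f\in\mathcal F'}\|f\|_{2,\varsigma}+n^{-1}r_{\varsigma}\{s\log(a_n/\epsilon)\}^{3/2}\|\bar F'\|_{q,\varsigma},
\end{equation*}
with $r_\varsigma$ taking the weak/strong-dependence values of \hyperref[C8]{(C8)}; that same condition forces the right-hand side to be $\bigO(\rho_n)$, and combining with the discretization term closes the proof.

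The hard part will be the translation between the moment hypotheses actually assumed in \hyperref[C6]{(C6)}--\hyperref[C8]{(C8)} and the dependence-adjusted norms demanded by Lemma \ref{tail}: one must show that the $L^q$ envelope bound $\E\{F^q(z)\}<C$, combined with the causal representation \hyperref[A1]{(A1)}, controls the aggregated dependence-adjusted norm $\||X_\cdot|_\infty\|_{q,\varsigma}$ of the centered net, and that taking maxima over the $(j,k)$ pieces and over $(\beta,\tilde h)$ does not inflate this measure beyond the envelope --- the same Minkowski/H\"older manipulations used in the proof of Theorem \ref{gausappro} are the natural tool. A secondary difficulty is that for the LAD score the map $(\beta,h)\mapsto\psi_{jk}$ is discontinuous, so the covering step cannot rely on pathwise smoothness and must be justified purely through the entropy bound \hyperref[C6]{(C6)} in a semimetric for which these indicator-type functions remain well approximated.
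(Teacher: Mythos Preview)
Your covering step and the reduction to a finite net $F_n'$ of cardinality $(a_n/\epsilon)^{cs}$ match the paper. The gap is in the next step: you apply Lemma~\ref{tail} directly to the centered vector $(f(z_t)-\E f(z_t))_{f\in F_n'}$ and then invoke \hyperref[C8]{(C8)} to conclude the bound is $\bigO(\rho_n)$. But Lemma~\ref{tail} requires the \emph{dependence-adjusted} norms $\max_f\|f(z_\cdot)\|_{2,\varsigma}$ and $\|\bar F'(z_\cdot)\|_{q,\varsigma}$, whereas the second growth condition in \hyperref[C8]{(C8)} is stated for the \emph{ordinary} moment norms $\max_f\|f(z_t)\|_2$ and $\|\bar F'(z_t)\|_q$. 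These do not coincide for a general causal process, and neither \hyperref[C6]{(C6)} nor the envelope moment bound closes the gap --- this is exactly the difficulty you flag as ``the hard part'' but leave unresolved, and the Minkowski/H\"older manipulations of Theorem~\ref{gausappro} do not help here because they go in the wrong direction (from a product norm to a product of norms, not from a moment to a dependence-adjusted norm).

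The paper's route is to split $f(z_t)-\E f(z_t)$ into a martingale-difference piece $\tilde\varphi_{l,t}=f(z_t)-\E\{f(z_t)\mid\mathcal F_{t-1},X_{j(-k),t}\}$ and a conditional-mean piece $\E\{f(z_t)\mid\mathcal F_{t-1},X_{j(-k),t}\}-\E f(z_t)$, exactly as in the proof of Lemma~\ref{ratedeltan}. For the first piece the martingale-difference structure forces $\|\tilde\varphi_{l,\cdot}\|_{2,\varsigma}\lesssim\|\tilde\varphi_{l,t}\|_2\lesssim\|f(z_t)\|_2$, so Lemma~\ref{tail} together with \hyperref[C8]{(C8)} delivers the rate. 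The second piece is smooth in $(\beta,h)$ by \hyperref[C3]{(C3)}, so a mean-value expansion about $(\beta_{jk}^0,h_{jk}^0)$ reduces it to terms whose dependence-adjusted norms are precisely the quantities $\Phi^{'h}_{m,2,\varsigma}$, $\Omega^{'h}_{m,q,\varsigma}$, $\Phi^{'\beta}_{2,\varsigma}$, $\Omega^{'\beta}_{q,\varsigma}$ controlled by \hyperref[C9]{(C9)} --- a condition your proposal never invokes. Without this decomposition, and without \hyperref[C9]{(C9)}, the direct route does not close under the stated assumptions.
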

\begin{proof}[Proof of Lemma \ref{max3}]
The covering number of the function class $\mathcal{F}'$ is given by \\
$\sup_{\mathcal Q}\mathcal{N}(\epsilon \|\bar F'\|_{\mathcal Q,2},\mathcal{F}', \|\cdot\|_{\mathcal Q,2})=(a_n/\epsilon)^{cs}$, with $\bar F'=\sup_{f\in\mathcal F'}|f|$. Also, for any $f\in\mathcal F'$, there exists a set $F'_n$ such that $\min_{f'\in F'_n}\|f-f'\|_{\mathcal Q,2} \leq \epsilon\|\bar F'\|_{\mathcal Q,2}$ and the cardinality of the set $|F'_n|=(a_n/\epsilon)^{cs}$.

One can apply the technique we used in the proof of Lemma \ref{ratedeltan} to achieve the concentration inequality. Similarly, consider the function set $F'_n$, for each $f\in F'_n$, let $\varphi_{l,t}\defeq f(z_t)$ and $\tilde \varphi_{l,t}\defeq \varphi_{l,t} - \E(\varphi_{l,t}|\mF_{t-1},X_{j(-k),t})$, $l=1,\ldots,|F'_n|$. We have $$n|\underset{f\in F'_n}{\max}|\E{_{n}} f - \E{_{n}}\E(f|\mF_{t-1},X_{j(-k),t}) |\lesssim_{\P} 4\sqrt{ns \log(a_n/\epsilon)}\max_{f\in\mathcal F'}\|f(z_t)\|_{2} + r_{\varsigma} \{s\log (a_n/\epsilon)\}^{3/2} \|4\bar F'(z_t)\|_q,$$
{given $\epsilon$ is sufficiently small.}

For each $f\in F'_n$, there exists a function $g$ such that $g(z_t,\beta,\tilde h)=\E\{f(z_t)|\mF_{t-1},X_{j(-k),t}\}-\E\{f(z_t)\}$, where $\beta\in\mathcal{B}_{jk},\tilde h\in\mathcal{H}_{jk}\cup \{h^0_{jk}\}, (j,k)\in G$.
As by the mean value theorem and the continuity of the function $g$, we have
\begin{align*}
g(Z_{j,t},\beta,\tilde h)=&\,\partial_\beta g(Z_{j,t},\bar\beta,\tilde h)(\beta-\beta^0_{jk})\\
&+\sum_{m=1}^2\partial_{h_m} g(Z_{j,t},\beta,\bar h)\{\tilde h_m(X_{j(-k),t})-h^0_{jk,m}(X_{j(-k),t})\},
\end{align*}
where $(\bar\beta,\bar h(\cdot))$ is the corresponding point which joins the line segment between $(\beta,\tilde h(\cdot))$ and $(\beta^0_{jk},h^0_{jk}(\cdot))$. {Let $F^{'\beta}_n$ and $F^{'\tilde h}_n$ collect all the points of $\beta$ and $\tilde h$ according to $F'_n$, and define the following quantities ($m=1,2$) }

{
\begin{align}\label{Phi'}
&\Phi^{'h}_{m,2,\varsigma}\defeq \max_{ \bar{h}\in F_n^{'\tilde h}}\big\||\breve X_{\cdot}^{jk}\partial_{h_m} g(Z_{j,\cdot},\beta,\bar h)|_{\infty}\big\|_{2,\varsigma},\,
\Omega^{'h}_{m,q,\varsigma}\defeq\big\|\max_{\bar{h}\in F_n^{'\tilde h}}|\breve X_{\cdot}^{jk}\partial_{h_m} g(Z_{j,\cdot},\beta,\bar h)||_\infty\big\|_{q,\varsigma},\notag\\
&\Phi^{'\beta}_{2,\varsigma}\defeq \max_{\bar\beta\in F_n^{'\beta}}\big\|\partial_{\beta} g(Z_{j,\cdot},\bar\beta,\tilde h)\big\|_{2,\varsigma},\,\Omega^{'\beta}_{q,\varsigma}\defeq\big\|\max_{\bar\beta\in F_n^{'\beta}}|\partial_{\beta} g(Z_{j,\cdot},\bar\beta,\tilde h)|\big\|_{q,\varsigma}.
\end{align}

Then we have with probability approaching $1$,
\begin{align*}
&\max_{\bar h\in F_n^{'\tilde h}}\big|\sum_{t=1}^n\partial_{h_m} g(Z_{j,t},\beta,\bar h)\{\tilde h_m(X_{j(-k),t})-h^0_{jk,m}(X_{j(-k),t})\}\big|\lesssim r'_{N1,m}+ r_{\varsigma} r'_{N2,m},\, m=1,2,\\
&\max_{\bar\beta\in F_n^{'\beta}}\big|\sum_{t=1}^n\partial_{\beta} g(Z_{j,t},\bar\beta,\tilde h)(\beta-\beta^0_{jk})\big|\lesssim r'_{N1,0}+ r_{\varsigma} r'_{N2,0},
\end{align*}
where $r'_{N1,m}= \sqrt{n} s \rho_n \{\log(a_n/\epsilon)\}^{1/2} \Phi^{'h}_{m,2,\varsigma}$, $r'_{N2,m} = s^{2}\rho_n\{\log(a_n/\epsilon)\}^{3/2}\Omega^{'h}_{m,q,\varsigma}$, and $r'_{N1,0}= \rho_n\{n s\log(a_n/\epsilon)\}^{1/2}\Phi^{'\beta}_{2,\varsigma}$, $r'_{N2,0} = \rho_n\{s\log(a_n/\epsilon)\}^{3/2}\Omega^{'\beta}_{q,\varsigma}$, {with a sufficiently small $\epsilon$}. Also \hyperref[C9]{(C9)} constrains the rates of $\Phi^{'h}_{m,2,\varsigma}$, $\Omega^{'h}_{m,q,\varsigma}$, $\Phi^{'\beta}_{2,\varsigma}$, and $\Omega^{'\beta}_{q,\varsigma}$.
}

The rest of the proof is similar as for Lemma \ref{ratedeltan} and thus is omitted.
\end{proof}

{
\begin{lemma}\label{ratedeltane}

Under the same conditions as in Lemma \ref{ratedeltan} with \hyperref[C9e]{(C9')} instead of \hyperref[C6]{(C6)}, \hyperref[C8]{(C8)} and \hyperref[C9]{(C9)},
\beq
n^{-1}\max_{(j,k)\in G}\Delta_n \lesssim \smallO(n^{-1/2}g_n^{-1}),
\eeq
holds with probability $1-\smallO(1)$.
\end{lemma}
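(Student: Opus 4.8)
The plan is to reproduce the covering-number decomposition used in the proof of Lemma \ref{ratedeltan}, but to replace every application of the polynomial Nagaev-type bound of Lemma \ref{tail} by the sub-exponential tail bound of Lemma \ref{lemma.exp}. Concretely, I would again cover the class $\widetilde{\mathcal F}=\{z\mapsto\psi_{jk}(z,\beta,\tilde h)-\psi_{jk}(z,\beta^0_{jk},h^0_{jk}):(j,k)\in G,\,|\beta-\beta^0_{jk}|\le C\rho_n^e,\,\tilde h\in\mathcal H_{jk}\}$ by a finite set $F_n$ of cardinality $|F_n|=(a_n/\epsilon)^{cs}$ and split, as in \eqref{decomp},
\begin{equation}
\max_{(j,k)\in G}\Delta_n\le 2n\tilde\epsilon + K_n + N_n,\notag
\end{equation}
where $K_n=n\max_{f\in F_n}|\E_n(f)-\E_n\E(f\mid\mF_{t-1},X_{j(-k),t})|$ is the martingale part and $N_n=n\max_{f\in F_n}|\E_n\E(f\mid\mF_{t-1},X_{j(-k),t})-\E(f)|$ is the predictable drift. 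The discretization error $2\tilde\epsilon$ is negligible after choosing $\epsilon$ small, so the work is entirely in $K_n$ and $N_n$, where $\gamma=2/(2\nu+1)$ will replace the polynomial exponents.

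For the martingale part, the centered summands $\tilde\varphi_{l,t}=f(z_t)-\E(f(z_t)\mid\mF_{t-1},X_{j(-k),t})$ again form martingale differences, so their dependence-adjusted sub-exponential norm satisfies $\|\tilde\varphi_{l,\cdot}\|_{\psi_\nu,0}\lesssim\|\varphi_{l,t}\|_{\psi_\nu}$, and by \hyperref[C3]{(C3)}, \hyperref[C5]{(C5)} together with the exponential moment control of \hyperref[C9e]{(C9')} this is $\lesssim\rho^e_{n,\upsilon}$ (with $\rho^e_{n,\upsilon}=\rho_n^e s$ in the mean-regression case and $\rho^e_{n,\upsilon}=\sqrt{\rho_n^e}$ in the median case). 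Applying Lemma \ref{lemma.exp} to the $|F_n|$-dimensional vector $\tilde\varphi_t$, with $\log|F_n|\asymp s\log(a_n/\epsilon)$, yields $K_n\lesssim_{\P}\sqrt n\,\{s\log(a_n/\epsilon)\}^{1/\gamma}\rho^e_{n,\upsilon}$. Dividing by $n$ gives $n^{-1}K_n\lesssim n^{-1/2}(s\log a_n)^{1/\gamma}\rho^e_{n,\upsilon}$, which is $\smallO(n^{-1/2}g_n^{-1})$ by the first bracket in the second display of \hyperref[C9e]{(C9')}.

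For the predictable drift I would expand $g(Z_{j,t},\beta,\tilde h)$ by the mean value theorem into a $\beta$-derivative term and two $h_m$-derivative terms, exactly as in Lemma \ref{ratedeltan}. The product processes $\chi^m_t=\partial_{h_m}g(Z_{j,t},\beta,\bar h)\{\tilde h_m(X_{j(-k),t})-h^0_{jk,m}(X_{j(-k),t})\}$ would be controlled in sub-exponential dependence-adjusted norm by $\|\chi^m_\cdot\|_{\psi_\nu,\varsigma}\lesssim s^{1/2}\rho_n^e\,\big\||\breve X_\cdot^{jk}|_\infty\,|\partial_{h_m}g|\big\|_{\psi_\nu,\varsigma}$, using $|\tilde\theta_{jk,m}-\theta^0_{jk,m}|_1\lesssim\sqrt s\,\rho_n^e$ and the projection-operator bound on the dependence measure. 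Feeding the norms $\Phi^h_{m,\psi_\nu,0}$ and $\Phi^\beta_{\psi_\nu,0}$ of \hyperref[C9e]{(C9')} into Lemma \ref{lemma.exp} then produces $n^{-1}N_n\lesssim n^{-1/2}(s\log a_n)^{1/\gamma}\rho_n^e\{(s^{1/2}\max_{m\in\{1,2\}}\Phi^h_{m,\psi_\nu,0})\vee\Phi^\beta_{\psi_\nu,0}\}$, again $\smallO(n^{-1/2}g_n^{-1})$ by the second bracket of the same display. Collecting the three pieces and taking a union bound over $G$ delivers the claim with probability $1-\smallO(1)$.

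The main obstacle I expect is the control of $N_n$ under exponential moments: I must verify that the sub-exponential dependence-adjusted norm of the products $\chi^m_t$ inherits the correct sparsity scaling $s^{1/2}\rho_n^e$ while the entropy contributes only the factor $(s\log a_n)^{1/\gamma}$ (rather than the $(s\log a_n)^{3/2}$ of the polynomial case), so that the rate restriction in \hyperref[C9e]{(C9')} is exactly what closes the argument. Getting the interplay between the $\ell_1$-bound on the nuisance error, the H\"older splitting of the dependence measure, and the exponential tail constant $\gamma$ to line up is the delicate point; the martingale part and the discretization error are routine once Lemma \ref{lemma.exp} replaces Lemma \ref{tail}.
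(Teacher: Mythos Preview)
Your proposal is correct and follows essentially the same route as the paper: the paper's proof simply says to repeat the covering/decomposition argument of Lemma \ref{ratedeltan} into $2n\tilde\epsilon+K_n+N_n$ and then invoke Lemma \ref{lemma.exp} in place of Lemma \ref{tail}, obtaining exactly the bounds $K_n\lesssim_{\P} n^{1/2}\{s\log(a_n/\epsilon)\}^{1/\gamma}\rho_{n,\upsilon}^{e}$ and $N_n\lesssim_{\P} n^{1/2}\{s\log(a_n/\epsilon)\}^{1/\gamma}\rho_n^e\{(s^{1/2}\max_m\Phi^h_{m,\psi_\nu,0})\vee\Phi^\beta_{\psi_\nu,0}\}$, which are then controlled by \hyperref[C9e]{(C9')}. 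Your identification of the delicate point (the $s^{1/2}\rho_n^e$ scaling for $N_n$ combined with the $(s\log a_n)^{1/\gamma}$ entropy factor) is precisely what the paper packages into the definitions of $\Phi^h_{m,\psi_\nu,0}$, $\Phi^\beta_{\psi_\nu,0}$ and the second line of \hyperref[C9e]{(C9')}.
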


\begin{proof}[Proof of Lemma \ref{ratedeltane}]
We now study the tail probability under stronger tail assumptions. In particular, we need to carry out an analogue proof of Lemma \ref{ratedeltan} under \hyperref[C9e]{(C9')}.

Specifically, by Lemma \ref{lemma.exp}, {for a sufficiently small $\epsilon$, we have $K_n \lesssim_{\P} n^{1/2}\{s\log(a_n/\epsilon)\}^{1/\gamma}\rho_{n,\upsilon}^{e}$ (in particular, for the mean regression case $\rho^e_{n,\upsilon}= \rho_n^{e}s $ and $\rho^e_{n,\upsilon}=\sqrt{\rho_n^{e}}$), and
\begin{align}\label{Phi.exp}
&N_n \lesssim_{\P} n^{1/2}\{s\log (a_n/\epsilon)\}^{1/\gamma}\rho_n^e \{(s^{1/2}\underset{m\in\{1,2\}}{\max}\Phi^h_{m,\psi_{\nu},0}) \vee \Phi^\beta_{\psi_{\nu},0}\},\notag\\
&\Phi^h_{m,\psi_{\nu},0}\defeq\underset{\bar h\in F_n^{\tilde h}}{\max}\big\||\breve X_{\cdot}^{jk}\partial_{h_m} g(Z_{j,\cdot},\beta,\bar h)|_\infty\big\|_{\psi_{\nu},0},\,\Phi^\beta_{\psi_{\nu},0}\defeq\underset{\bar\beta\in F_n^{\beta}}{\max}\big\|\partial_{\beta} g(Z_{j,\cdot},\bar\beta,\tilde h)\big\|_{\psi_{\nu},0}.
\end{align}}
The rest of the proof is similar as for Lemma \ref{ratedeltan} and thus is omitted.

\end{proof}

%

\begin{lemma}\label{max2exp}
Under the same conditions as in Lemma \ref{max2} with \hyperref[C9e]{(C9')} instead of \hyperref[C6]{(C6)}, \hyperref[C8]{(C8)} and \hyperref[C9]{(C9)}, and assume that $\underset{(j,k)\in G}{\max} \|\psi_{jk,\cdot}^0\|_{\psi_{\nu},0} < \infty$, we have with probability $1-\smallO(1)$,
\begin{equation}
\max_{(j,k)\in G}|\E{_n} \psi_{jk}\{Z_{j,t},\beta^0_{jk},h^0_{jk}(X_{j(-k),t})\}|\lesssim n^{-1/2} \{\log (a_n/\epsilon)\}^{1/\gamma}\max_{(j,k)\in G} \|\psi_{jk,\cdot}^0\|_{\psi_{\nu},0}\lesssim r_n. 
\end{equation}
\end{lemma}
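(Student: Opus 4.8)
The plan is to run the argument of Lemma \ref{max2} essentially unchanged, the only modification being that the Nagaev-type inequality Lemma \ref{tail} is replaced by its sub-Weibull counterpart Lemma \ref{lemma.exp}. I would retain the finite function class $\mathcal{F}_G=\{z\mapsto\psi_{jk}\{z,\beta^0_{jk},h^0_{jk}(x_{j(-k)})\}:(j,k)\in G\}$ and regard the centred scores $\psi^0_{jk,t}$ as the $|G|$ coordinates of a stationary vector process. As in Lemma \ref{max2}, an $\epsilon$-net of $\mathcal{F}_G$ in the $\mathcal{L}^2(\mathcal Q)$-seminorm reduces the maximum to one over a set of cardinality $\lesssim|G|/\epsilon$, up to a discretisation error of order $\sqrt{n}\,\epsilon\,\|\bar F_G\|_{\mathcal Q,2}$ that is negligible once $\epsilon$ is small. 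Since the hypothesis $\max_{(j,k)\in G}\|\psi^0_{jk,\cdot}\|_{\psi_\nu,0}<\infty$ furnishes a finite dependence-adjusted sub-Weibull norm, Lemma \ref{lemma.exp} applies to this vector with $\gamma=2/(1+2\nu)$ and gives
$$
\P\Big(\max_{(j,k)\in G}\Big|\sum_{t=1}^n\psi^0_{jk,t}\Big|\geq x\Big)\lesssim\frac{|G|}{\epsilon}\exp\Big\{-\frac{C_\gamma\,x^\gamma}{\big(\sqrt{n}\max_{(j,k)\in G}\|\psi^0_{jk,\cdot}\|_{\psi_\nu,0}\big)^\gamma}\Big\}.
$$

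Next I would take the deviation level $x=c\,\sqrt{n}\,\{\log(a_n/\epsilon)\}^{1/\gamma}\max_{(j,k)\in G}\|\psi^0_{jk,\cdot}\|_{\psi_\nu,0}$ with $c$ sufficiently large; since $a_n=\max(JK,n,e)\geq|G|$, this forces the right-hand side above to $\smallO(1)$. Dividing by $n$ then delivers, with probability $1-\smallO(1)$,
$$
\max_{(j,k)\in G}|\E{_n}\psi^0_{jk,t}|\lesssim n^{-1/2}\{\log(a_n/\epsilon)\}^{1/\gamma}\max_{(j,k)\in G}\|\psi^0_{jk,\cdot}\|_{\psi_\nu,0},
$$
and the final bound $\lesssim r_n$ is exactly the first displayed restriction in \hyperref[C9e]{(C9')}.

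The argument carries no genuine difficulty beyond this substitution. The one point demanding care is the calibration of $x$: it must be large enough that the union-bound prefactor $|G|/\epsilon$ is swallowed by the stretched-exponential tail, yet chosen so as to produce the sharp exponent $1/\gamma$ on $\log(a_n/\epsilon)$ in place of the Gaussian exponent $1/2$ that Lemma \ref{tail} would yield. This is precisely where the sub-Weibull index $\gamma=2/(1+2\nu)$ enters. Keeping $\epsilon$ small enough to dominate the discretisation term, exactly as in Lemma \ref{max2}, then makes both the displayed bound and its reduction to $r_n$ through \hyperref[C9e]{(C9')} go through.
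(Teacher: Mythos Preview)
Your proposal is correct and matches the paper's own proof, which simply states that the argument is the same as Lemma \ref{max2} with the Nagaev-type bound Lemma \ref{tail} replaced by the sub-Weibull bound Lemma \ref{lemma.exp}. Your additional detail on calibrating $x$ and invoking the first restriction in \hyperref[C9e]{(C9')} to obtain $\lesssim r_n$ is exactly what is implicit in the paper's one-line proof.
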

\begin{proof}[Proof of Lemma \ref{max2exp}]
The proof is similar to the proof of Lemma \ref{max2} by replacing the tail probability bounds therein by Lemma \ref{lemma.exp}.
\end{proof}

\begin{lemma}\label{max3exp}
Under the same conditions as in Lemma \ref{max3} with \hyperref[C9e]{(C9')} instead of \hyperref[C6]{(C6)}, \hyperref[C8]{(C8)} and \hyperref[C9]{(C9)}, 
and assume that $\underset{f\in\mathcal F'}{\max}\|f(z_\cdot)\|_{\psi_{\nu},0}< \infty$, 
we have with probability $1-\smallO(1)$,
\begin{equation}
n^{-1/2}\sup_{f\in \mathcal{F}'}|G_n(f)|
\lesssim \rho^e_n.
\end{equation}
\end{lemma}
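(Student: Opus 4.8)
The plan is to mirror the proof of Lemma \ref{max3} essentially line for line, making two substitutions throughout: wherever a concentration inequality is invoked I would replace the Nagaev-type bound of Lemma \ref{tail} by its exponential-tail counterpart Lemma \ref{lemma.exp}, and wherever a dimension-growth restriction is needed I would use \hyperref[C9e]{(C9')} in place of \hyperref[C6]{(C6)}, \hyperref[C8]{(C8)} and \hyperref[C9]{(C9)}. First I would reuse, verbatim, the entropy bound from Lemma \ref{max3}: since $\operatorname{ent}(\epsilon,\mathcal F')\le cs\log(a_n/\epsilon)$, one constructs a finite net $F'_n$ with cardinality $|F'_n|=(a_n/\epsilon)^{cs}$ such that every $f\in\mathcal F'$ lies within $\epsilon\|\bar F'\|_{\mathcal Q,2}$ of $F'_n$ in the $\mathcal L^2(\mathcal Q)$-seminorm. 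The discretization error contributes a term of order $n\tilde\epsilon$ that is negligible once $\epsilon$ is taken sufficiently small, so it suffices to bound $n^{-1}\max_{f\in F'_n}|\E_n f-\E f|$.

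Next I would carry over the martingale splitting used in Lemma \ref{max3}. For each $f\in F'_n$ with $\varphi_{l,t}\defeq f(z_t)$, write $\E_n f-\E f = \{\E_n\varphi_{l,t}-\E_n\E(\varphi_{l,t}\mid\mathcal F_{t-1},X_{j(-k),t})\}+\{\E_n\E(\varphi_{l,t}\mid\mathcal F_{t-1},X_{j(-k),t})-\E\varphi_{l,t}\}=:K_n/n+N_n/n$, where the first bracket is a sum of martingale differences $\tilde\varphi_{l,t}$ and the second is designed to circumvent the discontinuity of the LAD score. The benefit of \hyperref[C9e]{(C9')} is that, assuming $\max_{f\in\mathcal F'}\|f(z_\cdot)\|_{\psi_\nu,0}<\infty$, the increments $\tilde\varphi_{l,t}$ have finite $\psi_\nu$ dependence-adjusted norm, so Lemma \ref{lemma.exp} yields the exponential analogue $K_n\lesssim_{\P} n^{1/2}\{s\log(a_n/\epsilon)\}^{1/\gamma}\max_{f\in\mathcal F'}\|f(z_\cdot)\|_{\psi_\nu,0}$, with $\gamma=2/(2\nu+1)$, replacing the polynomial/Gaussian rate of Lemma \ref{max3}.

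For the conditional-expectation term $N_n$ I would again follow Lemma \ref{max3}: linearize $g(Z_{j,t},\beta,\tilde h)$ by the mean value theorem into one $\beta$-direction and two $h$-directions, and apply Lemma \ref{lemma.exp} to each resulting partial sum. Here the exponential dependence-adjusted norms $\Phi^{'h}_{m,\psi_\nu,0}$ and $\Phi^{'\beta}_{\psi_\nu,0}$ (the primed analogues of the quantities in \eqref{Phi.exp}) enter, and their products with $\rho_n^e$, $s^{1/2}$ and $\{s\log(a_n/\epsilon)\}^{1/\gamma}$ are exactly the expressions controlled by the last displays of \hyperref[C9e]{(C9')}. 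Collecting the bounds on $K_n$ and $N_n$, noting $\P(K_n+N_n\ge x)\le\P(K_n\ge x/2)+\P(N_n\ge x/2)$, and optimizing over $\epsilon$ then gives $n^{-1/2}\sup_{f\in\mathcal F'}|G_n(f)|\lesssim\rho_n^e$ with probability $1-\smallO(1)$.

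The main obstacle is the control of $N_n$ in the non-smooth (median-regression) case: because $\psi_{jk}$ is discontinuous one cannot differentiate it directly, so the argument must pass through the smoothed map $\beta\mapsto\E\{\psi_{jk}(Z_{j,t},\beta,h)\mid X_{j(-k),t}\}$, whose twice-differentiability and boundedness are supplied by \hyperref[C3]{(C3)} together with \hyperref[C10]{(C10)}. The delicate part is verifying that the linearized increments retain finite $\psi_\nu$-norms and that their aggregated exponential dependence-adjusted norms are matched by \hyperref[C9e]{(C9')} with the correct exponent (recall $\rho^e_{n,\upsilon}=\sqrt{\rho_n^e}$ in the median case rather than $\rho_n^e s$ in the mean case); once the substitution of Lemma \ref{lemma.exp} for Lemma \ref{tail} is in place, the remaining bookkeeping is routine, which is why the formal proof can be stated as being parallel to that of Lemma \ref{max3}.
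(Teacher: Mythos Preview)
Your proposal is correct and follows essentially the same approach as the paper: the paper's proof is precisely to rerun the argument of Lemma~\ref{max3} with Lemma~\ref{lemma.exp} in place of Lemma~\ref{tail}, arriving at the bound $n^{-1/2}\sup_{f\in\mathcal F'}|G_n(f)|\lesssim_{\P} n^{-1/2}\{s\log(a_n/\epsilon)\}^{1/\gamma}\big[\max_{f\in\mathcal F'}\|f(z_\cdot)\|_{\psi_\nu,0}\vee\rho_n^e\{(s^{1/2}\max_m\Phi^{'h}_{m,\psi_\nu,0})\vee\Phi^{'\beta}_{\psi_\nu,0}\}\big]$, and then invoking \hyperref[C9e]{(C9')}. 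One small remark: your closing paragraph brings in $\rho^e_{n,\upsilon}$, but that quantity enters in Lemma~\ref{ratedeltane} (the bound on $\Delta_n$) rather than here; in the present lemma the $K_n$ piece is controlled directly by $\max_{f\in\mathcal F'}\|f(z_\cdot)\|_{\psi_\nu,0}$, with no $\upsilon$-dependence.
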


\begin{proof}[Proof of Lemma \ref{max3exp}]
The proof is similar to the proof of Lemma \ref{max3} by replacing the tail probability bounds therein by Lemma \ref{lemma.exp}. In particular, it can be shown that {for a sufficiently small $\epsilon$, }
\begin{align}\label{Phi.exp'}
&n^{-1/2}\sup_{f\in \mathcal{F}'}|G_n(f)|\lesssim_{\P} n^{-1/2}(s \log (a_n/\epsilon))^{1/\gamma}  \big[\max_{f\in\mathcal F'}\|f(z_\cdot)\|_{\psi_{\nu},0} \vee \rho_n^e \{(s^{1/2}\underset{m\in\{1,2\}}{\max}\Phi^{'h}_{m,\psi_{\nu},0}) \vee \Phi^{'\beta}_{\psi_{\nu},0}\}\big],\notag\\
&\Phi^{'h}_{m,\psi_{\nu},0}\defeq\underset{\bar h\in F_n^{'\tilde h}}{\max}\big\||\breve X_{\cdot}^{jk}\partial_{h_m} g(Z_{j,\cdot},\beta,\bar h)|_\infty\big\|_{\psi_{\nu},0},\,\Phi^{'\beta}_{\psi_{\nu},0}\defeq\underset{\bar\beta\in F_n^{'\beta}}{\max}\big\|\partial_{\beta} g(Z_{j,\cdot},\bar\beta,\tilde h)\big\|_{\psi_{\nu},0}.
\end{align}
The final conclusion can be achieved by \hyperref[C9e]{(C9')}.
\end{proof}


}

\subsubsection{Proofs of Section \ref{SI}}

\begin{proof}[Proof of Theorem \ref{bahadur}]
	The sketch of the proof follows the proof of Theorem 2 in \cite{BCK15Bio}. 
	
	\underline{Step 1:} Let $\tilde{\beta}_{jk}$ be any estimator such that $\max_{(j,k)\in G}|\tilde{\beta}_{jk}- \beta^0_{jk}|\leq C \rho_n$ with probability $1-\smallO(1)$.
	By rewriting (using the fact that $\E[\psi_{jk}\{Z_{j,t},\beta^0_{jk},h^0_{jk}(X_{j(-k),t})\}]=0$), we have
	\begin{eqnarray}\label{exp}
	&&\E{_n}[\psi_{jk}\{Z_{j,t},\tilde\beta_{jk},\hat h_{jk}(X_{j(-k),t})\}]\notag\\
	&=& \E{_n}[\psi_{jk}\{Z_{j,t},\beta^0_{jk},h^0_{jk}(X_{j(-k),t})\}]+\E[\psi_{jk}\{Z_{j,t},\beta,\tilde h(X_{j(-k),t})\}]\big|_{\beta=\tilde\beta_{jk},\tilde h=\hat{h}_{jk}} + n^{-1}\Delta_n\notag\\
	\end{eqnarray}
	where $\Delta_n\defeq n^{1/2}G_n[\psi_{jk}\{Z_{j,t},\tilde\beta_{jk},\hat h_{jk}(X_{j(-k),t})\}-\psi_{jk}\{Z_{j,t},\beta^0_{jk},h^0_{jk}(X_{j(-k),t})\}]$.
	
	We first observe that with probability $1-\smallO(1)$, $\max_{(j,k)\in G}\Delta_n\leq\sqrt{n}\sup_{f\in\widetilde{\mathcal{F}}}|G_n(f)|$, where $\widetilde{\mathcal{F}}$ is the class of functions defined by $\widetilde{\mathcal{F}} = \{z\mapsto\psi_{jk}\{z,\beta,\tilde h(x_{j(-k)})\} - \psi_{jk}\{z,\beta^0_{jk},h^0_{jk}(x_{j(-k)})\}: (j,k)\in G, \beta\in\mathcal{B}_{jk},|\beta-\beta^0_{jk}|\leq C\rho_n,\tilde h\in\mathcal{H}_{jk}\}$. The key to our proof is to achieve a concentration inequality for
	$\Delta_n$, such that $n^{-1}\max_{(j,k)\in G}\Delta_n \lesssim \smallO(n^{-1/2}g_n^{-1})$ holds with probability $1-\smallO(1)$. This is done in Lemma \ref{ratedeltan}.

	Then we expand the second term in \eqref{exp} by Taylor expansion. Pick any $\beta\in\mathcal{B}_{jk}$ such that $|\beta-\beta^0_{jk}|\leq C\rho_n$ and $\tilde h\in\mathcal{H}_{jk}$. For any $(j,k)\in G$, let $(\bar\beta,\bar h(X_{j(-k),t})^\top)^\top$ lie on the line segment between $(\beta,\tilde h(X_{j(-k),t})^\top)^\top$ and $(\beta^0_{jk},h^0_{jk}(X_{j(-k),t})^\top)^\top$. Therefore, we can write $\E[\psi_{jk}\{Z_{j,t},\beta,\tilde h(X_{j(-k),t})\}]$ as follows
	\begin{align}\label{eq:exp}
	\E[&\psi_{jk}\{Z_{j,t},\beta^0_{jk},h^0_{jk}(X_{j(-k),t})\}]+\E\big(\partial_{\beta}\E[\psi_{jk}\{Z_{j,t},\beta^0_{jk},h^0_{jk}(X_{j(-k),t})\}|X_{j(-k),t}]\big)(\beta-\beta^0_{jk}) \notag\\
	&+\sum^M_{m=1}\E\big(\partial_{h_{m}}\E[\psi_{jk}\{Z_{j,t},\beta^0_{jk},h^0_{jk}(X_{j(-k),t})\}|X_{j(-k),t}]\{\tilde{h}_{m}(X_{j(-k),t}) -h^0_{jk,m}(X_{j(-k),t})\}\big)\notag\\
	&+\frac{1}{2}\E\big(\partial^2_{\beta}\E[\psi_{jk}\{Z_{j,t},\bar\beta,\bar h(X_{j(-k),t})\}|X_{j(-k),t}]\big)(\beta-\beta^0_{jk})^2\notag\\
	&+ \frac{1}{2}\sum_{m,m'=1}^M\E\big(\partial_{h_{m}} \partial_{h_{m'}}\E[\psi_{jk}\{Z_{j,t},\bar\beta,\bar h(X_{j(-k),t})\}|X_{j(-k),t}]\{\tilde{h}_{m}(X_{j(-k),t}) - h^0_{jk,m}(X_{j(-k),t})\}\notag\\
	&\hspace{2.4cm}\times\{\tilde{h}_{m'}(X_{j(-k),t}) - h^0_{jk,m'}(X_{j(-k),t})\}\big)\notag\\
	&+ \frac{1}{2}\sum_{m=1}^M \E\big(\partial_{h_m} \partial_{\beta}\E[\psi_{jk}\{Z_{j,t},\bar\beta,\bar h(X_{j(-k),t})\}|X_{j(-k),t}] \{\tilde{h}_{m}(X_{j(-k),t}) - h^0_{jk,m}(X_{j(-k),t})\}\big)(\beta-\beta^0_{jk}).
    \end{align}
	
	It can be seen from the orthogonality condition \eqref{ortho} that the third term in \eqref{eq:exp} is zero. By \hyperref[C3]{(C3)} we have $\E(\partial_{\beta}\E[\psi_{jk}\{Z_{j,t},\beta^0_{jk},h^0_{jk}(X_{j(-k),t})\}|X_{j(-k),t}])=\partial_{\beta}\E[\psi_{jk}\{Z_{j,t},\beta^0_{jk},h^0_{jk}(X_{j(-k),t})\}]$ $=\phi_{jk}$. Moreover, each of the last three terms in \eqref{eq:exp} is of the order $\bigO(L_{1n}\rho_n^2)=\smallO(n^{-1/2}g_n^{-1})$, given \hyperref[C3]{(C3)} {(suppose the moments are bounded by constant)}, \hyperref[C5]{(C5)}, and \hyperref[C8]{(C8)}. Therefore, we have shown that the second term in \eqref{exp} equals $\phi_{jk}(\tilde\beta_{jk}-\beta^0_{jk})+\smallO(n^{-1/2}g_n^{-1})$, uniformly over $(j,k)\in G$. Then, combining the results in Lemma \ref{ratedeltan} gives
	\begin{eqnarray}\label{cont}
	&&\E{_n}[\psi_{jk}\{Z_{j,t},\tilde\beta_{jk},\hat h_{jk}(X_{j(-k),t})\}] \notag\\
	&=& \E{_n}[\psi_{jk}\{Z_{j,t},\beta^0_{jk},h^0_{jk}(X_{j(-k),t})\}] +\phi_{jk}(\tilde\beta_{jk}-\beta^0_{jk})+\smallO(n^{-1/2}g_n^{-1}).
	\end{eqnarray}
	
	\underline{Step 2:} Next, we need to prove that $\inf_{\beta \in \hat{\mathcal{B}}_{jk}} |\E{_n}[\psi_{jk}\{Z_{j,t},\beta,\hat h_{jk}(X_{j(-k),t})\}]| =\smallO(n^{-1/2}g_n^{-1})$ holds with probability $1-\smallO(1)$.
	For any $(j,k)\in G$, we focus on any point $\beta^\ast_{jk} = \beta^0_{jk}- \phi_{jk}^{-1}\E{_n} [\psi_{jk}\{Z_{j,t},\beta^0_{jk},h^0_{jk}(X_{j(-k),t})\}]$, thus
	\begin{equation}
	\max_{(j,k)\in G}|\beta^\ast_{jk}- \beta^0_{jk}| \leq C\max_{(j,k)\in G}|\E{_n}[\psi_{jk}\{Z_{j,t},\beta^0_{jk},h^0_{jk}(X_{j(-k),t})\}]|.\notag
	\end{equation}
	

    By Lemma \ref{max2}, we have $|\beta^\ast_{jk} - \beta^0_{jk}|\lesssim_{\P} r_n$ uniformly over $(j,k)\in G$. By \hyperref[C2]{(C2)}, $[\beta^0_{jk} \pm c_1r_n]\subset\hat{\mathcal{B}}_{jk}$ with probability $1-\smallO(1)$, thus $\beta^\ast_{jk}$ is contained in $\hat{\mathcal{B}}_{jk}$ with probability $1- \smallO(1)$.
	Using the continuity argument as in \eqref{cont} with $\tilde\beta_{jk}=\beta^\ast_{jk}$ and combining the fact that $\phi_{jk}(\beta^\ast_{jk}-\beta^0_{jk}) = -\E{_{n}}[\psi_{jk}\{Z_{j,t},\beta^0_{jk},h^0_{jk}(X_{j(-k),t})\}]$,
	we have,
	\begin{eqnarray*}
    &&\E{_n}[\psi_{jk}\{Z_{j,t},\beta^\ast_{jk},\hat h_{jk}(X_{j(-k),t})\}]\\
    & =& \E{_{n}}[\psi_{jk}\{Z_{j,t},\beta^0_{jk},h^0_{jk}(X_{j(-k),t})\}]+ \phi_{jk}(\beta^\ast_{jk}-\beta^0_{jk})+ \smallO(n^{-1/2}g_n^{-1})=\smallO(n^{-1/2}g_n^{-1}).
	\end{eqnarray*}
	Therefore,
	\begin{align}\label{step2}
	\inf_{\beta \in \hat{\mathcal{B}}_{jk}}|\E{_n}[\psi_{jk}\{Z_{j,t},\beta,\hat h_{jk}(X_{j(-k),t})\}]| \leq |\E{_n}[\psi_{jk}\{Z_{j,t},\beta^\ast_{jk},\hat h_{jk}(X_{j(-k),t})\}]| = \smallO(n^{-1/2}g^{-1}_n),
	\end{align}
	holds with probability $1- \smallO(1)$ uniformly over $(j,k)\in G$.
	
	\underline{Step 3:} Lastly, it is left to prove that with probability $1- \smallO(1)$, $\max_{(j,k)\in G} |\hat{\beta}_{jk} - \beta^0_{jk}|\leq C \rho_n$, which will lead to the desired Bahadur representation. From \eqref{step2} and by the definition of $\hat\beta_{jk}$, with probability $1- \smallO(1)$ we have $\max_{(j,k)\in G}\big|\E_n[\psi_{jk}\{Z_{j,t},\hat\beta_{jk},\hat h_{jk}(X_{j(-k),t})\}]\big|=\smallO(n^{-1/2}g_n^{-1})$. Consider the class of functions $\mathcal{F}'=\{z\mapsto\psi_{jk}\{z,\beta,\tilde h(x_{j(-k)})\}: (j,k)\in G, \beta\in\mathcal{B}_{jk},\tilde h\in\mathcal{H}_{jk}\cup \{h^0_{jk}\}\}$. Thus, with probability $1- \smallO(1)$,	
	\begin{equation}
	\big|\E{_n}[\psi_{jk}\{Z_{j,t},\hat\beta_{jk},\hat h_{jk}(X_{j(-k),t})\}]\big|\geq \big|\E[\psi_{jk}\{Z_{j,t},\beta,\tilde h(X_{j(-k),t})\}]|_{\beta=\hat\beta_{jk},\tilde h=\hat h_{jk}}\big|- n^{-1/2}\sup_{f\in \mathcal{F}'}|G_n(f)|,\notag
	\end{equation}
	holds uniformly over $(j,k)\in G$.
	Recall that Lemma \ref{max3} ensures $n^{-1/2}\underset{f\in \mathcal{F}'}{\sup}|G_n(f)|=\bigO_{\P}(\rho_n)$. It follows that $ \big|\E[\psi_{jk}\{Z_{j,t},\beta,\tilde h(X_{j(-k),t})\}]|_{\beta=\hat\beta_{jk},\tilde h=\hat h_{jk}}\big|\leq\bigO(\rho_n)+\smallO(n^{-1/2}g_n^{-1})$.
	
	In addition, applying the expansion in \eqref{eq:exp} with $\beta^0_{jk}=\beta$ together with the Cauchy-Schwarz inequality implies that $\big|\E[\psi_{jk}\{Z_{j,t},\beta,\tilde h(X_{j(-k),t})\}]-\E[\psi_{jk}\{Z_{j,t},\beta,h^0_{jk}(X_{j(-k),t})\}]\big|\leq C(L_n^{1/2}\rho_n+L_{1n}\rho_n^2)=\bigO(\rho_n)$ {(suppose the moments are bounded by constant)}, so with probability $1- \smallO(1)$,
	\begin{eqnarray}\label{step3}
	\hspace{-0.5cm}\big|\E[\psi_{jk}\{Z_{j,t},\beta,\tilde h(X_{j(-k),t})\}]|_{\beta=\hat\beta_{jk},\tilde h=\hat h_{jk}}\big|\geq \big|\E[\psi_{jk}\{Z_{j,t},\beta,h^0_{jk}(X_{j(-k),t})\}]|_{\beta=\hat\beta_{jk}}\big|- \bigO(\rho_n),
	\end{eqnarray}
	uniformly over $(j,k)\in G$, where \hyperref[C3]{(C3)} and the fact that $\E[\{\tilde h_{m}(X_{j(-k),t})-h^0_{jk,m}(X_{j(-k),t})\}^2]\leq C\rho_n^2$ for all $m=1,\ldots,M$ and any $\tilde h=(\tilde h_m)_{m=1}^M\in\mathcal{H}_{jk}$ are used.
	
	Moreover, given the identification condition \hyperref[C4]{(C4)}, the first term on the right-hand side of \eqref{step3} is bounded from below by $\frac{1}{2}\{|\phi_{jk}(\hat{\beta}_{jk} - \beta^0_{jk})|\wedge c_1\}$ and this results in that with probability $1- \smallO(1)$, $|\hat{\beta}_{jk} - \beta^0_{jk}|\leq\smallO(n^{-1/2}g_n^{-1}) + \bigO(\rho_n) = \bigO(\rho_n)$ uniformly over $(j,k)\in G$.
	
	In summary, we have shown that, with probability $1- \smallO(1)$,
	\begin{eqnarray}\label{step4}
	&&\E{_n}[\psi_{jk}\{Z_{j,t},\hat\beta_{jk},\hat h_{jk}(X_{j(-k),t})\}]\notag\\
	 &= &\E{_n}[\psi_{jk}\{Z_{j,t},\beta^0_{jk},h^0_{jk}(X_{j(-k),t})\}] +\phi_{jk}(\hat\beta_{jk}-\beta^0_{jk})+\smallO(n^{-2}g_n^{-1}),
	\end{eqnarray}
	uniformly over $(j,k)\in G$. And with probability $1- \smallO(1)$, the left-hand side is $\smallO(n^{-1/2}g_n^{-1})$ uniformly over $(j,k)\in G$. Lastly, the uniform Bahadur representation can be obtained by solving \eqref{step4} with respect to $(\hat{\beta}_{jk}-\beta^0_{jk})$.
	
\end{proof}

\begin{proof}[Proof of Corollary \ref{asy.norm}]
The proof is an application of Theorem \ref{bahadur} with verification of conditions \hyperref[C1]{(C1)}-\hyperref[C9]{(C9)}.

Here we focus on the estimator by Algorithm \hyperref[algo2]{2} as the proof of Algorithm \hyperref[algo1]{1} is basically the same. In particular, with the LAD regression case, we have $|G|=1$, $a_n=\max(JK,n)$, $g_n=1$, $M=2$, $h^0_{jk}(X_{j(-k),t})=(X_{j(-k),t}^\top\beta^0_{j(-k)},X_{j(-k),t}^\top\gamma^0_{j(-k)})^\top$,  $\psi_{jk}\{Z_{j,t},\beta_{jk},h^0_{jk}(X_{j(-k),t})\}=\{1/2-\IF(Y_{j,t}\leq X_{jk,t}\beta_{jk}+X_{j(-k),t}^\top\beta^0_{j(-k)})\}(X_{jk,t}-X_{j(-k),t}^{\top}\gamma^0_{j(-k)})$.

Verification of \hyperref[C1]{(C1)}: Our model setting assumes $F_{\vps_j}(0)=1/2$ and $\E( v_{jk,t}X_{j(-k),t}) = 0$; hence we have
\begin{align*}
&\E(\partial_{h_1}\E[\psi_{jk}\{Z_{j,t},\beta^0_{jk},h^0_{jk}(X_{j(-k),t})\}|X_{j(-k),t}]h_1(X_{j(-k),t})) =-\beta_{j(-k)}^\top\E\{f_{\vps_j}(0)v_{jk,t}X_{j(-k),t}\}=0\\
&\E(\partial_{h_2}\E[\psi_{jk}\{Z_{j,t},\beta^0_{jk},h^0_{jk}(X_{j(-k),t})\}|X_{j(-k),t}]h_2(X_{j(-k),t})) =-\gamma_{j(-k)}^\top\E[\{1/2-F_{\vps_j}(0)\}X_{j(-k),t}]=0
\end{align*}

Verification of \hyperref[C2]{(C2)}: The true parameter $\beta_{jk}^0$ satisfies \eqref{mc} given $F_{\vps_j}(0)=1/2$. {Moreover, 
by Remark 2 in \citet{BCK15_sup}, 
we have $|\hat \beta_{jk}^{[2]}-\beta^0_{jk}|\lesssim\rho_n$ with probability $1-\smallO(1)$. Provided finite dependence adjusted norm in polynomial rates, it is not hard to verify $r_n\lesssim \rho_n$ with proper restrictions on the rate of $\log a_n$, so that for some sufficiently small $c_1>0$ the condition holds.}

Verification of \hyperref[C3]{(C3)}: The map
\begin{align*}
(\beta,h)\mapsto&\E\{\psi_{jk}(Z_{j,t},\beta,h)|X_{j(-k),t}\}\\
&=\E([1/2-F_{\vps_j}\{X_{jk,t}(\beta-\beta^0_{jk})-X_{j(-k),t}^\top\beta_{j(-k)}^0+h_1\}](X_{jk,t}-h_2)|X_{j(-k),t})
\end{align*}
is twice continuously differentiable as $f'_{\vps_j}$ is continuous. For every $\vartheta\in\{\beta,h_1,h_2\}$, \\
$\partial_{\vartheta}\E\{\psi_{jk}(Z_{j,t},\beta,h)|X_{j(-k),t}\}$ is $-\E[f_{\vps_j}\{X_{jk,t}(\beta-\beta^0_{jk})-X_{j(-k),t}^\top\beta_{j(-k)}^0+h_1\}X_{jk,t}(X_{jk,t}-h_2)|X_{j(-k),t}]$ (w.r.t. $\beta$) or $-\E[f_{\vps_j}\{X_{jk,t}(\beta-\beta^0_{jk})-X_{j(-k),t}^\top\beta_{j(-k)}^0+h_1\}(X_{jk,t}-h_2)|X_{j(-k),t}]$ (w.r.t. $h_1$) or $-\E[1/2-F_{\vps_j}\{X_{jk,t}(\beta-\beta^0_{jk})-X_{j(-k),t}^\top\beta_{j(-k)}^0+h_1\}|X_{j(-k),t}]$ (w.r.t. $h_2$). Hence, for every $\beta\in\mathcal{B}_{jk}$, $$|\partial_{\vartheta}\E\{\psi_{jk}(Z_{j,t},\beta,h^0_{jk}(X_{j(-k),t})|X_{j(-k),t}\}|\leq C_1\E(|X_{jk,t}v_{jk,t}|\,|X_{j(-k),t})\vee C_1\E(|v_{jk,t}|\,|X_{j(-k),t})\vee1.$$
Observe that the expectation of the square of the right-hand side is {bounded by constant}. Moreover, let $\mathcal{T}_{jk}(X_{j(-k),t})=\{\tau\in{\R}^2:|\tau_2-X_{j(-k),t}^\top\gamma_{j(-k)}^0|\leq c_3\}$, where $c_3>0$ is a constant. Then for every $\vartheta,\vartheta'\in\{\beta,h_1,h_2\}$, $\beta\in\mathcal{B}_{jk}$, $h\in\mathcal{T}_{jk}(X_{j(-k),t})$, we have
\begin{eqnarray*}
&&|\partial_{\vartheta}\partial_{\vartheta'}\E\{\psi_{jk}(Z_{j,t},\beta,h)|X_{j(-k),t}\}|\\
&\leq& C_1[1\vee\E\{|X_{jk,t}^2(X_{jk,t}-h_2)|\,|X_{j(-k),t}\}\vee\E\{|X_{jk,t}(X_{jk,t}-h_2)|\,|X_{j(-k),t}\}\vee\E(|X_{jk,t}|\,|X_{j(-k),t})\\
&&\qquad\vee\E(|X_{jk,t}-h_2|\,|X_{j(-k),t})].
\end{eqnarray*}
{In particular, 
\begin{eqnarray*}
&&\E\{|X_{jk,t}^2(X_{jk,t}-h_2)|\,|X_{j(-k),t}\}\\
&\leq&\E\{|(X_{j(-k),t}^\top\gamma_{j(-k)}^0+v_{jk,t})^2(c_3+|v_{jk,t}|)|\,|X_{j(-k),t}\}\\
&\leq&2\E\{|\{(X_{j(-k),t}^\top\gamma_{j(-k)}^0)^2+v_{jk,t}^2\}(c_3+|v_{jk,t}|)|\,|X_{j(-k),t}\}\\
&\leq& C\{|X_{j(-k),t}^\top\gamma_{j(-k)}^0|^2\E(|v_{jk,t}|\,|X_{j(-k),t}) + \E(|v_{jk,t}|^3|X_{j(-k),t}) + |X_{j(-k),t}^\top\gamma_{j(-k)}^0|\E(v_{jk,t}^2|X_{j(-k),t})\}.
\end{eqnarray*}
And by similar computation we can show that $|\partial_{\vartheta}\partial_{\vartheta'}\E\{\psi_{jk}(Z_{j,t},\beta,h)|X_{j(-k),t}\}|\leq \ell_1(X_{j(-k),t}):=C'\{|X_{j(-k),t}^\top\gamma_{j(-k)}^0|^2\E(|v_{jk,t}|\,|X_{j(-k),t}) + \E(|v_{jk,t}|^3|X_{j(-k),t}) + |X_{j(-k),t}^\top\gamma_{j(-k)}^0|\E(v_{jk,t}^2|X_{j(-k),t})\}$, 
where the constants $C,C'$ depend on $c_3$ and $C_1$. Lastly, for every $\beta,\beta'\in\mathcal{B}_{jk}$, $h,h'\in\mathcal{T}_{jk}(X_{j(-k),t})$ we have
\begin{eqnarray*}
&&\E[\{\psi_{jk}(Z_{j,t},\beta,h)-\psi_{jk}(Z_{j,t},\beta',h')\}^2|X_{j(-k),t}]\\
&\leq& C_1\E\{|X_{jk,t}(X_{jk,t}-h_2)^2|\,|X_{j(-k),t}\}|\beta-\beta'|+C_1\E\{(X_{jk,t}-h_2)^2\,|X_{j(-k),t}\}|t_1-t'_1|+(t_2-t'_2)^2\\
&\leq& C''\{|X_{j(-k),t}^\top\gamma_{j(-k)}^0|\E(|v_{jk,t}|\,|X_{j(-k),t}) + \E(v_{jk,t}^2|X_{j(-k),t})\}(|\beta-\beta'|+|t_1-t'_1|)+(t_2-t'_2)^2\\
&\leq&\sqrt{2}[C''\{|X_{j(-k),t}^\top\gamma_{j(-k)}^0|\E(|v_{jk,t}|\,|X_{j(-k),t}) +\E(v_{jk,t}^2|X_{j(-k),t})\}+2c_3](|\beta-\beta'|+|t-t'|_2),
\end{eqnarray*}
where constant $C''$ depends on $c_3$ and $C_1$. Consequently, we have verified the last condition in \hyperref[C3]{(C3)} by taking $\ell_2(X_{j(-k),t}):=\sqrt{2}[C''\{|X_{j(-k),t}^\top\gamma_{j(-k)}^0|\E(|v_{jk,t}|\,|X_{j(-k),t}) +\E(v_{jk,t}^2|X_{j(-k),t})\}+2c_3]$ 
and $\upsilon=1$. And given the {bounded} moments conditions on $X_t$, we have $\E\{|\ell_1(X_{j(-k),t})|^4\}\leq L_{1n}$, $\E\{|\ell_2(X_{j(-k),t})|^4\}\leq L_{2n}$.}

Verification of \hyperref[C4]{(C4)}: For any $\beta\in\mathcal{B}_{jk}$, there exists $\beta'$ between $\beta^0_{jk}$ and $\beta$ such that
\begin{eqnarray*}
&&\E[\psi_{jk}\{Z_{j,t},\beta,h^0_{jk}(X_{j(-k),t})\}]\\
&=&\partial_\beta\E[\psi_{jk}\{Z_{j,t},\beta^0_{jk},h^0_{jk}(X_{j(-k),t})\}](\beta-\beta^0_{jk}) +\frac{1}{2}\partial_{\beta}^2\E[\psi_{jk}\{Z_{j,t},\beta',h^0_{jk}(X_{j(-k),t})\}](\beta-\beta^0_{jk})^2.
\end{eqnarray*}
Let $\phi_{jk}=\partial_{\beta}\E[\psi_{jk}\{Z_{j,t},\beta^0_{jk},h^0_{jk}(X_{j(-k),t})\}]\geq c_1^2$. Since $\partial_{\beta}^2\E[\psi_{jk}\{Z_{j,t},\beta',h^0_{jk}(X_{j(-k),t})\}]\leq C_1\E|X_{jk,t}^2v_{jk,t}|\leq C_2$, we have
\begin{align*}
2\big|\E[\psi_{jk}\{Z_{j,t},\beta,h^0_{jk}(X_{j(-k),t})\}]\big|\geq2\phi_{jk}|\beta-\beta^0_{jk}|-C_2(\beta-\beta^0_{jk})^2\geq\phi_{jk}|\beta-\beta^0_{jk}|,
\end{align*}
whenever $|\beta-\beta^0_{jk}|\leq c_1^2/C_2$.

{Verification of \hyperref[C5]{(C5)}: According to Corollary \ref{betabound4}, with probability $1-\smallO(1)$ we have
\begin{align*}
\| \hat \beta_{j(-k)}^{[1]} - \beta_{j(-k)}^0\|_{j, pr} \lesssim \sqrt{s(\log a_n)/n},\quad\| \hat \gamma_{j(-k)} - \gamma_{j(-k)}^0\|_{j, pr} \lesssim \sqrt{s(\log a_n)/n},
\end{align*}}
which means the algorithms can provide an estimator of the nuisance function with good sparsity and rate properties given IC $\lambda$. Thus, by Lemma 7 in \citet{BCK15_sup}, we have \hyperref[C5]{(C5)} holds.

{Verification of \hyperref[C6]{(C6)}: We refer to the proof of Theorem 1 in \citet{BCK15_sup}.
	
Verification of \hyperref[C7]{(C7)}: Recall that $\psi_{jk,t}^0=\{1/2-\IF(\varepsilon_{j,t}\leq0)\}v_{jk,t}$. Hence, $\E(\frac{1}{\sqrt{n}}\sum_{t=1}^n\psi_{jk,t}^0)^2=\sum_{\ell=-(n-1)}^{n-1}(1-|\ell|/n)\E(\psi_{jk,t}^0\psi_{jk,t-\ell}^0)\geq\frac{1}{4}\sum_{\ell=-(n-1)}^{n-1}(1-|\ell|/n)\E(v_{jk,t}v_{jk,t-\ell})\geq c_1/4$.

Verification of \hyperref[C8]{(C8)} and \hyperref[C9]{(C9)}: 
See Comment \ref{comment.rate} where we discuss the admissible dimension rates either under the special case of VAR(1) with geometric decay rate (which gives bounded dependence adjusted norm) or more generally with finite dependence adjusted norm in polynomial rates.

Verification of \hyperref[C9e]{(C9')}: See Comment \ref{comment.rate2} and the discussion can be generalized to the case of finite dependence adjusted norm in polynomial rates easily.
}

\end{proof}

\begin{lemma}\label{normality}
	Let $\psi_{jk,t}^0\defeq\psi_{jk}\{Z_{j,t},\beta^0_{jk},h^0_{jk}(X_{j(-k),t})\}$, $T^{jk}_n\defeq\sigma^{-1}_{jk}\phi_{jk}^{-1}\sum_{t=1}^n \psi^0_{jk,t}$, and assume that $\|\psi^0_{jk,\cdot}\|_{2,\varsigma}<\infty$. Then
	$${\|T_n^{jk}\|_2 = \bigO(\sqrt{n}\|\psi^0_{jk,\cdot}\|_{2,\varsigma})},\,\,\text{and } n^{-1/2}T_n^{jk} \stackrel{\mathcal{L}}{\rightarrow} \N(0,1)$$
\end{lemma}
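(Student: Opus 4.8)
The plan is to treat, for each fixed $(j,k)\in G$, the sequence $\{\psi^0_{jk,t}\}_t$ as a zero-mean stationary process of the form $\psi^0_{jk,t}=f_{jk}(\mathcal F_t)$ — which it is, since $Z_{j,t}$ and $X_{j(-k),t}$ are measurable functions of $\mathcal F_t$ under \hyperref[A1]{(A1)} and $\E\psi^0_{jk,t}=0$ by the moment condition \eqref{mc} — and to reduce both assertions to a single partial sum $S_n\defeq\sum_{t=1}^n\psi^0_{jk,t}$. The first step is the algebraic reduction: since $\sigma_{jk}=|\phi_{jk}|^{-1}\omega_{jk}^{1/2}$ and $\omega_{jk}=\E(n^{-1/2}S_n)^2=\sigma_n^2/n$ with $\sigma_n^2\defeq\E(S_n^2)$, one has
$$
\sigma_{jk}^{-1}\phi_{jk}^{-1}=\sign(\phi_{jk})\,\omega_{jk}^{-1/2},\qquad\text{hence}\qquad n^{-1/2}T_n^{jk}=\sign(\phi_{jk})\,\frac{S_n}{\sigma_n}.
$$
Thus the limit law reduces to the self-normalized convergence $S_n/\sigma_n\stackrel{\mathcal L}{\rightarrow}\N(0,1)$, the sign being immaterial by symmetry of the normal law, while the $L^2$ claim reduces to bounding $\|S_n\|_2$ and multiplying by the factor $|\sigma_{jk}^{-1}\phi_{jk}^{-1}|=\omega_{jk}^{-1/2}$, which is bounded by \hyperref[C7]{(C7)}.

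For the $L^2$ bound I would use the projection (martingale) decomposition. Writing $\mathcal P_\ell(\cdot)\defeq\E(\cdot\mid\mathcal F_\ell)-\E(\cdot\mid\mathcal F_{\ell-1})$ and $\psi^0_{jk,t}=\sum_{m\ge0}\mathcal P_{t-m}\psi^0_{jk,t}$, we get $S_n=\sum_{m\ge0}\sum_{t=1}^n\mathcal P_{t-m}\psi^0_{jk,t}$; for each fixed $m$ the summands $\{\mathcal P_{t-m}\psi^0_{jk,t}\}_{t=1}^n$ are martingale differences, so by orthogonality and stationarity $\|\sum_{t=1}^n\mathcal P_{t-m}\psi^0_{jk,t}\|_2^2=\sum_{t=1}^n\|\mathcal P_{t-m}\psi^0_{jk,t}\|_2^2\le n\delta_{2,m}^2$, where $\delta_{2,m}$ denotes the order-$2$ functional dependence measure of $\psi^0_{jk,t}$. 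Minkowski's inequality then yields $\|S_n\|_2\le\sqrt n\sum_{m\ge0}\delta_{2,m}=\sqrt n\,\Delta_{0,2}$, and since $\Delta_{0,2}\le\|\psi^0_{jk,\cdot}\|_{2,\varsigma}$ (take $m=0$ in the definition of the dependence-adjusted norm in Definition \ref{norm}), we conclude $\|T_n^{jk}\|_2\le|\sigma_{jk}^{-1}\phi_{jk}^{-1}|\sqrt n\,\|\psi^0_{jk,\cdot}\|_{2,\varsigma}=\bigO(\sqrt n\,\|\psi^0_{jk,\cdot}\|_{2,\varsigma})$.

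For the central limit theorem I would directly invoke Lemma \ref{lem:cltwbw} at moment order $2$, verifying its three hypotheses: the centering $\E\psi^0_{jk,t}=0$ holds by \eqref{mc}; the summability $\sum_{t\ge0}\delta_{2,t}=\Delta_{0,2}\le\|\psi^0_{jk,\cdot}\|_{2,\varsigma}<\infty$ holds by assumption; and the non-degeneracy $\sigma_n^2=n\omega_{jk}\to\infty$ follows from $\omega_{jk}\ge c_1$ in \hyperref[C7]{(C7)} (equivalently, the autocovariances of $\psi^0_{jk,t}$ are summable, so $\omega_{jk}$ converges to a strictly positive long-run variance). The lemma gives $S_n/\sigma_n\stackrel{\mathcal L}{\rightarrow}\N(0,1)$, and combining with the identity of the first paragraph delivers $n^{-1/2}T_n^{jk}\stackrel{\mathcal L}{\rightarrow}\N(0,1)$. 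The only genuinely delicate point is ensuring $\sigma_n^2\to\infty$, i.e.\ that the limiting variance does not collapse; this is exactly where the non-degeneracy condition \hyperref[C7]{(C7)} is indispensable, since without it the normalization $\sigma_{jk}$ would be ill-behaved. Everything else amounts to bookkeeping that relates the paper's dependence-adjusted norm $\|\cdot\|_{2,\varsigma}$ to the plain cumulative dependence measure $\Delta_{0,2}$.
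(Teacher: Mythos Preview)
Your proof is correct and follows essentially the same route as the paper's: both use the martingale projection decomposition $T_n^{jk}=\sum_{l\ge0}J_{l,n}^{jk}$ with $J_{l,n}^{jk}=\sigma_{jk}^{-1}\phi_{jk}^{-1}\sum_{t=1}^n\mathcal P_{t-l}\psi^0_{jk,t}$, bound each $\|J_{l,n}^{jk}\|_2$ by $\sqrt n$ times the order-$2$ dependence measure (the paper cites Burkholder's inequality, Lemma \ref{buck}, which at $q=2$ is just orthogonality as you use it), sum via Minkowski, and then invoke Lemma \ref{lem:cltwbw}. Your write-up is in fact slightly more explicit than the paper's in two respects: you spell out the algebraic identity $n^{-1/2}T_n^{jk}=\sign(\phi_{jk})\,S_n/\sigma_n$, and you point to \hyperref[C7]{(C7)} to secure the non-degeneracy $\sigma_n^2\to\infty$ required by Lemma \ref{lem:cltwbw}, which the paper leaves implicit.
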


\begin{proof}[Proof of Lemma \ref{normality}]
	%
	Define the projector operator $\mP_l(X_t) \defeq \E(X_t|\mathcal{F}_l) -\E(X_t|\mathcal{F}_{l-1})$. Note that the projection operator is directly linked to the dependence adjusted norm for $X_{jk,t} = g_{jk}(\mF_t) = g_{jk}(\ldots, \xi_{t-1}, \xi_{t})$,
	and $\|\mP_0(X_{jk,t})\|_2\leq \|g_{jk}(\mF_{t}) - g_{jk}(\mF^\ast_{t})\|_2\leq 2\|\mP_0 (X_{jk,t})\|_2$ (by Theorem 1(i) in \citealp{wu2005nonlinear}).
	
	Let $J^{jk}_{l,n} \defeq \sigma^{-1}_{jk}\phi_{jk}^{-1}\sum_{t=1}^n\mP_{t-l}(\psi^0_{jk,t})$, and it is not hard to see that $T^{jk}_n = \sum^{\infty}_{l=0}J_{l,n}^{jk}$. As $\sigma^{-1}_{jk}\phi_{jk}^{-1}\mP_{t-l}(\psi^0_{jk,t})$'s form the martingale differences over $t$, according to Lemma \ref{buck} we can 
	get $\|J^{jk}_{l,n}\|_2^2 \leq (\sigma_{jk}\phi_{jk})^{-2}\sum_{t=1}^n\|\mP_{t-l}(\psi^0_{jk,t})\|_2^2 \lesssim n(\delta^\psi_{j,k,l})^2$, where $\delta^\psi_{j,k,l}\defeq\|\psi^0_{jk,l}-(\psi^{0}_{jk,l})^\ast\|_2$. Thus, $\|T_n^{jk}\|_2 \lesssim \sqrt{n} \sum^{\infty}_{l=0}\delta^\psi_{j,k,l} \leq \sqrt{n} \|\psi^0_{jk,\cdot}\|_{2, \varsigma} = \bigO(\sqrt{n}\|\psi^0_{jk,\cdot}\|_{2, \varsigma})$. Then the conclusion that $n^{-1/2}T_n^{jk} \stackrel{\mathcal{L}}{\rightarrow} \N(0, 1)$ follows from Lemma \ref{lem:cltwbw} in light of the fact that $\E \psi^0_{jk,t} = 0 $ and $ \|\psi^0_{jk,\cdot}\|_{2,\varsigma} < \infty$.
\end{proof}

\begin{proof}[Proof of Corollary \ref{uninorm}]
The proof follows directly from Lemma \ref{normality}.
\end{proof}

\begin{proof}[Proof of Corollary \ref{cons.boot}]
We apply the high-dimensional central limit theorem (Theorem 3.2 in \cite{ZW15gaussian}) to the vector $\widetilde\Im\defeq\frac{1}{\sqrt{n}}\sum_{t=1}^n\widetilde\zeta_t$ and $\widetilde{\mathcal{Z}} \defeq \operatorname{vec}[\{(\mathcal{Z}_{jk})_{k=1}^K\}_{j=1}^J]$ is the corresponding standard Gaussian random vector, with the same correlation structure. 
Then we have $\rho(D^{-1}\widetilde{\mathcal{\Im}}, D^{-1}\widetilde{\mathcal{Z}})\rightarrow 0$, as $n \to \infty$, where $D$ is a diagonal matrix with the square root of the diagonal elements of the long-run variance-covariance matrix of $\widetilde\zeta_t$, namely $\{\sum^{\ell = \infty}_{\ell=-\infty} \E (\zeta_{jk,t}\zeta_{jk,(t-\ell)})\}^{1/2}$, for $k=1,\ldots,K$, $j=1,\ldots,J$. The rest of the proof is similar to Corollary \ref{gausappro.c} and thus is omitted.
\end{proof}

\begin{proof}[Proof of Corollary \ref{proofboot}]
The proof is similar to that of Theorem \ref{validboot} and Theorem \ref{validboot.joint}; therefore, we omit the detailed proof here.
In particular, the following conditions on $b_n$ are required:
{
\begin{align}\label{ratebn.G}
&b_n=\smallO\{n(\log |G|)^{-4}(\Phi^{\zeta}_{q,\varsigma})^{-4}\wedge n(\log |G|)^{-5}(\Phi_{4,\varsigma}^\zeta)^{-4}\},\,F_\varsigma=\smallO\{n^{q/2}(\log |G|)^{-q}|G|^{-1}(\Gamma_{q,\varsigma}^\zeta)^{-q}\}.\notag\\
&\Phi^\zeta_{2,0}\Phi^\zeta_{2,\varsigma}\{b_n^{-1}+\log(n/b_n)/n+(n-b_n)\log b_n/(nb_n)\}(\log |G|)^2=\smallO(1),\,\text{if } \varsigma=1;\notag\\
&\Phi^\zeta_{2,0}\Phi^\zeta_{2,\varsigma}\{b_n^{-1}+n^{-\varsigma}+(n-b_n)b_n^{-\varsigma+1}/(nb_n)\}(\log |G|)^2=\smallO(1),\,\text{if } \varsigma<1;\notag\\
&\Phi^\zeta_{2,0}\Phi^\zeta_{2,\varsigma}\{b_n^{-1}+n^{-1}b_n^{-\varsigma+1}+(n-b_n)/(nb_n)\}(\log |G|)^2=\smallO(1),\,\text{if } \varsigma>1.
\end{align}
}
where 
$F_{\varsigma} = n$, for $\varsigma >1-2/q$; $F_{\varsigma} = l_nb_n^{q/2-\varsigma q/2}$, for $1/2-2/q<\varsigma<1-2/q$; $F_{\varsigma} = l_n^{q/4-\varsigma q/2}b_n^{q/2-\varsigma q/2} $, for $\varsigma<1/2-2/q$.
\end{proof}

\begin{remark}[Consistency of the pre-estimators $\hat\zeta_{jk,t}$]\label{zeta.remark}
The pre-estimators of the influence functions $\hat\zeta_{jk,t}$ are used in constructing the bootstrap statistics. It is important to discuss the consistency requirement on them for the inference implementation. In particular, the deviation $\underset{(j,k),(j',k')}{\max}|\sum_{i=1}^{l_n} \hat\eta_{j'k',i}\hat\eta_{jk,i} - \sum_{i=1}^{l_n} \eta_{j'k',i}\eta_{jk,i}|$ should be controlled under a certain rate, where $\eta_{jk,i} \defeq \frac{1}{\sqrt{n}} \sum_{l=(i-1)b_n+1}^{ib_n} \zeta_{jk,l} $, and $\hat{\eta}_{jk,i} \defeq\frac{1}{\sqrt{n}}  \sum_{l=(i-1)b_n+1}^{ib_n} \hat \zeta_{jk,l}$. In this comment, we consider the case with stronger tail assumptions as an example to illustrate.

We first observe that
 \begin{eqnarray*}
&& \max_{(j,k),(j',k')}\Big|\sum^{l_n}_{i=1}( \hat{\eta}_{j'k',i}\hat{\eta}_{jk,i} - \eta_{j'k',i}\eta_{jk,i})\Big|\\
&\leq&  \max_{(j,k),(j',k')}\Big|\sum^{l_n}_{i=1} (\hat{\eta}_{j'k',i}- \eta_{j'k',i})(\hat{\eta}_{jk,i}-\eta_{jk,i})\Big|+ 2\max_{(j,k),(j',k')} \Big|\sum^{l_n}_{i=1} \eta_{j'k',i}(\hat{\eta}_{jk,i}- \eta_{jk,i})\Big|\\
&\leq & \max_{(j,k)}\sum_{i=1}^{l_n} (\hat{\eta}_{jk,i}- \eta_{jk,i})^2 + 2\Big(\max_{(j,k)}\sum^{l_n}_{i=1} \eta^2_{jk,i}\Big)^{1/2}\Big\{\max_{(j,k)} \sum^{l_n}_{i=1} (\hat{\eta}_{jk,i}- \eta_{jk,i})^2\Big\}^{1/2}\\
&=:&U_n + 2V_n^{1/2}U_n^{1/2}.
\end{eqnarray*}
Firstly, note that $V_n$ can be analyzed by a concentration inequality similar to the proof of Theorem \ref{residuals.theorem}. In particular, assuming that $\underset{(j,k)}{\max}\|(\psi^0_{jk,\cdot})^2\|_{\psi_\nu,0}<\infty$, the order of $V_n$ is given by $cn^{-1/2}b_n\{\log(KJ)\}^{1/\gamma}\underset{(j,k)}{\max}\|(\psi^0_{jk,\cdot})^2\|_{\psi_\nu,0} + b_n\underset{(j,k)}{\max}\|(\psi^0_{jk,t})\|_{2}^2$, with $\gamma = 1/(2\nu+1)$ and a sufficiently large $c$. Secondly, recall the definitions $\zeta_{jk,t}=-\phi^{-1}_{jk}\sigma_{jk}^{-1}\psi^0_{jk,t}=-\omega_{jk}^{-1/2}\psi^0_{jk,t}$, $\hat\zeta_{jk,t}=-\hat\phi^{-1}_{jk}\hat\sigma_{jk}^{-1}\hat\psi_{jk,t}=-\hat\omega_{jk}^{-1/2}\hat\psi_{jk,t}$, where $\omega_{jk}$'s are essentially the long-run variance of $\psi^0_{jk,t}$. We have
\begin{align*}
U_n&\leq 2\big(\max\limits_{(j,k)}|\hat\omega_{jk}^{-1/2} - \omega_{jk}^{-1/2}|\big)^2n^{-1}\sum^{l_n}_{i=1} \Big(\sum_{l=(i-1)b_n+1}^{ib_n} \psi^0_{jk,l}\Big)^2\notag\\
&\quad+\,2\max_{(j,k)}\hat\omega_{jk}^{-1}n^{-1}\sum^{l_n}_{i=1} \Big\{\sum_{l=(i-1)b_n+1}^{ib_n} (\hat\psi_{jk,l}-\psi^0_{jk,l})\Big\}^2\\
&=:2W_n^2U_{n,1} + 2U_{n,2}.
\end{align*}
Given $\min\limits_{(j,k)}\omega_{j,k} \geq c_\omega$, we have $\max\limits_{(j,k)}|\hat{\omega}_{jk}- \omega_{jk}| \leq x$ (for $x\leq c_\omega/2$) implies $W_n=\max\limits_{(j,k)} |\hat{\omega}_{jk}^{-1/2}- \omega^{-1/2}_{jk}|\leq 2xc_\omega^{-3/2} $. Again, the first term above can be handled by a maximal inequality for $U_{n,1}$ with the order of $cn^{-1/2}b_n\{\log(KJ)\}^{1/\gamma}\underset{(j,k)}{\max}\|(\psi^0_{jk,\cdot})^2\|_{\psi_\nu,0} + b_n\underset{(j,k)}{\max}\|(\psi^0_{jk,t})\|_{2}^2$ (same as $V_n$), together with a consistent rate of $w_n^2$, which can be analyzed by dealing with $\max\limits_{(j,k)}|\hat{\omega}_{jk}- \omega_{jk}|$. \\
As for $U_{n,2}$, consider the event $\mA_0=\{\max\limits_{(j,k)}|\hat{\omega}_{jk}- \omega_{jk}|\leq x\}$. Note that on the event $\mA_0$ for $x\leq c_\omega/2$, we have $\min\limits_{(j,k)}\hat{\omega}_{jk}\geq\min\limits_{(j,k)}{\omega}_{jk} -\max\limits_{(j,k)}|\hat{\omega}_{jk}- \omega_{jk}|\geq c_\omega/2$, which implies $\max\limits_{(j,k)}\hat{\omega}_{jk}^{-1}\leq 2/c_\omega$. Let $\varPsi_n\defeq n^{-1}\max\limits_{(j,k)}\sum^{l_n}_{i=1} \{\sum^{ib_n}_{l= (i-1)b_n+1}(\hat\psi_{jk,l}-\psi^0_{jk,l})\}^2$. It follows that 
\begin{align*}
\P(U_{n,2}\geq z) &\leq \P(\mA_0^C) + \P(\{U_{n,2}\geq z\}\cap\mA_0)\\
&\leq \P(\mA_0^C) + \P(\varPsi_n\geq c_\omega z/2).
\end{align*}
In particular, to get the estimator of $\omega_{jk}$, one can consider $\hat\omega_{jk}=n^{-1}\sum^{l_n}_{i=1}(\sum^{ib_n}_{l= (i-1)b_n+1}\hat{\psi}_{jk,l})^2$, which gives
\begin{eqnarray*}
&&\max_{(j,k)}|\hat{\omega}_{jk}- \omega_{jk}|\\
&\leq& n^{-1}\max_{(j,k)}\Big|\sum^{l_n}_{i=1} \Big\{\Big(\sum^{ib_n}_{l= (i-1)b_n+1}\hat\psi_{jk,l}\Big)^2-\Big(\sum^{ib_n}_{l= (i-1)b_n+1}\psi^0_{jk,l}\Big)^2\Big\}\Big|\\
&&+\,\max_{(j,k)}\Big|\omega_{jk}-n^{-1}\sum^{l_n}_{i=1}\Big(\sum^{ib_n}_{l= (i-1)b_n+1}\psi^0_{jk,l}\Big)^2\Big|\\
&=:&W_{n,1} + W_{n,2},
\end{eqnarray*}
where $W_{n,2}$ has been similarly analyzed in the proof of Theorem \ref{validboot} along with Comment \ref{validboot.exp}. The order of $W_{n,2}$ is given by $cn^{-1/2}b_n\{\log (KJ)\}^{1/\gamma}\Phi_{j,\psi_v, 0}^2\vee \Phi_{4,0}\Phi_{4,\varsigma} v(b_n)$. Moreover, $W_{n,1}$ is bounded by $2\varPsi_n$. We shall tackle the rate of it in the following lemma.

To summarize, suppose the moments and dependence adjusted norms are all bounded by constants, the dominant term in $\underset{(j,k),(j',k')}{\max}|\sum_{i=1}^{l_n} \hat\eta_{j'k',i}\hat\eta_{jk,i} - \sum_{i=1}^{l_n} \eta_{j'k',i}\eta_{jk,i}|$ is given by $b_n^{1/2}\rho_n^{1/2}s^{1/4}$, with the LASSO rate $\rho_n=\bigO(n^{-1/2}s^{1/2}\{\log(KJ)\}^{1/\gamma}))$. Therefore, if we assume $b_n^{1/2}\rho_n^{1/2}s^{1/4}=\smallO(\{\log(JK)\}^{-2})$, it follows that $\underset{(j,k),(j',k')}{\max}|\sum_{i=1}^{l_n} \hat\eta_{j'k',i}\hat\eta_{jk,i} - \sum_{i=1}^{l_n} \eta_{j'k',i}\eta_{jk,i}|=\smallO_{\P}(\{\log(JK)\}^{-2})$.
\end{remark}

\begin{lemma}\label{block.lemma}
Given \hyperref[C6]{(C6)})-\hyperref[C7]{(C7)}) , assume $b_n^{2c+1}n^{(1-c)/2}\{s\log(a_nb_n)\}^{-1/2}\|F(z_{\cdot})\|_{2(1+c),0}^{2(1+c)}=\smallO(1)$, $c(ns)^{1/2}b_n^3\rho_n\{\log(KJ)\}^{1/\gamma}\max\limits_{(j,k)}\max\limits_{1\leq i\leq l_n}\Phi^{\ell_i}_{jk,\psi_\nu} + ns^{1/2}b_n^3\rho_n\max\limits_{(j,k)}\max\limits_{1\leq i\leq l_n}\mu^{\ell_i}_{jk,\psi_\nu}=\bigO(n)$, with $\gamma = 1/(2\nu+1)$. For $w\geq cn^{-1/2}s^{1/2}\rho_nb_n\{\log(KJ)\}^{1/\gamma}\max\limits_{(j,k)}\max\limits_{1\leq i\leq l_n}\Phi^{\tilde\ell_i}_{jk,\psi_\nu} + s^{1/2}\rho_n\max\limits_{(j,k)}\max\limits_{1\leq i\leq l_n}\mu^{\tilde\ell_i}_{jk,\psi_\nu}$,  we have
$$\P\Big(n^{-1}\max_{(j,k)}\sum^{l_n}_{i=1} \Big\{\sum^{ib_n}_{l= (i-1)b_n+1}(\psi_{jk}(Z_{j,l},\tilde\beta_{jk},\hat h_{jk})-\psi^0_{jk,l})\Big\}^2\geq w\Big)\to0,\,\text{ as }n\to\infty,$$
where $\tilde{\beta}_{jk}$ is any estimator such that $|\tilde{\beta}_{jk}- \beta^0_{jk}|\leq C \rho_n$ with probability $1-\smallO(1)$, and $\hat h_{jk}$ satisfies \hyperref[C5]{(C5)}. ($\Phi^{\ell_i}_{jk,\psi_\nu}$,$\mu^{\ell_i}_{jk,\psi_\nu}$ and $\Phi^{\tilde\ell_i}_{jk,\psi_\nu}$,$\mu^{\tilde\ell_i}_{jk,\psi_\nu}$ are defined in \eqref{norms} and \eqref{norms2})
\end{lemma}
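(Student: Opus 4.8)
The plan is to reduce the claim to controlling $n^{-1}\max_{(j,k)\in G}\mathcal T_{jk}$, where $\mathcal T_{jk}\defeq\sum_{i=1}^{l_n}\big(\sum_{l\in B_i}\Delta_{jk,l}\big)^2$ with $B_i$ the $i$-th block and $\Delta_{jk,l}\defeq\psi_{jk}(Z_{j,l},\tilde\beta_{jk},\hat h_{jk})-\psi^0_{jk,l}$, and then to split each increment into a conditional-mean (drift) part and a martingale-type remainder. Write $\bar\Delta_{jk,l}\defeq\E(\Delta_{jk,l}\mid\mF_{l-1},X_{j(-k),l})$ and $\widetilde\Delta_{jk,l}\defeq\Delta_{jk,l}-\bar\Delta_{jk,l}$, so that $(a+b)^2\leq2a^2+2b^2$ gives $\mathcal T_{jk}\leq2\sum_i(\sum_{l\in B_i}\bar\Delta_{jk,l})^2+2\sum_i(\sum_{l\in B_i}\widetilde\Delta_{jk,l})^2$. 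The drift piece will be handled by the mean-value expansion of Lemma \ref{ratedeltan}, and the remainder by the fourth-moment/Nagaev argument of Theorem \ref{residuals.theorem}; since the LAD score is non-smooth, this split is what replaces the direct linearization $\hat\vps-\vps=X^\top(\hat\beta-\beta^0)$ available there.

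For the drift piece I would expand $\bar\Delta_{jk,l}$ by the mean value theorem exactly as in Lemma \ref{ratedeltan}, producing a term $\partial_\beta g\,(\tilde\beta_{jk}-\beta^0_{jk})$ together with $\sum_m\partial_{h_m}g\,\{\hat h_{jk,m}(X_{j(-k),l})-h^0_{jk,m}(X_{j(-k),l})\}$. Since $|\tilde\beta_{jk}-\beta^0_{jk}|\lesssim\rho_n$ and, by \hyperref[C5]{(C5)}, the nuisance deviation is linear in $X_{j(-k),l}$ with $\ell_1$-size $\lesssim\sqrt{s}\rho_n$, the block-level Jensen/Cauchy--Schwarz inequality $(\sum_{l\in B_i}\bar\Delta_{jk,l})^2\leq b_n\sum_{l\in B_i}\bar\Delta_{jk,l}^2$ reduces the task to an empirical-process bound on $b_n\,\E_n(\bar\Delta_{jk,t}^2)$, uniformly over the blocks and over $(j,k)\in G$. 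Applying the strong-tail partial-sum inequality of Lemma \ref{lemma.exp} to the (centered) covariate-based process then yields a high-probability bound aggregating, through the block-indexed dependence-adjusted norms $\Phi^{\tilde\ell_i}_{jk,\psi_\nu}$, $\mu^{\tilde\ell_i}_{jk,\psi_\nu}$ of \eqref{norms2}, to the portion of $w$ carrying those quantities.

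For the martingale remainder I would mimic the computation for $T_{j,kk'}$ in the proof of Theorem \ref{residuals.theorem}. Writing the centered statistic $\E_0(\cdot)=\sum_v\mP_v(\cdot)$ via the projectors $\mP_v(\cdot)=\E(\cdot\mid\mF_v)-\E(\cdot\mid\mF_{v-1})$, using Jensen within blocks, Rio's moment inequality (Theorem 2.1 of \citet{rio2009moment}) and the Burkholder bound of Lemma \ref{buck}, one bounds the $q/2$-norm of the centered $\sum_i(\sum_{l\in B_i}\widetilde\Delta_{jk,l})^2$ by $l_n^{1/2}b_n^{3/2}$ times products of dependence-adjusted norms of the squared increments; a Nagaev-type step as in Theorem 3 of \citet{wu2016performance} converts this into an exponential tail. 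Combining with the deterministic bound on the expectation (again via \hyperref[C3]{(C3)}) and a Bonferroni union bound over the $|G|$ pairs produces the remaining part of $w$ through the norms $\Phi^{\ell_i}_{jk,\psi_\nu}$, $\mu^{\ell_i}_{jk,\psi_\nu}$ of \eqref{norms}. The moment restriction $b_n^{2c+1}n^{(1-c)/2}\{s\log(a_nb_n)\}^{-1/2}\|F(z_\cdot)\|_{2(1+c),0}^{2(1+c)}=\smallO(1)$ is precisely what absorbs the higher-order moment correction that arises from squaring the non-smooth increments.

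The main obstacle is the non-smoothness of the score: because $\Delta_{jk,l}$ contains an indicator it cannot be linearized, so the remainder must be controlled uniformly over the function class $\widetilde{\mathcal F}$ of entropy $\lesssim s\log(a_n/\epsilon)$ rather than pointwise, which is also how the data-dependence of $\tilde\beta_{jk}$ and $\hat h_{jk}$ is neutralized. The delicate bookkeeping lies in tracking the block length $b_n$ through the squared block sums so that the block-indexed dependence-adjusted norms aggregate to the stated rate $w$; everything else is a routine combination of Lemmas \ref{ratedeltan}, \ref{buck}, and \ref{lemma.exp} with the Nagaev device already used in Theorem \ref{residuals.theorem}.
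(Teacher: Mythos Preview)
Your proposal takes a genuinely different route from the paper, and the difference matters for whether the stated conditions are actually the ones that emerge.

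The paper does \emph{not} split $\Delta_{jk,l}$ into an individual-level drift $\bar\Delta_{jk,l}=\E(\Delta_{jk,l}\mid\mF_{l-1},X_{j(-k),l})$ and a martingale remainder. Instead it works directly with the squared block sums $g_{b_n,i}(f)\defeq\big(\sum_{l\in B_i}f(z_l)\big)^2$ for $f\in\widetilde{\mathcal F}$, first constructing a block-level pseudo-metric $\delta_{b_n}(f,\tilde f)$ and showing that the covering number of $\widetilde{\mathcal F}$ under $\delta_{b_n}$ is $\lesssim(a_nb_n/\epsilon)^{cs}$; this step is absent from your outline and is what permits the discretization to a finite net $\mathcal A_n$. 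The decomposition is then at the \emph{block} level: $n^{-1}\sum_i[g_{b_n,i}(f)-\E g_{b_n,i}(f)]=I_n+II_n$, where $I_n$ is the block-martingale piece (conditioning on $\mF_{(i-1)b_n}$ and $\bm X_{j(-k),i}$) and $II_n$ is its block-level drift. The piece $I_n$ is handled by truncation at $\sqrt n\,\delta$ followed by Freedman's inequality (Lemma~\ref{freedman}), not by the Rio/Burkholder/Nagaev argument of Theorem~\ref{residuals.theorem}. The Freedman variance term requires bounding $\sum_i\E\{g_{b_n,i}^c(f)^2\mid\mF_{(i-1)b_n}\}$, which after Jensen reduces to controlling $\E\{f(z_l)^4\mid\mF_{(i-1)b_n},\bm X_{j(-k),i}\}=:\ell_i(z_l,\beta,\theta)$; a mean-value expansion of $\ell_i$ in $(\beta,\theta)$ is what \emph{defines} the quantities $\Phi^{\ell_i}_{jk,\psi_\nu},\mu^{\ell_i}_{jk,\psi_\nu}$ in \eqref{norms} and yields the $\bigO(n)$ hypothesis. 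The truncation remainder, bounded by Markov with a $(1+c)$-th moment, is the source of the condition $b_n^{2c+1}n^{(1-c)/2}\{s\log(a_nb_n)\}^{-1/2}\|F(z_\cdot)\|_{2(1+c),0}^{2(1+c)}=\smallO(1)$, not a ``higher-order correction from squaring.'' Finally $II_n$ (plus the mean $\E g_{b_n,i}$) is handled by the analogous expansion of $\tilde\ell_i(z_l,\beta,\theta)\defeq\E\{f(z_l)^2\mid\mF_{(i-1)b_n},\bm X_{j(-k),i}\}$, which defines the quantities in \eqref{norms2} and produces the threshold $w$.

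Your individual-level split could in principle be pushed through, but it would generate \emph{different} auxiliary norms: your drift piece leads to $b_n\E_n(\bar\Delta_{jk,t}^2)$, which involves the square of an \emph{individual-level} conditional expectation, whereas $\Phi^{\tilde\ell_i}$ is built from a \emph{block-level} conditional expectation of $f^2$; likewise your Nagaev route for the remainder would not naturally produce the fourth-moment functions $\ell_i$ that appear in \eqref{norms}. Since the lemma is stated in terms of precisely those quantities, your outline does not establish the claim as written, and the attribution of which hypothesis controls which term is inverted relative to the actual proof.
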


\begin{proof}[Proof of Lemma \ref{block.lemma}]
Same as in the proof of Lemma \ref{ratedeltan}, we look at the functional class $\widetilde{\mathcal{F}} = \{z\mapsto\psi_{jk}\{z,\beta,\tilde h(x_{j(-k)})\} - \psi_{jk}\{z,\beta^0_{jk},h^0_{jk}(x_{j(-k)})\}: (j,k)\in G, \beta\in\mathcal{B}_{jk},|\beta-\beta^0_{jk}|\leq C\rho_n,\tilde h\in\mathcal{H}_{jk}\}$, which has $2F$ as the envelope (the definition of $F$ is given in \hyperref[C6]{(C6)}). For any $f,\tilde f\in\widetilde{\mathcal F}$, we have
\begin{eqnarray*}
\delta_{b_n}(f,\tilde f)&:=&\frac{1}{n}\sum^{l_n}_{i=1}\Big|\Big\{\sum^{ib_n}_{l= (i-1)b_n+1}f(z_l)\Big\}^2-\Big\{\sum^{ib_n}_{l= (i-1)b_n+1}\tilde f(z_l)\Big\}^2\Big|\\
&=&\frac{1}{n}\sum^{l_n}_{i=1}\Big|\sum^{ib_n}_{l= (i-1)b_n+1}\{f(z_l)-\tilde f(z_l)\}\sum^{ib_n}_{l= (i-1)b_n+1}\{f(z_l)+\tilde f(z_l)\}\Big|\\
&\leq&\Big[\frac{1}{n}\sum^{l_n}_{i=1}\Big\{\sum^{ib_n}_{l= (i-1)b_n+1}f(z_l)-\tilde f(z_l)\Big\}^2\Big]^{1/2}\Big[\frac{1}{n}\sum^{l_n}_{i=1}\Big\{\sum^{ib_n}_{l= (i-1)b_n+1}2F(z_l)\Big\}^2\Big]^{1/2}\\
&\leq& 2\Big[\frac{1}{n}\sum_{t=1}^n\{f(z_t)-\tilde f(z_t)\}^2\Big]^{1/2}\Big[\frac{1}{l_n}\sum^{l_n}_{i=1}\Big\{\sum^{ib_n}_{l= (i-1)b_n+1}F(z_l)\Big\}^2\Big]^{1/2}.\\
\end{eqnarray*}
In addition, it is not hard to see that $\|\sum^{ib_n}_{l = (i-1)b_n+1}F(z_l)\|_2\leq b_n\E\{F(z_t)\}\vee b_n^{1/2}\|F(z_{\cdot})\|_{2,0}$.
As a result, we can assume that
\begin{align*}
\mathcal N(\epsilon\|2F\|_{\mathcal Q, 2}, \widetilde{\mathcal F}, \delta_{b_n}(\cdot))&\lesssim_{\P} \mathcal N(\epsilon\|2F\|_{\mathcal Q, 2}/(2b_n\|F(z_{\cdot})\|_{2,0}), \widetilde{\mathcal F}, \|\cdot\|_{\mathcal Q, 2})\\
&\lesssim(2a_nb_n\|F(z_{\cdot})\|_{2,0}/\epsilon)^{cs}.
\end{align*}
To deal with the rate of $\sup_{f\in\widetilde\mF}n^{-1}\sum^{l_n}_{i=1}\{\sum^{ib_n}_{l= (i-1)b_n+1}f(z_l)\}^2$, let $g_{b_n,i}(f)\defeq\{\sum^{ib_n}_{l= (i-1)b_n+1}f(z_l)\}^2$. We first analyze $n^{-1}\sum^{l_n}_{i=1}[g_{b_n,i}(f) - \E\{g_{b_n,i}(f)\}]$. Note that for any $f\in\widetilde\mF$, there exists a set $\mA_n$ such that $\min_{f'\in\mA_n}\delta_{b_n}(f,f')\lesssim_{\P}\tilde\epsilon$, where $\tilde\epsilon\defeq\epsilon\|2F\|_{\mQ,2}$, and the cardinality of the set $|\mA_n|=(a_nb_n/\epsilon)^{cs}$. Hence, with probability $1-\smallO(1)$,
\begin{eqnarray*}
&&n^{-1}\sup_{f\in\widetilde\mF}\Big|\sum^{l_n}_{i=1}[g_{b_n,i}(f) - \E\{g_{b_n,i}(f)\}]\Big|\\
&\leq&n^{-1}\sup_{f\in\widetilde\mF}\Big|\sum^{l_n}_{i=1}[g_{b_n,i}(f) - g_{b_n,i}(\varpi(f)) -\E\{g_{b_n,i}(f)|\} + \E\{g_{b_n,i}(\varpi(f))\}]\Big|\\
&& + \,\, n^{-1}\max_{f\in\mA_n}\Big|\sum^{l_n}_{i=1}[g_{b_n,i}(f) - \E\{g_{b_n,i}(f)\}]\Big|\\
&\leq&2\tilde\epsilon + n^{-1}\max_{f\in \mA_n}\Big|\sum^{l_n}_{i=1}[g_{b_n,i}(f) - \E\{g_{b_n,i}(f)|\mF_{(i-1)b_n},\bm{X}_{j(-k),i}\}]\Big| \\
&& +\,\,  n^{-1}\max_{f\in \mA_n}\Big|\sum^{l_n}_{i=1}[\E\{g_{b_n,i}(f)|\mF_{(i-1)b_n},\bm{X}_{j(-k),i}\} - \E\{g_{b_n,i}(f)\}]\Big|\\
&=:& 2\tilde\epsilon+ I_n + II_n,
\end{eqnarray*}
where $\bm{X}_{j(-k),i}\defeq\{X_{j(-k),l}\}^{ib_n}_{l = (i-1)b_n+1}$, $\varpi(f)\defeq\arg\,\underset{f'\in\mA_n}{\min}\delta_{b_n}(f,f')$.

Next, we shall look for the bounds for $I_n$ and $II_n$, respectively. We first truncate the function $g_{b_n,i}(\cdot)$ as $g_{b_n,i}^c(\cdot)\defeq g_{b_n,i}(\cdot)\IF\{g_{b_n,i}(\cdot)\leq\sqrt{n}\delta\}$, where $\delta$ is a positive constant. Note that
\begin{eqnarray}\label{In}
I_n&\leq& n^{-1}\max_{f\in \mA_n}\Big|\sum^{l_n}_{i=1}[g_{b_n,i}^c(f) - \E\{g^c_{b_n,i}(f)|\mF_{(i-1)b_n},\bm{X}_{j(-k),i}\}]\Big| \notag\\
&&+\,\, n^{-1}\Big|\sum^{l_n}_{i=1}[G_{b_n,i} \IF(G_{b_n,i}>\sqrt{n}\delta)+ \E\{G_{b_n,i}\IF(G_{b_n,i}>\sqrt{n}\delta)|\mF_{(i-1)b_n},\bm{X}_{j(-k),i}\}]\Big|,\notag\\
\end{eqnarray}
where $G_{b_n,i}\geq\sup_{f\in\widetilde\mF}g_{b_n,i}(f)$. For the second term, applying the Markov inequality gives
\begin{eqnarray*}
&&\P\Big(\Big|\sum^{l_n}_{i=1}[G_{b_n,i} \IF(G_{b_n,i}>\sqrt{n}\delta)+ \E\{G_{b_n,i}\IF(G_{b_n,i}>\sqrt{n}\delta)|\mF_{(i-1)b_n},\bm{X}_{j(-k),i}\}]\Big|\geq x\Big)\\
&\leq& \frac{2\sum^{l_n}_{i=1}\E|G_{b_n,i}\IF(G_{b_n,i}>\sqrt{n}\delta)|}{x}.
\end{eqnarray*}
It is not hard to see that for any $c>1$, $\E|G_{b_n,i}\IF(G_{b_n,i}>\sqrt{n}\delta)|\leq\E|G_{b_n,i}|^{1+c}/(\delta\sqrt{n})^{c}\leq [b_n^{2(1+c)}\{\E(F(z_t))\}^{2(1+c)}\vee b_n^{1+c}\|F(z_{\cdot})\|_{2(1+c),0}^{2(1+c)}]/(\delta\sqrt{n})^{c}$.
Thus, we have $\frac{\sum^{l_n}_{i=1}\E|G_{b_n,i}\IF(G_{b_n,i}>\sqrt{n}\delta)|}{x}=\smallO(1)$ with $x=\sqrt{ns\log(a_nb_n)}$, provided $b_n^{2c+1}n^{(1-c)/2}\{s\log(a_nb_n)\}^{-1/2}\|F(z_{\cdot})\|_{2(1+c),0}^{2(1+c)}=\smallO(1)$. It follows that the order of the second term in \eqref{In} is given by $\sqrt{s\log(a_nb_n)}/\sqrt{n}$.

For the truncated term in \eqref{In}, note that
\begin{eqnarray*}
&&\max_{f\in \mA_n}\sum_{i=1}^{l_n}\E\big([g_{b_n,i}^c(f) - \E\{g^c_{b_n,i}(f)|\mF_{(i-1)b_n},\bm{X}_{j(-k),i}\}]^2|\mF_{(i-1)b_n},\bm{X}_{j(-k),i}\big)\\
&\leq&4b_n^3\max_{f\in \mA_n}\sum^{n}_{l =1}\E\{f(z_l)^4|\mF_{(i-1)b_n},\bm{X}_{j(-k),i}\},
\end{eqnarray*}
where we have used the Jensen inequality. Moreover, for each $f\in\mA_n$, $1\leq i\leq l_n$, there is a corresponding function $\ell_i$ such that $\ell_i(z_l,\beta,\theta)=\E\{f(z_l)^4|\mF_{(i-1)b_n},\bm{X}_{j(-k),i}\}$ for $(i-1)b_n+1\leq l\leq ib_n$, where $\theta=(\theta_1^\top,\theta_2^\top)^\top$, $\theta_1$ and $\theta_2$ correspond to the nuisance parameters $\beta_{j(-k)}$ and $\gamma_{j(-k)}$ of $\tilde h\in\mH_{jk}$, $|\beta-\beta_{jk}^0|\lesssim\rho_n$. By the mean value theorem and the continuity of the function $\ell_i$, we have
\begin{align*}
\ell_i(Z_{j,l},\beta,\theta)=&\,\partial_\beta \ell_i(Z_{j,l},\bar\beta,\theta)(\beta-\beta^0_{jk})+\sum_{m=1}^2\partial_{\theta_m} \ell_i(Z_{j,l},\beta,\bar\theta)(\theta_m-\theta_m^0),
\end{align*}
where $(\bar\beta,\bar\theta)$ is the corresponding point which joins the line segment between $(\beta,\theta)$ and $(\beta_{jk}^0,\theta^0)$, with $\theta^0_1=\beta_{j(-k)}^0$, $\theta^0_2=\gamma_{j(-k)}^0$. It follows that
\begin{eqnarray*}
&&\max_{f\in \mA_n}\sum_{l=1}^n\E\{f(z_l)^4|\mF_{(i-1)b_n},\bm{X}_{j(-k),i}\}\\
&\lesssim&\rho_n\max_{(j,k)}\max_{\bar\beta\in\mA_n^\beta}\sum_{i=1}^{l_n}\sum^{ib_n}_{l = (i-1)b_n+1}\partial_\beta \ell_i(Z_{j,l},\bar\beta,\theta) \\
&& + \,\,\rho_n\sqrt{s}\max_{(j,k)}\max_{\bar\theta\in\mA_n^\theta}\sum_{m=1}^2\sum_{i=1}^{l_n}\sum^{ib_n}_{l = (i-1)b_n+1}\partial_{\theta_m} \ell_i(Z_{j,l},\beta,\bar\theta)\\
&=:& \rho_n\max_{(j,k)}T_{1,n,jk} + \rho_n\sqrt{s}\max_{(j,k)}T_{2,n,jk},
\end{eqnarray*}
where $\mA_n^\beta$ and $\mA_n^\theta$ collect all the points of $\beta$ and $\theta$ (with fixed $(j,k)$) according to $\mA_n$, respectively. We shall establish the maximal inequalities for $T_{1,n,jk}$ and $T_{2,n,jk}$. To this end, we first define the following quantities:
\begin{align}\label{norms}
&\Phi^{\ell_i}_{jk,\psi_\nu}\defeq\big\|\max_{\bar\beta\in\mA_n^\beta}\partial_\beta\ell_i(z_\cdot,\bar\beta,\theta)\big\|_{\psi_\nu,0}\vee\max_{m\in\{1,2\}}\max_{k'\neq k}\big\|\max_{\bar\theta\in\mA_n^\theta}\partial_{\theta_{m,k'}}\ell_i(z_{\cdot},\beta,\bar\theta)\big\|_{\psi_\nu,0},\notag\\
&\mu^{\ell_i}_{jk,\psi_\nu}\defeq\E\big\{\max_{\bar\beta\in\mA_n^\beta}\partial_\beta\ell_i(z_t,\bar\beta,\theta)\big\}\vee\max_{m\in\{1,2\}}\max_{k'\neq k}\E\big\{\max_{\bar\theta\in\mA_n^\theta}\partial_{\theta_{m,k'}}\ell_i(z_{t},\beta,\bar\theta)\big\}.
\end{align}
Given $\max\limits_{(j,k)}\max\limits_{1\leq i\leq l_n}\Phi^{\ell_i}_{jk,\psi_\nu}<\infty$, $\max\limits_{(j,k)}\max\limits_{1\leq i\leq l_n}\mu^{\ell_i}_{jk,\psi_\nu}<\infty$, applying a similar approach as in the proof of Theorem \ref{residuals.theorem} gives
\begin{eqnarray*}
&&\max_{f\in \mA_n}\sum_{i=1}^{l_n}\E\big([g_{b_n,i}^c(f) - \E\{g^c_{b_n,i}(f)|\mF_{(i-1)b_n},\bm{X}_{j(-k),i}\}]^2|\mF_{(i-1)b_n},\bm{X}_{j(-k),i}\big)\\
&\lesssim& c\sqrt{ns}b_n^3\rho_n\{\log(KJ)\}^{1/\gamma}\max_{(j,k)}\max_{1\leq i\leq l_n}\Phi^{\ell_i}_{jk,\psi_\nu} + n\sqrt{s}b_n^3\rho_n\max_{(j,k)}\max_{1\leq i\leq l_n}\mu^{\ell_i}_{jk,\psi_\nu}=\bigO(n).
\end{eqnarray*}
Therefore, we can apply Lemma \ref{freedman} with $u=\sqrt{n}\delta$ and $v=\bigO(n)$ to get
$$\P\Big(\max_{f\in \mA_n}\Big|\sum^{l_n}_{i=1}[g_{b_n,i}^c(f) - \E\{g^c_{b_n,i}(f)|\mF_{(i-1)b_n},\bm{X}_{j(-k),i}\}]\Big|\geq z\Big)\leq |\mA_n|e^{-z^2/(2z\sqrt{n}\delta+2n)}.$$
By letting $z\gtrsim c\sqrt{ns\log(a_nb_n)}$, for sufficiently large $c$, we can obtain $I_n\lesssim cn^{-1/2}\sqrt{s\log(a_nb_n)}$.

For $II_n$, we just need to replicate a similar procedure to derive the bound. In particular, for each $f\in\mA_n$, $1\leq i\leq l_n$, there is a corresponding function $\tilde\ell_i$ such that $\tilde\ell_i(z_l,\beta,\theta)=\E\{f(z_l)^2|\mF_{(i-1)b_n},\bm{X}_{j(-k),i}\}$ for $(i-1)b_n+1\leq l\leq ib_n$. We also make use of the continuity of the function $\tilde\ell_i$ and assume
\begin{align}\label{norms2}
&\Phi^{\tilde\ell_i}_{jk,\psi_\nu}\defeq\big\|\max_{\bar\beta\in\mA_n^\beta}\partial_\beta\tilde\ell_i(z_\cdot,\bar\beta,\theta)\big\|_{\psi_\nu,0}\vee\max_{m\in\{1,2\}}\max_{k'\neq k}\big\|\max_{\bar\theta\in\mA_n^\theta}\partial_{\theta_{m,k'}}\tilde\ell_i(z_{\cdot},\beta,\bar\theta)\big\|_{\psi_\nu,0}<\infty,\notag\\
&\mu^{\tilde\ell_i}_{jk,\psi_\nu}\defeq\E\big\{\max_{\bar\beta\in\mA_n^\beta}\partial_\beta\tilde\ell_i(z_t,\bar\beta,\theta)\big\}\vee\max_{m\in\{1,2\}}\max_{k'\neq k}\E\big\{\max_{\bar\theta\in\mA_n^\theta}\partial_{\theta_{m,k'}}\tilde\ell_i(z_{t},\beta,\bar\theta)\big\}<\infty.
\end{align}
Then, we obtain that
$$II_n\lesssim_{\P}cn^{-1/2}s^{1/2}\rho_nb_n\{\log(KJ)\}^{1/\gamma}\max_{(j,k)}\max_{1\leq i\leq l_n}\Phi^{\tilde\ell_i}_{jk,\psi_\nu}.$$

At last, combining the two parts and adding the mean back, we see that
$$n^{-1}\sup_{f\in\widetilde\mF}\Big|\sum^{l_n}_{i=1}g_{b_n,i}(f)\Big|\lesssim_{\P}cn^{-1/2}s^{1/2}\rho_nb_n\{\log(KJ)\}^{1/\gamma}\max_{(j,k)}\max_{1\leq i\leq l_n}\Phi^{\tilde\ell_i}_{jk,\psi_\nu} + s^{1/2}\rho_n\max_{(j,k)}\max_{1\leq i\leq l_n}\mu^{\tilde\ell_i}_{jk,\psi_\nu},$$
given $\tilde\epsilon$ is sufficiently small ($\tilde\epsilon\leq \frac{1}{2} n^{-1/2}\sqrt{s\log(a_nb_n)}$). This concludes the proof.
\end{proof}

\setcounter{subsection}{0}
\renewcommand{\thesubsection}{C.\arabic{subsection}}
\setcounter{equation}{0}
\renewcommand{\theequation}{C.\arabic{equation}}
\setcounter{theorem}{0}
\renewcommand{\thetheorem}{C.\arabic{theorem}}
\setcounter{lemma}{0}
\renewcommand{\thelemma}{C.\arabic{lemma}}
\setcounter{figure}{0}
\renewcommand{\thefigure}{C.\arabic{figure}}
\setcounter{table}{0}
\renewcommand{\thetable}{C.\arabic{table}}
\setcounter{remark}{0}
\renewcommand{\theremark}{C.\arabic{remark}}
\setcounter{example}{0}
\renewcommand{\theexample}{\arabic{example}}

\section{Supplementary Examples}
\subsection{Practical Examples of SRE} \label{pae}

\begin{example}[Identification Test for Large Structural Vector Autoregression Models]
	Denote $U_t = (U_{1,t}, U_{2,t}, \ldots, U_{M,t})^\top$.
	A large structural VAR can be represented in the following form (without loss of generality, consider only lag one):
	\begin{equation*}
	\mathbf{A}U_t = \mathbf{B}U_{t-1} + \vps_t,
	\end{equation*}
	where $\mathbf{A}$(invertible) and $\mathbf{B}$ are $M\times M$ matrices. The structural shocks $\vps_t$ satisfy $\E(\vps_t)=0$ and $\Var(\vps_t)= \mathbf{I}_M$.
	The corresponding reduced form is given by
	\begin{equation}
	U_t =\mathbf{D}U_{t-1} + \nu_t,
	\end{equation}
	with $\mathbf{D}=\mathbf{A}^{-1}\mathbf{B}$ and $\nu_t = \mathbf{A}^{-1}\vps_t$, where $\nu_t$ is denoted as the reduced form VAR shocks. Suppose $\nu_t$ spans the space of $\vps_t$.
	The crucial question is the identification of $\mathbf{A}$.
	Typically, the covariance matrix of the reduced form shock $\nu_t$ is estimated with $M(M+1)/2$ restrictions, which are smaller than the $M^2$ restrictions needed to pin down $\vps_t$.
	Adopting the identification approach proposed by \cite{stock2012disentangling}, we may use external instruments that are correlated with the shock of interest and are uncorrelated with other shocks.
	Without loss of generality, suppose the structural shock of interest is $\vps_{j,t}$. Then we can define $z_{j,t}$ as an external instrument for the $j$th structural shock satisfying
	\begin{eqnarray}
	\E(\vps_{j,t}z_{j,t})&\neq& 0,\notag\\
	\E(\vps_{j',t}z_{j,t}) &=& 0, \quad \text{for }j'\neq j.\notag
	\end{eqnarray}
	Thus, we propose to regress $z_{j,t}$ on $\nu_{t}$:
	$$
	z_{j,t} = \nu_{t}^\top\delta_{j} + e_{j,t}.
	$$
	In practice, $\nu_{t}$ are replaced by the residuals obtained from a large VAR reduced form regression as in Example \ref{examp4} {(in the main article)}.  The estimator of $\delta_{j}$ is denoted as $\hat{\delta}_j$. It can be obtained by LASSO estimation, which give us a sparse estimator of the $j$th row of the matrix $\mathbf{A}^{-1}$ up to a scaling factor.	Repeating this step for any $j$, one may formulate estimators for each row and perform simultaneous inference/hypothesis testing on the structural matrix $\mathbf{A}^{-1}$.
	
	In summary, this is also a special case of SRE with
	$$
	(Y_{j,t}, X_{j,t}, \vps_{j,t}, \beta_{j}^0) = (U_{j, t}, U_{-j,t-1}, \nu_{t}, \mathbf{D}_{j\cdot}^\top),\quad j = 1, \ldots, M,
	$$
	$$
	(Y_{j,t}, X_{j,t}, \vps_{j,t}, \beta_{j}^0) = (z_{(j-M),t}, \nu_{t}, e_{(j-M),t}, \delta_{(j-M)}), \quad j =M+1, \ldots, 2M.
	$$
\end{example}

\begin{example}[Cross-sectional Asset Pricing]
	Denote $Y_{j,t}$ as the excess return for asset $j$ and period $t$.
	Asset pricing models explain the cross sectional variation in expected returns across assets; see e.g. \cite{cochrane2009asset}. In particular, the variation
	of expected cross sectional returns is explained by the exposure to $K$ factors $X_{jk,t},$ $k = 1,\ldots, K$. One commonly used way to estimate an asset pricing model is to
	run a system of regression equations:
	\begin{equation}
	Y_{j,t} = \beta_{j0}+ \sum_{k=1}^{K} \beta_{jk} X_{jk,t} +\vps_{j,t}, 
	\end{equation}
	where $X_{jk,t}$'s are the factor returns (assumed to be excess returns of zero-cost portfolios).
	The selection of factors is a critical issue and the SRE framework addresses this issue, in particular when the number of factors $K$ is large. {See \cite{feng2017taming} for a detailed model-selection
		exercise on picking asset pricing factors}.
	The factor premiums are $\E(X_{jk,t})$ and the pricing errors are $\beta_{j0}$.
	Usually, asset pricing imposes the restriction that all $\beta_{j0}$'s are zero.
	Our simultaneous inference framework naturally serves the purpose of simultaneously testing the zero pricing errors in a cross sectional regression setup.
	Namely, we are interested in testing $H_0: \beta_{j0} = 0, \forall j= 1, \ldots,J$ versus $H_A:\exists\,j$ such that $\beta_{j0} \neq 0 $.
	Our test procedure in Section \ref{scr} can be directly applied to achieve this goal.
\end{example}

\begin{example}[Network Formation and Spillover Effects]
	There is an emerging literature in economics concerning quantifying spillover effects and network formation.
	One leading example is as in \cite{manresa2013estimating}, which attempts to quantify social returns to research and development (R\&D).
	Here, $U_{j,t}$ is taken to be the log output for firm $j$ and time $t$. This output is loading on
	$D_{j,t}$ (capital stock for firm $j$ and period $t$), and the aggregated spill-overs from the capital stock of other firms $\sum_{i\neq j}w_{ij} D_{i,t}$.
	The regression equation also controls for other covariates $X_{j,t}$ (e.g., log labor, log capital etc.):
	\begin{equation}
	U_{j,t} = \beta_{j}D_{j,t} + \sum_{i\neq j}\omega_{ij}D_{i,t}+ \gamma_j^\top X_{j,t}+\vps_{j,t},
	\end{equation}
	where $\omega_{ij}$ is referred to as the spillover effects of the R\&D development of firm $i$ on firm $j$. This again is contained in the SRE with
	$$
	(Y_{j,t}, X_{j,t}, \vps_{j,t}, \beta_{j}^0) = (U_{j, t}, (D_{j,t},D_{-j,t}^\top,X_{j,t}^\top)^\top, \vps_{j,t}, (\beta_j,\omega_{(-j)j}^\top,\gamma_j^\top)^\top),\quad j = 1, \ldots, J.
	$$
	Our simultaneous inference procedure (Section \ref{scr}) can be applied to check the significance of the spillover effects for any set
	of parameters of interest.
	As an analogy, the presented framework displays a general class of network models, where $U_{j,t}$ is taken to be the nodal response, and $D_{i,t}$ are the nodal covariates. Global or local inference on the network parameters $\omega_{ij}$ is the subject of research. Section \ref{app} is devoted to inference on the spillover effects of a textual sentiment index.
\end{example}

\begin{remark}
	Suppose there is unobserved heterogeneity in $U_{j,t}$, e.g. $U_{j,t} = \alpha_j + \sum_{i\neq j} w_{ij}D_{i,t}$ $ +\vps_{j,t}$, where $w_{ij}$ characterizes the spillover of individual $i$ on $j$, and $\alpha_j$ is the individual fixed effect. For this situation consider the demeaned version to eliminate the individual specific effects and work with the new model: $\widetilde{U}_{j,t} = \sum_{i\neq j} w_{ij} \widetilde{D}_{i,t} + \widetilde{\vps}_{j,t},$ where $\widetilde{U}_{j,t} = U_{j,t} - \En U_{j,t}$, $\widetilde{D}_{i,t} = D_{i,t} - \En D_{i,t}$, $\widetilde{\vps}_{j,t} = \vps_{j,t} - \En\vps_{j,t}$, under the condition that $U_{j,t}$ has no feedback effects on $D_{i,t}$ (for example, $D_{i,t}$ should not be the lagged variable of $U_{j,t}$).
\end{remark}

\subsection{Examples of the Dependence Measure}\label{example}
\begin{itemize}
	\item[1.] \textbf{AR(1)}: $Y_t$ follows $Y_t = a Y_{t-1}+ \vps_t$, with $|a| < 1$, $\vps_t\sim \mbox{i.i.d.}(0,\sigma^2)$.
	Therefore, the MA representation is given by $Y_t = \sum^\infty_{l=0} a^{l} \vps_{t-l}$ and $Y_t^\ast = \sum^\infty_{l=0}a^{l} \vps_{t-l} + a^t\vps^\ast_0 - a^t\vps_0$. $\|Y_t - Y_t^\ast\|_q=|a|^t\|\vps_0-\vps^\ast_0\|_q$, $\Delta_{m,q} \lesssim|a|^m$, $\|Y_\cdot\|_{q,\varsigma} \lesssim \sup_{m\geq0} (m+1)^{\varsigma}|a|^m <\infty$.
	\item[2.] \textbf{ARCH(1)}: An ARCH (Autoregressive conditionally heteroscedastic) model is given by $Z_t = \sigma_t \vps_t$, $\sigma^2_t = w + \alpha^2 Z^2_{t-1}$, with $w>0$, $\vps_t$ are i.i.d. shocks and $\Var(Z_t)=\sigma^2<\infty$. Thus, it is not hard to see that $Z_t^2= w\sum^{\infty}_{l=0} \alpha^{2l}\prod^{l}_{k=0} \vps^2_{t-k}$. Rewrite the model as $Z_t = R(Z_{t-1}, \vps_t) = \sqrt{(w+ \alpha^2 Z^2_{t-1})}\vps_t$. According to \cite{wu2004limit}, we have the Lipschitz constant involved in the Lyapunov type condition ensuring the forward iteration contraction $\sup_{x\neq x'}\frac{|R(x,\vps_0)-R(x',\vps_0)|}{|x-x'|}\leq |\alpha\vps_0|$. Let $\mu\defeq\E|\alpha\vps_0|<1$ and assume $|\alpha\vps_0|+ |R(t_0,\vps_0)|$ has finite $q$th moment. Then the process $Z_t$ has stationary solutions. Moreover, $\|Z_t- Z_t^\ast\|_q\leq |\mu|^t\|\vps_0-\vps_0^\ast\|_q$, and thus $\Delta_{m,q} \lesssim |\mu|^m$. Given $|\mu|<1$, then we have
	$\|Z_\cdot\|_{q,\varsigma} \lesssim \sup_{m\geq0} (m+1)^{\varsigma}|\mu|^m <\infty$.
	\item[3.]\textbf{TAR} (Threshold autoregressive model):
	$Y_t = \theta_1 Y_{t-1}\IF\{Y_{t-1} < \tau\}+ \theta_2Y_{t-1}\IF\{Y_{t-1} \geq \tau\} +\vps_t,$ where $\theta_1$ and $\theta_2$ are two parameters and $\vps_t$ are i.i.d. shocks. If $\theta \defeq \mbox{max}\{|\theta_1|, |\theta_2|\}< 1$ and $\vps_t$ has a finite $\alpha$-th order moment, then the TAR model admits a stationary solution with $\|Y_\cdot\|_{q,\varsigma} \lesssim \sup_{m\geq0} (m+1)^{\varsigma}\theta^m <\infty$.
	\item[4.] \textbf{VAR} (Vector autoregressive model):
	Without loss of generality we focus on VAR(1) given by $Y_t = A Y_{t-1} + \vps_t$, where $Y_t,\vps_t \in {\R}^J$, and $\vps_t\sim\mbox{i.i.d.}\N(0, \Sigma)$. If the spectral radius of $A^\top A$, $\rho(A^\top A) < 1$, then $\underset{m\to \infty}{\lim} \|A\|^{m} \to0$, where $\|\cdot\|$ denotes the spectral norm of a matrix. Rewrite the model as $Y_t = \sum^{\infty}_{l=0} A^{l}\vps_{t-l}$. The existence of a stationary solution can be checked by Kolmogorov's three series theorem. For each equation $j$, $Y_{j,t}- Y^\ast_{j,t} = [A^{t}]_j (\vps_0- \vps_0^\ast)$, where $[A^{t}]_j $ is the $j$th row of the matrix $A^t$. $(\E(|Y_{j,t}-Y^\ast_{j,t}|^q))^{1/q} \leq |[A^{t}]_j|_1 \||\vps_0-\vps_0^\ast|_\infty\|_{q}$. 
	{It follows that $(\E(|Y_{j,t}-Y^\ast_{j,t}|^q))^{1/q} \leq 2|[A^{t}]_j|_1\mu_q$, where $\mu_q\defeq\underset{1\leq j\leq J}{\max}\|\vps_{j,0}\|_q$. Suppose $\underset{1\leq j\leq J}{\max}|[A^{t}]_j|_1 \leq |\alpha|^{t}$ ($|\alpha|<1$). Then we have $\underset{1\leq j\leq J}{\max}\|Y_{j,\cdot}\|_{q,\varsigma} \lesssim \mu_q$, $(\sum_{j=1}^J\|Y_{j,\cdot}\|^q_{q,\varsigma})^{1/q} \lesssim J^{1/q}\mu_q$, and $\||Y_{j,\cdot}|_\infty \|_{q,\varsigma} \lesssim (J)^{1/q}$ by union bounds.}
	\item[5.] \textbf{High-dimensional ARCH}:
	Consider $Y_t\in {\R}^{J}$, a high-dimensional ARCH(1) model follows for example the general specification from \cite{bollerslev1988capital} and \cite{hansen1998stationarity}: $Z_t = H_t^{1/2} \vps_t$, and $\E(Z_tZ_t^{\top}|\mathcal{F}_{t-1}) = H_t$, with $\vps_t \sim \mbox{i.i.d.}\N(0, \mathbf{I}_J)$. The specification of the conditional covariance matrix
	$H_t= \Omega+ A Z_{t-1}Z^{\top}_{t-1}A^{\top},$ where $\Omega$ is positive definite and $A$ is a $J\times J$ matrix. Studying the stationarity condition of the process is not trivial.
	Define $h_t\defeq\vech(H_t)$, the selection matrix $D_J$ ($J^2\times J(J+1)/2$) gives $\vec (H_t) = D_Jh_t$ and its generalized inverse matrix $D_J^{+}$ such that $D_J^{+}D_J = \mathbf{I}_{J(J+1)/2}$. The $\vech$ notation of the iterations follows $h_t = \vech(\Omega)+ D_J^{+}(A\otimes A)D_J \vech(Y_{t-1}Y_{t-1}^{\top})$. Define $\tilde{A}\defeq D_J^{+}(A\otimes A)D_J$, 
	$w\defeq\vech(\Omega)$. For simplicity, we look at the process $h_t$, with the state space representation $h_t = w + G(h_{t-1},\vps_{t-1}) = F(h_{t-1},\vps_{t-1}) = w + \tilde{A}\vech(\{\vech^{-1}(h_{t-1})\}^{1/2}\vps_{t-1}\vps_{t-1}^{\top}\{\vech^{-1}(h_{t-1})\}^{-1/2})$. The partial derivative matrix is $\Delta_t=\Delta(h_t,\vps_t)=\partial h_{t+1}/\partial h^{\top}_t =\tilde{A} D_J^{+}(H_t^{1/2}\vps_t\vps_t^{\top}H_t^{-1/2} \otimes I_J)D_J$, and $\E \Delta_t = \tilde{A}$. Therefore, the spectral radius of $AA^\top$, $\rho(AA^{\top})< 1$ ensures a stationary solution to the process $h_t$.
	Moreover, by solving the state space iteration recursively, we have $\E|h_{t} - h^\ast_{t}|_1\leq 2 \E |\mathcal{P}_0(h_t)|_1\leq |\tilde{A}^t\{\vech(\Sigma) +w\}+\tilde{A}^{t+1}\vech(\Sigma) |_1\lesssim \{\operatorname{tr}(AA^{\top})\}^t$, where the projector operator $\mP_l(h_t) \defeq \E(h_t|\mathcal{F}_l) -\E(h_t|\mathcal{F}_{l-1})$ and $\Sigma = \E H_t =\sum^{\infty}_{i=0} A^i\Omega(A^{i})^\top$. {Assume that $\{\operatorname{tr}(AA^{\top})\}^t <|c|^t$, with $|c|<1$, we have $\sum_{j=1}^{J(J+1)/2} \|h_{j,\cdot}\|_{1,\varsigma}\lesssim J(J+1)/2$.}
	
	According to \cite{hafner2009asymptotic}, the iteration formulae are given by $h_t = \varpi(\bar h^\star_{t-1}, \vps_{t-1})+ \sum^{m-1}_{l = 1} \Pi^l_{k=1} \Delta(\bar h^\star_{t-k}, \vps_{t-k}) \varpi(\bar h^\star_{t-l-1}, \vps_{t-l-1})+ \Pi^m_{k=1}\Delta(\bar h^\star_{t-k}, \vps_{t-k})h_{t-m}$,
	where $\varpi(h, \vps) = w+ G(h^\star, \vps)- \Delta(h,\vps)h^\star$, $h^\star$ is the contraction state, and $\bar h^\star_{t-k}$'s lie on the line segment between $h^\star$ and $h_{t-k}$. For ease of derivation, we assume a strong assumption such that
	$\E \sup_{h_m} \|\Delta(h_m, \vps_{m})\|^q<s<1$ for all $m\geq1$ and $q\geq2$, where $\|\cdot\|$ denotes the spectral norm of a matrix. Let $h^m=\{(h_1^\top,\ldots,h_m^\top)^\top:|h_t|_2=1,t=1,\ldots,m\}$, it follows $\E \sup_{h^m}\|\Pi^m_{k=1}\Delta(h_{m-k+1}, \vps_{m-k+1})\|^q \leq \Pi^m_{k=1} \E\sup_{h_{m-k+1}}\|\Delta(h_{m-k+1}, \vps_{m-k+1})\|^q\leq s^m$. Hence, $\underset{1\leq j\leq J(J+1)/2}{\max}\|h_{j,\cdot}\|_{q,\varsigma} \leq C$, {$\||h_{\cdot}|_\infty\|_{q,\varsigma} \lesssim \||h_t|_\infty\|_q\lesssim \{J(J+1)/2\}^{1/q}$, 
		and $(\sum_{j=1}^{J(J+1)/2}\|h_{j,\cdot}\|^q_{q,\varsigma})^{1/q} \lesssim \{J(J+1)/2\}^{1/q}$.}
\end{itemize}

\begin{sidewaystable}
	\setcounter{subsection}{0}
	\renewcommand{\thesubsection}{D.\arabic{subsection}}
	\setcounter{equation}{0}
	\renewcommand{\theequation}{D.\arabic{equation}}
	\setcounter{theorem}{0}
	\renewcommand{\thetheorem}{D.\arabic{theorem}}
	\setcounter{lemma}{0}
	\renewcommand{\thelemma}{D.\arabic{lemma}}
	\setcounter{figure}{0}
	\renewcommand{\thefigure}{D.\arabic{figure}}
	\setcounter{table}{0}
	\renewcommand{\thetable}{D.\arabic{table}}
	\vskip 2em
	
	\section{Additional Details for Empirical Analysis}\label{table}
	
	
	\small
	\begin{center}
		\begin{tabular}[H]{ cc | cc | cc }
			\hline\hline
			\multicolumn{2}{c|}{\textbf{Consumer Discretionary (11)}} & \multicolumn{2}{c|}{\textbf{Financials (8)}} & GD & General Dynamics Corporation\\
			\cline{1-4}
			AMZN&Amazon.com, Inc.&AIG&American International Group, Inc.&GE& General Electric Company\\
			BBY&Best Buy Co. Inc.&AMT&American Tower Corporation (REIT)& HON &Honeywell International Inc.\\
			CBS&CBS Corporation&AXP&American Express Company& LMT&Lockheed Martin Corporation\\
			CMCSA&Comcast Corporation&BAC&Bank of America Corporation& LUV &Southwest Airlines Company\\
			\cline{5-6}
			CMG&Chipotle Mexican Grill, Inc.&C&Citigroup Inc.& \multicolumn{2}{c}{\textbf{Information Technology (11)}} \\
			\cline{5-6}
			DIS&Walt Disney Company (The)&ETFC&E*TRADE Financial Corporation& AAPL &Apple Inc.\\
			F&Ford Motor Company&GS&Genpact Limited& ACN &Accenture plc\\
			GM&General Motors Company& JPM &J P Morgan Chase \& Co& ADP & Automatic Data Processing, Inc.\\
			\cline{3-4}
			GPS&Gap, Inc. (The)&\multicolumn{2}{c|}{\textbf{Health Care (8)}}& CSCO & Cisco Systems, Inc.\\
			\cline{3-4}
			HD&Home Depot, Inc. (The)&AET&Aetna Inc.& EA &Electronic Arts Inc.\\
			LEN&Lennar Corporation&AMGN&Amgen Inc.& EBAY & eBay Inc.\\
			\cline{1-2}
			\multicolumn{2}{c|}{ \textbf{Consumer Staples (4)} }& BIIB &Biogen Inc.& EMC & EMC Corporation \\
			\cline{1-2}
			COST&Costco Wholesale Corporation&BMY&Bristol-Myers Squibb Company&FSLR &First Solar, Inc.\\
			CVS&CVS Health Corporation&CELG&Celgene Corporation & HPQ &HP Inc. \\
			KO&Coca-Cola Company (The)&GILD&Gilead Sciences, Inc. &IBM &International Business Machines Corporation \\
			KR&Kroger Company (The)&JNJ& Johnson\& Johnson & INTC & Intel Corporation \\
			\cline{1-2}\cline{5-6}
			\multicolumn{2}{c|}{ \textbf{Energy (6)} }& LLY & Eli Lilly and Company &\multicolumn{2}{c}{\textbf{Materials (3)}}  \\
			\cline{1-6}
			APC&Anadarko Petroleum Corporation&\multicolumn{2}{c|}{\textbf{Industrials (10)}} & AA & Alcoa Corporation\\
			\cline{3-4}
			BHI&Black Hills Corp.&BA&Boeing Company (The) & DD &EI du Pont de Nemours \& Co\\
			CHK&Chesapeake Energy Corporation&CAT&Caterpillar, Inc. &DOW &Dow Chemical \\
			\cline{5-6}
			COP&ConocoPhillips&DAL&Delta Air Lines, Inc. &\multicolumn{2}{c}{\textbf{Utilities (2)}} \\
			\cline{5-6}
			CVX&Chevron Corporation&DHR&Danaher Corporation & DUK & Duke Energy Corp.\\
			HAL&Halliburton Company&FDX&FedEx Corporation & EXC & Exelon Corporation \\
			\hline\hline
		\end{tabular}
		\caption{The list of the stock symbols and the corresponding company names grouped by industries.}  \label{table:List_of_Firms}
	\end{center}
\end{sidewaystable}

	
\end{appendices}

\newpage
\bibliography{ref}

\end{document}